\newcommand\m@thsm@ller[2]{\mbox{\relscale{0.91}$\m@th#1#2$}}
\let\smaller\undefined
\DeclareRobustCommand\smaller[1]{\relax\ifmmode{\mathpalette\m@thsm@ller{#1}}\else{\relscale{0.91}#1}\fi}
\newlist{enuma}{enumerate}{10}
\setlist[enuma]{label={\normalfont(\alph*)}}
\newlist{enumr}{enumerate}{10}
\setlist[enumr]{label={\normalfont(\roman*)}}
\newlist{enum1}{enumerate}{10}
\setlist[enum1]{label={\normalfont(\arabic*)}}
\DeclareRobustCommand*{\dom}{\qopname\relax o{dom}}
\DeclareRobustCommand*{\rng}{\qopname\relax o{rng}}
\newcommand*{\id}{\mathrm{id}}
\newcommand*{\Pos}{\mathsf{Pos}}
\newcommand*{\Set}{\mathsf{Set}}
\newcommand*{\Alg}{\mathsf{Alg}}
\newcommand*{\FAlg}{\mathsf{FAlg}}
\newcommand*{\PAlg}{\mathsf{PAlg}}
\newcommand*{\PSp}{\mathsf{PSp}}
\newcommand*{\Dist}{\mathsf{Dist}}
\newcommand*{\Log}{\mathsf{Log}}
\newcommand*{\Alp}{\mathsf{Alph}}
\newcommand*{\val}{\mathrm{val}}
\newcommand*{\Flat}{\mathrm{flat}}
\newcommand*{\sing}{\mathrm{sing}}
\newcommand*{\reg}{\mathrm{reg}}
\newcommand*{\op}{\mathrm{op}}
\newcommand*{\MSO}{\smaller{\mathrm{MSO}}}
\newcommand*{\FO}{\smaller{\mathrm{FO}}}
\newcommand*{\Th}{\mathrm{Th}}
\newcommand*{\Mod}{\mathrm{Mod}}
\newcommand*{\Id}{\mathrm{Id}}
\newcommand*{\Syn}{\mathrm{Syn}}
\newcommand*{\syn}{\mathrm{syn}}
\newcommand*{\?}{\kern .08em}
\DeclareRobustCommand*{\Aboveseg}{\mathord\Uparrow}
\newcommand\upqed{\vskip-\baselineskip\vskip-\belowdisplayskip}
\newcommand\markenddef{\hfill$\lrcorner$}
\keywords{algebraic language theory, monads, Eilenberg--Moore algebras}
\begin{document}
\title[Algebraic Language Theory for Eilenberg--Moore Algebras]{Algebraic Language Theory\texorpdfstring{\\}{ }for Eilenberg--Moore Algebras}
\author{Achim Blumensath}
\address{Masaryk University Brno}
\email{blumens@fi.muni.cz}
\thanks{Work partially supported by the Czech Science Foundation, grant No.~GA17-01035S}
\maketitle

\begin{abstract}\noindent
We develop an algebraic language theory based on the notion of an
Eilenberg--Moore algebra. In comparison to previous such frameworks the main
contribution is the support for algebras with infinitely many sorts
and the connection to logic in form of so-called `definable algebras'.
\end{abstract}

\tableofcontents

\section{Introduction}   

There are various approaches to formal language theory, each having
its own strengths and weaknesses.
We are here interested in the algebraic approach -- in particular,
in its use in characterising subclasses of regular languages,
like the class of first-order definable languages.

The initial algebraic theory was developed for languages of finite words.
It has subsequently been generalised, first to infinite words (see, e.g.,~\cite{PerrinPin04})
and then to finite trees (e.g.,~\cite{BojanczykWalukiewicz07}).
More recently, also a framework for dealing with infinite trees was developed
\cite{BojanczykId09,Blumensath11c,BojanczykIdSk13,Blumensath13a,Blumensath20,BlumensathZZ}.
Each of these four theories comes in several different variants, depending on
which notion of a language or a logic they were designed for.
As usual when such a wealth of slightly different settings has been developed,
people have started to consolidate and unify them.
One well-known proposal of this kind, based on the formalism of Eilenberg--Moore algebras,
was put forward by Boja\'nczyk~\cite{Bojanczyk15}.
Later on it was generalised by Salamanca~\cite{Salamanca17} and by
Ad\'amek, Chen, Milius and Urbat~\cite{UrbatChAdMi2017,MiliusUrbat19}.
The present paper is built on these works.

The motivation for such generalisation initiatives originates in several places.
First of all, setting up a new algebraic framework for language theory
entails a lot of grunt work and usually results in papers that are quite long
and in large parts not very deep.
Having most of the common parts extracted into a general framework reduces
a lot of this work and allows one to focus on the parts that contain the ideas which are new.

Apart from potentially saving a lot of work, such a program can also lead to novel insights.
When proving a general result one is usually forced to isolate
the key properties and notions that are required for the proof
(such as \emph{denseness} and \emph{$\bbM$-compositionality,} which we will introduce below).
This in turn provides insight into
how far the methods used can be extended and where their limits are.

Finally, in a concrete case there are often several possible variations
of the definitions that more-or-less work equally well.
Knowing which of them generalises helps one to evaluate their respective merits.

Besides hopefully improving upon the presentation,
the main contributions of the present article lie in two areas.
Firstly, we present the first framework that does support algebras with
infinitely many sorts, which is required when one wants to cover the
concrete frameworks that have been introduced for languages of infinite trees.
While it turns out that many results and proofs go through for infinitely many sorts with
nearly no changes, there are also a few places below where we are forced to make
substantial adjustments.
In particular, we introduce the notion of a \emph{dense} morphism of monads
in Section~\ref{Sect: synt. alg.} to prove the existence of syntactic algebras,
and we have to modify the definition of a pseudo-variety in Section~\ref{Sect: varieties}
by adding closure under so-called \emph{sort-accumulation points.}

Secondly, the existing frameworks concentrate on the algebraic and
language-theoretic side of things, while mostly ignoring the connections to logic.
This is rather unfortunate, as logic is one of the main application areas
for algebraic language theory. We will therefore devote a substantial part
of the article to the connection between the algebraic theory and the study of logics.

\smallskip
The overview of the article is as follows.
In Section~\ref{Sect:algebras} we set up the basic toolkit of monads and
Eilenberg--Moore algebras which our algebraic framework is based on.
Our preparations continue in Section~\ref{Sect:quotients} with the development of
a theory of quotients and congruences for such algebras.

Our algebraic framework is set up in Sections \ref{Sect: synt. alg.}--\ref{Sect:axioms}.
The central notions of a syntactic congruence and a syntactic algebra are introduced
in Section~\ref{Sect: synt. alg.}.
Equipped with these tools, we study pseudo-varieties in Section~\ref{Sect: varieties}
and derive our version of the Variety Theorem.
Section~\ref{Sect:profinitary} contains technical material on profinitary monads,
which is needed to prove a version of Reiterman's Theorem in Section~\ref{Sect:axioms}.

The second part of the article consists of Sections \ref{Sect:logic}--\ref{Sect:applications}.
We start by collecting a few basic notions from logic in Section~\ref{Sect:logic}.
Section~\ref{Sect:definable} contains the connection to language theory
in terms of algebras whose products are definable in a certain sense.
Finally in Section~\ref{Sect:applications},
we show how one can apply our framework to study monadic second-order logic
and first-order logic over infinite trees.

\section{Monads and algebras}   
\label{Sect:algebras}

We assume that the reader is reasonably familiar with basic notions of category theory.
But in order to make the article more accessible to readers from other fields,
we have tried not to rely on any concepts that are not covered by the usual introductory
text books.
As a consequence we will explicitly define any of the more specialised notions needed below
-- such as that of a monad or a copresentable object.

Let me also make a philosophical remark.
In this article I~have tried to strike a balance between the level of generality
of the framework and the technical overhead entailed by it.
For this reason, many of the results below will not be stated in the most general form possible.
Instead, I~have adopted a level of generality that covers (most~of) the intended applications
while not obscuring the proofs by pointless technicalities.
In particular, the framework below is not presented in a purely category-theoretical language,
but in a mixture of set theory and category theory.

\smallskip
In formal language theory one studies sets of labelled objects like words, trees, traces,
pictures, (hyper-)graphs, and so on.
To capture all these various settings we start by introducing
an operation~$\bbM$ mapping a given set~$A$ of labels to the set~$\bbM A$
of all $A$-labelled objects. A~language in this context is then simply a subset $K \subseteq \bbM A$.
For instance, for languages of finite words we can define $\bbM A := A^+$.
To accommodate more complicated settings like trees, it will be convenient to work
not with plain sets but with \emph{many-sorted} ones.
For a given set~$\Xi$ of \emph{sorts,} an \emph{$\Xi$-sorted set} is a family
$A = (A_\xi)_{\xi \in \Xi}$ of plain sets.
Then $\bbM$~maps a $\Xi$-sorted set~$A$ of \emph{labels} to a $\Xi$-sorted set
$\bbM A = (\bbM_\xi A)_{\xi \in \Xi}$ of \emph{$A$-labelled objects.}
For instance, when working with infinite words it is convenient to use two sorts
$\Xi = \{1,\infty\}$ where sort~$1$ represents the `finite' elements and sort~$\infty$
the `infinite' ones. The operation~$\bbM$ maps $A = \langle A_1,A_\infty\rangle$ to
$\bbM A = \langle\bbM_1 A,\bbM_\infty A\rangle$ where
\begin{align*}
  \bbM_1 A := A_1^+
  \qtextq{and}
  \bbM_\infty A := A_1^+A_\infty \cup A_1^\omega\,.
\end{align*}
(So a finite word is a finite sequence of finite elements, while an infinite word
can either be a finite sequence of finite elements followed by a single infinite element,
or an infinite sequence of finite elements.)
Our intended applications consist in deriving characterisation results for various logics.
To be able to handle logics that are not closed under negation, it will turn out to be
necessary to be slightly more general and consider \emph{ordered many-sorted sets,}
that is, $\Xi$-sorted sets $A = (A_\xi)_{\xi \in \Xi}$ where each sort~$A_\xi$
is equipped with a partial order. Such sets form a category $\Pos^\Xi$ if we take
as morphisms the \emph{order-preserving $\Xi$-sorted functions,} that is,
a morphism $f : A \to B$ consists of a family $f = (f_\xi)_{\xi \in \Xi}$ of functions
where each component $f_\xi : A_\xi \to B_\xi$ is order-preserving.
We will frequently identify a sorted set $A = (A_\xi)_{\xi \in \Xi}$ with its
disjoint union $A = \bigcupdot_{\xi \in \Xi} A_\xi$.
Using this point of view, a morphism $f : A \to B$ corresponds to a sort-preserving
and order-preserving function between the corresponding disjoint unions.

Before continuing let us introduce a bit of terminology.
From this point on, we will use the terms `set' and `function' as a short-hand for
'ordered $\Xi$-sorted set' and `order-preserving $\Xi$-sorted function'.
If we mean any other kind of set or function, we will mention this explicitly.
We call a set $A \in \Pos^\Xi$ \emph{unordered} if its ordering is trivial,
i.e., any two distinct elements are incomparable.
For a property~$P$, we say that $A$~is \emph{sort-wise $P$} if each set~$A_\xi$ has property~$P$.
In particular, \emph{sort-wise finite} means that every~$A_\xi$ is finite.

Of course, the operation~$\bbM$ alone does not provide sufficient structure to build
a meaningful theory. Usually, the objects in a formal language are subject to
various composition operations, like concatenation of words, substitution for terms, etc..
To capture such operations we will employ the category-theoretical notion of a \emph{monad.}
Note that, in the cases of interest where $\bbM A$ is a set of $A$-labelled objects
of some kind, every function $f : A \to B$ induces an operation $\bbM f : \bbM A \to \bbM B$
which applies the function~$f$ to each label.
This turns~$\bbM$ into a functor $\Pos^\Xi \to \Pos^\Xi$.

There are two other ingredients we will need.
Firstly, the concatenation operation in question is often of the form
$\Flat : \bbM\bbM A \to \bbM A$, that is, it takes an $\bbM A$-labelled object $s \in \bbM\bbM A$
and assembles the appearing labels into a single large object.
We call $\Flat(s)$ the \emph{flattening} of~$s$.
Secondly, there is usually a \emph{singleton operation} $\sing : A \to \bbM A$ that takes a label
$a \in A$ and produces an object with a single position which is labelled by~$a$.
For instance, in the case of words $\Flat : (A^+)^+ \to A^+$ is simply the concatenation
operation and $\sing : A \to A^+$ produces $1$-letter words.
\begin{alignat*}{-1}
  \Flat(\langle w_0,\dots,w_n\rangle) &:= w_0\dots w_n\,,
  &&\quad\text{for } w_0,\dots,w_n \in A^+, \\
  \sing(a) &:= \langle a\rangle\,, &&\quad\text{for } a \in A\,.
\end{alignat*}
Usually, the flattening operation is associative, which makes the functor~$\bbM$ into a \emph{monad.}
\begin{defi}
A \emph{monad} consists of a functor $\bbM : \Pos^\Xi \to \Pos^\Xi$ that is equipped with
two natural transformations $\Flat : \bbM\circ\bbM \Rightarrow \bbM$ and
$\sing : \Id \Rightarrow \bbM$
(where $\Id$~is the identity functor) satisfying the following equations.
\begin{align*}
  \Flat \circ \sing = \id\,, \qquad
  \Flat \circ \bbM\sing = \id\,, \qquad
  \Flat \circ \Flat = \Flat \circ \bbM\Flat\,.
\end{align*}

\medskip
{\centering
\includegraphics{Abstract-1.mps}
%
%
%
%
\par}

\vskip-1.2em

\noindent
\markenddef
\end{defi}

In algebraic language theory one equips the sets $\bbM A$ with an algebraic structure
of some kind and then uses homomorphisms $\bbM A \to \frakB$ into some other algebra~$\frakB$
to describe languages $K \subseteq \bbM A$.
If $\bbM$~is a monad, there is a canonical way to define this algebraic structure\?:
we can equip a set~$A$ with a \emph{product} operation of the form $\pi : \bbM A \to A$.
For instance, for words this product takes the form $\pi : A^+ \to A$,
i.e., it multiplies a sequence of elements into a single element.
Hence, $\pi$~can be seen as a semigroup product of variable arity.
But note that not every operation $\pi : A^+ \to A$ is of the form
\begin{align*}
  \pi(\langle a_0,\dots,a_m\rangle) = a_0 \cdot a_1 \cdot\cdots\cdot a_m
\end{align*}
for some semigroup product ${}\cdot{} : A \times A \to A$.
If we want to exactly capture the notion of a semigroup,
we have to impose additional conditions on~$\pi$.
It turns out, there are two such conditions\?:
associativity requires that
\begin{align*}
  \pi(\pi(w_0),\dots,\pi(w_m)) = \pi(w_0\dots w_m)\,,
  \quad\text{for all } w_0,\dots,w_m \in A^+,
\end{align*}
and the fact that the product of a single element should return that element again
requires that
\begin{align*}
  \pi(\langle a\rangle) = a\,, \quad\text{for } a \in A\,.
\end{align*}
These two conditions can be phrased more concisely as
\begin{align*}
  \pi \circ \bbM\pi = \pi \circ \Flat
  \qtextq{and}
  \pi \circ \sing = \id\,.
\end{align*}
This leads us to the following definition.
\begin{defi}
Let $\bbM : \Pos^\Xi \to \Pos^\Xi$ be a monad.

(a)
An \emph{Eilenberg-Moore algebra} for~$\bbM$, or \emph{$\bbM$-algebra} for short,
is a pair $\frakA = \langle A,\pi\rangle$ consisting of a set~$A$ and a function
$\pi : \bbM A \to A$ satisfying

\vskip-\bigskipamount
\noindent
\begin{minipage}[t]{0.5\textwidth}
\vspace*{-1ex}%
\begin{align*}
  \pi \circ \bbM\pi &= \pi \circ \Flat\,, \\
  \pi \circ \sing &= \id\,.
\end{align*}
The first of these equations is called the \emph{associative law} for~$\pi$,
the second one the \emph{unit law.}\strut
\end{minipage}%
\begin{minipage}[t]{0.5\textwidth}
\vspace*{1ex}%
\centering
\includegraphics{Abstract-2.mps}
%
%
%
%
%
\end{minipage}

\smallskip
(b) A \emph{morphism} $\varphi : \frakA \to \frakB$ of $\bbM$-algebras is a function
$\varphi : A \to B$ commuting with the respective products in the sense that

\vskip-\bigskipamount
\noindent
\begin{minipage}[t]{0.5\textwidth}
\vspace*{-1ex}%
\begin{align*}
  \varphi \circ \pi = \pi \circ \bbM\varphi\,.
\end{align*}
\end{minipage}%
\begin{minipage}[t]{0.5\textwidth}
\vspace*{1ex}%
\centering
\includegraphics{Abstract-3.mps}
%
%
%
%
%
\end{minipage}

\smallskip
(c) We denote the category of all $\bbM$-algebras and their morphisms by $\Alg(\bbM)$.

(d) An algebra~$\frakA$ is \emph{finitary} if its universe~$A$ is sort-wise finite
and $\frakA$~is finitely generated, i.e., there exists a finite set $C \subseteq A$
such that every element $a \in A$ can be written as $a = \pi(s)$, for some $s \in \bbM C$.
\markenddef
\end{defi}

As a further example, let us take a look at the functor
\begin{align*}
  \bbM\langle A_1,A_\infty\rangle := \langle A_1^+,\ A_1^+A_\infty \cup A_1^\omega\rangle
\end{align*}
for infinite words.
In this case an $\bbM$-algebra has two product functions
\begin{align*}
  \pi_1 : A_1^+ \to A_1
  \qtextq{and}
  \pi_\infty : A_1^+A_\infty \cup A_1^\omega \to A_\infty\,.
\end{align*}
The laws of an $\bbM$-algebra ensure that $\pi_1$~corresponds to a semigroup product
$A_1 \times A_1 \to A_1$ and $\pi_\infty$~correspond to the additional products
$A_1 \times A_\infty \to A_\infty$ and $A_1^\omega \to A_\infty$ of an $\omega$-semigroup.
Hence, in this case $\bbM$-algebras are nothing but $\omega$-semigroups.

There is a natural way to turn a set of the form $\bbM A$ into an $\bbM$-algebra\?:
we can chose the function $\Flat : \bbM\bbM A \to \bbM A$ as the product.
It turns out that algebras of this form are exactly the \emph{free algebras.}
\begin{prop}\label{Prop: existence of free algebra}
For each ranked set~$A$, there exists a free $\bbM$-algebra over~$A$.
It has the form $\langle\bbM A,\Flat\rangle$.
\end{prop}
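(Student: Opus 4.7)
The claim is the standard Eilenberg--Moore construction of the free algebra from a monad, so the plan is to first exhibit $\langle\bbM A,\Flat_A\rangle$ as an object of $\Alg(\bbM)$ and then verify the universal property with unit $\sing_A : A \to \bbM A$. The fact that $\langle\bbM A,\Flat\rangle$ satisfies the two algebra axioms is immediate from the monad equations: the associative law $\Flat \circ \bbM\Flat = \Flat \circ \Flat$ is one of the defining equations of a monad, and the unit law $\Flat \circ \sing_{\bbM A} = \id$ is another. So no real work is required in this first step.

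For the universal property, I fix an $\bbM$-algebra $\frakB = \langle B,\pi\rangle$ and a function $f : A \to B$, and propose the extension
\begin{align*}
  \hat f \;:=\; \pi \circ \bbM f \;:\; \bbM A \to B\,.
\end{align*}
That $\hat f$ is a morphism of $\bbM$-algebras follows from a short diagram chase: naturality of $\Flat$ rewrites $\bbM f \circ \Flat$ as $\Flat \circ \bbM\bbM f$, after which the associative law for~$\pi$ turns $\pi \circ \Flat$ into $\pi \circ \bbM\pi$, and the result factors through $\bbM\hat f$ by functoriality of~$\bbM$. That $\hat f$ extends $f$, i.e.\ $\hat f \circ \sing_A = f$, follows by using naturality of $\sing$ to rewrite $\bbM f \circ \sing_A = \sing_B \circ f$ and then applying the unit law for~$\pi$.

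For uniqueness, let $g : \langle\bbM A,\Flat\rangle \to \frakB$ be any morphism with $g \circ \sing_A = f$. Using the remaining monad equation $\Flat \circ \bbM\sing_A = \id_{\bbM A}$, together with the morphism property $g \circ \Flat = \pi \circ \bbM g$ and functoriality, one rewrites
\begin{align*}
  g \;=\; g \circ \Flat \circ \bbM\sing_A
    \;=\; \pi \circ \bbM g \circ \bbM\sing_A
    \;=\; \pi \circ \bbM(g \circ \sing_A)
    \;=\; \pi \circ \bbM f \;=\; \hat f\,.
\end{align*}

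There is no real obstacle in this argument; every step is forced by a monad equation, a naturality square, or functoriality. The only minor point to be careful about is to check that all the equalities take place in the category $\Pos^\Xi$ of ordered many-sorted sets rather than plain sets, but since the monad was set up in this category to begin with, all the morphisms involved are automatically sort- and order-preserving, so nothing new needs to be verified.
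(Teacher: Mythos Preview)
Your proof is correct and is exactly the standard Eilenberg--Moore verification of the free algebra. The paper takes the same route in spirit but does not spell it out: it simply cites the textbook fact (Borceux, Proposition~4.1.4) that a monad~$\bbM$ is left adjoint to the forgetful functor $\Alg(\bbM) \to \Pos^\Xi$, from which the universal property of $\langle\bbM A,\Flat\rangle$ with unit~$\sing$ follows immediately. Your explicit computation with $\hat f := \pi \circ \bbM f$ is precisely the content of that adjunction unpacked, so there is no genuine difference in approach---you have just written out what the paper leaves to the reference.
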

\begin{proof}
The fact that $\Flat : \bbM\bbM A \to \bbM A$
is the free $\bbM$-algebra is a standard result in category theory.
As the functor~$\bbM$ is a monad, it is left adjoint to the
forgetful functor $\Alg(\bbM) \to \Pos^\Xi$ which maps a $\bbM$-algebra~$\frakB$
to its universe~$B$ (see, e.g., Proposition~4.1.4 of~\cite{Borceux94b}).
Consequently, for every $\bbM$-algebra~$\frakB$ and every function $f : A \to B$,
there exists a unique morphism $\varphi : \bbM A \to \frakB$ such that $\varphi \circ \sing = f$.
\end{proof}

In order to obtain non-trivial results we have to put some mild restrictions
on the kind of monad~$\bbM$ we consider.
In the applications we have in mind, $\bbM$~is always a polynomial functor of the form
\begin{align*}
  \bbM A = \sum_{i < \lambda} A^{D_i},
\end{align*}
for some cardinal~$\lambda$ and unordered sets $D_i \in \Set^\Xi$. For instance,
for the word functor $\bbM A = A^+$, we can take $\lambda = \aleph_0$
and $D_i = \{0,\dots,i\}$.
Similarly, if we consider languages of trees, we can fix an enumeration
$(t_i)_{i<\lambda}$ of all unlabelled trees and choose for~$D_i$ the
set of vertices of~$t_i$.

For the results in this article, we do not need to assume that $\bbM$~is polynomial.
A~few weaker properties suffice. To state these, we have to introduce a bit of terminology.
\begin{defi}
Let $\bbM : \Pos^\Xi \to \Pos^\Xi$ be a functor.

(a) The \emph{lift} of a relation $R \subseteq A \times B$ is the relation
$R^\bbM \subseteq \bbM A \times \bbM B$ consisting of all pairs $\langle s,t\rangle$
such that
\begin{align*}
  s = \bbM p(u) \qtextq{and} t = \bbM q(u)\,,
  \quad\text{for some } u \in \bbM R\,,
\end{align*}
where $p : A \times B \to A$ and $q : A \times B \to B$ are the two projections.

(b)
We say that $\bbM$ \emph{uses the standard ordering} if the ordering of~$\bbM A$
is the lift~$\leq^\bbM$ of the ordering~$\leq$ of~$A$.

(c)
We say that $\bbM$~\emph{preserves injectivity/surjectivity/bijectivity} if
it is the case that $\bbM$ maps injective/surjective/bijective functions to functions of the same kind.

(d)
$\bbM$~\emph{preserves preimages} if, for every function $f : A \to B$
and every subset $P \subseteq B$, we have
\begin{align*}
  \bbM(f^{-1}[P]) = (\bbM f)^{-1}[\bbM P]\,.
\end{align*}
\upqed
\markenddef
\end{defi}
For instance, the standard ordering for the word functor $\bbM A = A^+$ is
\begin{align*}
  \langle a_0,\dots,a_m\rangle \leq \langle b_0,\dots,b_n\rangle
  \quad\iff\quad
  m = n \text{ and } a_i \leq b_i \text{ for all } i \leq m\,.
\end{align*}

For our framework we require the following properties of~$\bbM$,
which are clearly shared by every polynomial functor.
\begin{VarLem}[Convention]
In the following we will always tacitly assume that\/ $\bbM : \Pos^\Xi \to \Pos^\Xi$
is a monad which preserves injectivity, surjectivity, bijectivity, and preimages, and
that it uses the standard ordering.
\end{VarLem}

An example of a monad that does not fit into this framework would be the functor~$\bbD$
mapping a set~$A$ to its set of downward closed sets, ordered by inclusion.
The multiplication of this monad maps a set of sets to its union,
and the singleton function maps an element $a \in A$ to the set
$\set{ b \in A }{ b \leq a }$.
For a function $f : A \to B$ we set
\begin{align*}
  \bbD f(X) = \set{ b \in B }{ b \leq f(a) \text{ for some } a \in X }\,.
\end{align*}
This monad preserves surjectivity, but neither injectivity, bijectivity, nor preimages,
and it does not use the standard ordering.

Let us derive a first consequence of our assumptions.
A~frequent problem we will have to deal with is the fact that in $\Pos^\Xi$
not every surjective morphism has a right inverse.
Therefore, we will sometimes be forced to make a detour through the category $\Set^\Xi$
by ignoring the order of the sets involved.
For simplicity, we will treat $\Set^\Xi$ as a full subcategory of $\Pos^\Xi$
via the following embedding.
\begin{defi}\label{Def: bbV and iota}
(a) Let $\bbV : \Pos^\Xi \to \Pos^\Xi$ be the functor mapping a set~$A$ with order~$\leq$
to the same set, but with the trivial order~$=$, and let
$\iota : \bbV \Rightarrow \Id$ the natural transformation induced
by the identity maps.

(b) A functor $\bbM : \Pos^\Xi \to \Pos^\Xi$ is \emph{order agnostic} if
there exists a natural isomorphism $\delta : \bbM\circ\bbV \Rightarrow \bbV\circ\bbM$
satisfying

\vskip-1ex
\noindent
\begin{minipage}[t]{0.5\textwidth}
\vspace*{0pt}%
\begin{align*}
  \bbM\iota &= \iota \circ \delta \\
\prefixtext{and}
  \bbV\sing &= \delta \circ \sing\,.
\end{align*}
\end{minipage}%
\begin{minipage}[t]{0.5\textwidth}
\centering
\vspace*{0pt}%
\includegraphics{Abstract-4.mps}
%
%
%
%
\end{minipage}

\vskip-1em
\markenddef
\end{defi}
Intuitively, being order agnostic means that $\bbM$~does not make essential use
of the ordering of a set~$A$ when producing~$\bbM A$. The ordering of~$A$ has no
influence on which elements~$\bbM A$ contains, only on the ordering between them.

\begin{lem}
If\/ $\bbM$~satisfies the above assumption, it is order agnostic.
\end{lem}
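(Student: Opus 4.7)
The plan is to define $\delta_A : \bbM\bbV A \to \bbV\bbM A$ as the morphism whose underlying $\Xi$-sorted function coincides with that of $\bbM\iota_A : \bbM\bbV A \to \bbM A$, only reinterpreted to land in $\bbV\bbM A$. Intuitively, since $\iota_A$ is the identity on underlying sets, $\bbM\iota_A$ is already the right candidate for $\delta_A$; the only issue is making sure the orderings on both sides cooperate.

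The first step is to show that each $\delta_A$ is an isomorphism in $\Pos^\Xi$. The morphism $\iota_A : \bbV A \to A$ is order-preserving (its source carries the trivial order) and its underlying function is the identity, hence a bijection. By the standing assumption that $\bbM$ preserves bijectivity, $\bbM\iota_A$ is likewise bijective on underlying sets. The assumption that $\bbM$ uses the standard ordering then forces $\bbM\bbV A$ to be trivially ordered\?: indeed, the lift of the equality relation is again equality, since the two projections $p, q : R \to \bbV A$ from the diagonal $R$ coincide, so $\bbM p = \bbM q$ and the lifted relation collapses to $s = t$. Since $\bbV\bbM A$ is trivially ordered by construction, any bijection of the underlying sets is automatically an isomorphism in $\Pos^\Xi$, hence $\delta_A$ is an iso.

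Next I would verify naturality and the two coherence equations. For a morphism $f : A \to B$, the naturality square for $\delta$ reduces on underlying sets to $\bbM\iota_B \circ \bbM\bbV f = \bbM f \circ \bbM\iota_A$, which follows from functoriality of~$\bbM$ together with the naturality of~$\iota$ (both sides are $\bbM f$ underlyingly). The equation $\bbM\iota = \iota \circ \delta$ is immediate by construction, since $\iota_{\bbM A}$ is the identity on underlying sets and $\delta_A$ was defined to have the same underlying function as $\bbM\iota_A$. The equation $\bbV\sing = \delta \circ \sing$ follows from the naturality of $\sing : \Id \Rightarrow \bbM$ applied to~$\iota_A$, which yields $\bbM\iota_A \circ \sing_{\bbV A} = \sing_A \circ \iota_A = \sing_A$ on underlying sets; this matches $\bbV\sing_A$, and as the common codomain $\bbV\bbM A$ is trivially ordered, equality of underlying functions suffices for equality as morphisms.

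I do not expect a genuine obstacle\?: the central observation is that the standard-ordering assumption forces $\bbM$ to send trivially ordered sets to trivially ordered sets, which trivialises every order-preservation check that would otherwise need attention. Once this is in place, the rest is bijectivity preservation together with the functoriality and naturality data already available.
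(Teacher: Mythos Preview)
Your proposal is correct and follows essentially the same approach as the paper. Both arguments hinge on the observation that the standard-ordering assumption forces $\bbM\bbV A$ to carry the trivial order (via the diagonal argument you give), after which $\delta_A$ is just $\bbM\iota_A$ on underlying sets, automatically an isomorphism between trivially ordered sets. The paper packages this same $\delta$ as the composite $\bbV\bbM\iota \circ i$ with $i : \bbM\bbV A \to \bbV\bbM\bbV A$ the identity, whereas you work directly at the level of underlying functions; the mathematical content is the same.
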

\begin{proof}
We start by showing that each set of the form $\bbM\bbV A$ has the trivial order.
Hence, suppose that $s,t \in \bbM\bbV A$ with $s \leq t$.
As $\bbM$~uses the standard ordering, we can find some $u \in \bbM\Delta$
with $s = \bbM p(u)$ and $t = \bbM q(u)$, where $\Delta \subseteq A \times A$
is the ordering of~$\bbV A$ and $p,q : A \times A \to A$ are the two projections.
Note that $\Delta = \set{ \langle a,a\rangle }{ a \in A }$ is the diagonal.
Consequently, we have $p(d) = q(d)$, for all $d \in \Delta$, which implies that
$s = \bbM p(u) = \bbM q(u) = t$.

It follows that $\bbV\bbM\bbV A = \bbM\bbV A$ and the respective identity maps
provide morphisms $i : \bbM\bbV A \to \bbV\bbM\bbV A$ and $j : \bbV A \to \bbV\bbV A$
that are inverse to the functions $\iota : \bbV\bbM\bbV A \to \bbM\bbV A$
and $\iota : \bbV\bbV A \to \bbV A$.
We claim that the morphism $\delta := \bbV\bbM\iota \circ i : \bbM\bbV A \to \bbV\bbM A$
is the desired natural transformation.
First, note that $\delta$~is bijective, as $i$ and~$\iota$ are bijective and
both $\bbV$~and~$\bbM$ preserve bijectivity.
Since the domain~$\bbM\bbV A$ and the codomain~$\bbV\bbM A$ both use the trivial order,
$\delta$~therefore has an inverse.
To conclude the proof it is hence sufficient to show that the following diagram commutes.
\begin{center}
\includegraphics{Abstract-5.mps}
\end{center}
Since $\bbV\iota = \iota$ it follows that
\begin{alignat*}{-1}
  \iota \circ \delta
    &= \rlap{\iota \circ \bbV\bbM\iota \circ i
     = \bbM\iota \circ \iota \circ i
     = \bbM\iota} \\[1ex]
\prefixtext{and}
  \delta \circ \sing
    &= \bbV\bbM\iota \circ i \circ \sing
   &&= \bbV\bbM\iota \circ i \circ \sing \circ \iota \circ j \\
  &&&= \bbV\bbM\iota \circ i \circ \iota \circ \bbV\sing \circ j \\
  &&&= \bbV\bbM\iota \circ \bbV\sing \circ j \\
  &&&= \bbV(\sing \circ \iota) \circ j \\
  &&&= \bbV\sing \circ (\bbV\iota \circ j)
     = \bbV\sing \circ (\iota \circ j)
     = \bbV\sing\,.
\end{alignat*}
\upqed
\end{proof}

\section{Congruences and quotients}   
\label{Sect:quotients}

We start by developing a theory of congruences for $\bbM$-algebras.
Most of the arguments in this section are quite standard,
but we did not find them worked out for Eilenberg--Moore algebras
anywhere in the literature.
We begin by looking at quotients of ordered sets. Then we will turn to $\bbM$-algebras.
\begin{defi}
Let $A$~be an ordered set and ${\sqsubseteq} \subseteq A \times A$ a preorder with
${\leq} \subseteq {\sqsubseteq}$.

(a) The \emph{kernel} of a function $f : A \to B$ is the relation
\begin{align*}
  \ker f := \set{ \langle a,a'\rangle \in A \times A }{ f(a) \leq f(a') }\,.
\end{align*}

(b) For $a \in A$ and $X \subseteq A$, we set
\begin{align*}
  \Aboveseg a := \set{ b \in A }{ b \geq a }
  \qtextq{and}
  \Aboveseg X := \bigcup_{a \in X} \Aboveseg a\,.
\end{align*}

(c) The set of \emph{$\sqsubseteq$-classes} is
\begin{align*}
  A/{\sqsubseteq} := \set{ [a]_{\sqsubseteq} }{ a \in A }
  \qtextq{where}
  [a]_{\sqsubseteq} := \set{ b \in A }{ b \sqsubseteq a \text{ and } a \sqsubseteq b }\,.
\end{align*}
We equip it with the ordering
\begin{align*}
  [a]_{\sqsubseteq} \leq [b]_{\sqsubseteq} \quad\defiff\quad a \sqsubseteq b\,.
\end{align*}

(d) The \emph{quotient map} $q : A \to A/{\sqsubseteq}$ maps $a \in A$ to~$[a]_\sqsubseteq$.

(e) We say that $\sqsubseteq$~has \emph{finitary index} if
the quotient $A/{\sqsubseteq}$ is sort-wise finite.
\markenddef
\end{defi}

A very useful tool to construct quotients is the following lemma from universal algebra.
\begin{lem}[Factorisation Lemma]
Let $f : A \to B$ and $g : A \to C$ be functions and assume that $f$~is surjective.
Then $g = h \circ f$, for some $h : B \to C$, if and only if
\begin{align*}
  \ker f \subseteq \ker g\,.
\end{align*}
Moreover, the function $h$~is unique, if it exists.
\end{lem}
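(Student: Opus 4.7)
The plan is to treat the two implications separately and conclude with a short argument for uniqueness. Throughout I keep in mind that we are in $\Pos^\Xi$, so that $\ker f$ is a preorder (not an equivalence relation) and any $h$ we construct must be both sort- and order-preserving.

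The forward direction is immediate: if $g = h \circ f$ with $h$ order-preserving and $\langle a,a'\rangle \in \ker f$, then $f(a) \leq f(a')$ yields $g(a) = h(f(a)) \leq h(f(a')) = g(a')$, so $\langle a,a'\rangle \in \ker g$.

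For the backward direction, assume $\ker f \subseteq \ker g$. Using surjectivity of $f$, pick for each $b \in B$ some $a_b \in A$ with $f(a_b) = b$ and set $h(b) := g(a_b)$. I would then verify three properties. First, $h(b)$ does not depend on the choice of $a_b$: if $f(a) = f(a')$, then $\langle a,a'\rangle$ and $\langle a',a\rangle$ both lie in $\ker f \subseteq \ker g$, giving $g(a) \leq g(a')$ and $g(a') \leq g(a)$, hence $g(a) = g(a')$ by antisymmetry of the order on $C$. Second, $h$ is sort-preserving because $f$ and $g$ are. Third, $h$ is order-preserving: given $b \leq b'$ in $B$, write $b = f(a)$ and $b' = f(a')$; then $\langle a,a'\rangle \in \ker f \subseteq \ker g$, so $h(b) = g(a) \leq g(a') = h(b')$. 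By construction $g = h \circ f$.

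Uniqueness is easy: any two candidates $h, h'$ with $h \circ f = g = h' \circ f$ agree on every $b = f(a)$ in the image of $f$, hence on all of $B$ by surjectivity. The only mildly delicate point of the whole argument is that the asymmetric formulation of $\ker f$ is used twice: once (together with antisymmetry of $C$) to obtain well-definedness of $h$, and once to obtain order-preservation of $h$. A version of the argument working with symmetric equivalence-kernels would produce a set-theoretic $h$ but would not force it to be a morphism in $\Pos^\Xi$.
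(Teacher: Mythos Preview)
Your proof is correct and essentially the same as the paper's: both construct $h$ by choosing preimages under $f$ (the paper phrases this as picking a set-theoretic right inverse $r$ and setting $h := g \circ r$) and then use the inclusion $\ker f \subseteq \ker g$ in the same two places, once for well-definedness/the factorisation identity and once for monotonicity. Your explicit remarks on sort-preservation and on the role of the asymmetric kernel are helpful clarifications that the paper leaves implicit.
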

\begin{proof}
The uniqueness of~$h$ follows from the surjectivity of~$f$,
since surjective functions are epimorphisms\?:
$h \circ f = g = h' \circ f$ implies $h = h'$.
Hence, it remains to consider existence.

$(\Rightarrow)$ If $g = h \circ f$, then
\begin{align*}
  f(a) \leq f(b)
  \qtextq{implies}
  g(a) = h(f(a)) \leq h(f(b)) = g(b)\,.
\end{align*}

$(\Leftarrow)$
Suppose that $\ker f \subseteq \ker g$.
As $f$~is surjective, it has a right inverse~$r$ (in~$\Set^\Xi$, $r$~might not be monotone).
We claim that $h := g \circ r$ is the desired function.

For monotonicity, suppose that $a \leq b$ in~$B$.
Then
\begin{align*}
  f(r(a)) = a \leq b = f(r(b))
  \qtextq{implies}
  \langle r(a),r(b)\rangle \in \ker f \subseteq \ker g\,.
\end{align*}
Consequently,
\begin{align*}
  h(a) = g(r(a)) \leq g(r(b)) = h(b)\,.
\end{align*}

To show that $g = h \circ f$, set $e := r \circ f$.
For $a \in A$, it follows that
\begin{align*}
  f(e(a)) = (f \circ r \circ f)(a) =  f(a)\,.
\end{align*}
Hence, $\langle a,e(a)\rangle, \langle e(a),a\rangle \in \ker f \subseteq \ker g$,
which implies that $g(a) = g(e(a))$. Thus
$g = g \circ e = g \circ r \circ f = h \circ f$.
\end{proof}

In order to lift the statement of the Factorisation Lemma from functions to morphisms,
it is sufficient to prove that, if $f$~and~$g$ are morphisms of $\bbM$-algebras, so is~$h$.
\begin{lem}\label{Lem: factorisation is morphism}
Let $f : \frakA \to \frakB$ and $g : \frakA \to \frakC$ be morphisms of\/ $\bbM$-algebras
and $h : B \to C$ a function such that $g = h \circ f$.
If $f$~is surjective, then $h$~is also a morphism of\/ $\bbM$-algebras.
\end{lem}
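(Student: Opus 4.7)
The plan is to verify the single equation $h \circ \pi_\frakB = \pi_\frakC \circ \bbM h$ defining a morphism, using that $f$ and $g$ already satisfy the analogous equations and that $\bbM f$ will be surjective enough to be cancelled on the right.

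First I would start from the fact that $g$ is a morphism, giving
\begin{align*}
  \pi_\frakC \circ \bbM g = g \circ \pi_\frakA.
\end{align*}
Substituting $g = h \circ f$ and using functoriality $\bbM g = \bbM h \circ \bbM f$, the left-hand side becomes $\pi_\frakC \circ \bbM h \circ \bbM f$. For the right-hand side I would apply the morphism equation for $f$, namely $f \circ \pi_\frakA = \pi_\frakB \circ \bbM f$, to rewrite $g \circ \pi_\frakA = h \circ f \circ \pi_\frakA$ as $h \circ \pi_\frakB \circ \bbM f$. Thus
\begin{align*}
  \pi_\frakC \circ \bbM h \circ \bbM f = h \circ \pi_\frakB \circ \bbM f.
\end{align*}

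The last step is to cancel $\bbM f$ on the right. Because $\bbM$ preserves surjectivity by our standing convention, $\bbM f : \bbM A \to \bbM B$ is surjective; and surjective order-preserving functions in $\Pos^\Xi$ are epimorphisms (given any $b \in \bbM B$, pick a preimage under $\bbM f$ and evaluate both sides at it). Cancelling therefore yields $h \circ \pi_\frakB = \pi_\frakC \circ \bbM h$, which is exactly the condition for $h$ to be a morphism of $\bbM$-algebras.

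There is no real obstacle here; the only point worth noting is the appeal to $\bbM$ preserving surjectivity, which is what supplies the epimorphism needed to cancel $\bbM f$. One could equivalently phrase the argument as applying the Factorisation Lemma to $\bbM f$ and the two parallel functions $\pi_\frakC \circ \bbM h$ and $h \circ \pi_\frakB$, but the direct cancellation above is the most economical route.
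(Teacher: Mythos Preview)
Your proof is correct and follows essentially the same approach as the paper: both derive the equality $h \circ \pi_\frakB \circ \bbM f = \pi_\frakC \circ \bbM h \circ \bbM f$ from the morphism equations for $f$ and $g$ together with functoriality, and then cancel $\bbM f$ using that $\bbM$ preserves surjectivity. The paper merely presents the chain of equalities in a single line rather than splitting it into steps.
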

\begin{proof}
Note that
\begin{align*}
  h \circ \pi \circ \bbM f
  = h \circ f \circ \pi
  = g \circ \pi
  = \pi \circ \bbM g
  = \pi \circ \bbM h \circ \bbM f\,.
\end{align*}
Since $f$~is surjective, so is~$\bbM f$.
Therefore, the above equation implies that $h \circ \pi = \pi \circ \bbM h$,
i.e., that $h$~is a morphism of $\bbM$-algebras.
\end{proof}

In $\Set^\Xi$ there also exists a dual to the statement of the Factorisation Lemma,
but in $\Pos^\Xi$~this version only holds in special cases as, in general,
surjective functions do not have right inverses.
Let us record the following version for algebras of the form~$\bbM X$ where $X$~is unordered.
In category-theoretical terminology
it states that such algebras are \emph{projective}.
\begin{lem}\label{Lem: FX projective}
Let $\varphi : \bbM X \to \frakB$ and $\psi : \frakA \to \frakB$ be morphisms of\/ $\bbM$-algebras
where $X$~is an unordered set.
If $\psi$~is surjective, there exists some morphism $\hat\varphi : \bbM X \to \frakA$
such that $\varphi = \psi \circ \hat\varphi$.

\medskip
\centering
\includegraphics{Abstract-6.mps}
%
%
%
%
\end{lem}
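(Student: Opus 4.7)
The plan is to exploit the freeness of $\bbM X$, which was established in Proposition~\ref{Prop: existence of free algebra}. Recall that by that proposition, for every $\bbM$-algebra~$\frakC$ and every function $g : X \to C$, there is a unique morphism $\bar g : \bbM X \to \frakC$ with $\bar g \circ \sing = g$. Consequently, to produce $\hat\varphi : \bbM X \to \frakA$ it is enough to produce a function $f : X \to A$ in $\Pos^\Xi$, and then to verify the equation $\varphi = \psi \circ \hat\varphi$ it is enough, by uniqueness, to check that the two morphisms agree after precomposition with $\sing$, i.e., that $\psi \circ f = \varphi \circ \sing$.

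First I would construct the required $f$. For each sort~$\xi$ and each $x \in X_\xi$, the element $\varphi(\sing(x))$ lies in~$B_\xi$, and since $\psi$~is surjective (in particular at every sort), I can pick some $a_x \in A_\xi$ with $\psi(a_x) = \varphi(\sing(x))$. Setting $f(x) := a_x$ gives a sort-preserving function $f : X \to A$ in $\Set^\Xi$ satisfying $\psi \circ f = \varphi \circ \sing$ by construction. This is the step where the assumption that $X$~is unordered is crucial: since the order on~$X$ is trivial, \emph{any} sort-preserving function out of $X$ is automatically order-preserving, so $f$~is a legitimate morphism in $\Pos^\Xi$ with no further work needed. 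Without this assumption one would have to worry about choosing the preimages coherently with respect to the ordering, which is precisely where surjective morphisms in $\Pos^\Xi$ fail to split in general.

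Now I would apply the universal property of the free algebra to~$f$ to obtain a unique morphism $\hat\varphi : \bbM X \to \frakA$ with $\hat\varphi \circ \sing = f$. To finish, observe that both $\varphi$ and $\psi \circ \hat\varphi$ are $\bbM$-algebra morphisms from $\bbM X$ to~$\frakB$, and
\begin{align*}
  (\psi \circ \hat\varphi) \circ \sing
    = \psi \circ (\hat\varphi \circ \sing)
    = \psi \circ f
    = \varphi \circ \sing\,.
\end{align*}
By the uniqueness part of the universal property applied with $\frakC := \frakB$, it follows that $\psi \circ \hat\varphi = \varphi$, which completes the proof.

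I do not anticipate any real obstacle beyond correctly identifying where the ``unordered'' hypothesis is used: the whole argument is standard once one notices that the missing ingredient in $\Pos^\Xi$ (splittings of surjections) is supplied for free by the trivial order on~$X$.
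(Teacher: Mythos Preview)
Your proof is correct and follows essentially the same approach as the paper: pick preimages under~$\psi$ to define $f : X \to A$ (monotone for free since $X$~is unordered), extend via the universal property of~$\bbM X$, and conclude $\psi \circ \hat\varphi = \varphi$ by comparing the two morphisms on generators. The only cosmetic difference is that you invoke the uniqueness clause of the universal property explicitly, whereas the paper phrases the final step as ``the range of~$\sing$ generates~$\bbM X$''; these are the same argument.
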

\begin{proof}
If $\psi$~is surjective, we can pick, for every $x \in X$ some element
$f(x) \in \psi^{-1}(\varphi(\sing(x)))$.
This defines a function $f : X \to A$ with $\psi \circ f = \varphi \circ \sing$
(which is trivially monotone as $X$~is unordered).
As $\bbM X$ is freely generated by the range of~$\sing$,
we can extend~$f$ to a unique morphism $\hat\varphi : \bbM X \to \frakA$ with
$\hat\varphi \circ \sing = f$. It follows that
\begin{align*}
  \psi \circ \hat\varphi \circ \sing
  = \psi \circ f
  = \varphi \circ \sing\,.
\end{align*}
As the range of~$\sing$ generates~$\bbM X$, this implies that $\psi \circ \hat\varphi = \varphi$.
\end{proof}

Next, let us define quotients for algebras instead of sets.
\begin{defi}
Let $\frakA$~be an $\bbM$-algebra and ${\sqsubseteq} \subseteq A \times A$
a preorder with ${\leq} \subseteq {\sqsubseteq}$.

(a) For $s,t \in \bbM A$, we set
\begin{align*}
  s \sqsubseteq_\bbM t \quad\defiff\quad \bbM q(s) \leq \bbM q(t)\,,
\end{align*}
where $q : A \to A/{\sqsubseteq}$ is the quotient map.

(b) Let $\frakA$~be an $\bbM$-algebra.
The preorder~$\sqsubseteq$ is a \emph{congruence ordering} on~$\frakA$ if
\begin{align*}
  s \sqsubseteq_\bbM t \qtextq{implies} \pi(s) \sqsubseteq \pi(t)\,.
\end{align*}

(c)
If $\sqsubseteq$~is a congruence ordering on~$\frakA$,
we define the \emph{quotient}~$\frakA/{\sqsubseteq}$ as the algebra with universe~$A/{\sqsubseteq}$
where the product $\pi : \bbM(A/{\sqsubseteq}) \to A/{\sqsubseteq}$ is the unique function such that
\begin{align*}
  \pi \circ \bbM q = q \circ \pi\,,
\end{align*}
where $q : A \to A/{\sqsubseteq}$ denotes the quotient map.
\markenddef
\end{defi}
\begin{Rem}
It is straightforward to show that ${\sqsubseteq^\bbM} \subseteq {\sqsubseteq_\bbM}$.
For most monads~$\bbM$, these two relations are actually equal,
but our assumptions on~$\bbM$ are not quite strong enough to prove this in general.
\end{Rem}

\begin{prop}\label{Prop: quotient map is morphism}
Let\/ $\sqsubseteq$~be a congruence ordering on an\/ $\bbM$-algebra\/~$\frakA$.
The quotient\/ $\frakA/{\sqsubseteq}$ is a well-defined\/ $\bbM$-algebra and
the quotient map $q : \frakA \to \frakA/{\sqsubseteq}$ is a morphism of\/ $\bbM$-algebras.
\end{prop}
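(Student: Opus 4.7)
The plan is to first construct the product $\pi$ on $A/{\sqsubseteq}$ via the Factorisation Lemma, then observe that this construction automatically makes $q$ a morphism, and finally verify the associative and unit laws by cancelling the surjection $q$ (or its iterated image under $\bbM$) on the right.

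For the existence of $\pi : \bbM(A/{\sqsubseteq}) \to A/{\sqsubseteq}$, I would apply the Factorisation Lemma to $\bbM q : \bbM A \to \bbM(A/{\sqsubseteq})$ and $q \circ \pi_\frakA : \bbM A \to A/{\sqsubseteq}$. The map $q$ is surjective by construction, and $\bbM$ preserves surjectivity, so $\bbM q$ is surjective. The kernel condition $\ker(\bbM q) \subseteq \ker(q \circ \pi_\frakA)$ unwinds directly: $\langle s,t\rangle \in \ker(\bbM q)$ means $\bbM q(s) \leq \bbM q(t)$, which is exactly $s \sqsubseteq_\bbM t$; by the congruence ordering assumption this gives $\pi_\frakA(s) \sqsubseteq \pi_\frakA(t)$, which is $q(\pi_\frakA(s)) \leq q(\pi_\frakA(t))$, as required. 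The Factorisation Lemma then yields a unique order-preserving $\pi : \bbM(A/{\sqsubseteq}) \to A/{\sqsubseteq}$ with $\pi \circ \bbM q = q \circ \pi_\frakA$. This defining equation is precisely the statement that $q$ is a morphism of $\bbM$-algebras \emph{once} we know that $\langle A/{\sqsubseteq},\pi\rangle$ really is one.

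For the unit law, I would compute
\begin{align*}
  \pi \circ \sing \circ q
  = \pi \circ \bbM q \circ \sing
  = q \circ \pi_\frakA \circ \sing
  = q\,,
\end{align*}
using naturality of $\sing$ and the unit law in $\frakA$. Since $q$ is surjective (hence epi), this gives $\pi \circ \sing = \id$. For associativity, precompose both sides of the desired equation with $\bbM\bbM q$, which is surjective since $\bbM$ preserves surjectivity. Using naturality of $\Flat$, the defining equation of $\pi$, and the associative law in $\frakA$:
\begin{align*}
  \pi \circ \bbM\pi \circ \bbM\bbM q
  &= \pi \circ \bbM(\pi \circ \bbM q)
   = \pi \circ \bbM(q \circ \pi_\frakA)
   = \pi \circ \bbM q \circ \bbM\pi_\frakA \\
  &= q \circ \pi_\frakA \circ \bbM\pi_\frakA
   = q \circ \pi_\frakA \circ \Flat
   = \pi \circ \bbM q \circ \Flat
   = \pi \circ \Flat \circ \bbM\bbM q\,.
\end{align*}
Cancelling the surjection $\bbM\bbM q$ on the right yields $\pi \circ \bbM\pi = \pi \circ \Flat$.

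The only mildly delicate point is checking the kernel inclusion, and here everything is already arranged by the definition of a congruence ordering together with the definition of $\sqsubseteq_\bbM$ in terms of $\bbM q$; once that step is done, the rest is formal pushing of diagrams through surjective morphisms. No separate argument is needed to see that $q$ is a morphism, as that is literally the equation produced by the Factorisation Lemma.
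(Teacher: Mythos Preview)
Your proposal is correct and follows essentially the same approach as the paper: construct $\pi$ via the Factorisation Lemma by checking $\ker(\bbM q)\subseteq\ker(q\circ\pi_\frakA)$, then verify the unit and associative laws by precomposing with the surjections $q$ and $\bbM\bbM q$ and cancelling. The only cosmetic difference is that the paper separately records the one-line check that $q$ itself is monotone (from ${\leq}\subseteq{\sqsubseteq}$), which you use implicitly.
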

\begin{proof}
We have to check several properties.

(a) To see that $q$~is monotone, note that
\begin{align*}
  a \leq b \quad\Rightarrow\quad a \sqsubseteq b \quad\Rightarrow\quad q(a) \leq q(b)\,.
\end{align*}

(b) To show that the product of $\frakA/{\sqsubseteq}$ is well-defined and monotone
we apply the Factorisation Lemma. Note that, for $s,t \in \bbM A$,
\begin{align*}
  \bbM q(s) \leq \bbM q(t)
  \quad\Rightarrow\quad
  s \sqsubseteq_\bbM t
  \quad\Rightarrow\quad
  \pi(s) \sqsubseteq \pi(t)
  \quad\Rightarrow\quad
  q(\pi(s)) \leq q(\pi(t))\,,
\end{align*}
where the second step follows from the fact that $\sqsubseteq$~is a congruence ordering.
Thus, $\ker \bbM q \subseteq \ker {(q \circ \pi)}$ and
the Factorisation Lemma implies that there is a unique function
$\pi : \bbM(A/{\sqsubseteq}) \to A/{\sqsubseteq}$ with
$\pi \circ \bbM q = q \circ \pi$.

(c) It remains to check the two axioms of an $\bbM$-algebra.
For the unit law, note that $\sing : \Id \Rightarrow \bbM$ is a natural transformation.
Hence,
\begin{align*}
  \pi \circ \sing \circ q = \pi \circ \bbM q \circ \sing
                          = q \circ \pi \circ \sing
                          = q\,.
\end{align*}
As $q$~is surjective, this implies that $\pi \circ \sing = \id$.
For the associative law, note that
\begin{align*}
  \pi \circ \bbM\pi \circ \bbM\bbM q
  &= \pi \circ \bbM q \circ \bbM\pi \\
  &= q \circ \pi \circ \bbM\pi \\
  &= q \circ \pi \circ \Flat \\
  &= \pi \circ \bbM q \circ \Flat
   = \pi \circ \Flat \circ \bbM\bbM q\,.
\end{align*}
Hence, surjectivity of $\bbM\bbM q$ implies that $\pi \circ \bbM\pi= \pi \circ \Flat$.
\end{proof}

As usual, we have defined our notion of a congruence such that congruences correspond to
kernels of morphisms.
We will establish this correspondence in Proposition~\ref{Prop: characterisations of congruences}
below. But before doing so, let us take a closer look at the auxiliary relation $\sqsubseteq_\bbM$.
\begin{lem}\label{Lem: standard ordering extends to all preorders}
Let $\sqsubseteq$~be a preorder on~$A$ with ${\leq} \subseteq {\sqsubseteq}$. Then
\begin{align*}
  \bbM{\sqsubseteq} = \langle\bbM p_0,\bbM p_1\rangle^{-1}[{\sqsubseteq_\bbM}]\,,
\end{align*}
where $p_0,p_1 : A \times A \to A$ are the two projections.
\end{lem}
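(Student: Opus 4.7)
My plan is to establish the two set-theoretic inclusions separately. For the forward inclusion $\bbM\sqsubseteq \subseteq \langle\bbM p_0,\bbM p_1\rangle^{-1}[\sqsubseteq_\bbM]$, the key observation is that the assignment $(a,b) \mapsto (q(a),q(b))$ restricts to a monotone map $h\colon {\sqsubseteq} \to {\leq_{A/\sqsubseteq}}$, since $a \sqsubseteq b$ is by definition equivalent to $q(a) \leq q(b)$. Writing $\tilde\pi_0,\tilde\pi_1\colon {\leq_{A/\sqsubseteq}} \to A/\sqsubseteq$ for the two projections, the identities $\tilde\pi_i \circ h = q \circ p_i|_\sqsubseteq$ imply that for any $u \in \bbM\sqsubseteq$ the element $\bbM h(u) \in \bbM\leq_{A/\sqsubseteq}$ satisfies $\bbM\tilde\pi_i(\bbM h(u)) = \bbM q(\bbM p_i(u))$ for $i=0,1$. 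The standard-ordering assumption---i.e.\ the identification of $\leq_{\bbM(A/\sqsubseteq)}$ with the lift of $\leq_{A/\sqsubseteq}$---then guarantees that this pair of projections is $\leq$-comparable in $\bbM(A/\sqsubseteq)$, which is exactly $\bbM p_0(u) \sqsubseteq_\bbM \bbM p_1(u)$.

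For the reverse inclusion I would appeal to preservation of preimages. Writing $r := q \times q\colon A\times A \to (A/\sqsubseteq)^2$, the identity $\sqsubseteq = r^{-1}[\leq_{A/\sqsubseteq}]$ yields $\bbM\sqsubseteq = (\bbM r)^{-1}[\bbM\leq_{A/\sqsubseteq}]$. Meanwhile, unfolding the definition of $\sqsubseteq_\bbM$ and using $\bbM q \circ \bbM p_i = \bbM\pi_i \circ \bbM r$ rewrites the right-hand side of the lemma as $(\bbM r)^{-1}[\langle\bbM\pi_0,\bbM\pi_1\rangle^{-1}[\leq_{\bbM(A/\sqsubseteq)}]]$. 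Since $r$ is surjective and $\bbM$ preserves surjectivity, $\bbM r$ is surjective, so equality of the two preimages under $\bbM r$ reduces to the identity
\[
  \bbM\leq_B \;=\; \langle\bbM\pi_0,\bbM\pi_1\rangle^{-1}[\leq_{\bbM B}], \qquad B := A/\sqsubseteq.
\]

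The inclusion~``$\subseteq$'' of this reduced identity follows from standard ordering by the same argument as in the forward direction. The opposite inclusion is the main technical hurdle: given $v \in \bbM(B\times B)$ with $\bbM\pi_0(v) \leq \bbM\pi_1(v)$ in $\bbM B$, the standard ordering supplies some $w \in \bbM\leq_B$ sharing the two projections with $v$, but one must then argue that $v$ itself lies in $\bbM\leq_B$. I expect this to be extracted from preservation of preimages together with preservation of injectivity, ultimately showing that the natural pairing $\langle\bbM\pi_0,\bbM\pi_1\rangle\colon \bbM(B\times B) \to \bbM B \times \bbM B$ is faithful enough to force $v = w$, whence $v \in \bbM\leq_B$.
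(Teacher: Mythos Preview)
Your overall architecture coincides with the paper's. Both proofs reduce the statement to the identity
\[
  \bbM{\leq_B} \;=\; \langle\bbM\pi_0,\bbM\pi_1\rangle^{-1}\bigl[{\leq_{\bbM B}}\bigr],
  \qquad B := A/{\sqsubseteq},
\]
and then sandwich it between $\bbM{\sqsubseteq} = (\bbM(q\times q))^{-1}[\bbM{\leq_B}]$ (preservation of preimages) and ${\sqsubseteq_\bbM} = (\bbM q \times \bbM q)^{-1}[{\leq_{\bbM B}}]$, using the commuting square $(\bbM q \times \bbM q)\circ\langle\bbM p_0,\bbM p_1\rangle = \langle\bbM\pi_0,\bbM\pi_1\rangle\circ\bbM(q\times q)$. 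Your forward inclusion and the reduction to the displayed identity are correct and match the paper; the remark that $\bbM r$ is surjective is unnecessary (equal sets have equal preimages under any map) but harmless.

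The genuine gap is exactly where you place it: the inclusion $\langle\bbM\pi_0,\bbM\pi_1\rangle^{-1}[{\leq_{\bbM B}}] \subseteq \bbM{\leq_B}$. The paper dispatches this step with the single phrase ``holds by definition of the lift~$\leq^\bbM$'' and gives no further argument, so you will not find help there. Your proposed fix, however, does not go through as stated. You want to conclude $v = w$ from $\langle\bbM\pi_0,\bbM\pi_1\rangle(v) = \langle\bbM\pi_0,\bbM\pi_1\rangle(w)$, i.e.\ you need the comparison map $\bbM(B\times B)\to\bbM B\times\bbM B$ to be injective. Preservation of injectivity and of preimages only yield preservation of pullbacks along \emph{monomorphisms}; injectivity of the comparison map into a binary product is a strictly stronger condition. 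It fails, for instance, for the finite-multiset functor (which does preserve injections, surjections, and preimages): over a two-element set the elements of $\bbM(B\times B)$ supported on $\{(0,0),(1,1)\}$ and on $\{(0,1),(1,0)\}$ respectively have the same pair of projections in $\bbM B\times\bbM B$. So the ``faithful enough to force $v=w$'' step is a real hole, and your sketch does not close it---nor, taken literally, does the paper's one-line appeal to the definition.
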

\begin{proof}
Let $p'_0,p'_1 : A/{\sqsubseteq} \times A/{\sqsubseteq} \to A/{\sqsubseteq}$ be the two projections,
$q : A \to A/{\sqsubseteq}$ the quotient map, and
let $\leq$~be the ordering on $A/{\sqsubseteq}$.
As $\bbM$~uses the standard ordering it follows that $\bbM(A/{\sqsubseteq})$
is ordered by the lift~$\leq^\bbM$.
We consider the following diagram
\begin{center}
\includegraphics{Abstract-7.mps}
\end{center}
where the vertical arrows denote the respective inclusion maps.
Let us first explain why this diagram commutes.
Since in each square the vertical maps are inclusions and the top map is a
restriction of the bottom one,
it is sufficient to show that the maps in the upper trapezium
map the first of the given subsets to the second one.
That is, we have to show that
\begin{align*}
  \bbM(q \times q)[\bbM{\sqsubseteq}]           &\subseteq \bbM{\leq}\,, \\
  \langle\bbM p'_0,\bbM p'_1\rangle[\bbM{\leq}] &\subseteq {\leq}^\bbM, \\
  (\bbM q \times \bbM q)[{\sqsubseteq}_\bbM]    &\subseteq {\leq}^\bbM.
\end{align*}
The first inclusion follows from the fact that $q \times q$ maps~$\sqsubseteq$ to~$\leq$
by simply applying the functor~$\bbM$\?;
the second one holds by definition of the lift~$\leq^\bbM$\?; and
the last inclusion follows immediately from the definition of~$\sqsubseteq_\bbM$.

To conclude the proof, note that we have even the stronger statements
\begin{align*}
  \bbM{\sqsubseteq}  &= \bbM(q \times q)^{-1}[\bbM{\leq}]\,, \\
  \bbM{\leq}         &= \langle\bbM p'_0,\bbM p'_1\rangle^{-1}[{\leq}^\bbM]\,, \\
  {\sqsubseteq}_\bbM &= (\bbM q \times \bbM q)^{-1}[{\leq}^\bbM]\,.
\end{align*}
The first equation follows from the fact that $\bbM$~preserves preimages,
while the second one holds by definition of the lift~$\leq^\bbM$.
To check the third equation, suppose that $\bbM q(s) \leq^\bbM \bbM q(t)$.
Considering the sets
\begin{align*}
  X := \set{ u \in \bbM(A/{\sqsubseteq}) }{ \bbM q(s) \leq^\bbM u }
  \qtextq{and}
  Y := \set{ v \in \bbM A }{ s \sqsubseteq_\bbM v }\,,
\end{align*}
preservation of preimages implies that $Y = (\bbM q)^{-1}[X]$.
In particular, $t \in Y$ and $s \sqsubseteq_\bbM t$.

It now follows that
\begin{align*}
  \bbM{\sqsubseteq}
  &= \bbM(q \times q)^{-1}\bigl[\langle\bbM p'_0,\bbM p'_1\rangle^{-1}[{\leq}^\bbM]\bigr] \\
  &= \langle\bbM p_0,\bbM p_1\rangle^{-1}\bigl[(\bbM q \times \bbM q)^{-1}[{\leq}^\bbM]\bigr]
   = \langle\bbM p_0,\bbM p_1\rangle^{-1}[{\sqsubseteq}_\bbM]\,.
\end{align*}
\upqed
\end{proof}

We obtain the following characterisation of congruence orderings.
\begin{prop}\label{Prop: characterisations of congruences}
Let $\frakA$ be an $\bbM$-algebra and ${\sqsubseteq} \subseteq A \times A$ a preorder that
contains the ordering of~$A$. Let $p_0,p_1 : A \times A \to A$ be the two projections.
The following conditions are equivalent.
\begin{enum1}
\item $\sqsubseteq$~is a congruence ordering on~$\frakA$.
\item ${\sqsubseteq} = \ker \varphi$, for some morphism $\varphi : \frakA \to \frakB$.
\item $u \in \bbM{\sqsubseteq}$\quad implies\quad $\pi(\bbM p_0(u)) \sqsubseteq \pi(\bbM p_1(u))$\,.
\item $\sqsubseteq$~induces a subalgebra of $\frakA \times \frakA$.
\end{enum1}
\end{prop}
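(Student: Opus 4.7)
The plan is to establish the equivalences via the cycle $(1) \Rightarrow (2) \Rightarrow (1)$ together with $(1) \Rightarrow (3) \Leftrightarrow (4) \Rightarrow (1)$, the main tools being Proposition~\ref{Prop: quotient map is morphism}, the Factorisation Lemma, and Lemma~\ref{Lem: standard ordering extends to all preorders}.

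For $(1) \Rightarrow (2)$ I would take $\varphi$ to be the quotient map $q : \frakA \to \frakA/{\sqsubseteq}$, which is a morphism of $\bbM$-algebras by Proposition~\ref{Prop: quotient map is morphism}; its kernel is ${\sqsubseteq}$ by the definition of the quotient ordering. For the converse $(2) \Rightarrow (1)$, the Factorisation Lemma applied to $q$ and $\varphi$ (observing $\ker q = {\sqsubseteq} = \ker\varphi$) produces a unique $h : A/{\sqsubseteq} \to B$ with $\varphi = h \circ q$, which is in fact an order-embedding. Given $s \sqsubseteq_\bbM t$, i.e.\ $\bbM q(s) \leq \bbM q(t)$, monotonicity of $\bbM h$ and of~$\pi_\frakB$, combined with the chain $\varphi \circ \pi = \pi \circ \bbM\varphi = \pi \circ \bbM h \circ \bbM q$, yields $\varphi(\pi(s)) \leq \varphi(\pi(t))$, i.e.\ $(\pi(s),\pi(t)) \in \ker\varphi = {\sqsubseteq}$.

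The direction $(1) \Rightarrow (3)$ is immediate from Lemma~\ref{Lem: standard ordering extends to all preorders}: any $u \in \bbM{\sqsubseteq}$ satisfies $\bbM p_0(u) \sqsubseteq_\bbM \bbM p_1(u)$, so (1) applies. The equivalence $(3) \Leftrightarrow (4)$ is a direct reading of the product in $\frakA \times \frakA$, which acts on $\bbM(A \times A)$ by $u \mapsto \langle\pi(\bbM p_0(u)), \pi(\bbM p_1(u))\rangle$; closure of ${\sqsubseteq}$ under this product (restricted to $\bbM{\sqsubseteq}$ via the injection $\bbM i$) is literally (3).

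The main obstacle is $(3) \Rightarrow (1)$. Given $s \sqsubseteq_\bbM t$, i.e.\ $\bbM q(s) \leq \bbM q(t)$ in $\bbM(A/{\sqsubseteq})$, the standard ordering provides a witness $w \in \bbM{\leq_{A/{\sqsubseteq}}}$ with $\bbM p'_0(w) = \bbM q(s)$ and $\bbM p'_1(w) = \bbM q(t)$. Surjectivity of $q \times q$ lifts through $\bbM$ to give some $u \in \bbM(A \times A)$ with $\bbM(q \times q)(u) = w$, and preservation of preimages together with $(q \times q)^{-1}[{\leq_{A/{\sqsubseteq}}}] = {\sqsubseteq}$ places $u$ in $\bbM{\sqsubseteq}$. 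Applying (3) yields $\pi(\bbM p_0(u)) \sqsubseteq \pi(\bbM p_1(u))$, and by naturality $\bbM q(\bbM p_0(u)) = \bbM q(s)$ and $\bbM q(\bbM p_1(u)) = \bbM q(t)$. The delicate point is to transport this back to $\pi(s) \sqsubseteq \pi(t)$: since $\bbM p_0(u)$ and $s$ need not be equal but only $\sqsubseteq_\bbM$-equivalent, one needs to sharpen the lifting -- ideally finding $u \in \bbM{\sqsubseteq}$ with $\bbM p_0(u) = s$ and $\bbM p_1(u) = t$ -- which is exactly the identity ${\sqsubseteq^\bbM} = {\sqsubseteq_\bbM}$ touched on in the preceding Remark and which must be extracted from the standing assumptions on~$\bbM$.
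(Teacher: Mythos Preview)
Your cycle of implications and the arguments for $(1)\Leftrightarrow(2)$, $(1)\Rightarrow(3)$, and $(3)\Leftrightarrow(4)$ coincide with the paper's proof essentially verbatim. The only point of divergence is the step $(3)\Rightarrow(1)$, where you correctly isolate the crux.

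The paper handles $(3)\Rightarrow(1)$ in one line: from $s \sqsubseteq_\bbM t$ it invokes Lemma~\ref{Lem: standard ordering extends to all preorders} to produce $u \in \bbM{\sqsubseteq}$ with $\bbM p_0(u) = s$ and $\bbM p_1(u) = t$, then applies~(3). You are right to be uneasy about this move: Lemma~\ref{Lem: standard ordering extends to all preorders} only states $\bbM{\sqsubseteq} = \langle\bbM p_0,\bbM p_1\rangle^{-1}[{\sqsubseteq_\bbM}]$, which gives the \emph{preimage} characterisation but says nothing about $\langle\bbM p_0,\bbM p_1\rangle$ mapping $\bbM{\sqsubseteq}$ \emph{onto}~$\sqsubseteq_\bbM$. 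Surjectivity onto~$\sqsubseteq_\bbM$ is precisely the inclusion ${\sqsubseteq_\bbM} \subseteq {\sqsubseteq^\bbM}$, which the Remark preceding the proposition explicitly says the standing assumptions on~$\bbM$ are ``not quite strong enough to prove in general.'' Your alternative lifting via surjectivity of $\bbM(q\times q)$ runs into exactly the same obstruction, as you observe: one recovers $u$ with $\bbM q(\bbM p_i(u)) = \bbM q$ of the target, not equality on the nose.

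So you have not introduced a new gap; you have located one that the paper's own argument glosses over, and your diagnosis that it reduces to ${\sqsubseteq^\bbM} = {\sqsubseteq_\bbM}$ is accurate. For the monads of interest (polynomial functors) this equality does hold, but under the paper's stated hypotheses the implication $(3)\Rightarrow(1)$ as written appears to need an additional argument.
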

\begin{proof}
(1)~$\Rightarrow$~(2)
The quotient map $q : \frakA \to \frakA/{\sqsubseteq}$ has kernel~$\sqsubseteq$.

(2)~$\Rightarrow$~(1)
Clearly, $a \leq b$ implies $\varphi(a) \leq \varphi(b)$.
Thus, ${\leq} \subseteq {\sqsubseteq}$.
For the other condition, consider two elements $s,t \in \bbM A$ with $s \sqsubseteq_\bbM t$.
By definition, this means that $\bbM q(s) \leq \bbM q(t)$ where $q : \frakA \to \frakA/{\sqsubseteq}$
is the quotient map.
As $q$~is surjective, we can use the Factorisation Lemma to find a function
$f : A/{\sqsubseteq} \to B$ with $\varphi = f \circ q$.
By monotonicity of $f$~and~$\pi$, it follows that
\begin{align*}
  \varphi(\pi(s))
   = \pi(\bbM\varphi(s))
  &= \pi(\bbM f(\bbM q(s))) \\
  &\leq \pi(\bbM f(\bbM q(t)))
   = \pi(\bbM\varphi(t))
   = \varphi(\pi(t))\,.
\end{align*}
Consequently, $\langle \pi(s),\pi(t)\rangle \in \ker \varphi = {\sqsubseteq}$.

(4)~$\Rightarrow$~(3)
Let $u \in \bbM{\sqsubseteq}$. Then
\begin{align*}
  \bigl\langle \pi(\bbM p_0(u)),\ \pi(\bbM p_1(u))\bigr\rangle
  = \bigl\langle p_0(\pi(u)),\ p_1(\pi(u))\bigr\rangle
  = \pi(u) \in {\sqsubseteq}\,.
\end{align*}
Hence, $\pi(\bbM p_0(u)) \sqsubseteq \pi(\bbM p_1(u))$.

(3)~$\Rightarrow$~(4)
Let $u \in \bbM{\sqsubseteq}$. Then $\pi(\bbM p_0(u)) \sqsubseteq \pi(\bbM p_1(u))$ implies that
\begin{align*}
  \pi(u) = \bigl\langle p_0(\pi(u)),\ p_1(\pi(u))\bigr\rangle
         = \bigl\langle \pi(\bbM p_0(u)),\ \pi(\bbM p_1(u))\bigr\rangle \in {\sqsubseteq}\,.
\end{align*}

(3)~$\Rightarrow$~(1) To show that ${\sqsubseteq}$~is a congruence ordering,
suppose that $s \sqsubseteq_\bbM t$.
By Lemma~\ref{Lem: standard ordering extends to all preorders}, there is some
$u \in \bbM{\sqsubseteq}$ with $s = \bbM p_0(u)$ and $t = \bbM p_1(u)$.
Hence it follows by~(3) that
\begin{align*}
  \pi(s) = \pi(\bbM p_0(u)) \sqsubseteq \pi(\bbM p_1(u)) = \pi(t)\,.
\end{align*}

(1)~$\Rightarrow$~(3)
Given $u \in \bbM{\sqsubseteq}$,
Lemma~\ref{Lem: standard ordering extends to all preorders} implies that
$\bbM p_0(u) \sqsubseteq_\bbM \bbM p_1(u)$.
Hence, $\bbM(q \circ p_0)(u) \leq \bbM(q \circ p_1)(u)$,
where $q : \frakA \to \frakA/{\sqsubseteq}$ is the quotient map.
As the product~$\pi$ is monotone, it follows that
\begin{align*}
  \pi(\bbM(q \circ p_0)(u)) \leq \pi(\bbM(q \circ p_1)(u))\,.
\end{align*}
Hence,
\begin{align*}
  q(\pi(\bbM p_0(u)))
  = \pi(\bbM(q \circ p_0)(u))
  \leq \pi(\bbM(q \circ p_1)(u))
  = q(\pi(\bbM p_1(u)))\,,
\end{align*}
which implies that $\pi(\bbM p_0(u)) \sqsubseteq \pi(\bbM p_1(u))$.
\end{proof}

\section{Languages and syntactic algebras}   
\label{Sect: synt. alg.}

Our main point of interest is to determine which
sets $K \subseteq \bbM_\xi\Sigma$ are definable in a given logic.
We start by collecting the needed notions from language theory.

\begin{defi}
(a) An \emph{alphabet} is a finite \emph{unordered} set $\Sigma \in \Pos^\Xi$.
We denote by $\Alp$ the category of all alphabets with functions
as morphisms.

(b) A \emph{language} over the alphabet~$\Sigma$ is a subset
$K \subseteq \bbM_\xi\Sigma$, for some sort~$\xi$.

(c)
A \emph{family of languages} is a function~$\calK$ mapping each alphabet~$\Sigma$
to a class $\calK[\Sigma]$ of languages over~$\Sigma$.

(d)
A function $f : \bbM\Sigma \to A$ \emph{recognises}
a language $K \subseteq \bbM_\xi\Sigma$ if
$K = f^{-1}[P]$, for some \emph{upwards closed} set $P \subseteq A_\xi$.

(e)
Let $f : \Sigma \to \Gamma$ be a morphism of $\Alp$.
We call a morphism of the form $\bbM f : \bbM\Sigma \to \bbM\Gamma$ a~\emph{relabelling}
and, for a language $K \subseteq \bbM_\xi\Gamma$, we call the set
\begin{align*}
  (\bbM f)^{-1}[K] := \set{ s \in \bbM_\xi\Sigma }{ \bbM f(s) \in K }
\end{align*}
an \emph{inverse relabelling} of~$K$.
\markenddef
\end{defi}

Note that we always assume alphabets to be unordered.
This is required for the variety theorem in the next section.
But sometimes it is useful to also work with languages over ordered alphabets.
We do so by simply forgetting the order.
This leads to the following extension of the notion of a family of languages.
\begin{defi}
Let $\calK$~be a family of languages. For a finite \emph{ordered} set~$C$, we define
\begin{align*}
  \calK[C] := \bigset{ \bbM\iota[K] }{ K \in \calK[\bbV C] }\,,
\end{align*}
where $\bbV$~and~$\iota$ are the operations from Definition~\ref{Def: bbV and iota}.
\markenddef
\end{defi}

One of our main tools will be the following relation associated with a language.
\begin{defi}
Let $\frakA$~be an $\bbM$-algebra.

(a) A \emph{context} is an element of $\bbM(A + \Box)$,
where $\Box$~is considered as some special symbol of an arbitrary, but fixed sort~$\zeta$.
For a context $p \in \bbM_\xi(A + \Box)$ and an element $a \in A_\zeta$, we define
\begin{align*}
  p[a] := \sigma_a(p) \in A_\xi
\end{align*}
where $\sigma_a : \bbM(A + \Box) \to \frakA$ is the unique morphism that extends
the function $s_a : A + \Box \to A$ given by
\begin{align*}
  s_a(\Box) := a
  \qtextq{and}
  s_a(c) := c\,,
  \quad\text{for } c \in A\,.
\end{align*}
In the case where $\frakA = \bbM\Sigma$ is a free $\bbM$-algebra,
we will also consider elements $p \in \bbM(\Sigma + \Box)$ as contexts,
by identifying them with their image under $\bbM(\sing + 1)$
(the function $\sing + 1$ maps $a \in A$ to $\sing(a)$ and $\Box$ to~$\Box$).

(b) The \emph{composition} of two contexts $p,q \in \bbM(A + \Box)$ is the context
\begin{align*}
  pq := \hat p[q] \in \bbM(A + \Box)\,,
\end{align*}
where $\hat p := \bbM(\sing + 1)(p)$ and $\hat p[q]$ is evaluated in the $\bbM$-algebra
$\bbM(A + \Box)$.

(c) A \emph{derivative} of a subset $K \subseteq A_\xi$ is a set of the form
\begin{align*}
  p^{-1}[K] := \set{ a \in A_\zeta }{ p[a] \in K }\,,
  \quad\text{where } p \in \bbM_\xi(A + \Box) \text{ is a context.}
\end{align*}

(d) The \emph{syntactic congruence} of an upwards closed set $K \subseteq A_\xi$
is the relation
\begin{align*}
  a \preceq_K b \quad\defiff\quad
  (p[a] \in K \Rightarrow p[b] \in K)\,,\quad \text{for all } p \in \bbM_\xi(A+\Box)\,,
\end{align*}
for $a,b \in A$.

(e) We call the quotient
$\Syn(K) := \frakA/{\preceq_K}$ the \emph{syntactic algebra} of~$K$ and the quotient map
$\syn_K : \frakA \to \frakA/{\preceq_K}$ the \emph{syntactic morphism} of~$K$.

(f) We say that a language~$K$ \emph{has a syntactic algebra} if
$\preceq_K$~is a congruence ordering with finitary index.
\markenddef
\end{defi}
Note that, in general, the syntactic congruence does not need to be a congruence ordering,
the syntactic algebra not an $\bbM$-algebra, and the syntactic morphism not a morphism of
$\bbM$-algebras, but we will mainly be interested in the case where they are.
Hence the terminology.

\begin{lem}\label{Lem: applying context is monotonous}
Let\/ $\frakA$~be an\/ $\bbM$-algebra, $K \subseteq A_\xi$ upwards closed, $a,b \in A$, and
$p \in \bbM(A + \Box)$.
\begin{enuma}
\item $a \leq b \qtextq{implies} p[a] \leq p[b]\,.$
\item $a \leq b \qtextq{implies} a \preceq_K b\,.$
\item $a \preceq_K b \qtextq{implies} p[a] \preceq_K p[b] \qtextq{and} a \preceq_{p^{-1}[K]} b\,.$
\end{enuma}
\end{lem}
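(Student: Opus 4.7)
For part~(a), I would exploit the standing assumption that $\bbM$ uses the standard ordering on its values. The hypothesis $a \leq b$ forces $a$ and $b$ to share the sort $\zeta$ of~$\Box$. Let $R := {\leq} \subseteq A \times A$ denote the ordering relation on $A$ as a sorted subposet with the restricted product order, and let $\mathrm{pr}_0, \mathrm{pr}_1 : R \to A$ be its two projections. Define an order-preserving function $h : A + \Box \to R$ by $h(\Box) := \langle a, b\rangle$ and $h(c) := \langle c, c\rangle$ for $c \in A$. Since $\mathrm{pr}_0 \circ h = s_a$ and $\mathrm{pr}_1 \circ h = s_b$, functoriality of $\bbM$ applied to $u := \bbM h(p) \in \bbM R$ gives $\bbM s_a(p) = \bbM\mathrm{pr}_0(u)$ and $\bbM s_b(p) = \bbM\mathrm{pr}_1(u)$. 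By the definition of the lift $R^\bbM$ together with the standard-ordering assumption, this means $\bbM s_a(p) \leq \bbM s_b(p)$ in $\bbM A$; monotonicity of $\pi$ then yields $p[a] \leq p[b]$.

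Part~(b) is then immediate: given $p[a] \in K$, part~(a) gives $p[a] \leq p[b]$, and upward closure of $K$ puts $p[b]$ in $K$.

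For part~(c) the decisive ingredient is the composition identity $(qp)[c] = q[p[c]]$ for all $c \in A_\zeta$, which I would establish via the universal property of the free $\bbM$-algebra $\bbM(A+\Box)$. Both sides, regarded as functions of $q$, are morphisms of $\bbM$-algebras $\bbM(A+\Box) \to \frakA$: the right-hand side is $\sigma_{p[c]}$, while the left-hand side equals $\sigma_c \circ T$, where $T : \bbM(A+\Box) \to \bbM(A+\Box)$ is the unique morphism sending $\sing(\Box) \mapsto p$ and $\sing(d) \mapsto \sing(d)$ for $d \in A$. These two morphisms agree on generators: both send $\sing(c')$ to $c'$ for $c' \in A$, and both send $\sing(\Box)$ to $p[c]$. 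Granting this identity, both halves of~(c) reduce to a single use of $a \preceq_K b$: for $p[a] \preceq_K p[b]$, any context $q$ satisfies $q[p[a]] = (qp)[a]$ and $q[p[b]] = (qp)[b]$, so $a \preceq_K b$ applied to the context $qp$ concludes; for $a \preceq_{p^{-1}[K]} b$, a context $q$ of sort $\zeta$ satisfies $q[a] \in p^{-1}[K]$ iff $(pq)[a] \in K$, and analogously for $b$, so $a \preceq_K b$ applied to $pq$ concludes.

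The main obstacle is verifying the composition identity: unwinding $qp = \hat q[p]$ with $\hat q = \bbM(\sing + 1)(q)$ through its evaluation in $\bbM(A+\Box)$ takes a small diagram chase through the implicit inclusion $\bbM A + \Box \hookrightarrow \bbM(A+\Box) + \Box$, but it reduces to the check on generators sketched above. Part~(a) is a mechanical instance of the standard-ordering assumption, and (b) is a one-liner.
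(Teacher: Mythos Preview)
Your proposal is correct and follows essentially the same route as the paper. Part~(a) is argued identically via the standard-ordering assumption (the paper maps into $A\times A$ and then observes the image lands in~$\leq$, while you map directly into $R={\leq}$; this is cosmetic), and part~(b) is the same one-liner. For part~(c) the paper uses the identity $(qp)[c]=q[p[c]]$ without justification and then runs exactly your two reductions via $qp$ and $pq$; your verification of that identity by comparing the two morphisms $\sigma_{p[c]}$ and $\sigma_c\circ T$ on generators is a welcome addition rather than a divergence.
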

\begin{proof}
(a)
Let $g : A + \Box \to A \times A$ be the function where
\begin{align*}
  g(\Box) := \langle a,b\rangle
  \qtextq{and}
  g(c) = \langle c,c\rangle\,, \quad\text{for } c \in A\,,
\end{align*}
let $q,q' : A \times A \to A$ be the two projections, and set $u := \bbM g(p)$.
Then $u \in \bbM{\leq}$ (where $\leq$~is the ordering of~$A$) and
\begin{align*}
  p[a] = \pi(\bbM q(u))
  \qtextq{and}
  p[b] = \pi(\bbM q'(u))\,.
\end{align*}
Since $\bbM$~uses the standard ordering, this implies that $p[a] \leq p[b]$.

(b) Suppose that $a \leq b$ and let $p \in \bbM(A + \Box)$ be a context.
By~(a), we have $p[a] \leq p[b]$.
Consequently, $p[a] \in K$ implies $p[b] \in K$.

(c)
Suppose that $a \preceq_K b$. To show that $p[a] \preceq_K p[b]$, consider
a context~$q$ with $q[p[a]] \in K$.
Then
$a \preceq_K b$
and
$q[p[a]] = (qp)[a] \in K$
implies that $q[p[b]] = (qp)[b] \in K$.

To show that $a \preceq_{p^{-1}[K]} b$, consider a context~$q$ with $q[a] \in p^{-1}[K]$.
Then
$a \preceq_K b$
and
$(pq)[a] = p[q[a]] \in K$
implies that $(pq)[b] = p[q[b]] \in K$. Thus $q[b] \in p^{-1}[K]$.
\end{proof}

One consequence of this lemma is that
the quotient $A/{\preceq_K}$ does exist at least as a set.
\begin{cor}
Let\/ $\frakA$~be an\/ $\bbM$-algebra and $K \subseteq A_\xi$ upwards closed.
Then\/ $\preceq_K$~is a preorder with\/ ${\leq} \subseteq {\preceq_K}$.
\end{cor}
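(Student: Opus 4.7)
The corollary asserts three things about $\preceq_K$: reflexivity, transitivity, and the containment ${\leq} \subseteq {\preceq_K}$. None of these is deep, but let me sketch how each falls out.

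For reflexivity, fix $a \in A$ and any context $p \in \bbM_\xi(A + \Box)$. Then the implication ``$p[a] \in K \Rightarrow p[a] \in K$'' is trivially true, so $a \preceq_K a$ directly from the definition. For transitivity, suppose $a \preceq_K b$ and $b \preceq_K c$, and let $p \in \bbM_\xi(A+\Box)$ be any context. If $p[a] \in K$, then $p[b] \in K$ by the first hypothesis, and then $p[c] \in K$ by the second, yielding $a \preceq_K c$. So these two properties make $\preceq_K$ a preorder on each sort of~$A$.

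The containment ${\leq} \subseteq {\preceq_K}$ is precisely the content of part~(b) of Lemma~\ref{Lem: applying context is monotonous} that was just proved. Indeed, if $a \leq b$, that lemma gives $a \preceq_K b$ directly. So the corollary requires no further work beyond invoking this lemma together with the trivial verifications above.

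The only potential snag I see is ensuring that the quotient $A/{\preceq_K}$ really is a well-defined object in $\Pos^\Xi$, i.e.\ that $\preceq_K$ respects the sort structure. This is immediate from the definition, since the symbol $\Box$ has a fixed sort $\zeta$ and the relation $\preceq_K$ only compares elements of the same sort $\zeta$ (the definition implicitly restricts $a,b$ to sort $\zeta$, given the typing of $p[a]$ and $p[b]$). Once this is noted, the corollary is a one-line consequence of the preceding lemma.
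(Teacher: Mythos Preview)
Your proof is correct and follows the same approach as the paper's: reflexivity and transitivity are immediate from the definition of $\preceq_K$, and the containment ${\leq} \subseteq {\preceq_K}$ is exactly part~(b) of the preceding lemma. Your final paragraph about sorts is tangential to what the corollary actually asserts and slightly imprecise (the sort~$\zeta$ of~$\Box$ is not globally fixed but varies, so $\preceq_K$ is defined on all of~$A$ sort-wise), but this does not affect the argument.
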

\begin{proof}
Reflexivity and transitivity of $\preceq_K$ follow immediately from the definition.
The fact that $\preceq_K$ contains~$\leq$ is part~(a) of the preceding lemma.
\end{proof}

\begin{lem}\label{Lem: morphisms recognising K}
A morphism $\varphi : \bbM\Sigma \to \frakA$ of\/ $\bbM$-algebras recognises a
language $K \subseteq \bbM_\xi\Sigma$ if, and only if,\/ $\ker \varphi \subseteq {\preceq_K}$.
\end{lem}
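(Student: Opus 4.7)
My plan is to treat both directions separately, with the technical core being a single compatibility identity showing that a morphism $\varphi : \bbM\Sigma \to \frakA$ transports contexts in the free algebra to contexts in $\frakA$ in a product-respecting way. Specifically, I would first verify that, for every context $p \in \bbM_\xi(\Sigma + \Box)$ in the free-algebra sense, the context
\begin{align*}
  q := \bbM(\varphi \circ \sing + 1)(p) \in \bbM_\xi(A + \Box)
\end{align*}
satisfies $\varphi(p[u]) = q[\varphi(u)]$ for every element $u \in \bbM\Sigma$ of the hole's sort. The verification unfolds the definitions $p[u] = \Flat \circ \bbM s_u(\hat p)$ and $q[\varphi(u)] = \pi \circ \bbM s_{\varphi(u)}(q)$ with $\hat p = \bbM(\sing + 1)(p)$, and then uses the morphism law $\varphi \circ \Flat = \pi \circ \bbM\varphi$ together with the observation that $\varphi \circ s_u \circ (\sing + 1) = s_{\varphi(u)} \circ (\varphi \circ \sing + 1)$ as maps $\Sigma + \Box \to A$.

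For the direction $(\Rightarrow)$, assume $K = \varphi^{-1}[P]$ with $P \subseteq A_\xi$ upwards closed and take $\langle s,t\rangle \in \ker\varphi$, so $\varphi(s) \leq \varphi(t)$. For any context $p$ with $p[s] \in K$, the intertwining yields $q[\varphi(s)] = \varphi(p[s]) \in P$; Lemma~\ref{Lem: applying context is monotonous}(a) applied in $\frakA$ then gives $q[\varphi(s)] \leq q[\varphi(t)]$, and upwards closure of $P$ produces $\varphi(p[t]) = q[\varphi(t)] \in P$, i.e.\ $p[t] \in K$. Hence $s \preceq_K t$.

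For the direction $(\Leftarrow)$, assume $\ker\varphi \subseteq {\preceq_K}$ and set $P := \Aboveseg \varphi[K] \subseteq A_\xi$, which is upwards closed by construction. The inclusion $K \subseteq \varphi^{-1}[P]$ is immediate. Conversely, if $\varphi(s) \in P$, pick $t \in K$ with $\varphi(t) \leq \varphi(s)$, so $\langle t,s\rangle \in \ker\varphi \subseteq {\preceq_K}$. Applying this to the identity context $p := \sing(\Box)$ with $\Box$ of sort $\xi$, which satisfies $p[u] = u$ by the unit law $\sigma_u \circ \sing = s_u$, yields $s \in K$.

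The main obstacle is the compatibility identity $\varphi(p[u]) = q[\varphi(u)]$: it is the categorical rendering of ``applying a context commutes with morphisms'', and once it is in place both directions collapse to short monotonicity arguments on $P$ and on the identity context, respectively.
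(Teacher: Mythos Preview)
Your proof is correct and follows essentially the same approach as the paper's own proof: in $(\Rightarrow)$ you transport the context $p$ along $\varphi$ to $q := \bbM(\varphi\circ\sing+1)(p)$ and use Lemma~\ref{Lem: applying context is monotonous}(a) plus upward closure of~$P$, and in $(\Leftarrow)$ you set $P := \Aboveseg\varphi[K]$ and extract $t\in K \Rightarrow s\in K$ from $t\preceq_K s$. The only difference is that you spell out the intertwining identity $\varphi(p[u]) = q[\varphi(u)]$ and the identity context $\sing(\Box)$ explicitly, whereas the paper simply asserts both; your added detail is sound and does no harm.
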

\begin{proof}
$(\Leftarrow)$
We claim that $K = \varphi^{-1}[P]$ where $P := \Aboveseg\varphi[K]$.
Clearly, $\varphi(t) \in P$, for all $t \in K$. Conversely,
\begin{alignat*}{-1}
  \varphi(t) \in P
  &\quad\Rightarrow\quad
  \varphi(s) \leq \varphi(t)\,, &&\quad\text{for some } s \in K\,, \\
  &\quad\Rightarrow\quad
  s \preceq_K t\,, &&\quad\text{for some } s \in K\,, \\
  &\quad\Rightarrow\quad
  t \in K\,.
\end{alignat*}

$(\Rightarrow)$
Suppose that $K = \varphi^{-1}[P]$, for an upwards closed set~$P$,
and let $\varphi(s) \leq \varphi(t)$.
To show that $s \preceq_K t$, consider a context $p \in \bbM(\Sigma+\Box)$
with $p[s] \in K$.
Set
\begin{align*}
  \hat p := \bbM(\varphi \circ \sing + 1)(p) \in \bbM(A + \Box)\,.
\end{align*}
Then $\hat p[\varphi(s)] = \varphi(p[s]) \in P$.
According to Lemma~\ref{Lem: applying context is monotonous}\,(a),
we further have
$\hat p[\varphi(s)] \leq \hat p[\varphi(t)]$.
Together, it follows that $\hat p[\varphi(t)] = \varphi(p[t]) \in P$.
Hence, $p[t] \in K$.
\end{proof}

A~noteworthy consequence of this lemma is that the syntactic morphism of a language~$K$ is
the terminal object in the category of all morphisms recognising~$K$.
\begin{thm}\label{Thm: Syn(K) terminal}
Let $K \subseteq \bbM_\xi\Sigma$ be a language such that $\preceq_K$~is a congruence ordering.
For every surjective morphism $\varphi : \bbM\Sigma \to \frakA$ recognising~$K$,
there exists a unique morphism $\varrho : \frakA \to \Syn(K)$ such that
$\syn_K = \varrho \circ \varphi$.
\end{thm}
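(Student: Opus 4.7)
The plan is to assemble the statement from three previously established tools: Lemma~\ref{Lem: morphisms recognising K}, the Factorisation Lemma, and Lemma~\ref{Lem: factorisation is morphism}. The strategy is straightforward: obtain $\varrho$ first as a function by factoring $\syn_K$ through~$\varphi$, then upgrade it to a morphism of $\bbM$-algebras.

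First, I would observe that $\syn_K : \frakA \to \Syn(K)$ is the quotient map associated to the congruence ordering~$\preceq_K$, so $\ker \syn_K = {\preceq_K}$. Since $\varphi$~recognises~$K$, Lemma~\ref{Lem: morphisms recognising K} gives
\begin{align*}
  \ker \varphi \;\subseteq\; {\preceq_K} \;=\; \ker \syn_K\,.
\end{align*}
As $\varphi$ is surjective, the Factorisation Lemma then yields a unique function $\varrho : A \to \Syn(K)$ such that $\syn_K = \varrho \circ \varphi$, and uniqueness of~$\varrho$ as a morphism follows automatically because morphisms are in particular functions.

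It remains to check that $\varrho$~is a morphism of $\bbM$-algebras. Here I~invoke Lemma~\ref{Lem: factorisation is morphism}: since $\varphi$~is surjective and both $\varphi$ and $\syn_K$ are morphisms of $\bbM$-algebras, the factor~$\varrho$ is itself a morphism. The fact that $\syn_K$~is a morphism is exactly Proposition~\ref{Prop: quotient map is morphism}, applicable because $\preceq_K$~is a congruence ordering by hypothesis.

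There is no serious obstacle; the argument is essentially bookkeeping that chains together the kernel characterisation of recognisability, the Factorisation Lemma, and the fact that factorisations of morphisms through surjective morphisms remain morphisms. The only subtlety worth flagging is that the hypothesis ``$\preceq_K$ is a congruence ordering'' is used exactly twice: once (implicitly) to guarantee that $\Syn(K)$ and $\syn_K$ are an $\bbM$-algebra and an $\bbM$-algebra morphism, and once to identify $\ker \syn_K$ with~$\preceq_K$.
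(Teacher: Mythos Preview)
Your proposal is correct and follows essentially the same route as the paper: use Lemma~\ref{Lem: morphisms recognising K} to obtain $\ker\varphi \subseteq {\preceq_K} = \ker\syn_K$, apply the Factorisation Lemma to produce the unique function~$\varrho$, and then invoke Lemma~\ref{Lem: factorisation is morphism} to conclude that $\varrho$~is a morphism. One small slip: you write ``$\syn_K : \frakA \to \Syn(K)$'' where you mean $\syn_K : \bbM\Sigma \to \Syn(K)$, but the argument itself is unaffected.
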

\begin{proof}
Suppose that $\varphi$~recognises~$K$. By Lemma~\ref{Lem: morphisms recognising K}
we have
\begin{align*}
  \ker \varphi \subseteq {\preceq_K} = \ker \syn_K\,.
\end{align*}
Therefore, we can use the Factorisation Lemma to find a unique function $\varrho : A \to \Syn(K)$
with $\syn_K = \varrho \circ \varphi$.
According to Lemma~\ref{Lem: factorisation is morphism}, this function~$\varrho$ is a morphism.
\end{proof}

Let us take a look at what kind of languages are recognised by a syntactic algebra.
\begin{prop}\label{Prop: comparing syntactic congruences}
Let $K \subseteq \bbM_\xi\Sigma$ and $L \subseteq \bbM_\zeta\Sigma$ be languages such that
$K$~has a syntactic algebra.
The following statements are equivalent.
\begin{enum1}
\item ${\preceq_K} \subseteq {\preceq_L}$
\item $\syn_K : \bbM\Sigma \to \Syn(K)$ recognises~$L$.
\item Every morphism recognising~$K$ also recognises~$L$.
\item $L$~has the form
  \begin{align*}
    \bigcup_{i<m} \bigcap_{k < n_i} p_{ik}^{-1}[K]\,,
    \quad\text{for suitable } m,n_i < \omega \text{ and } p_{ik} \in \bbM(\Sigma+\Box)\,.
  \end{align*}
\end{enum1}
\end{prop}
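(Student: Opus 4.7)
\smallskip
My plan is to close the chain (1)~$\Rightarrow$~(3)~$\Rightarrow$~(2)~$\Rightarrow$~(1), and then to prove (4)~$\Leftrightarrow$~(3), with only the direction (3)~$\Rightarrow$~(4) requiring real work. The equivalence of the first three items reduces to Lemma~\ref{Lem: morphisms recognising K} and the identity $\ker \syn_K = {\preceq_K}$: any $\varphi$ recognising $K$ satisfies $\ker \varphi \subseteq {\preceq_K}$, which under~(1) sits inside ${\preceq_L}$, yielding~(3); applying~(3) to $\syn_K$ itself -- which recognises $K$ because $K$ is upward closed and saturated by $\preceq_K$ (take the context $\sing(\Box)$ with $\Box$ of the sort of $K$) -- gives~(2); and from~(2) the same lemma reads off ${\preceq_K} = \ker\syn_K \subseteq {\preceq_L}$, i.e., (1).

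For (4)~$\Rightarrow$~(3) I would first verify that any morphism $\varphi$ recognising $K$ automatically recognises every derivative $p^{-1}[K]$. Writing $K = \varphi^{-1}[P]$ with $P$ upward closed and setting $\hat p := \bbM(\varphi \circ \sing + 1)(p)$, naturality of substitution gives $\varphi(p[s]) = \hat p[\varphi(s)]$, so $p^{-1}[K] = \varphi^{-1}[\hat p^{-1}[P]]$; Lemma~\ref{Lem: applying context is monotonous}(a) ensures $\hat p^{-1}[P]$ is upward closed. Since preimages commute with unions and intersections, and finite unions and intersections of upward closed sets remain upward closed, recognisability propagates through the boolean expression in~(4), so $\varphi$ recognises $L$.

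The interesting direction is (3)~$\Rightarrow$~(4). Applying~(3) to $\syn_K$ produces an upward closed $Q \subseteq \Syn(K)_\zeta$ with $L = \syn_K^{-1}[Q]$. Because $\preceq_K$ has finitary index, $Q$ is a finite upward closed subset of a finite set and decomposes as $Q = \bigcup_{i < m} \Aboveseg c_i$ for finitely many $c_i$. Choosing representatives $a_i \in \bbM_\zeta\Sigma$ with $\syn_K(a_i) = c_i$ and unfolding the definition of $\preceq_K$ gives
\begin{align*}
  \syn_K^{-1}[\Aboveseg c_i]
  \;=\; \{\, s \in \bbM_\zeta\Sigma : a_i \preceq_K s \,\}
  \;=\; \bigcap_{p \,:\, p[a_i] \in K} p^{-1}[K]\,.
\end{align*}

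The expected main obstacle is replacing this a priori infinite intersection by a finite one, and this is where a second use of finitary index enters. Every derivative $p^{-1}[K]$ is, directly from the definition of $\preceq_K$ applied with $p$, a union of $\preceq_K$-classes, and by Lemma~\ref{Lem: applying context is monotonous}(a) it is upward closed; hence $p^{-1}[K] = \syn_K^{-1}[U_p]$ for some upward closed $U_p \subseteq \Syn(K)_\zeta$. Since $\Syn(K)_\zeta$ is finite, only finitely many distinct $U_p$, and therefore only finitely many distinct derivatives, can occur. Selecting one context $p_{ik}$ per distinct derivative appearing in the intersection above collapses it to a finite intersection $\bigcap_{k < n_i} p_{ik}^{-1}[K]$, and taking the union over $i$ delivers the expression for $L$ required by~(4).
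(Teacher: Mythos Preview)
Your argument is correct. The cycle (1)~$\Rightarrow$~(3)~$\Rightarrow$~(2)~$\Rightarrow$~(1) matches the paper's treatment exactly, and your (4)~$\Rightarrow$~(3) is a minor variant of the paper's (4)~$\Rightarrow$~(1): the paper stays on the congruence side and shows ${\preceq_K} \subseteq {\preceq_{p^{-1}[K]}}$ together with closure under finite unions and intersections, whereas you translate everything to the recognition side via the identity $\varphi(p[s]) = \hat p[\varphi(s)]$; the content is the same.

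The substantive difference is in the direction towards~(4). The paper proves (2)~$\Rightarrow$~(4) by choosing, for every pair $a \nleq b$ in~$\Syn(K)$, a single separating context~$p_{ab}$ witnessing $a \npreceq_K b$, and then writes
\[
  L \;=\; \bigcup_{a \in \syn_K[L]}\; \bigcap_{b \notin \syn_K[L]} p_{ab}^{-1}[K]\,.
\]
This gives an explicit finite expression with at most $|\Syn_\zeta(K)|^2$ contexts. Your route instead decomposes the recognising set $Q$ into principal upsets $\Aboveseg c_i$, unwinds $\syn_K^{-1}[\Aboveseg c_i] = \{s : a_i \preceq_K s\}$ as an a~priori infinite intersection of derivatives, and then collapses it using the pigeonhole observation that every derivative $p^{-1}[K]$ is $\preceq_K$-upward closed, hence of the form $\syn_K^{-1}[U_p]$ for one of the finitely many upward-closed subsets of $\Syn_\zeta(K)$. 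Both arguments hinge on the finitary index of~$\preceq_K$; the paper's is more constructive and yields an explicit bound on the number of contexts, while yours is a cleaner counting argument that avoids choosing separating witnesses but is less explicit about which contexts appear.
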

\begin{proof}
(1)~$\Leftrightarrow$~(2)
follows directly by Lemma~\ref{Lem: morphisms recognising K}
since ${\preceq_K} = \ker \syn_K$.

(3)~$\Rightarrow$~(2) is trivial as $\syn_K$~recognises~$K$.

(1)~$\Rightarrow$~(3)
Suppose that $\varphi : \bbM\Sigma \to \frakA$ recognises~$K$.
By Lemma~\ref{Lem: morphisms recognising K}, it follows that
$\ker \varphi \subseteq {\preceq_K} \subseteq {\preceq_L}$,
which, by the same lemma, implies that $\varphi$~recognises~$L$.

(4)~$\Rightarrow$~(1)
We have shown in Lemma~\ref{Lem: applying context is monotonous}\,(b) that
\begin{align*}
  {\preceq_K} \subseteq {\preceq_{p^{-1}[K]}}\,,
  \quad\text{for every context } p\,.
\end{align*}
To conclude the proof it is therefore sufficient to show that, for all languages $K,L_0,L_1$,
\begin{align*}
  {\preceq_K} \subseteq {\preceq_{L_0}}
  \qtextq{and}
  {\preceq_K} \subseteq {\preceq_{L_1}}
  \qtextq{implies}
  {\preceq_K} \subseteq {\preceq_{L_0 \cup L_1}}
  \qtextq{and}
  {\preceq_K} \subseteq {\preceq_{L_1 \cap L_1}}\,.
\end{align*}
For the first inclusion, suppose that $s \preceq_K t$ and let $p$~be a context such that
$p[s] \in L_0 \cup L_1$.
Then there is some $i < 2$ such that $p[s] \in L_i$.
Hence, $s \preceq_{L_i} t$ implies $p[t] \in L_i \subseteq L_0 \cup L_1$.

Similarly, if $s \preceq_K t$ and $p$~is a context with $p[s] \in L_0 \cap L_1$, then
\begin{alignat*}{-1}
  p[s] &\in L_0 &&\qtextq{and} &s &\preceq_{L_0} t &&\qtextq{implies} &p[t] &\in L_0\,, \\
  p[s] &\in L_1 &&\qtextq{and} &s &\preceq_{L_1} t &&\qtextq{implies} &p[t] &\in L_1\,.
\end{alignat*}
Thus $p[t] \in L_0 \cap L_1$.

(2)~$\Rightarrow$~(4)
By definition of~$\preceq_K$, for every pair of elements $a,b \in \Syn_\xi(K)$
with $a \nleq b$, we can fix some context~$p_{ab}$ such that
\begin{align*}
  p_{ab}[s] \in K \qtextq{and} p_{ab}[t] \notin K\,,
  \quad\text{for } s \in \syn_K^{-1}(a) \text{ and } t \in \syn_K^{-1}(b)\,.
\end{align*}
Set $P := \syn_K[L]$ and let $Q := \Syn_\xi(K) \setminus P$ be the complement.
For $t \in \syn_K^{-1}[P]$ and $u \in \syn_K^{-1}[Q]$, it follows that
\begin{align*}
  t \npreceq_K u \qtextq{implies} p_{ab}[t] \in K
  \quad\text{where } a := \syn_K(t) \text{ and } b := \syn_K(u)\,.
\end{align*}
Similarly,
for $t \in \syn_K^{-1}[Q]$ and $s \in \syn_K^{-1}[P]$, we have
\begin{align*}
  s \npreceq_K t \qtextq{implies} p_{ab}[t] \notin K
  \quad\text{where } a := \syn_K(s) \text{ and } b := \syn_K(t)\,.
\end{align*}
Taken together it follows that
\begin{align*}
  t \in \syn_K^{-1}[P]
  \quad\iff\quad
  \text{there is some } a \in P \text{ with }
  p_{ab}[t] \in K \text{ for all } b \in Q\,.
\end{align*}
Thus,
\begin{align*}
  L & = \syn_K^{-1}[P] = \bigcup_{a \in P} \bigcap_{b \in Q} p_{ab}^{-1}[K]\,.
  \qedhere
\end{align*}
\end{proof}

The next proposition describes languages recognised by syntactic algebras via arbitrary morphisms.
\begin{prop}\label{Prop: languages recognised by Syn(K)}
Let $K \subseteq \bbM_\xi\Sigma$ be a language with a syntactic algebra.
A~language $L \subseteq \bbM_\zeta\Gamma$ is recognised by $\Syn(K)$ if, and only if,
\begin{align*}
  L = \varphi^{-1}\bigl[\bigcup_{i<m} \bigcap_{k < n_i} p_{ik}^{-1}[K]\bigr]\,,
\end{align*}
for a suitable morphism $\varphi : \bbM\Gamma \to \bbM\Sigma$, numbers $m,n_i < \omega$, and
contexts $p_{ik} \in \bbM(\Sigma+\Box)$.
\end{prop}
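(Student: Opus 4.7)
The plan is to reduce both directions to the previous two results: the characterisation of languages recognised by $\syn_K$ given in Proposition~\ref{Prop: comparing syntactic congruences} (specifically the equivalence of conditions~(2) and~(4)), and the projectivity of free algebras on unordered sets established in Lemma~\ref{Lem: FX projective}. Both ingredients slot in cleanly, so I expect no real obstacle beyond correctly chasing preimages.

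For the direction $(\Leftarrow)$, suppose $L = \varphi^{-1}[L']$ where $L' := \bigcup_{i<m}\bigcap_{k<n_i} p_{ik}^{-1}[K]$. By Proposition~\ref{Prop: comparing syntactic congruences}\,(4)~$\Rightarrow$~(2), the language~$L'$ over~$\Sigma$ is recognised by $\syn_K$, so there is an upwards closed set $P \subseteq \Syn_\xi(K)$ with $L' = \syn_K^{-1}[P]$. Then $L = \varphi^{-1}[\syn_K^{-1}[P]] = (\syn_K \circ \varphi)^{-1}[P]$, which shows $L$ is recognised by the morphism $\syn_K \circ \varphi : \bbM\Gamma \to \Syn(K)$.

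For the direction $(\Rightarrow)$, suppose $L$ is recognised by $\Syn(K)$, so $L = \psi^{-1}[Q]$ for a morphism $\psi : \bbM\Gamma \to \Syn(K)$ and an upwards closed $Q \subseteq \Syn_\zeta(K)$. The key step is to factor $\psi$ through $\syn_K$: since $\Gamma$ is an alphabet and therefore unordered, Lemma~\ref{Lem: FX projective} applies to the surjective morphism $\syn_K : \bbM\Sigma \to \Syn(K)$, producing a morphism $\varphi : \bbM\Gamma \to \bbM\Sigma$ with $\psi = \syn_K \circ \varphi$. Consequently
\begin{align*}
  L = (\syn_K \circ \varphi)^{-1}[Q] = \varphi^{-1}\bigl[\syn_K^{-1}[Q]\bigr]\,.
\end{align*}
Now $\syn_K^{-1}[Q]$ is a language over~$\Sigma$ recognised by $\syn_K$, so Proposition~\ref{Prop: comparing syntactic congruences}\,(2)~$\Rightarrow$~(4) gives $\syn_K^{-1}[Q] = \bigcup_{i<m}\bigcap_{k<n_i} p_{ik}^{-1}[K]$ for suitable $m$, $n_i$ and contexts $p_{ik}$. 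Substituting this into the display yields the required form of~$L$.
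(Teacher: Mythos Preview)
Your proof is correct and follows essentially the same approach as the paper's own proof: both directions hinge on Lemma~\ref{Lem: FX projective} to lift $\psi$ through $\syn_K$, and on the equivalence (2)\,$\Leftrightarrow$\,(4) of Proposition~\ref{Prop: comparing syntactic congruences} to translate between recognition by $\syn_K$ and the boolean-combination-of-derivatives form. One small slip: in the $(\Leftarrow)$ direction the set $P$ lives in $\Syn_\zeta(K)$, not $\Syn_\xi(K)$, since $L'$ has sort~$\zeta$.
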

\begin{proof}
$(\Rightarrow)$
Suppose that $L = \psi^{-1}[P]$ for some $P \subseteq \Syn(K)$.
By Lemma~\ref{Lem: FX projective}, there exists a morphism $\varphi : \bbM\Gamma \to \bbM\Sigma$
with $\syn_K \circ \varphi = \psi$.
Consequently, it follows by Proposition~\ref{Prop: comparing syntactic congruences} that
\begin{align*}
  L = \psi^{-1}[P]
    = (\syn_K \circ \varphi)^{-1}[P]
    = \varphi^{-1}[\syn_K^{-1}[P]]
    = \varphi^{-1}\bigl[\bigcup_{i<m} \bigcap_{k < n_i} p_{ik}^{-1}[K]\bigr]\,,
\end{align*}
for suitable contexts~$p_{ik}$.

$(\Leftarrow)$
By Proposition~\ref{Prop: comparing syntactic congruences},
the morphism $\syn_K : \bbM\Sigma \to \Syn(K)$ recognises the language
\begin{align*}
  M := \bigcup_{i<m} \bigcap_{k < n_i} p_{ik}^{-1}[K]\,.
\end{align*}
Consequently, $\syn_K \circ \varphi : \bbM\Gamma \to \Syn(K)$ recognises $\varphi^{-1}[M] = L$.
\end{proof}

In general, there is no reason why the syntactic congruence~$\preceq_K$
should be a congruence ordering. Let us isolate one property of the functor~$\bbM$
ensuring that this is in fact the case.
\begin{defi}
A functor $\bbM : \Pos^\Xi \to \Pos^\Xi$ is \emph{finitary} if
it commutes with directed colimits, that is, if
\begin{align*}
  \bbM(\varinjlim D) = \varinjlim (\bbM \circ D)\,,
  \quad\text{for every directed diagram } D : I \to \Pos^\Xi.
\end{align*}
\upqed
\markenddef
\end{defi}
\begin{Rem}
(a)
In particular, if $\bbM$~is finitary then $\bbM A$~is equal to the directed colimit
of the diagram consisting of~$\bbM C$, for all finite $C \subseteq A$.

(b)
The word functor $\bbM A := A^+$ is finitary as every finite word uses only finitely many labels.
The functor
\begin{align*}
  \bbM\langle A_1,A_\infty\rangle := \langle A_1^+,\ A_1^+A_\infty \cup A_1^\omega\rangle
\end{align*}
for infinite words, on the other hand, is not finitary as
an infinite word can contain infinitely many different labels. Thus, in general
$A^\omega \neq \bigcup {\set{ C^\omega }{ C \subseteq A \text{ finite} }}$.
\end{Rem}

\begin{prop}\label{Prop: syntactic congruence for finitary functors}
Let\/ $\frakA$~be a finitary\/ $\bbM$-algebra and $K \subseteq A_\xi$ a set.
If\/ $\bbM$~is finitary, then $\preceq_K$~is a congruence ordering on\/~$\frakA$.
\end{prop}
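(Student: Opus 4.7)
The plan is to verify that $\preceq_K$ satisfies one of the equivalent conditions in Proposition~\ref{Prop: characterisations of congruences}. The preceding corollary already shows that $\preceq_K$ is a preorder extending $\leq$, so it suffices to establish condition~(3): for every $u \in \bbM{\preceq_K}$, $\pi(\bbM p_0(u)) \preceq_K \pi(\bbM p_1(u))$, where $p_0, p_1 : A \times A \to A$ are the two projections.

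To exploit finitariness, I would observe that $\bbM{\preceq_K}$ is the directed colimit of the subsets $\bbM C$ for finite $C \subseteq {\preceq_K}$, so the given $u$ is the image of some $u' \in \bbM C$ under $\bbM j$, where $j : C \hookrightarrow {\preceq_K}$ is the inclusion. Enumerate $C = \{c_0, \dots, c_{n-1}\}$ with $c_i = \langle a_i, b_i\rangle$ and $a_i \preceq_K b_i$, and interpolate between the endpoints by replacing the pairs one at a time: let $f_k : C \to A$ send $c_i$ to $b_i$ for $i < k$ and to $a_i$ for $i \geq k$, and set $s_k := \bbM f_k(u')$. Then $s_0 = \bbM p_0(u)$ and $s_n = \bbM p_1(u)$, so by transitivity of $\preceq_K$ it remains to prove $\pi(s_k) \preceq_K \pi(s_{k+1})$ for each $k < n$.

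For this single-pair step, define $g_k : C \to A + \Box$ to agree with $f_k$ outside $c_k$ while sending $c_k$ itself to $\Box$, and put $\hat p_k := \bbM g_k(u') \in \bbM(A + \Box)$. Since the extension morphism used in the definition of $p[a]$ satisfies $\sigma_a = \pi \circ \bbM s_a$, and since $s_{a_k} \circ g_k = f_k$ while $s_{b_k} \circ g_k = f_{k+1}$, functoriality of $\bbM$ yields $\pi(s_k) = \hat p_k[a_k]$ and $\pi(s_{k+1}) = \hat p_k[b_k]$. Because $a_k \preceq_K b_k$, Lemma~\ref{Lem: applying context is monotonous}(c) gives $\hat p_k[a_k] \preceq_K \hat p_k[b_k]$, closing the step. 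The main technical friction lies precisely in the bookkeeping identity $\pi(\bbM f_k(u')) = \hat p_k[a_k]$\?; once this identification is in place, the rest is merely transitivity, an application of the cited lemma, and unpacking of definitions.
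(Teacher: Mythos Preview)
Your proposal is correct and follows essentially the same argument as the paper's proof: both reduce via Proposition~\ref{Prop: characterisations of congruences}\,(3) to a finite set of pairs using finitariness, interpolate by swapping one pair at a time, realise each single-pair swap via a context, and invoke Lemma~\ref{Lem: applying context is monotonous}\,(c). Your write-up is in fact slightly more explicit than the paper's in justifying the identity $\pi(s_k)=\hat p_k[a_k]$ via $\sigma_a=\pi\circ\bbM s_a$ and $s_{a_k}\circ g_k=f_k$.
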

\begin{proof}
We use the characterisation from Proposition~\ref{Prop: characterisations of congruences}\,(3).
Hence, fix $u \in \bbM{\preceq_K}$.
We have to show that
\begin{align*}
  \pi(\bbM q_0(u)) \preceq_K \pi(\bbM q_1(u))\,,
\end{align*}
where $q_0,q_1 : A \times A \to A$ are the two projections.
As $\bbM$~is finitary, there exists a finite relation $R \subseteq {\preceq_K}$
such that $u \in \bbM R$.
Let $\langle a_0,b_0\rangle,\dots,\langle a_{m-1},b_{m-1}\rangle$ be an
enumeration of~$R$ and set
\begin{alignat*}{-1}
  t_k &:= \bbM p_k(u)\,,
  &&\qtextq{where}
  &p_k(\langle a_i,b_i\rangle) &:=
    \begin{cases}
      a_i &\text{if } i \geq k\,, \\
      b_i &\text{if } i < k\,.
    \end{cases} \\
  r_k &:= \bbM p'_k(u)\,,
  &&\qtextq{where}
  &p'_k(\langle a_i,b_i\rangle) &:=
    \begin{cases}
      a_i  &\text{if } i > k\,, \\
      \Box &\text{if } i = k\,, \\
      b_i  &\text{if } i < k\,.
    \end{cases}
\end{alignat*}
Then
$\pi(t_k) = r_k[a_k]$ and $\pi(t_{k+1}) = r_k[b_k]$,
and it follows by Lemma~\ref{Lem: applying context is monotonous} that
\begin{align*}
  a_k \preceq_K b_k
  \qtextq{implies}
  \pi(t_k) = r_k[a_k] \preceq_K r_k[b_k] = \pi(t_{k+1})\,.
\end{align*}
Consequently,
$\pi(\bbM q_0(u)) =
 \pi(t_0) \preceq_K \dots \preceq_K \pi(t_m)
 = \pi(\bbM q_1(u))$,
as desired.
\end{proof}

Unfortunately, not all the monads~$\bbM$ used in applications are finitary.
In particular those needed for languages of infinite words or infinite trees are not.
Therefore, we have to extend the preceding proposition to a larger class of functors.
It turns out that,
in all the known examples of a non-finitary functors where syntactic algebras exists,
the functor in question is `ruled' in a certain sense by a subfunctor which \emph{is} finitary.
The precise definitions are as follows.
\begin{defi}
Let $\langle\bbM_0,\mu_0,\varepsilon_0\rangle$ and
$\langle\bbM_1,\mu_1,\varepsilon_1\rangle$ be monads.

(a) A natural transformation $\varrho : \bbM_0 \Rightarrow \bbM_1$ is a
\emph{morphism of monads} if
\begin{align*}
  \varepsilon_1 = \varrho \circ \varepsilon_0
  \qtextq{and}
  \mu_1 \circ (\varrho \circ \bbM_0\varrho) = \varrho \circ \mu_0\,.
\end{align*}
In this case we say that $\bbM_0$~is a \emph{reduct} of~$\bbM_1$.

(b) Let $\varrho : \bbM_0 \Rightarrow \bbM_1$ be a morphism of monads
and $\frakA = \langle A,\pi\rangle$ an $\bbM_1$-algebra.
The \emph{$\varrho$-reduct} of~$\frakA$ is the $\bbM_0$-algebra
$\langle A,\pi \circ \varrho\rangle$.
If $\varrho$~is understood, we also speak of an \emph{$\bbM_0$-reduct} of~$\frakA$.

(c) A morphism $\varrho : \bbM^\circ \Rightarrow \bbM$ of monads is
\emph{dense} over a class~$\calC$ of $\bbM$-algebras if,
for all $\frakA \in \calC$, $C \subseteq A$, and $s \in \bbM C$,
there exists $s^\circ \in \bbM^\circ C$ with $\pi(s^\circ) = \pi(s)$.

(d) We say that a monad~$\bbM$ is \emph{essentially finitary} over a class~$\calC$
if there exists a morphism $\varrho : \bbM^\circ \Rightarrow \bbM$ such that
$\bbM^\circ$~is finitary and $\varrho$~is dense over the closure of~$\calC$ under binary products.
\markenddef
\end{defi}

Let us again consider the functor
\begin{align*}
  \bbM\langle A_1,A_\infty\rangle := \langle A_1^+,\ A_1^+A_\infty \cup A_1^\omega\rangle
\end{align*}
for infinite words and let
\begin{align*}
  \bbM^\circ\langle A_1,A_\infty\rangle :=
    \langle A_1^+,\ A_1^+A_\infty \cup A_1^{\mathrm{up}}\rangle\,,
\end{align*}
where $A_1^{\mathrm{up}}$ denotes the set of all ultimately periodic words in~$A_1^\omega$.
Then the inclusion map $\bbM^\circ \Rightarrow \bbM$ is dense over the class of all finite
$\omega$-semigroups since the infinite product of a finite $\omega$-semigroup is completely
determined by its restriction to all ultimately periodic words.
The case of infinite trees is similar and will be treated in detail in
Section~\ref{Sect:applications}.

If $\bbM^\circ \Rightarrow \bbM$ is dense over~$\calC$, every $\bbM$-algebra in~$\calC$
is uniquely determined by its $\bbM^\circ$-reduct.
This will be used below to prove the existence of syntactic algebras for essentially finitary
monads.
\begin{lem}\label{Lem: dense reducts}
Let $\varrho : \bbM^\circ \Rightarrow \bbM$ be dense over a class~$\calC$ that is closed
under binary products.
\begin{enuma}
\item Any two algebras in~$\calC$ with the same\/ $\bbM^\circ$-reduct are isomorphic.
\item Let $\varphi : \frakA^\circ \to \frakB^\circ$ be a morphism of\/ $\bbM^\circ$-algebras
  and assume that\/ $\frakA^\circ$~and\/~$\frakB^\circ$ are the\/ $\bbM^\circ$-reducts of two
  $\bbM$-algebras\/ $\frakA,\frakB \in \calC$.
  Then $\varphi$~is also a morphism\/ $\frakA \to \frakB$ of\/ $\bbM$-algebras.
\item A~relation\/~$\sqsubseteq$ is a congruence ordering on an\/ $\bbM$-algebra\/ $\frakA \in \calC$
  if, and only if, it is a congruence ordering on the\/ $\bbM^\circ$-reduct\/~$\frakA^\circ$
  of\/~$\frakA$.
\end{enuma}
\end{lem}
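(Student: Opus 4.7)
The unifying idea is to apply density not to the algebras~$\frakA$ or~$\frakB$ directly, but inside a product algebra in~$\calC$, choosing a subset that ties the two coordinates together (the diagonal for~(a), the graph of~$\varphi$ for~(b), and~$\sqsubseteq$ itself for~(c)). A~direct appeal to density in each factor separately would produce two unrelated preimages in~$\bbM^\circ$, which would not suffice; the product construction forces a single preimage whose images under both projections can be compared.

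For part~(a), let $\frakA$ and~$\frakB$ share an $\bbM^\circ$-reduct, so $A = B$ and $\pi_A \circ \varrho = \pi_B \circ \varrho$; I~aim to show these two products are in fact equal. Given $s \in \bbM A$, I~form $\frakC := \frakA \times \frakB \in \calC$ and the diagonal map $d : A \to A \times A$, and set $s' := \bbM d(s) \in \bbM\Delta$ where $\Delta := d[A]$. Applying density inside~$\frakC$ with $C := \Delta$ yields $u \in \bbM^\circ\Delta$ with $\pi_C(\varrho(u)) = \pi_C(s')$. Projecting this identity along the coordinate maps $p_1,p_2 : A \times A \to A$, using naturality of~$\varrho$ together with the fact that $p_1$ and~$p_2$ agree on~$\Delta$ (so that $\bbM^\circ p_1(u) = \bbM^\circ p_2(u) =: v$), gives
\begin{align*}
  \pi_A(\varrho(v)) = \pi_A(s) \qtextq{and} \pi_B(\varrho(v)) = \pi_B(s)\,.
\end{align*}
The two left-hand sides coincide by the hypothesis on the reducts, so $\pi_A(s) = \pi_B(s)$.

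For part~(b), I~run the parallel argument using the graph $G := \set{ \langle a,\varphi(a)\rangle }{ a \in A } \subseteq A \times B$ in place of the diagonal. The element $s \in \bbM A$ is replaced by $\bbM f(s)$ for the correlating map $f : A \to G$, $f(a) := \langle a,\varphi(a)\rangle$, and density inside $\frakA \times \frakB$ produces $u \in \bbM^\circ G$ whose two coordinate projections $v_A \in \bbM^\circ A$ and $v_B \in \bbM^\circ B$ are tied together by $v_B = \bbM^\circ\varphi(v_A)$. Projecting the density identity $\pi_C(\varrho(u)) = \pi_C(\bbM f(s))$ to each component and invoking the $\bbM^\circ$-morphism property of~$\varphi$ at~$v_A$ then yields $\varphi(\pi_A(s)) = \pi_B(\bbM\varphi(s))$.

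For part~(c), I~use the characterisation from Proposition~\ref{Prop: characterisations of congruences}\,(4): a preorder is a congruence ordering iff it induces a subalgebra of the square. A~preliminary observation is that $(\frakA \times \frakA)^\circ = \frakA^\circ \times \frakA^\circ$, since the reduct product $\pi_{A \times A} \circ \varrho$ satisfies the projection identities for a product by naturality of~$\varrho$. The forward direction is then immediate, since closure of~$\sqsubseteq$ under $\pi_{A \times A}$ implies closure under $\pi_{A \times A} \circ \varrho$. For the converse, given $u \in \bbM{\sqsubseteq}$, density applied to $\frakA \times \frakA$ with $C := {\sqsubseteq}$ produces $u^\circ \in \bbM^\circ{\sqsubseteq}$ with $\pi_{A \times A}(\varrho(u^\circ)) = \pi_{A \times A}(u)$; the $\bbM^\circ$-congruence hypothesis places the left-hand side in~$\sqsubseteq$, and the required closure follows. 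The main obstacle throughout is simply spotting this product-and-correlated-subset device; once set up, each part reduces to a short diagram chase using naturality of~$\varrho$ and the preservation properties assumed of~$\bbM$.
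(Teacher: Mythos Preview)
Your proof is correct and follows essentially the same approach as the paper: in each part you work inside the appropriate product algebra in~$\calC$ and apply density to a correlated subset (the diagonal~$\Delta$ for~(a), the graph~$G$ of~$\varphi$ for~(b), and~$\sqsubseteq$ itself for~(c)), then project. The only cosmetic difference is that for part~(c) you invoke characterisation~(4) of Proposition~\ref{Prop: characterisations of congruences} (subalgebra of the square) where the paper uses the equivalent characterisation~(3); the underlying computation is the same.
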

\begin{proof}
(a)
Suppose that $\calC$~contains two $\bbM$-algebras $\frakA = \langle A,\pi\rangle$ and
$\frakA' = \langle A,\pi'\rangle$ with the same $\bbM^\circ$-reduct
$\frakA^\circ = \langle A,\pi^\circ\rangle$.
To show that $\pi = \pi'$, fix an element $s \in \bbM A$.
Set $t := \bbM d(s) \in \bbM\Delta$ where
$\Delta := \set{ \langle a,a\rangle }{ a \in A }$ is the diagonal of $A \times A$ and
$d : A \to \Delta$ is the diagonal map.
By assumption, the product $\frakA \times \frakA'$ belongs to~$\calC$.
As~$\varrho$~is dense, we can find some $t^\circ \in \bbM^\circ\Delta$ with
$\pi^\circ(t^\circ) = \pi(t)$.
Note that $t^\circ \in \bbM^\circ\Delta$ implies that
$\bbM^\circ p(t^\circ) = \bbM^\circ q(t^\circ)$
where $p,q : A \times A \to A$ are the two projections.
Consequently,
\begin{align*}
  \pi(s)
   = \pi(\bbM p(t))
  &= p(\pi(t)) \\
  &= p(\pi^\circ(t^\circ)) \\
  &= \pi^\circ(\bbM^\circ p(t^\circ)) \\
  &= \pi^\circ(\bbM^\circ q(t^\circ)) \\
  &= q(\pi^\circ(t^\circ)) \\
  &= q(\pi(t))
   = \pi'(\bbM q(t))
   = \pi'(s)\,.
\end{align*}

(b) Fix $s \in \bbM A$. To show that $\pi(\bbM\varphi(s)) = \varphi(\pi(s))$,
we consider the graph
\begin{align*}
  G := \set{ \langle a,\varphi(a)\rangle }{ a \in A }
\end{align*}
of~$\varphi$.
Let $i := \langle\id,\varphi\rangle: A \to G$ be the natural bijection and
set $t := \bbM i(s) \in \bbM G$.
Since $\frakA \times \frakB \in \calC$ and $\varrho$~is dense,
we can find some $t^\circ \in \bbM^\circ G$ with $\pi(t^\circ) = \pi(t)$.
Let $p : A \times B \to A$ and $q : A \times B \to B$ be the two projections.
Note that
\begin{align*}
  \varphi = q \circ i
  \qtextq{and}
  q(g) = \varphi(p(g))\,, \quad\text{for } g \in G\,,
\end{align*}
which implies that $\bbM^\circ q(t^\circ) = \bbM^\circ(\varphi \circ p)(t^\circ)$.
Therefore,
\begin{align*}
  \pi(\bbM\varphi(s))
   = \pi(\bbM q(t))
  &= q(\pi(t)) \\
  &= q(\pi(t^\circ)) \\
  &= \pi(\bbM^\circ q(t^\circ)) \\
  &= \pi(\bbM^\circ(\varphi\circ p)(t^\circ)) \\
  &= \varphi(p(\pi(t^\circ))) \\
  &= \varphi(p(\pi(t)))
   = \varphi(\pi(\bbM p(t)))
   = \varphi(\pi(s))\,.
\end{align*}

(c)
Clearly, every congruence ordering of~$\frakA$ is also one of~$\frakA^\circ$.
Conversely, suppose that $\sqsubseteq$~is a congruence ordering of~$\frakA^\circ$.
We use the characterisation from Proposition~\ref{Prop: characterisations of congruences}\,(3).
Thus, let $u \in \bbM{\sqsubseteq}$.
As the product $\frakA \times \frakA$ belongs to~$\calC$ and $\varrho$~is dense over~$\calC$,
we can find some $u^\circ \in \bbM^\circ{\sqsubseteq}$ with $\pi(u^\circ) = \pi(u)$.
By assumption, we have $\pi(\bbM^\circ p_0(u^\circ)) \sqsubseteq \pi(\bbM^\circ p_1(u^\circ))$,
where $p_0,p_1 : A \times A \to A$ are the two projections.
Since
\begin{align*}
  \pi(\bbM p_i(u)) = p_i(\pi(u)) = p_i(\pi(u^\circ)) = \pi(\bbM^\circ p_i(u^\circ))\,,
\end{align*}
this implies that $\pi(\bbM p_0(u)) \sqsubseteq \pi(\bbM p_1(u))$, as desired.
\end{proof}

\begin{thm}\label{Thm: essentially finitary monads have syntactic algebras}
Suppose that $\bbM$~is essentially finitary over~$\calC$.
If a language $K \subseteq \bbM_\xi\Sigma$ is recognised by some morphism
$\varphi : \bbM\Sigma \to \frakA$ with $\frakA \in \calC$,
then $\preceq_K$~is a congruence ordering on~$\bbM\Sigma$.
\end{thm}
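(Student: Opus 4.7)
The plan is to verify condition~(3) of Proposition~\ref{Prop: characterisations of congruences}: for every $u \in \bbM({\preceq_K})$, show that $x_0 \preceq_K x_1$, where $x_i := \Flat(\bbM p_i(u))$ and $p_0, p_1 : \bbM\Sigma \times \bbM\Sigma \to \bbM\Sigma$ are the two projections. Write $K = \varphi^{-1}[P]$ with $P \subseteq A_\xi$ upwards closed, and let $p'_0, p'_1 : A \times A \to A$ denote the projections on the algebra $\frakA \times \frakA$, which belongs to the closure of~$\calC$ under binary products.

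I~would first push~$u$ into $\frakA \times \frakA$ by setting $u' := \bbM(\varphi \times \varphi)(u) \in \bbM R$, where $R := (\varphi \times \varphi)[{\preceq_K}] \subseteq A \times A$. Because $\varphi$~is a morphism of $\bbM$-algebras one has $\varphi(x_i) = \pi(\bbM p'_i(u'))$. Density of~$\varrho$ applied to the subset $R \subseteq A \times A$ of the algebra $\frakA \times \frakA \in \calC$ produces some $u'^\circ \in \bbM^\circ R$ with the same product in $\frakA \times \frakA$ as~$u'$; by finitariness of~$\bbM^\circ$ this $u'^\circ$ already lies in $\bbM^\circ R_F$ for some finite $R_F \subseteq R$. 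Setting $v := \varrho(u'^\circ) \in \bbM R_F$, naturality of~$\varrho$ together with the equality of products yields $\pi(\bbM p'_i(v)) = \pi(\bbM p'_i(u')) = \varphi(x_i)$.

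Now I~would lift $R_F$ back to $\bbM\Sigma$ using a section $\tau : R_F \to {\preceq_K}$. Writing $R_F = \{r_0, \ldots, r_{n-1}\}$ with $r_i = \langle a_i, b_i\rangle$ and $\tau(r_i) = \langle y_i, z_i\rangle$, we have $\varphi(y_i) = a_i$, $\varphi(z_i) = b_i$, and $y_i \preceq_K z_i$. For $0 \leq k \leq n$ define $h_k : R_F \to \bbM\Sigma$ by $h_k(r_i) := y_i$ if $i \geq k$ and $h_k(r_i) := z_i$ otherwise, and set $w_k := \Flat(\bbM h_k(v))$. Mimicking the walking argument from the proof of Proposition~\ref{Prop: syntactic congruence for finitary functors}, consecutive $w_k$ and $w_{k+1}$ arise by plugging $y_k$ respectively $z_k$ into a common context $c_k \in \bbM(\bbM\Sigma + \Box)$; Lemma~\ref{Lem: applying context is monotonous}(c) combined with $y_k \preceq_K z_k$ then gives $w_k \preceq_K w_{k+1}$, and transitivity delivers $w_0 \preceq_K w_n$.

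To close the argument I~would verify $\varphi(w_0) = \varphi(x_0)$ and $\varphi(w_n) = \varphi(x_1)$ by checking the identities $\varphi \circ h_0 = p'_0|_{R_F}$ and $\varphi \circ h_n = p'_1|_{R_F}$, which reduce $\varphi(w_0)$ and $\varphi(w_n)$ to $\pi(\bbM p'_0(v))$ and $\pi(\bbM p'_1(v))$. Since $\varphi$~recognises~$K$, Lemma~\ref{Lem: morphisms recognising K} gives $\ker \varphi \subseteq {\preceq_K}$, so these $\varphi$-equalities imply $x_0 \preceq_K w_0$ and $w_n \preceq_K x_1$; chaining with $w_0 \preceq_K w_n$ produces the desired $x_0 \preceq_K x_1$. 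The main obstacle is the density step: density must be applied inside $\frakA \times \frakA$ to the relation~$R$ itself, not merely to the ambient set $A \times A$, so that the finite witness $u'^\circ \in \bbM^\circ R_F$ furnishes coherent lifts $\langle y_i, z_i\rangle \in {\preceq_K}$ from which the chain in $\bbM\Sigma$ can be assembled.
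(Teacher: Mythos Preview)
Your proof is correct and takes a genuinely different route from the paper's. The paper proceeds structurally: it first shows that $\preceq_P$ (the syntactic congruence of~$P$ on~$\frakA$) is a congruence ordering --- applying Proposition~\ref{Prop: syntactic congruence for finitary functors} to the $\bbM^\circ$-reduct~$\frakA^\circ$ and then Lemma~\ref{Lem: dense reducts}\,(c) to transfer this to~$\frakA$ itself --- and then establishes the identity ${\preceq_K} = (\varphi \times \varphi)^{-1}[{\preceq_P}]$, so that ${\preceq_K} = \ker(\syn_P \circ \varphi)$ is the kernel of a morphism. You instead unwind the walking argument directly inside~$\bbM\Sigma$: push $u$~forward to $\frakA \times \frakA$, invoke density and finitariness there only to extract a finite set~$R_F$ of pairs, lift these pairs back into~$\preceq_K$ by a section, and then walk. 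This buys you something: you avoid the two-directional verification of ${\preceq_K} = (\varphi \times \varphi)^{-1}[{\preceq_P}]$, needing only the easy inclusion $\ker\varphi \subseteq {\preceq_K}$ from Lemma~\ref{Lem: morphisms recognising K}. The price is that you do not reuse the packaged Lemma~\ref{Lem: dense reducts}\,(c) and instead reprove its content inline. One small wording correction: you write ``$\frakA \times \frakA \in \calC$'', but the hypothesis only gives $\frakA \times \frakA$ in the \emph{closure} of~$\calC$ under binary products --- which is precisely the class over which $\varrho$~is assumed dense, so your density step is legitimate.
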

\begin{proof}
Suppose that $K = \varphi^{-1}[P]$ for $P \subseteq A_\xi$.
Let $\frakB \subseteq \frakA$ be the subalgebra induced by $\rng \varphi$.
By Proposition~\ref{Prop: syntactic congruence for finitary functors},
$\preceq_P$~is a congruence ordering on the $\bbM^\circ$-reduct $\frakA^\circ$ of~$\frakA$.
Hence, Lemma~\ref{Lem: dense reducts}\,(c) implies that it is also a congruence ordering on~$\frakA$.
Consequently, its restriction is one on~$\frakB$.
(If $\preceq_P$~is the kernel of some morphism~$q$, then its restriction to~$B$
is the kernel of $q \circ i$, where $i : \frakB \to \frakA$ is the inclusion morphism.)
Thus, ${\preceq_P} = \ker \syn_P$ where $\syn_P : \frakB \to \Syn(P) = \frakB/{\preceq_P}$
is the quotient morphism. We will show that
\begin{align*}
  {\preceq_K} = (\varphi \times \varphi)^{-1}[{\preceq_P}]\,.
\end{align*}
It then follows that ${\preceq_K} = \ker (\syn_P \circ \varphi)$
is also the kernel of a morphism and, thus, a congruence ordering.
Hence, it remains to prove the claim.

$(\supseteq)$
Let $f := \varphi \circ \sing + 1: \Sigma + \Box \to B + \Box$ be the function mapping
$c \in \Sigma$ to $\varphi(\sing(c))$ and $\Box$ to~$\Box$.
For $s \in \bbM\Sigma$ and $p \in \bbM(\Sigma + \Box)$, we have
\begin{align*}
  p[s] \in K \quad\iff\quad \varphi(p[s]) \in P
             \quad\iff\quad (\bbM f(p))[\varphi(s)] \in P\,.
\end{align*}
If $\varphi(s) \preceq_P \varphi(t)$ it therefore follows that
\begin{align*}
  p[s] \in K
  \quad\Rightarrow\quad
  (\bbM f(p))[\varphi(s)] \in P
  \quad\Rightarrow\quad
  (\bbM f(p))[\varphi(t)] \in P
  \quad\Rightarrow\quad
  p[t] \in K\,,
\end{align*}
for all $p \in \bbM(\Sigma + \Box)$.
Consequently, $s \preceq_K t$.

$(\subseteq)$
Suppose that $s \preceq_K t$ and fix $p \in \bbM(B + \Box)$.
The morphism $\bbM(\varphi+1) : \bbM(\bbM\Sigma + \Box) \to \bbM(B + \Box)$
is surjective since
$\varphi$~is surjective and $\bbM$~preserves surjectivity.
Thus, we can fix some context $\hat p \in (\bbM(\varphi+1))^{-1}(p)$.
It follows that
\begin{align*}
  p[\varphi(s)] \in P
  &\quad\Rightarrow\quad
  \varphi(\hat p[s]) \in P \\
  &\quad\Rightarrow\quad
  \hat p[s] \in K \\
  &\quad\Rightarrow\quad
  \hat p[t] \in K \\
  &\quad\Rightarrow\quad
  \varphi(\hat p[t]) \in P
  \quad\Rightarrow\quad
  p[\varphi(t)] \in P\,.
\end{align*}
Consequently, $\varphi(s) \preceq_P \varphi(t)$.
\end{proof}

\section{Varieties}   
\label{Sect: varieties}

After these preparations, we come to the first of the central theorems of algebraic
language theory\?: the Variety Theorem. This theorem characterises which kind of language
families are amenable to our algebraic tools by
establishing a correspondence between language families and
the classes of algebras recognising them.

Before we can formally define the families and classes involved, we need to introduce a bit of
notation that allows us to transfer problems to a setting with only finitely many sorts.
\begin{defi}
Let $\Delta \subseteq \Xi$ be a set of sorts and $A$~a $\Xi$-sorted set.

(a) We denote by~$A|_\Delta$ the subset of~$A$ containing only the elements with a sort in~$\Delta$.
(Depending on the circumstances, we will treat~$A|_\Delta$ either as a set in $\Pos^\Delta$, or as
a set in $\Pos^\Xi$ that just happens to have no element with a sort in $\Xi \setminus \Delta$.)

(b) For a function $f : A \to B$ we denote the induced function $A|_\Delta \to B|_\Delta$
by~$f|_\Delta$.

(c) The corresponding restriction of the functor~$\bbM$ is defined by
\begin{align*}
  \bbM|_\Delta A := (\bbM(A|_\Delta))|_\Delta
  \qtextq{and}
  \bbM|_\Delta f := (\bbM(f|_\Delta))|_\Delta\,.
\end{align*}
(Again, depending on what is convenient, we will consider~$\bbM|_\Delta$
either as a functor $\Pos^\Delta \to \Pos^\Delta$ or $\Pos^\Xi \to \Pos^\Xi$.)

(d) For an $\bbM$-algebra~$\frakA$ we denote by~$\frakA|_\Delta$ the $\bbM|_\Delta$-algebra
with domain~$A|_\Delta$ and product $\pi \restriction \bbM|_\Delta A$.
For a class~$\calC$ of $\bbM$-algebras we set
$\calC|_\Delta := \set{ \frakA|_\Delta }{ \frakA \in \calC }$.

(e) An $\bbM$-algebra~$\frakB$ is a \emph{sort-accumulation point} of a class~$\calA$
of $\bbM$-algebras if, for every finite subset $\Delta \subseteq \Xi$,
there is some algebra $\frakA \in \calA$ such that
$\frakA|_\Delta$~is a quotient of~$\frakB|_\Delta$ (as $\bbM|_\Delta$-algebras).
\markenddef
\end{defi}
Let us check that these notions are well-behaved.
\begin{lem}
\begin{enuma}
\item $\langle \bbM|_\Delta,\Flat|_\Delta,\sing|_\Delta\rangle$ is a monad
  on $\Pos^\Delta$.
\item If\/ $\frakA$~is an $\bbM$-algebra,\/ $\frakA|_\Delta$~is an $\bbM|_\Delta$-algebra.
\end{enuma}
\end{lem}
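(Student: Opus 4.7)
The plan is to verify the monad axioms for $\bbM|_\Delta$ on $\Pos^\Delta$ by reducing each one to the corresponding axiom of $\bbM$ on $\Pos^\Xi$, exploiting the fact that $\Flat$, $\sing$ and (for part~(b)) $\pi$ are all sort-preserving, so restriction to sorts in $\Delta$ interacts cleanly with them. The only piece of real bookkeeping is that iterating $\bbM|_\Delta$ is not just $\bbM\bbM$ followed by restriction: expanding the definitions gives $\bbM|_\Delta\bbM|_\Delta A = (\bbM((\bbM(A|_\Delta))|_\Delta))|_\Delta$, which differs from $(\bbM\bbM(A|_\Delta))|_\Delta$. To bridge this I would introduce the inclusion $j : (\bbM(A|_\Delta))|_\Delta \hookrightarrow \bbM(A|_\Delta)$, which is a sort- and order-preserving injection, and then apply $\bbM$ to it; this yields a morphism $\bbM j$ in $\Pos^\Xi$ which lets me move between the two expressions.

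For part~(a), I would first make $\sing|_\Delta$ and $\Flat|_\Delta$ explicit: since $\sing$ is sort-preserving, its restriction already lands in $(\bbM(A|_\Delta))|_\Delta = \bbM|_\Delta A$, giving $\sing|_\Delta$; and $\Flat|_\Delta$ is the restriction to sorts in $\Delta$ of $\Flat \circ \bbM j$, so that its codomain is $(\bbM(A|_\Delta))|_\Delta = \bbM|_\Delta A$ as required. Naturality of both is inherited immediately from that of $\sing$ and~$\Flat$ together with functoriality of restriction.

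I would then check the three monad equations. The unit law $\Flat \circ \sing = \id$ restricts directly, since both $\sing$ and $\Flat$ preserve sorts. The second unit law $\Flat \circ \bbM\sing = \id$ reduces to the original one after observing that on the image of $\sing$ the auxiliary inclusion $j$ behaves trivially. The associative law is the only genuinely fiddly step: I would insert $\bbM j$ at the appropriate intermediate levels, commute it past $\Flat$ via naturality, apply the original associative law $\Flat\circ\Flat = \Flat\circ\bbM\Flat$, and then restrict to sorts in $\Delta$.

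For part~(b), since $\pi : \bbM A \to A$ is sort-preserving, it restricts to a function $\bbM|_\Delta A = (\bbM(A|_\Delta))|_\Delta \to A|_\Delta$, which is the intended product of $\frakA|_\Delta$. The unit law for $\frakA|_\Delta$ is a direct restriction of the unit law for $\frakA$. The associative law follows from the associative law for $\frakA$ by inserting $\bbM j$ exactly as in part~(a) and using that $\pi$ and $\Flat$ are sort-preserving. The main potential obstacle is purely notational, namely keeping the two layers of $\bbM|_\Delta$ straight and inserting $j$ consistently; once this is set up each equation collapses to its $\bbM$-counterpart.
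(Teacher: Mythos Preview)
Your plan is sound and would go through: introducing the inclusion $j$ and tracking it through the iterated functor does let each monad and algebra axiom reduce to the corresponding one for~$\bbM$, and the bookkeeping, while tedious, presents no real obstacle.

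The paper, however, takes a shorter route. Rather than carrying the inclusion~$j$ around, it proves once and for all the identity
\[
  \bbM|_\Delta B = (\bbM B)|_\Delta
  \qquad\text{for every } B \in \Pos^\Xi,
\]
using the standing assumption that $\bbM$~preserves preimages (pull back along the unique map $B \to \mathsf{1}$). With this identity in hand, iterating~$\bbM|_\Delta$ really \emph{is} just iterating~$\bbM$ and then restricting: $\bbM|_\Delta\bbM|_\Delta A = (\bbM\bbM A)|_\Delta$, and similarly for morphisms. All the monad and algebra equations then restrict automatically, with no need to insert~$j$ by hand. So what you identify as ``the only piece of real bookkeeping'' is something the paper sidesteps entirely by means of this one observation. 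Your approach has the minor advantage that it does not appear to use preimage preservation; the paper's approach is shorter and makes the result conceptually transparent.
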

\begin{proof}
(a)
It is sufficient to show that
\begin{alignat*}{-1}
  \bbM|_\Delta A = (\bbM A)|_\Delta
  \qtextq{and}
  \bbM|_\Delta f = (\bbM f)|_\Delta\,, \quad\text{for } f : A|_\Delta \to B|_\Delta\,.
\end{alignat*}
Then it follows that every equation satisfied by~$\bbM$ is also satisfied by~$\bbM|_\Delta$.

The second of these equations follows immediately from the definition as
$f : A|_\Delta \to B|_\Delta$ implies that $f|_\Delta = f$. Consequently,
\begin{align*}
  \bbM|_\Delta f = (\bbM(f|_\Delta))|_\Delta = (\bbM f)|_\Delta\,.
\end{align*}

For the first equation, let $\mathsf{1}$~be the set containing exactly one element of each
sort~$\xi$ with $A_\xi \neq \emptyset$,
and let $f : A \to \mathsf{1}$ be the unique function.
Since $\bbM$~preserves preimages, it follows that
\begin{align*}
  \bbM(A|_\Delta) = (\bbM f)^{-1}[\bbM(\mathsf{1}|_\Delta)] = (\bbM A)|_\Delta\,.
\end{align*}
Consequently, $\bbM|_\Delta A = (\bbM(A|_\Delta))|_\Delta = (\bbM A)|_\Delta$.

(b) similar.
\end{proof}

We will show below that there is a precise correspondence between the following families of
languages and classes of algebras.
\begin{defi}
(a) A \emph{(positive) variety of languages} is a family~$\calK$ of languages that is closed under
(i)~finite unions and intersections, (ii)~inverse morphisms, and (iii)~derivatives.

(b)
A class~$\calV$ of finitary $\bbM$-algebras is a \emph{pseudo-variety} if it is closed under
(i)~quotients, (ii)~finitary subalgebras of finite products, and
(iii)~sort-accumulation points.
\markenddef
\end{defi}

\begin{Rem}
(a) In the definition of a pseudo-variety, closure under quotients is superfluous
as it is implied by closure under sort-accumulation points. We have left it as
a requirement in the definition to emphasis the analogy to the usual definition
in the setting with finitely many sorts.

(b) The reason why we combine the operations of taking subalgebras and forming products into
a single one is that, in general, the product of two finitary algebras need not be
finitely generated (see~\cite{Blumensath20} for a counterexample).

(c) If the set of sorts~$\Xi$ is finite, our notion of a pseudo-variety coincides with
the standard one.
\end{Rem}

\begin{lem}\label{Lem: basic properties of pseudo-varieties}
Let $\calV$~be a class of finitary $\bbM$-algebras and let
$\sfC$,~$\sfH$, $\sfS_\omega$, and~$\sfP_\omega$ denote the operations of taking, respectively,
sort-accumulation points, quotients, finitary subalgebras, and finite products.
\begin{enuma}
\item $(\sfC\sfH\sfS_\omega\sfP_\omega)^2(\calV) = \sfC\sfH\sfS_\omega\sfP_\omega(\calV)\,.$
\item The following conditions are equivalent.
  \begin{enum1}
  \item $\calV$ is a pseudo-variety.
  \item $\calV = \sfC\sfH\sfS_\omega\sfP_\omega(\calV)$
  \item $\calV$~satisfies the following to statements.
    \begin{itemize}
    \item For every finite set $\Delta \subseteq \Xi$ of sorts,
      the reduct~$\calV|_\Delta$ is a pseudo-variety.
    \item $\calV$~is the closure of the reducts~$\calV|_\Delta$ in the sense that
      \mathindent=7em
      \begin{align*}
        \frakA \in \calV \quad\iff\quad \frakA|_\Delta \in \calV|_\Delta\,,
        \quad\text{for all finite } \Delta \subseteq \Xi\,.
      \end{align*}
    \end{itemize}
  \end{enum1}
\end{enuma}
\end{lem}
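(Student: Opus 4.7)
The plan is to prove (a) first by a standard Birkhoff-style bookkeeping argument, and then use (a) to collapse (b) into two reduct-based verifications.

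For (a) I will show that for every class $\calV$, the class $\calV^* := \sfC\sfH\sfS_\omega\sfP_\omega(\calV)$ is already closed under each of the four operators. Closure under $\sfC$ is immediate from the outermost operator. The three classical commutation inclusions $\sfS_\omega\sfH \subseteq \sfH\sfS_\omega$, $\sfP_\omega\sfH \subseteq \sfH\sfP_\omega$, $\sfP_\omega\sfS_\omega \subseteq \sfS_\omega\sfP_\omega$ carry over verbatim from the finite-sort theory, since the constructions involved are sort-wise and use only the fact that the forgetful functor $\Alg(\bbM) \to \Pos^\Xi$ reflects these operations. What is genuinely new are the three inclusions that push $\sfH$, $\sfS_\omega$, $\sfP_\omega$ past $\sfC$. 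For each, the idea is the same: if $\frakA$ arises by one of these operations from sort-accumulation points $\frakB_1,\dots,\frakB_n$ of $\calV$, then for every finite $\Delta$ one picks witnesses $\frakA_i \in \calV$ with $\frakA_i|_\Delta$ a quotient of $\frakB_i|_\Delta$, applies the same operation inside $\calV$ to obtain $\frakA' \in \sfH\sfS_\omega\sfP_\omega(\calV)$, and checks that $\frakA'|_\Delta$ is a quotient of $\frakA|_\Delta$. This uses only that $(\cdot)|_\Delta$ commutes with quotients, finitary subalgebras, and finite products, which follows from the earlier lemma that $\bbM|_\Delta A = (\bbM A)|_\Delta$.

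For (b), the equivalence (1)$\Leftrightarrow$(2) is immediate from (a): one direction is $\calV = \sfC\sfH\sfS_\omega\sfP_\omega(\calV)$ when $\calV$ is a pseudo-variety, and the converse is obvious since each individual operator is dominated by their composition. For (1)$\Rightarrow$(3), the closure properties of $\calV|_\Delta$ are inherited from $\calV$ because each of $\sfH$, $\sfS_\omega$, $\sfP_\omega$ commutes with $(\cdot)|_\Delta$, and closure of $\calV|_\Delta$ under sort-accumulation points over the finite index set $\Delta$ reduces to closure under quotients of $\calV|_\Delta$, which is already provided. The reduct-characterisation of membership follows directly from the definition of sort-accumulation point: if every $\frakA|_\Delta$ lies in $\calV|_\Delta$, then for each finite $\Delta$ there is $\frakC_\Delta \in \calV$ with $\frakC_\Delta|_\Delta = \frakA|_\Delta$, which makes $\frakA|_\Delta$ a (trivial) quotient of itself, exhibiting $\frakA$ as a sort-accumulation point of $\calV$; closure then gives $\frakA \in \calV$.

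For (3)$\Rightarrow$(1), I check each of the four closure properties of $\calV$ by verifying it sort-wise. If $\frakA$ is obtained from members of $\calV$ by applying $\sfH$, $\sfS_\omega$, or $\sfP_\omega$, then for every finite $\Delta$ the reduct $\frakA|_\Delta$ is obtained from $\calV|_\Delta$ by the same operation and hence lies in the pseudo-variety $\calV|_\Delta$; the reduct-characterisation hypothesis then places $\frakA$ in $\calV$. For $\sfC$, if $\frakA$ is a sort-accumulation point of $\calV$, I show that each $\frakA|_\Delta$ is a sort-accumulation point of $\calV|_\Delta$ by enlarging $\Delta$ to any finite $\Delta' \supseteq \Delta$ and using the witness in $\calV$ supplied at level $\Delta'$. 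The hardest step is expected to be the three $\sfC$-commutation inclusions in part (a), where one must combine finitely many witnesses into a single algebra in $\calV^*$ whose reduct dominates the composite operation applied to $\frakA$; once those are verified, the rest of the argument is routine reduct-tracking.
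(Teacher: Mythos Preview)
Your plan is correct and follows essentially the same route as the paper; the only organisational difference is that the paper first observes $\sfH\circ\sfC=\sfC=\sfC\circ\sfH$ (sort-accumulation already subsumes quotients), drops~$\sfH$ to work with $\sfC\sfS_\omega\sfP_\omega$, and then establishes just the three inclusions $\sfS_\omega\sfP_\omega\sfS_\omega\sfP_\omega\subseteq\sfS_\omega\sfP_\omega$, $\sfP_\omega\sfC\subseteq\sfC\sfP_\omega$, and $\sfS_\omega\sfC\sfP_\omega\subseteq\sfC\sfS_\omega\sfP_\omega$, whereas you keep~$\sfH$ explicit and carry the extra classical commutations. One caution: for the $\sfS_\omega$-past-$\sfC$ step your phrase ``apply the same operation'' understates the work --- the correct witness is the subalgebra $\frakD_\Delta\subseteq\frakC_\Delta$ generated by $q_\Delta^{-1}[A|_\Delta]$, and one must check both that this generating set is finite (using sort-wise finiteness of~$\frakC_\Delta$) and that $D_\Delta|_\Delta=q_\Delta^{-1}[A|_\Delta]$ (using that preimages of subalgebras are subalgebras), so that $q_\Delta$ restricts to a surjection $\frakD_\Delta|_\Delta\to\frakA|_\Delta$.
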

\begin{proof}
(b)
(1)~$\Rightarrow$~(2) immediately follows from the closure properties of a pseudo-variety
and (2)~$\Rightarrow$~(1) follows once we have proved~(a) below.

(3)~$\Rightarrow$~(2)
If $\frakA \in \sfC\sfH\sfS_\omega\sfP_\omega(\calV)$,
then $\frakA|_\Delta \in \sfC\sfH\sfS_\omega\sfP_\omega(\calV|_\Delta) = \calV|_\Delta$,
for all finite $\Delta \subseteq \Xi$. This implies that $\frakA \in \calV$.

(2)~$\Rightarrow$~(3)
If $\frakA$~is an algebra with
$\frakA|_\Delta \in \calV|_\Delta$, for all finite $\Delta \subseteq \Xi$.
Then $\sfC(\calV) = \calV$ implies that $\frakA \in \calV$.
For the other claim, note that
\begin{align*}
  \calV = \sfC\sfH\sfS_\omega\sfP_\omega(\calV)
  \qtextq{implies}
  \calV|_\Delta = \sfC\sfH\sfS_\omega\sfP_\omega(\calV|_\Delta)\,,
\end{align*}
since the reduct operation~$|_\Delta$ commutes with $\sfC$, $\sfH$, $\sfS_\omega$,
and $\sfP_\omega$.
Hence, the claim follows by the implication (2)~$\Rightarrow$~(1) we have already proved above.

(a)
We have to show that $\sfC\sfH\sfS_\omega\sfP_\omega(\calV)$ is closed under all four operations.
First, note that $\sfC$,~$\sfH$, $\sfS_\omega$, and~$\sfP_\omega$ are closure operators.

Furthermore, we have $\sfH \circ \sfC = \sfC = \sfC \circ \sfH$
since the operation of taking a sort-accumulation point already performs a quotient.
Thus, in particular, $\sfC\sfH\sfS_\omega\sfP_\omega(\calV) = \sfC\sfS_\omega\sfP_\omega(\calV)$,
which reduces the number of cases we have to consider.
To conclude the proof, it is sufficient to establish the following three statements.
\begin{enumr}
\item $\sfS_\omega\sfP_\omega\sfS_\omega\sfP_\omega(\calC) \subseteq \sfS_\omega\sfP_\omega(\calC)$
\item $\sfP_\omega\sfC(\calC) \subseteq \sfC\sfP_\omega(\calC)$
\item $\sfS_\omega\sfC\sfP_\omega(\calC) \subseteq \sfC\sfS_\omega\sfP_\omega(\calC)$
\end{enumr}
where $\calC$~is a class of finitary $\bbM$-algebras.

(i)
Let $\frakA \subseteq \prod_{i < n} \frakB_i$ be finitary
where $\frakB_i \subseteq \frakC_i$ with $\frakC_i \in \sfP_\omega(\calC)$.
Then $\frakA \subseteq \prod_i \frakB_i \subseteq \prod_i \frakC_i$
and $\frakA \in \sfS_\omega\sfP_\omega(\calC)$.

(ii)
Let $\frakA = \prod_{i < n} \frakB_i$ where, for every $i < n$ and every
finite $\Delta \subseteq \Xi$, there exists a quotient map
$q^i_\Delta : \frakC_\Delta^i|_\Delta \to \frakB_i|_\Delta$ with $\frakC_\Delta^i \in \calC$.
Setting $\frakD_\Delta := \prod_{i < n} \frakC_\Delta^i$ we obtain an algebras in
$\sfP_\omega(\calC)$ with quotient maps
\begin{align*}
  \prod_{i < n} q^i_\Delta : \frakD_\Delta|_\Delta \to \frakA|_\Delta\,.
\end{align*}
Hence, $\frakA \in \sfC\sfP_\omega(\calC)$.

(iii)
Let $\frakA$~be a finitary subalgebra of~$\frakB$ and
fix quotient maps $q_\Delta : \frakC_\Delta|_\Delta \to \frakB|_\Delta$ with
$\frakC_\Delta \in \sfP_\omega(\calC)$, for every finite $\Delta \subseteq \Xi$.
We have to find a finitary subalgebra $\frakD_\Delta \subseteq \frakC_\Delta$
such that $\frakA|_\Delta$~is a quotient of~$\frakD_\Delta|_\Delta$.
Given $\Delta \subseteq \Xi$,
let $\frakD_\Delta$~be the subalgebra of~$\frakC_\Delta$ generated by the set
$X_\Delta := q_\Delta^{-1}[A|_\Delta]$.
Then $\frakD_\Delta$~is finitely generated and, hence, finitary.
Thus, $\frakD_\Delta \in \sfS_\omega\sfP_\omega(\calC)$.

It remains to prove that $\frakA|_\Delta$~is a quotient of~$\frakD_\Delta|_\Delta$.
For this it is sufficient to show that $D_\Delta|_\Delta = X_\Delta$ as that means that
$q_\Delta$~restricts to a surjective map $\frakD_\Delta|_\Delta \to \frakA|_\Delta$.
Hence, fix $a \in D_\Delta|_\Delta$. As~$D_\Delta$~is generated by~$X_\Delta$,
we can find some $s \in \bbM X_\Delta$ with $a = \pi(s)$. Then $s \in \bbM|_\Delta X_\Delta$,
which implies that $a = \pi(s) \in X_\Delta$ by the fact that preimages of subalgebras
induce subalgebras.
\end{proof}

The aim of this section is to establish a one-to-one correspondence between
varieties of languages and pseudo-varieties of $\bbM$-algebras.
The arguments are mostly standard, except for some adjustments needed to support
infinitely many sorts.
We start with the following observation.
\begin{lem}\label{Lem: calV contains Syn(K)}
Let $\calV$~be a pseudo-variety and $K \subseteq \bbM_\xi\Sigma$ a language with a syntactic algebra.
Then $K$~is recognised by some algebra\/ $\frakA \in \calV$ if, and only if,\/ $\Syn(K) \in \calV$.
\end{lem}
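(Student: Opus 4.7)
The plan is to prove the two directions separately, with the backward implication essentially immediate and the forward implication reducing to the universal property of $\Syn(K)$ combined with the closure properties of a pseudo-variety.

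For the $(\Leftarrow)$ direction, if $\Syn(K) \in \calV$ then the syntactic morphism $\syn_K : \bbM\Sigma \to \Syn(K)$ itself recognises~$K$ (since, by definition of $\preceq_K$, the preimage of $\Aboveseg\syn_K[K]$ under $\syn_K$ equals~$K$, and this preimage set is upwards closed in $\Syn_\xi(K)$).

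For the $(\Rightarrow)$ direction, fix a morphism $\varphi : \bbM\Sigma \to \frakA$ and an upwards closed set $P \subseteq A_\xi$ with $K = \varphi^{-1}[P]$, where $\frakA \in \calV$. First I would replace $\frakA$ by the image $\frakB := \varphi[\bbM\Sigma]$, which forms a subalgebra of~$\frakA$ since $\pi(\bbM\varphi(s)) = \varphi(\pi(s)) \in \varphi[\bbM\Sigma]$ for every $s \in \bbM\bbM\Sigma$. Because $\frakA \in \calV$ is finitary, the universe~$B$ is sort-wise finite, and $\frakB$ is generated by the finite set $\varphi[\sing[\Sigma]]$, so $\frakB$ is finitary. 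Viewing $\frakA$ as a (one-fold) product of itself, $\frakB$ is a finitary subalgebra of a finite product of members of~$\calV$, so $\frakB \in \calV$. The restricted morphism $\varphi : \bbM\Sigma \to \frakB$ is now surjective and still recognises~$K$ via the upwards closed set $P \cap B$ (which is upwards closed since~$B$ carries the order induced from~$A$).

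Now I apply Theorem~\ref{Thm: Syn(K) terminal} to the surjective morphism $\varphi : \bbM\Sigma \to \frakB$ recognising~$K$, obtaining a morphism $\varrho : \frakB \to \Syn(K)$ with $\syn_K = \varrho \circ \varphi$. Since $\syn_K$ is surjective and factors through~$\varrho$, the morphism~$\varrho$ is surjective as well, so $\Syn(K)$ is a quotient of~$\frakB$ (in the sense of Proposition~\ref{Prop: quotient map is morphism}, via the congruence ordering $\ker \varrho$). Closure of~$\calV$ under quotients then yields $\Syn(K) \in \calV$. The only mildly delicate point is verifying that passing to the image preserves membership in~$\calV$; once one notices that finitariness of $\frakA$ immediately gives sort-wise finiteness of~$\frakB$ and that finite generation of $\frakB$ is free of charge, the rest is a direct appeal to the universal property of the syntactic algebra.
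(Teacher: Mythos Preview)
Your proof is correct and follows essentially the same route as the paper: pass to the image of~$\varphi$ to obtain a surjective recogniser in~$\calV$, then invoke Theorem~\ref{Thm: Syn(K) terminal} and closure under quotients. You spell out in slightly more detail why the image $\frakB$ is finitary and why it lies in~$\calV$ (viewing $\frakA$ as a one-fold product so that closure under finitary subalgebras of finite products applies), whereas the paper simply says ``we may assume $\varphi$ is surjective''; otherwise the arguments coincide.
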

\begin{proof}
$(\Leftarrow)$ is trivial since $\Syn(K)$ recognises~$K$.
For $(\Rightarrow)$, consider a morphism $\varphi : \bbM\Sigma \to \frakA$ recognising~$K$
with $\frakA \in \calV$.
As~$\calV$ is closed under finitary subalgebras, we may assume that $\varphi$~is surjective.
We can therefore use Theorem~\ref{Thm: Syn(K) terminal}
to find a morphism $\varrho : \frakA \to \Syn(K)$ with $\syn_K = \varrho \circ \varphi$.
As $\syn_K$~is surjective, so is~$\varrho$.
By closure of~$\calV$ under quotients, it follows that $\Syn(K) \in \calV$.
\end{proof}

The first step in correlating varieties of languages and pseudo-varieties of algebras consists
in following the fact.
\begin{prop}
If $\calK$~is the family of languages recognised by the algebras of a pseudo-variety~$\calV$
of\/ $\bbM$-algebras, then $\calK$~is a variety of languages.
\end{prop}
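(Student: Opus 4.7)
The plan is to verify the three closure conditions in the definition of a variety of languages, making use of the corresponding closure properties of the pseudo-variety~$\calV$.

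For \emph{finite unions and intersections}, given $K_0,K_1 \subseteq \bbM_\xi\Sigma$ recognised by morphisms $\varphi_i : \bbM\Sigma \to \frakA_i$ with $\frakA_i \in \calV$ via upwards closed sets $P_i \subseteq (A_i)_\xi$, I would form the pairing $\psi := \langle\varphi_0,\varphi_1\rangle : \bbM\Sigma \to \frakA_0 \times \frakA_1$ and take its image $\frakB$. Since $\Sigma$~is finite, $\frakB$~is generated by the finite set $\set{\psi(\sing(c))}{c \in \Sigma}$, and being contained in the sort-wise finite algebra $\frakA_0 \times \frakA_1$ it is itself sort-wise finite. Thus $\frakB$~is a finitary subalgebra of a finite product of $\calV$-algebras, so $\frakB \in \calV$. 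The equalities $K_0 \cap K_1 = \psi^{-1}[P_0 \times P_1]$ and $K_0 \cup K_1 = \psi^{-1}[(P_0 \times A_1) \cup (A_0 \times P_1)]$, together with the evident upward closure of the target sets, then exhibit both languages as recognised by $\frakB \in \calV$.

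For \emph{inverse relabellings}, if $f : \Sigma \to \Gamma$ is a morphism of alphabets and $K = \varphi^{-1}[P]$ with $\varphi : \bbM\Gamma \to \frakA \in \calV$, then the identity $(\bbM f)^{-1}[K] = (\varphi \circ \bbM f)^{-1}[P]$ shows that the composite $\varphi \circ \bbM f : \bbM\Sigma \to \frakA$ recognises $(\bbM f)^{-1}[K]$ via the same upwards closed set, with recognising algebra still in~$\calV$.

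For \emph{derivatives}, fix $K = \varphi^{-1}[P] \subseteq \bbM_\xi\Sigma$ recognised by $\varphi : \bbM\Sigma \to \frakA \in \calV$ and a context $p \in \bbM_\xi(\bbM\Sigma + \Box)$, and set $\hat p := \bbM(\varphi + 1)(p) \in \bbM_\xi(A + \Box)$. The crux of the argument will be the identity $\varphi(p[s]) = \hat p[\varphi(s)]$ for every $s \in \bbM_\zeta\Sigma$. I would establish this by realising both sides as the value at~$p$ of a morphism $\bbM(\bbM\Sigma + \Box) \to \frakA$ extending the common function $\bbM\Sigma + \Box \to A$ given by $c \mapsto \varphi(c)$ on $\bbM\Sigma$ and $\Box \mapsto \varphi(s)$; freeness of $\bbM(\bbM\Sigma + \Box)$ over $\bbM\Sigma + \Box$ then forces the two extensions to coincide. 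With the identity in hand, set $P' := \set{b \in A_\zeta}{\hat p[b] \in P}$; the identity yields $p^{-1}[K] = \varphi^{-1}[P']$, and Lemma~\ref{Lem: applying context is monotonous}(a) shows that $b \leq b'$ implies $\hat p[b] \leq \hat p[b']$, so upward closure of $P$ transfers to $P'$. Thus $\varphi$~itself recognises~$p^{-1}[K]$, with algebra still in~$\calV$. The one point requiring care is the union/intersection case: the footnote following the definition of a pseudo-variety warns that a product of finitary algebras need not be finitely generated, which is precisely why I would pass to the finitely generated image $\frakB$ of~$\psi$ rather than use the full product $\frakA_0 \times \frakA_1$.
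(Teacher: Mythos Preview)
Your argument is essentially the paper's, with the same product construction for unions/intersections and the same direct argument for derivatives (the paper cites Proposition~\ref{Prop: comparing syntactic congruences} instead, but your direct computation of $P'$ accomplishes the same thing and is arguably cleaner since it avoids the syntactic-algebra hypothesis baked into that proposition's statement).

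There is one genuine gap, though it is easily repaired. The definition of a variety of languages requires closure under \emph{inverse morphisms}, meaning preimages under arbitrary morphisms $\psi : \bbM\Sigma \to \bbM\Gamma$ of $\bbM$-algebras, not merely under \emph{inverse relabellings} $\bbM f$ for $f : \Sigma \to \Gamma$ in $\Alp$. A general $\psi$ is determined by a map $\Sigma \to \bbM\Gamma$, not $\Sigma \to \Gamma$, so your restriction to $\psi = \bbM f$ does not cover all cases. The fix is that your argument does not actually use the special form of $\bbM f$: for any morphism $\psi : \bbM\Sigma \to \bbM\Gamma$ one has $\psi^{-1}[K] = (\varphi \circ \psi)^{-1}[P]$, and $\varphi \circ \psi : \bbM\Sigma \to \frakA$ is still a morphism into $\frakA \in \calV$. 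So simply replace ``$\bbM f$'' by ``$\psi$'' and drop the phrase ``inverse relabellings'' in favour of ``inverse morphisms''.
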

\begin{proof}
We have to prove three closure properties.

(1) We start with inverse morphisms.
Suppose that $K = \varphi^{-1}[P]$ for a morphism $\varphi : \bbM\Gamma \to \frakA$ with
$\frakA \in \calV$ and $P \subseteq A_\xi$.
Let $\psi : \bbM\Sigma \to \bbM\Gamma$ be a morphism. Then
\begin{align*}
  \psi^{-1}[K] = \psi^{-1}[\varphi^{-1}[P]] = (\varphi \circ \psi)^{-1}[P]
\end{align*}
is recognised by the morphism $\varphi \circ \psi$ to $\frakA \in \calV$.

(2) Next, we consider closure under derivatives.
Let $K \in \calK$ and fix a context~$p$. By assumption,
there is a morphism $\varphi : \bbM\Sigma \to \frakA$ recognising~$K$ with $\frakA \in \calV$.
By Proposition~\ref{Prop: comparing syntactic congruences}, $\varphi$~also recognises~$p^{-1}[K]$.
Hence, $p^{-1}[K] \in \calK$.

(3) It remains to prove closure under finite intersections and unions.
Clearly, the empty union~$\emptyset$ and the empty intersection~$\bbM\Sigma$ are
recognised by any morphism. Thus, it is sufficient to consider binary unions and intersections.
Consider two morphisms $\varphi : \bbM\Sigma \to \frakA$ and $\psi : \bbM\Sigma \to \frakB$
with $\frakA,\frakB \in \calV$,
and set $K := \varphi^{-1}[P]$ and $L := \psi^{-1}[Q]$, for upwards closed $P \subseteq A_\xi$ and
$Q \subseteq B_\xi$.
Then
\begin{align*}
  K \cap L &= \langle\varphi,\psi\rangle^{-1}[P \times Q]\,, \\
  K \cup L &= \langle\varphi,\psi\rangle^{-1}[(P \times B_\xi) \cup (A_\xi \times Q)]
\end{align*}
are recognised by $\langle\varphi,\psi\rangle : \bbM\Sigma \to \frakA \times \frakB$.
Let $\frakC$~be the subalgebra of $\frakA \times \frakB$ induced by the
range of~$\langle\varphi,\psi\rangle$.
Then $\frakC$~is finitary and $\frakC \in \calV$.
\end{proof}

It remains to prove the converse direction of the correspondence.
We start with two lemmas.
\begin{lem}\label{Lem: languages recognised by quotients}
Let $q : \frakA \to \frakB$ be a surjective morphism.
Every language recognised by\/~$\frakB$ is also recognised by\/~$\frakA$.
\end{lem}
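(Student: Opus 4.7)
The plan is to use the projectivity of the free algebra $\bbM\Sigma$ (Lemma~\ref{Lem: FX projective}) to lift a recognising morphism into $\frakB$ back through~$q$ to a recognising morphism into~$\frakA$.

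Concretely, suppose $K \subseteq \bbM_\xi\Sigma$ is recognised by~$\frakB$, so that $K = \varphi^{-1}[P]$ for some morphism $\varphi : \bbM\Sigma \to \frakB$ and some upwards closed $P \subseteq B_\xi$. Since $\Sigma$ is (by definition) an unordered alphabet and $q : \frakA \to \frakB$ is surjective, Lemma~\ref{Lem: FX projective} applies to the pair $(\varphi, q)$ and yields a morphism $\psi : \bbM\Sigma \to \frakA$ with $\varphi = q \circ \psi$.

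Now set $Q := q^{-1}[P] \subseteq A_\xi$. Because $q$ is monotone and $P$ is upwards closed, $Q$ is upwards closed as well: if $a \in Q$ and $a \leq a'$, then $q(a) \leq q(a') \in \Aboveseg q(a) \subseteq P$. Finally,
\begin{align*}
  \psi^{-1}[Q] = \psi^{-1}[q^{-1}[P]] = (q \circ \psi)^{-1}[P] = \varphi^{-1}[P] = K\,,
\end{align*}
so $\psi$ recognises~$K$ via the upwards closed set~$Q$, as required.

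There is essentially no obstacle here: the entire content of the statement is packaged in the projectivity lemma, and the only additional check is that the preimage of an upwards closed set under a monotone map is upwards closed.
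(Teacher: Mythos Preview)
Your proof is correct and follows essentially the same approach as the paper: both invoke Lemma~\ref{Lem: FX projective} to lift the recognising morphism through~$q$, then take the preimage $q^{-1}[P]$ as the accepting set. Your version is slightly more explicit in verifying that $q^{-1}[P]$ is upwards closed, which the paper leaves implicit.
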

\begin{proof}
Suppose that $L = \psi^{-1}[P]$ where $\psi : \bbM\Sigma \to \frakB$ and $P \subseteq B_m$ is
upwards closed.
By Lemma~\ref{Lem: FX projective}, there exists a morphism $\varphi : \bbM\Sigma \to \frakA$
such that $q \circ \varphi = \psi$. Setting $Q := q^{-1}[P]$, it follows that
\begin{align*}
  \varphi^{-1}[Q]
  = \varphi^{-1}[q^{-1}[P]]
  = (q \circ \varphi)^{-1}[P]
  = \psi^{-1}[P]
  = L\,.
\end{align*}
\upqed
\end{proof}
\begin{lem}\label{Lem: languages recognised by infinite products}
Every language $K \subseteq \bbM_\xi\Sigma$ that is recognised by a finitary subalgebra\/
$\frakC \subseteq \prod_{i \in I} \frakA^i$ of a product of finitary\/ $\bbM$-algebras\/~$\frakA^i$
can be written as a finite positive boolean combination of languages recognised by the factors\/~$\frakA^i$.
\end{lem}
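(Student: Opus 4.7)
The plan is to write $K = \varphi^{-1}[P]$ for a morphism $\varphi : \bbM\Sigma \to \frakC$ and an upwards closed set $P \subseteq C_\xi$, and then express each element of $P$ in terms of finitely many coordinates. Composing $\varphi$ with the inclusion $\frakC \hookrightarrow \prod_i \frakA^i$ followed by the projection to the $i$-th factor gives morphisms $\varphi_i : \bbM\Sigma \to \frakA^i$, and languages of the form $\varphi_i^{-1}[Q]$ with $Q \subseteq A^i_\xi$ upwards closed are precisely those recognised by $\frakA^i$. The goal is to show that $K$ is a finite union of finite intersections of such languages.

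First I would use the hypothesis that $\frakC$ is finitary: since $C_\xi$ is finite, $P$ is finite as well, and it suffices (by distributivity of union over intersection) to express $\varphi^{-1}[\Aboveseg c]$ for each $c \in P$ as a finite intersection of languages recognised by the factors, where $\Aboveseg c$ is taken inside $C_\xi$. Because $\frakC$ is a subalgebra of the product (with componentwise ordering), for $c,c' \in C_\xi$ we have $c' \geq c$ iff $p_i(c') \geq p_i(c)$ for every $i \in I$. The key reduction is to note that although $I$ may be infinite, only finitely many indices are actually needed: for every $c \in P$ and every $c' \in C_\xi$ with $c' \not\geq c$, pick a witnessing index $i(c,c') \in I$ with $p_{i(c,c')}(c') \not\geq p_{i(c,c')}(c)$, and let $I_c$ be the finite set of these witnesses. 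A direct contrapositive argument then shows that, for $c'' \in C_\xi$, $c'' \geq c$ iff $p_i(c'') \geq p_i(c)$ for all $i \in I_c$.

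Applying this pointwise to $\varphi(s) \in C_\xi$ gives
\begin{align*}
  \varphi^{-1}[\Aboveseg c]
  = \bigcap_{i \in I_c} \varphi_i^{-1}[\Aboveseg p_i(c)]\,,
\end{align*}
where the inner $\Aboveseg p_i(c)$ is taken in $A^i_\xi$ and is upwards closed there, so that each factor on the right is recognised by $\frakA^i$. Taking the union over $c \in P$ yields the desired finite positive Boolean combination. The main subtlety, and the reason the statement is not entirely trivial, is precisely this step of replacing a potentially infinite conjunction over $I$ by a finite one: it is essential that $C_\xi$ itself is finite so that only finitely many separating indices are required, which is where the finitariness assumption on $\frakC$ is genuinely used.
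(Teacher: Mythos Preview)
Your proposal is correct and follows essentially the same approach as the paper's proof: both write $K=\varphi^{-1}[P]$, decompose $P$ as $\bigcup_{c\in P}\Aboveseg c$, and then use finiteness of $C_\xi$ to replace the infinite family of coordinate conditions $p_i(\cdot)\geq p_i(c)$ by a finite one via chosen separating indices. The only cosmetic difference is that the paper picks a single finite index set $H\subseteq I$ once (for all pairs $a\nleq b$ in $C_\xi$), whereas you pick a set $I_c$ depending on $c$; either works.
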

\begin{proof}
As the statement also holds for infinite products and the proof of that case is not more complicated,
we prove the more general statement.
Let $\varphi : \bbM\Sigma \to \frakC$ be a morphism such that
$K = \varphi^{-1}[P]$ for some upwards closed $P \subseteq C \subseteq \prod_i A^i_\xi$.
Let $p_k : \prod_i \frakA^i \to \frakA^k$ be the projection.
By the way the ordering of the product~$\prod_i A^i$ is defined, we can pick,
for every pair $a,b \in \prod_i A^i_\xi$ of elements with $a \nleq b$,
some index $h \in I$ such that $p_h(a) \nleq p_h(b)$.
Let $H \subseteq I$ be the finite set of such indices~$h$ that correspond to pairs $a,b \in C_\xi$.
For $s \in \bbM_\xi\Sigma$ and $a \in C_\xi$, it follows that
\begin{align*}
  a \nleq \varphi(s) \quad\iff\quad p_h(a) \nleq p_h(\varphi(s))\,, \quad\text{for some } h \in H\,,
\end{align*}
or, equivalently,
\begin{align*}
  \varphi(s) \geq a \quad\iff\quad p_h(\varphi(s)) \geq p_h(a)\,, \quad\text{for all } h \in H\,.
\end{align*}
Consequently,
\begin{align*}
  K
  = \varphi^{-1}[P]
  = \bigcup_{a \in P} \varphi^{-1}[\Aboveseg a]
  = \bigcup_{a \in P} \bigcap_{h \in H} (p_h \circ \varphi)^{-1}[\Aboveseg p_h(a)]\,.
\end{align*}
As the languages $(p_h \circ \varphi)^{-1}[\Aboveseg p_h(a)]$ are recognised
by the morphism $p_h \circ \varphi : \bbM\Sigma \to \frakA^h$,
the claim follows.
\end{proof}

\begin{thm}\label{Thm: first variety theorem}
Let $\calK$~be a variety of languages such that every language in~$\calK$ has a syntactic algebra.
A language~$K$ belongs to~$\calK$ if, and only if,
it is recognised by some algebra from the pseudo-variety~$\calV$ generated by the set
$\calS := \set{ \Syn(K) }{ K \in \calK }$.
\end{thm}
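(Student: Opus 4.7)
My plan is to establish the converse inclusion by showing that
\[
\calV^* := \{\frakA \in \Alg(\bbM) : \text{every language recognised by } \frakA \text{ lies in } \calK\}
\]
is a pseudo-variety containing~$\calS$; since $\calV$~is the smallest pseudo-variety containing~$\calS$, this forces $\calV \subseteq \calV^*$, which is exactly the required statement. The forward inclusion is trivial because $\Syn(K) \in \calS \subseteq \calV$ and $\syn_K$ already recognises~$K$.

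The inclusion $\calS \subseteq \calV^*$ is immediate from Proposition~\ref{Prop: languages recognised by Syn(K)}: every language recognised by $\Syn(L)$ for $L \in \calK$ is a finite positive boolean combination of inverse morphisms of derivatives of~$L$, and so lies in~$\calK$ by closure under inverse morphisms, derivatives, and finite unions and intersections. Closure of~$\calV^*$ under quotients is Lemma~\ref{Lem: languages recognised by quotients}, and closure under finitary subalgebras of finite products is Lemma~\ref{Lem: languages recognised by infinite products} combined with the closure of~$\calK$ under finite unions and intersections.

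The main obstacle is closure of~$\calV^*$ under the sort-accumulation operator~$\sfC$. Let $\frakB$ be a sort-accumulation point of~$\calV^*$ and $K \subseteq \bbM_\xi\Sigma$ a language recognised by a morphism $\varphi : \bbM\Sigma \to \frakB$ via an upwards closed set $P \subseteq B_\xi$. Replacing $\frakB$ by the image of~$\varphi$, which is finitary since it is generated by $\varphi(\sing(\Sigma))$, we may assume $\varphi$~is surjective. Set $\Delta_0 := \{\xi\} \cup \Xi_\Sigma$, where $\Xi_\Sigma$ denotes the finite set of sorts occurring in~$\Sigma$. By the defining property of sort-accumulation applied to~$\Delta_0$, there exist an algebra $\frakA \in \calV^*$ and a surjective morphism $q : \frakA|_{\Delta_0} \to \frakB|_{\Delta_0}$ of $\bbM|_{\Delta_0}$-algebras. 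Because $\Xi_\Sigma \subseteq \Delta_0$ and $\Sigma$~is unordered, $\bbM|_{\Delta_0}\Sigma = \bbM\Sigma|_{\Delta_0}$ is the free $\bbM|_{\Delta_0}$-algebra on~$\Sigma$, and Lemma~\ref{Lem: FX projective} applied to the monad~$\bbM|_{\Delta_0}$ lifts $\varphi|_{\Delta_0}$ through~$q$ to a morphism $\hat\varphi : \bbM|_{\Delta_0}\Sigma \to \frakA|_{\Delta_0}$ satisfying $q \circ \hat\varphi = \varphi|_{\Delta_0}$.

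To finish, I~would extend $\hat\varphi \circ \sing : \Sigma \to \frakA$ by the freeness of $\bbM\Sigma$ to a $\bbM$-morphism $\tilde\varphi : \bbM\Sigma \to \frakA$. By uniqueness of $\bbM|_{\Delta_0}$-morphisms out of the free $\bbM|_{\Delta_0}$-algebra, $\tilde\varphi|_{\Delta_0}$ must coincide with~$\hat\varphi$; in particular $\tilde\varphi_\xi = \hat\varphi_\xi$. Consequently $\tilde\varphi$ recognises~$K$ via the upwards closed set $q_\xi^{-1}[P] \subseteq \frakA_\xi$, and since $\frakA \in \calV^*$ we obtain $K \in \calK$. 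The subtle point is that, even though $\frakB$ is only approximated restriction-by-restriction by algebras in~$\calV^*$, the sorts actually relevant to a single language $K \subseteq \bbM_\xi\Sigma$ form the finite set $\Delta_0$, and restricting attention to $\Delta_0$ is precisely what turns the abstract sort-accumulation into a concrete recognition of~$K$ by an algebra already in~$\calV^*$.
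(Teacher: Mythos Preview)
Your overall strategy is the same as the paper's: show that the property ``every recognised language lies in~$\calK$'' propagates from~$\calS$ through each of the operations $\sfH$, $\sfS_\omega\sfP_\omega$, and~$\sfC$. The paper phrases this as stepping through $\calV = \sfC\sfH\sfS_\omega\sfP_\omega(\calS)$ layer by layer, while you package it as ``$\calV^*$ is a pseudo-variety containing~$\calS$''; the underlying lemmas (Proposition~\ref{Prop: languages recognised by Syn(K)}, Lemmas \ref{Lem: languages recognised by quotients}~and~\ref{Lem: languages recognised by infinite products}) and the sort-accumulation argument are identical.

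There is, however, one genuine slip in your sort-accumulation step. When you write ``Replacing $\frakB$ by the image of~$\varphi$\dots\ we may assume $\varphi$~is surjective'', you have replaced the given sort-accumulation point by a subalgebra, and a subalgebra of a sort-accumulation point of~$\calV^*$ need not itself be one. Consequently the next sentence, ``By the defining property of sort-accumulation applied to~$\Delta_0$'', is no longer justified for the new~$\frakB$. Fortunately this replacement is also entirely unnecessary: nowhere in your lifting argument do you use surjectivity of~$\varphi$. Lemma~\ref{Lem: FX projective} only requires~$q$ to be surjective, and the final computation $\tilde\varphi_\xi^{-1}[q_\xi^{-1}[P]] = K$ goes through regardless. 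Simply delete that sentence and apply the sort-accumulation hypothesis to the original~$\frakB$; the rest of your argument is correct as written. (A cosmetic point: your $\calV^*$ should be defined as a class of \emph{finitary} $\bbM$-algebras, since pseudo-varieties live inside that class.)
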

\begin{proof}
$(\Rightarrow)$ Every language $K \in \calK$ is recognised by $\Syn(K)$, which belongs to~$\calV$.

$(\Leftarrow)$
It follows from Lemma~\ref{Lem: basic properties of pseudo-varieties} that
$\calV = \sfC\sfH\sfS_\omega\sfP_\omega(\calS)$. We proceed in several steps.
By Proposition~\ref{Prop: languages recognised by Syn(K)}, every language recognised by
a syntactic algebra $\Syn(K)$ with $K \in \calK$ belongs to~$\calK$.
As $\calK$~is a variety of languages, it follows by Lemmas
\ref{Lem: languages recognised by quotients}~and~\ref{Lem: languages recognised by infinite products}
that every language~$L$ recognised by an algebra in $\sfH\sfS_\omega\sfP_\omega(\calS)$
also belongs to~$\calK$.

Finally, suppose that $\frakB$~is a sort-accumulation point of $\sfH\sfS_\omega\sfP_\omega(\calS)$
and let $\varphi : \bbM\Sigma \to \frakB$ be a morphism recognising $K = \varphi^{-1}[P]$
with $P \subseteq B_\xi$.
Let $\Delta \subseteq \Xi$ be the set consisting of~$\xi$ and all sorts appearing in the
alphabet~$\Sigma$.
By assumption, we can find an algebra $\frakA \in \sfH\sfS_\omega\sfP_\omega(\calS)$
and a surjective morphism $q : \frakA|_\Delta \to \frakB|_\Delta$.
Let $\hat\varphi : \bbM\Sigma \to \frakA$ be the unique morphism with
$\hat\varphi(\sing(c)) := q(\varphi(\sing(c)))$, for $c \in \Sigma$.
For $s \in \bbM_\xi\Sigma$ we then have
\begin{align*}
  \hat\varphi(s) = \pi(\bbM(q \circ \varphi \circ \sing)(s))
                 = q((\pi \circ \bbM\varphi \circ \bbM\sing)(s))
                 = q(\varphi(s))\,,
\end{align*}
where the first step follows from the fact that $q$~is a morphism of $\bbM|_\Delta$-algebras
and $\xi \in \Delta$.
Consequently, $\hat\varphi^{-1}[q[P]] = \varphi^{-1}[P] = K$ and
$K$~is already recognised by an algebra in~$\sfH\sfS_\omega\sfP_\omega(\calS)$.
By what we have already shown above, this implies that $K \in \calK$.
\end{proof}

As we have just seen, every pseudo-variety of algebras is associated with a variety of languages
and every variety of languages is associated with a pseudo-variety of algebras.
We conclude this section by proving that this correspondence is one-to-one.
As usual we start with a lemma.
\begin{lem}\label{Lem: algebras in a pseudo-variety}
Let\/ $\frakA$~be a finitary\/ $\bbM$-algebra such that every language recognised by\/~$\frakA$
has a syntactic algebra.
Then\/ $\frakA$~belongs to a pseudo-variety~$\calV$ if, and only if,\/
$\Syn(K) \in \calV$, for every language~$K$ recognised by\/~$\frakA$.
\end{lem}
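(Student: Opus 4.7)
For the $(\Rightarrow)$ direction, the argument is immediate: if $\frakA \in \calV$ recognises a language~$K$, then $K$~has a syntactic algebra by hypothesis, so Lemma~\ref{Lem: calV contains Syn(K)} yields $\Syn(K) \in \calV$ directly.

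The substantive part is the $(\Leftarrow)$ direction. The plan is to show that $\frakA$~is a sort-accumulation point of~$\calV$, whence $\frakA \in \calV$ by closure under this operation. Fix a finite $\Delta \subseteq \Xi$. Since $\frakA$~is finitary, pick a finite generating set $C \subseteq A$, set $\Sigma := \bbV C$, and let $\varphi : \bbM\Sigma \twoheadrightarrow \frakA$ be the canonical surjective morphism extending the inclusion $C \hookrightarrow A$. For each $a \in A|_\Delta$ let $P_a := \Aboveseg a$ and set $K_a := \varphi^{-1}[P_a]$; each $K_a$~is recognised by~$\frakA$, so by hypothesis each $\preceq_{K_a}$ is a congruence ordering of finitary index and each $\Syn(K_a) \in \calV$. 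Define ${\sqsubseteq} := \bigcap_{a \in A|_\Delta} {\preceq_{K_a}}$, a finite intersection that is again a congruence ordering of finitary index. The quotient $\frakB_\Delta := \bbM\Sigma/{\sqsubseteq}$ order-embeds into the finite product $\prod_{a \in A|_\Delta} \Syn(K_a)$ via the canonical product of the quotient maps $\bbM\Sigma \to \Syn(K_a)$, so~$\frakB_\Delta$ is a finitary subalgebra of a finite product of members of~$\calV$; thus $\frakB_\Delta \in \sfS_\omega\sfP_\omega(\calV) \subseteq \calV$.

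It remains to exhibit a surjective $\bbM|_\Delta$-morphism $\frakB_\Delta|_\Delta \twoheadrightarrow \frakA|_\Delta$. Writing $\psi : \bbM\Sigma \twoheadrightarrow \frakB_\Delta$ for the quotient map, both $\psi|_\Delta$ and $\varphi|_\Delta$ are surjections with common source $(\bbM\Sigma)|_\Delta$, so by the Factorisation Lemma combined with Lemma~\ref{Lem: factorisation is morphism} it suffices to verify $\ker\psi|_\Delta \subseteq \ker\varphi|_\Delta$, i.e.\ that $s \sqsubseteq t$ implies $\varphi(s) \leq \varphi(t)$ for all $s,t \in (\bbM\Sigma)|_\Delta$. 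This verification is the key step. Supposing it failed, set $a := \varphi(s) \in A|_\Delta$; the trivial context $\sing(\Box) \in \bbM(\Sigma+\Box)$ with~$\Box$ of the sort of~$s$ places $s$~in~$K_a$ (since $\varphi(s) = a \in P_a$) while keeping $t$~out (since $\varphi(t) \in \Aboveseg a$ would force $\varphi(t) \geq \varphi(s)$), contradicting $s \preceq_{K_a} t$. The main conceptual obstacle is choosing a separating family compatible with the partial order: the use of the upwards closed sets $\Aboveseg a$ (rather than singletons or equivalence classes) is essential because the syntactic congruence $\preceq_K$~is defined via upwards closed languages, and this is exactly what enables the trivial-context separating step to go through in the ordered, many-sorted setting.
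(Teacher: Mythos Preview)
Your proof is correct and follows essentially the same route as the paper: both directions argue via sort-accumulation points, define the separating languages $K_a$ for $a \in A|_\Delta$ using the upwards closed sets $\Aboveseg a$, embed the resulting quotient into the finite product $\prod_a \Syn(K_a)$, and apply the Factorisation Lemma together with the trivial-context argument to obtain the surjection onto~$\frakA|_\Delta$. The only cosmetic difference is that the paper first reduces (without loss of generality) to the case where $A|_\Delta$ itself generates~$\frakA$ and uses $A|_\Delta$ as the alphabet, whereas you work with an arbitrary finite generating set~$C$; this changes nothing of substance.
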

\begin{proof}
$(\Rightarrow)$ If $K$~is recognised by $\frakA \in \calV$,
it follows by Lemma~\ref{Lem: calV contains Syn(K)} that $\Syn(K) \in \calV$.

$(\Leftarrow)$
Suppose that $\Syn(K) \in \calV$, for every language~$K$ recognised by~$\frakA$.
As $\calV$~is closed under sort-accumulation points,
it is sufficient to show that, for every finite set $\Delta \subseteq \Xi$,
there is some surjective morphism
$\mu : \frakB|_\Delta \to \frakA|_\Delta$ with $\frakB \in \calV$.
Hence, fix $\Delta \subseteq \Xi$. W.l.o.g.\ we may assume that~$A|_\Delta$
generates~$\frakA$. Let
\begin{align*}
  K_a &:= \pi^{-1}(\Aboveseg a) \cap \bbM(A|_\Delta)\,,
          \qquad\text{for } a \in A|_\Delta\,,\\[1ex]
  q &:= \langle\syn_{K_a}\rangle_{a \in A|_\Delta}
                : \bbM(A|_\Delta) \to \prod_{a \in A|_\Delta} \Syn(K_a)\,,
\end{align*}
and let $\frakB$~be the subalgebra of $\prod_a \Syn(K_a)$ induced by $\rng q$.
Note that $\frakB$~is finitely generated and therefore belongs to~$\calV$.

For $s,t \in \bbM_\xi(A|_\Delta)$ with $\xi \in \Delta$, we have
\begin{alignat*}{-1}
                & \langle s,t\rangle \in \ker q \\
\Rightarrow\quad& s \preceq_{K_a} t\,, &&\quad\text{for all } a \in A|_\Delta\,, \\
\Rightarrow\quad& [s \in K_a\ \Rightarrow\ t \in K_a]\,,
                  &&\quad\text{for all } a \in A_\xi\,, \\
\Rightarrow\quad& [a \leq \pi(s) \ \Rightarrow\ a \leq \pi(t)]\,,
                  &&\quad\text{for all } a \in A_\xi\,, \\
\Rightarrow\quad& \pi(s) \leq \pi(t)\,.
\end{alignat*}
Consequently, the Factorisation Lemma provides a function $\mu : \frakB|_\Delta \to \frakA|_\Delta$
such that
$\mu \circ q \restriction \bbM|_\Delta A = \pi \restriction \bbM|_\Delta A$.
Note that the restriction $\pi \restriction \bbM|_\Delta A$ is surjective
(as a morphism $\bbM|_\Delta A \to \frakA|_\Delta$),
since $A|_\Delta$~generates~$\frakA$.
Hence, so is~$\mu$.
\end{proof}

\begin{thm}[Variety Theorem]\label{Thm: variety theorem}
Let $\calV$~be a pseudo-variety of\/ $\bbM$-algebras such that every language recognised
by an algebra in~$\calV$ has a syntactic algebra, and let~$\calK$ be a variety of languages
such that every language in~$\calK$ has a syntactic algebra.
The following statements are equivalent.
\begin{enum1}
\item $\calK$~consists of those languages that are recognised by some algebra in~$\calV$.
\item $\calK$~consists of all languages~$K$ with $\Syn(K) \in \calV$.
\item $\calV$~consists of those algebras that only recognise languages in~$\calK$.
\item $\calV$~is the pseudo-variety generated by the set $\set{ \Syn(K) }{ K \in \calK }$.
\end{enum1}
\end{thm}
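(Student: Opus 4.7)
The plan is to establish the equivalence by showing $(1) \Leftrightarrow (2)$, $(1) \Leftrightarrow (4)$, and $(1) \Leftrightarrow (3)$ in turn. The engine throughout will be the pair of characterisation lemmas already proved: Lemma~\ref{Lem: calV contains Syn(K)} says that a language recognised by some algebra in~$\calV$ already has its syntactic algebra in~$\calV$, and Lemma~\ref{Lem: algebras in a pseudo-variety} is its dual, characterising membership in~$\calV$ via the syntactic algebras of the languages an algebra recognises.

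For $(1) \Leftrightarrow (2)$ I would argue that $K$~is recognised by some algebra in~$\calV$ if and only if $\Syn(K) \in \calV$: the forward direction is precisely Lemma~\ref{Lem: calV contains Syn(K)}, while the converse is immediate because~$\Syn(K)$ always recognises~$K$. Substituting this equivalence into~(1) yields~(2). For $(1) \Leftrightarrow (4)$ I would apply Theorem~\ref{Thm: first variety theorem} to~$\calK$, which states that~$\calK$ coincides with the family of languages recognised by the pseudo-variety~$\calV'$ generated by $\set{ \Syn(K) }{ K \in \calK }$. Together with~(1) this tells us that $\calV$~and~$\calV'$ recognise exactly the same languages. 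The inclusion $\calV' \subseteq \calV$ then follows because each generator $\Syn(K)$ for $K \in \calK$ lies in~$\calV$ by the already-established~(2), while for the reverse inclusion $\calV \subseteq \calV'$ one takes $\frakA \in \calV$, notes that every language it recognises lies in~$\calK$ (by~(1)) and hence has its syntactic algebra in~$\calV'$ by construction, and concludes via Lemma~\ref{Lem: algebras in a pseudo-variety} that $\frakA \in \calV'$.

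The remaining pair $(1) \Leftrightarrow (3)$ is where the axioms of a language variety genuinely earn their keep. The direction $(1) \Rightarrow (3)$ is routine: if $\frakA \in \calV$ then each language it recognises lies in~$\calK$ by~(1); conversely, if $\frakA$ recognises only languages in~$\calK$, then by~(2) all their syntactic algebras lie in~$\calV$, so Lemma~\ref{Lem: algebras in a pseudo-variety} places~$\frakA$ in~$\calV$. The substantive direction is $(3) \Rightarrow (1)$: given $K \in \calK$, the natural candidate recogniser is~$\Syn(K)$, and by~(3) we need to verify that every language recognised by~$\Syn(K)$ is itself in~$\calK$. This is exactly what Proposition~\ref{Prop: languages recognised by Syn(K)} provides, writing any such language as $\varphi^{-1}\bigl[\bigcup_i \bigcap_k p_{ik}^{-1}[K]\bigr]$; closure of~$\calK$ under derivatives, inverse morphisms, and finite unions and intersections then finishes the job. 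This last step — chaining the three closure properties of~$\calK$ against the description of Proposition~\ref{Prop: languages recognised by Syn(K)} — is the one place where the full definition of a language variety is used, and is the main (though modest) obstacle in the proof.
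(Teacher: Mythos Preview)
Your proof is correct and uses exactly the same toolkit as the paper --- Lemma~\ref{Lem: calV contains Syn(K)}, Lemma~\ref{Lem: algebras in a pseudo-variety}, Theorem~\ref{Thm: first variety theorem}, and Proposition~\ref{Prop: languages recognised by Syn(K)} --- only organised as a ``star'' of equivalences $(1)\Leftrightarrow(2)$, $(1)\Leftrightarrow(3)$, $(1)\Leftrightarrow(4)$ rather than the paper's cycle $(1)\Leftrightarrow(2)\Rightarrow(3)\Rightarrow(4)\Rightarrow(1)$. The only point worth tightening is that in $(3)\Rightarrow(1)$ you explicitly argue only one inclusion (that $K\in\calK$ is recognised by some algebra in~$\calV$); the other inclusion --- that anything recognised in~$\calV$ lies in~$\calK$ --- is immediate from~(3) and should be mentioned for completeness.
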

\begin{proof}
(1)~$\Leftrightarrow$~(2) follows by Lemma~\ref{Lem: calV contains Syn(K)} and
(4)~$\Rightarrow$~(1) by Theorem~\ref{Thm: first variety theorem}.

(2)~$\Rightarrow$~(3)
If $\frakA \in \calV$ and $K$~is recognised by~$\frakA$, it follows by
Lemma~\ref{Lem: algebras in a pseudo-variety} that $\Syn(K) \in \calV$.
By~(2), this implies that $K \in \calK$.
Conversely, if $\frakA$~only recognises language in~$\calK$, (2)~implies that
$\Syn(K) \in \calV$ for all languages~$K$ recognised by~$\frakA$.
By Lemma~\ref{Lem: algebras in a pseudo-variety} it follows that $\frakA \in \calV$.

(3)~$\Rightarrow$~(4)
Let $\calV_0$~be the pseudo-variety generated by $\set{ \Syn(K) }{ K \in \calK }$.
For each $K \in \calK$,
it follows by Proposition~\ref{Prop: languages recognised by Syn(K)} that
all languages recognised by $\Syn(K)$ belong to~$\calK$.
By assumption, this implies that $\Syn(K) \in \calV$.
Consequently, we have $\calV_0 \subseteq \calV$.
Conversely, let $\frakA \in \calV$.
By assumption, every language recognised by~$\frakA$ belongs to~$\calK$.
(In particular, each such language has a syntactic algebra.)
Therefore, Lemma~\ref{Lem: algebras in a pseudo-variety} implies that $\frakA \in \calV_0$.
\end{proof}

\section{The profinitary term monad}   
\label{Sect:profinitary}

The goal of this section and the next one is to derive an axiomatisation of pseudo-varieties
in terms of systems of inequalities.
We start by defining the kind of terms allowed in our axioms.
The actual axiomatisation will then be presented in Section~\ref{Sect:axioms} below.
A~natural choice for the terms would be to take the elements of~$\bbM X$,
for some set~$X$ of `variables'.
But it turns out that this does not work.
To capture the restriction to \emph{finitary} $\bbM$-algebras, we have to use
a more general notion of a term.
The classic result by Reiterman~\cite{Reiterman82} characterises the pseudo-varieties
of finite semigroups as exactly those axiomatisable by a set of \emph{profinite} equations.
Analogously, we have to define \emph{profinitary $\bbM$-terms} for our version of this
theorem. For this we follow the material in~\cite{ChenAdamekMiliusUrbat16,UrbatChAdMi2017},
but with some adjustments that are needed to support infinitely many sorts.

To explain how we arrive at the definition below, let us collect our requirements on this
set of terms. We are looking for a functor~$\widehat\bbM$
mapping an (unordered) set~$X$ of `variables' to some set $\widehat\bbM X$ of `terms'.
These terms should generalise the ordinary terms from~$\bbM X$, i.e.,
we need an embedding $\iota : \bbM X \to \widehat\bbM X$.
Furthermore, we should be able to `evaluate' a term $t \in \widehat\bbM X$ in a given finitary
$\bbM$-algebra~$\frakA$ with respect to a given `variable assignment' $\beta : X \to A$.
Let us denote the resulting value by $\val(t;\beta)$.
For ordinary terms $t \in \bbM X$, this value should of course correspond to
the value of~$t$ in~$\frakA$. Thus,
\begin{align*}
  \val(\iota(t);\beta) = \pi(\bbM\beta(t))\,,
\end{align*}
where $\pi(\bbM\beta(t))$ is the canonical extension of $\beta : X \to A$ to $\bbM X \to A$.
Furthermore, $\val(t;\beta)$ should be compatible with morphisms of $\bbM$-algebras.
That is,
\begin{align*}
  \val(t;\varphi \circ \beta) = \varphi(\val(t;\beta))\,,
  \quad\text{for every morphism } \varphi : \frakA \to \frakB\,.
\end{align*}

This leads to the following construction.
We work in the category of all morphisms $\bbM X \to \frakA$.
In this category we consider the diagram of all $\beta : \bbM X \to \frakA$
where $\frakA$~is finitary and we take for $\iota : \bbM X \to \widehat\bbM X$
the limit. The morphisms $\widehat\bbM X \to \frakA$ of the corresponding limiting cone
can then be taken as our evaluation maps.
The formal construction is as follows.
\begin{defi}
Let $\calA \subseteq \Alg(\bbM)$ be a subcategory of finitary $\bbM$-algebras and $X$~a set.
We denote the comma category $(\bbM X \downarrow \Alg(\bbM))$ by~$\calC$,
the subcategory $(\bbM X \downarrow \calA)$ by~$\calC_0$, and the inclusion diagram by
$D : \calC_0 \to\nobreak \calC$.

(a)
We denote by $\iota_\calA : \bbM X \to \widehat\bbM_\calA X$ the limit $\iota_\calA := \lim D$ of~$D$,
and the limiting cone by $(\val_\calA({-};\beta))_{\beta \in \calC_0}$.
If $\calA$~is the category of all finitary $\bbM$-algebras, we drop the subscript
and simply write $\widehat\bbM$,~$\iota$, and~$\val({-};\beta)$.

(b) We turn $\widehat\bbM_\calA$~into a functor as follows. Given $f : X \to Y$,
the family $(\val({-};\beta \circ \bbM f))_\beta$ (where $\beta$ ranges over all
morphisms $\beta : \bbM Y \to \frakA \in \calA$) forms a cone from $\widehat\bbM X$ to~$D$.
As the cone $(\val({-};\beta))_\beta$ is limiting, there exists a unique function
$f' : \widehat\bbM X \to \widehat\bbM Y$ such that
\begin{align*}
  \val({-};\beta \circ \bbM f) = \val({-};\beta) \circ f'\,,
  \quad\text{for all } \beta : \bbM Y \to \frakA \in \calA\,.
\end{align*}
We set $\widehat\bbM f := f'$.
\markenddef
\end{defi}
\begin{Rem}
Another, more concise way to define~$\widehat\bbM$ is as the codensity monad
of the forgetful functor $\FAlg(\bbM) \to \Pos^\Xi$ which maps a finitary $\bbM$-algebra
to its universe, see~\cite{ChenAdamekMiliusUrbat16,UrbatChAdMi2017} for details.
\end{Rem}

Let us start by checking that $\widehat\bbM_\calA$~is well-defined and reasonably behaved.
\begin{lem}
The limit $\iota_\calA : \bbM X \to \widehat\bbM_\calA X$ exists.
\end{lem}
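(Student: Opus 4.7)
The plan is to construct the limit via the standard technique for limits in comma categories, combined with the completeness of $\Alg(\bbM)$ and a size reduction showing that the potentially large shape~$\calC_0$ may be replaced by a small one.

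First, I would observe that $\Alg(\bbM)$ is complete: $\Pos^\Xi$ has all small limits (computed sort-wise and componentwise), and the forgetful functor $U : \Alg(\bbM) \to \Pos^\Xi$ creates small limits, as is standard for Eilenberg--Moore categories of monads.

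Next, I would argue that $\calC_0$ is essentially small, which is the subtle point in the many-sorted setting. Every finitary $\bbM$-algebra~$\frakA$ is, by definition, finitely generated, so there is a finite set $C \subseteq A$ and a surjective morphism $\bbM C \twoheadrightarrow \frakA$; such an algebra is determined up to isomorphism by $C$ together with the kernel of this surjection, a subset of $\bbM C \times \bbM C$. Since $C$ ranges over a set of finite ordered $\Xi$-sorted sets (only finitely many sorts can appear in~$C$, so up to isomorphism the choices form a set), the iso-classes of finitary $\bbM$-algebras form a set. For each fixed $\frakA \in \calA$, the hom-set $\Alg(\bbM)(\bbM X,\frakA)$ is in bijection with order-preserving functions $X \to A$ (via precomposition with~$\sing$), hence is a set. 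Consequently, $\calC_0$ admits an essentially small skeleton $\calC_0'$ with the same (cofinal) limits, and we may restrict $D$ to~$\calC_0'$.

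Third, I would compute the limit explicitly. Let $\underline D : \calC_0' \to \Alg(\bbM)$ be the composition of~$D$ with the forgetful functor $\calC \to \Alg(\bbM)$ sending $(\beta : \bbM X \to \frakA)$ to~$\frakA$. By the first step, the limit $\widehat\bbM_\calA X := \lim \underline D$ exists in $\Alg(\bbM)$, equipped with a limiting cone whose legs I take to be the evaluation maps $\val_\calA({-};\beta) : \widehat\bbM_\calA X \to \frakA$. The family $(\beta)_{\beta \in \calC_0'}$ itself constitutes a cone on $\underline D$ with apex $\bbM X$, so it induces a unique morphism $\iota_\calA : \bbM X \to \widehat\bbM_\calA X$ satisfying $\val_\calA({-};\beta) \circ \iota_\calA = \beta$ for every~$\beta$. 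A standard verification shows that this makes $\iota_\calA$ the limit of~$D$ in~$\calC$: any cone over $D$ in $\calC$ consists of morphisms $\gamma : \bbM X \to \frakC$ and $\delta_\beta : \frakC \to \frakA$ with $\delta_\beta \circ \gamma = \beta$; the $\delta_\beta$ form a cone on $\underline D$ with apex~$\frakC$ and hence factor uniquely through $\widehat\bbM_\calA X$, and compatibility with $\gamma$ follows by the universal property of the limit.

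The principal obstacle is the second step: in contrast to the single-sorted or finitely-sorted case, one must explicitly check that finitary $\bbM$-algebras form a set up to isomorphism when $\Xi$ is infinite. This is where the definition of \emph{finitary} (finitely generated, not merely sort-wise finite) is crucial; otherwise the diagram $\calC_0$ would be genuinely large and the limit need not exist.
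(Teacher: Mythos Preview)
Your proposal is correct and follows essentially the same route as the paper: establish completeness of $\Alg(\bbM)$, take the limit of the underlying diagram $\underline D$ in $\Alg(\bbM)$, and use the cone $(\beta)_\beta$ to produce the universal arrow $\iota_\calA$. The only difference is that the paper simply asserts that $\calC_0$ is essentially small, whereas you supply the argument (via finite generating sets and congruences on $\bbM C$); this is a welcome elaboration rather than a genuinely different approach.
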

\begin{proof}
First, note that the category $\Pos^\Xi$ is complete.
By Proposition~4.3.1 of~\cite{Borceux94b}, this implies that so is $\Alg(\bbM)$.
Now, let $D : \calC_0 \to \calC$ be the diagram defining $\iota_\calA : \bbM X \to \widehat\bbM_\calA X$
and let $U : \calC \to \Alg(\bbM)$ be the forgetful functor mapping
$\beta : \bbM X \to \frakA$ to the codomain~$\frakA$.
As $\Alg(\bbM)$~is complete and $\calC_0$~is essentially small, $U \circ D$ has a limit~$\frakT$.
Let $(\lambda_\beta)_\beta$ the corresponding limiting cone.
As $(\beta)_\beta$~is a cone from~$\bbM X$ to $U \circ D$,
we obtain a unique morphism $\varphi : \bbM X \to \frakT$ such that
$\lambda_\beta \circ \varphi = \beta$, for all~$\beta$.
It is now straightforward to check that $\varphi : \bbM X \to \frakT$
is the limit of~$D$ and $(\lambda_\beta)_\beta$~is the corresponding limiting cone.
\end{proof}

We collect a few basic facts about the evaluation morphisms that will be useful
in the proofs below.
\begin{lem}\label{Lem: basic properties of val}
Let $\calA$~be a class of finitary\/ $\bbM$-algebras,\/ $\frakA,\frakB \in \calA$,
$\beta : \bbM X \to \frakA$, $\varphi : \frakA \to \frakB$, and $f : Y \to X$ morphisms,
and $s,t \in \widehat\bbM_\calA X$.
\begin{enuma}
\item $\val_\calA({-};\beta) \circ \iota_\calA = \beta$
\item $\varphi \circ \val_\calA({-};\beta) = \val_\calA({-};\varphi \circ \beta)$
\item $\val_\calA({-};\beta) \circ \widehat\bbM_\calA f = \val_\calA({-};\beta \circ \bbM f)$
\item If $\calA$~is closed under subalgebras then, for every $\hat s \in \widehat\bbM_\calA X$,
  there is some $s \in \bbM X$ with $\val_\calA(\hat s;\beta) = \beta(s)$.
\item $s \leq t \quad\iff\quad
  \val_\calA(s;\alpha) \leq \val_\calA(t;\alpha)\,,
  \quad\text{for all } \alpha : \bbM X \to \frakC \in \calA\,.$
\end{enuma}
\end{lem}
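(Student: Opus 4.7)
The plan is to derive (a)--(c) as formal consequences of the universal property defining $\widehat\bbM_\calA X$ as a limit, to obtain (d) by an image factorisation of $\beta$, and to obtain (e) from the way the order on the limit is computed pointwise.

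For (a), the component $\val_\calA({-};\beta)$ of the limiting cone is by construction a morphism from the limit object $\iota_\calA : \bbM X \to \widehat\bbM_\calA X$ to the object $\beta : \bbM X \to \frakA$ in the comma category $\calC$; the defining condition for such a morphism in $\calC$ is precisely $\val_\calA({-};\beta) \circ \iota_\calA = \beta$. For (b), the morphism $\varphi : \frakA \to \frakB$ lifts to a morphism $\beta \to \varphi \circ \beta$ in $\calC_0$, and the cone condition on $(\val_\calA({-};\gamma))_\gamma$ applied to this morphism gives $\varphi \circ \val_\calA({-};\beta) = \val_\calA({-};\varphi \circ \beta)$. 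For (c), the equation is the defining property of $\widehat\bbM_\calA f$ instantiated at the given $\beta$.

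For (d), I would factor $\beta$ as $\beta = i \circ \bar\beta$, where $\bar\frakA$ is the image subalgebra of $\beta$, the map $\bar\beta : \bbM X \to \bar\frakA$ is the surjective corestriction, and $i : \bar\frakA \to \frakA$ is the inclusion. Closure of $\calA$ under subalgebras places $\bar\frakA$ in $\calA$, and applying (b) yields $\val_\calA(\hat s;\beta) = i(\val_\calA(\hat s;\bar\beta))$. Surjectivity of $\bar\beta$ then lets one pick $s \in \bbM X$ with $\bar\beta(s) = \val_\calA(\hat s;\bar\beta)$, giving $\beta(s) = i(\bar\beta(s)) = \val_\calA(\hat s;\beta)$.

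For (e), the direction $\Rightarrow$ uses that each $\val_\calA({-};\alpha)$ is a morphism in $\Pos^\Xi$, hence order-preserving. For $\Leftarrow$, limits in $\Alg(\bbM)$ are created by the forgetful functor to $\Pos^\Xi$, and limits in $\Pos^\Xi$ are computed as limits of underlying sets equipped with the pointwise order inherited from the product; hence the ordering of $\widehat\bbM_\calA X$ is characterised by $s \leq t$ if and only if $\val_\calA(s;\alpha) \leq \val_\calA(t;\alpha)$ for every $\alpha \in \calC_0$.

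The main obstacle lies in (d): one must verify that the image subalgebra $\bar\frakA$ really belongs to $\calA$, which requires it to be finitary. In the presence of infinitely many sorts this needs a little care, since finite generation of $\frakA$ does not automatically transfer to arbitrary subalgebras; the argument relies on $\bar\frakA$ being the image of the morphism $\beta$ out of $\bbM X$, so that it is generated (as an $\bbM$-algebra) by $\beta(\sing[X])$, together with its sort-wise finiteness as a subset of $\frakA$. The remaining parts are essentially formal unfolding of the cone and functoriality conditions of the limit defining $\widehat\bbM_\calA X$.
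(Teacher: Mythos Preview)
Your proposal is correct and follows essentially the same route as the paper: parts (a)--(c) are unwound from the cone and functoriality conditions, (d) is obtained by factoring $\beta$ through its image subalgebra and invoking closure of~$\calA$ under subalgebras together with~(b), and (e) uses the explicit description of the limit in $\Pos^\Xi$ with its pointwise order. The only difference is that the paper simply asserts $\bar\frakA \in \calA$ from the closure hypothesis without further discussion, whereas you pause to worry about finitariness of the image; the hypothesis ``$\calA$~is closed under subalgebras'' is taken at face value in the paper, so that step carries no additional burden there.
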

\begin{proof}
(a) By the definition of a cone, $\val_\calA({-};\beta)$ is a morphism from
$\iota_\calA : \bbM X \to \widehat\bbM_\calA X$ to $\beta : \bbM X \to \frakA$.
This is equivalent to~(a).

(b) In the comma category, $\varphi : \frakA \to \frakB$ corresponds to a morphism from
$\beta : \bbM X \to \frakA$ to $\varphi \circ \beta : \bbM X \to \frakB$.
Hence, (b)~holds again by definition of a cone.

(c) holds be definition of $\widehat\bbM_\calA f$.

(d) Let $\frakA_0$~be the subalgebra of~$\frakA$ induced by the range of~$\beta$,
let $i : \frakA_0 \to \frakA$ be the inclusion morphism,
and let $\beta_0 : \bbM X \to \frakA_0$ be the morphism such that $\beta = i \circ \beta_0$.
Note that $\frakA_0 \in \calA$ since $\calA$~is closed under subalgebras.
Fix $\hat s \in \widehat\bbM_\calA X$.
By~(a), we have $\rng \val_\calA({-};\beta_0) \supseteq \rng \beta_0$ which,
by surjectivity of~$\beta_0$, implies that the two ranges are in fact equal.
Hence, there is some $s \in \bbM X$ with $\beta_0(s) = \val_\calA(\hat s;\beta_0)$.
By~(b), it follows that
\begin{align*}
  \beta(s)
  = i(\beta_0(s))
  = i(\val_\calA(\hat s;\beta_0))
  = \val_\calA(\hat s;i\circ \beta_0)
  = \val_\calA(\hat s;\beta)\,.
\end{align*}

(e) One explicit way to define the limit $\widehat\bbM_\calA X$ is to take all
sequences $(a_\beta)_\beta$ indexed by morphisms $\beta : \bbM X \to \frakA$
satisfying
\begin{align*}
  a_\gamma = \varphi(a_\beta)\,,
  \quad\text{for all } \varphi : \frakA \to \frakB \text{ with }
  \gamma = \varphi \circ \beta\,.
\end{align*}
Then the function $\val_\calA({-};\beta)$ is simply the projection to component~$a_\beta$.
The ordering of~$\widehat\bbM_\calA X$ is taken to be largest relation such that all
projections $\val_\calA({-};\beta)$ are still monotone. That means that
\begin{align*}
  (a_\beta)_\beta \leq (b_\beta)_\beta
  \quad\iff\quad
  a_\beta \leq b_\beta\,, \quad\text{for all } \beta\,.
\end{align*}
\upqed
\end{proof}

\begin{cor}\label{Lem: val maps jointly monomorphic}
Let $X$~be a set and $f,g : C \to \widehat\bbM X$ functions.
\begin{align*}
  f = g \quad\iff\quad
  \val({-};\beta) \circ f = \val({-};\beta) \circ g\,,
  \quad\text{for all } \beta : \bbM X \to A\,.
\end{align*}
\end{cor}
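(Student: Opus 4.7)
The forward implication is immediate: if $f = g$ then composing with anything on the left gives the same result. So the content of the corollary is the reverse implication, which essentially says that the family of evaluation morphisms $(\val({-};\beta))_\beta$ is jointly monic.

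My plan is to reduce the statement pointwise to elements of $C$. For each $c \in C$, setting $s := f(c)$ and $t := g(c)$, the hypothesis gives $\val(s;\beta) = \val(t;\beta)$ for every $\beta : \bbM X \to \frakA$ with $\frakA$ finitary. I want to conclude $s = t$. This is precisely what Lemma~\ref{Lem: basic properties of val}(e) gives us when applied in both directions: $s \leq t$ iff $\val(s;\alpha) \leq \val(t;\alpha)$ for all $\alpha$, and symmetrically for $t \leq s$, so equality on all evaluations implies $s = t$.

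Thus the proof is really a one-line corollary of part~(e): apply it at each $c \in C$ to the elements $f(c), g(c) \in \widehat\bbM X$, using the hypothesis that $\val(f(c);\beta) = \val(g(c);\beta)$ for every $\beta$. I don't anticipate any obstacle here, since the hard work was already done in establishing the explicit description of $\widehat\bbM X$ as a limit in $\Pos^\Xi$ with componentwise ordering, from which (e) followed. One small point to be mindful of is that the statement quantifies over all $\beta : \bbM X \to A$ without restricting $A$ to finitary algebras; but since the evaluation morphisms $\val({-};\beta)$ in the corollary are only defined for $\beta$ into a finitary algebra, the implicit understanding is that $\beta$ ranges over such morphisms, which matches exactly the indexing of the limiting cone in (e).
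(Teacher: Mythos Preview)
Your proposal is correct and follows essentially the same route as the paper: reduce to a pointwise statement and invoke Lemma~\ref{Lem: basic properties of val}\,(e) in both directions to conclude $f(c) = g(c)$ for each $c \in C$. The paper additionally remarks that the statement is an instance of the general fact that limiting cones are jointly monic, but its concrete argument is the same as yours.
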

\begin{proof}
This statement holds generally for all limits (see, e.g., Proposition~2.6.4 of~\cite{Borceux94a}).
For our special case, we can give a simple proof using Lemma~\ref{Lem: basic properties of val}\,(e).
By this lemma it follows that, for every $c \in C$,
\begin{align*}
  f(c) = g(c)
  \quad\iff\quad
  \val(f(c);\beta) \leq \val(f(c);\beta)\,,
  \quad\text{for all } \beta : \bbM X \to A\,.
\end{align*}
\upqed
\end{proof}

\begin{lem}
$\widehat\bbM_\calA$ is a functor and $\iota_\calA : \bbM \Rightarrow \widehat\bbM_\calA$
a natural transformation.
\end{lem}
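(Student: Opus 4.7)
The plan is to exploit the fact that the functions $\val_\calA({-};\beta)$ jointly form the limiting cone, so that Corollary~\ref{Lem: val maps jointly monomorphic} reduces every equation between functions into $\widehat\bbM_\calA Y$ to a family of equations in $\Pos^\Xi$ indexed by the morphisms $\beta : \bbM Y \to \frakA \in \calA$. Each such reduced equation can then be verified by expanding according to the defining property
\begin{align*}
  \val_\calA({-};\beta) \circ \widehat\bbM_\calA f = \val_\calA({-};\beta \circ \bbM f)
\end{align*}
together with items~(a) and~(c) of Lemma~\ref{Lem: basic properties of val}.

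Concretely, to see that $\widehat\bbM_\calA$ preserves identities I~would check that $\id : \widehat\bbM_\calA X \to \widehat\bbM_\calA X$ satisfies the universal property that defines $\widehat\bbM_\calA \id$. Since $\bbM \id = \id$, the cone $(\val_\calA({-};\beta \circ \bbM \id))_\beta$ equals $(\val_\calA({-};\beta))_\beta$, and the identity is the unique mediating map, so $\widehat\bbM_\calA \id = \id$. For composition, given $f : X \to Y$ and $g : Y \to Z$ and any $\beta : \bbM Z \to \frakA \in \calA$, I~would compute
\begin{align*}
  \val_\calA({-};\beta) \circ \widehat\bbM_\calA g \circ \widehat\bbM_\calA f
  &= \val_\calA({-};\beta \circ \bbM g) \circ \widehat\bbM_\calA f \\
  &= \val_\calA({-};\beta \circ \bbM g \circ \bbM f) \\
  &= \val_\calA({-};\beta \circ \bbM(g \circ f)) \\
  &= \val_\calA({-};\beta) \circ \widehat\bbM_\calA(g \circ f)\,,
\end{align*}
using functoriality of~$\bbM$ in the third step. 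Corollary~\ref{Lem: val maps jointly monomorphic} then yields $\widehat\bbM_\calA g \circ \widehat\bbM_\calA f = \widehat\bbM_\calA(g \circ f)$.

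For naturality of~$\iota_\calA$, given $f : X \to Y$ and $\beta : \bbM Y \to \frakA \in \calA$, I~would use parts~(a) and~(c) of Lemma~\ref{Lem: basic properties of val} to compute
\begin{align*}
  \val_\calA({-};\beta) \circ \widehat\bbM_\calA f \circ \iota_\calA
  &= \val_\calA({-};\beta \circ \bbM f) \circ \iota_\calA
   = \beta \circ \bbM f \\
  &= \val_\calA({-};\beta) \circ \iota_\calA \circ \bbM f\,,
\end{align*}
and again invoke Corollary~\ref{Lem: val maps jointly monomorphic} to conclude $\widehat\bbM_\calA f \circ \iota_\calA = \iota_\calA \circ \bbM f$. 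There is no real obstacle here; everything is forced by the universal property of the limit, and the only mildly non-trivial ingredient is remembering to use functoriality of~$\bbM$ in the composition step. The proof is therefore a short bookkeeping exercise built on top of the jointly-monic property already established in Corollary~\ref{Lem: val maps jointly monomorphic}.
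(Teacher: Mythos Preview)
Your proposal is correct and follows essentially the same approach as the paper: both arguments reduce the required equations to the jointly-monic property of the maps $\val_\calA({-};\beta)$ (Corollary~\ref{Lem: val maps jointly monomorphic}) and then verify them using parts~(a) and~(c) of Lemma~\ref{Lem: basic properties of val}. The paper is simply terser about functoriality, writing only that uniqueness of~$f'$ forces $\widehat\bbM_\calA(f \circ g) = \widehat\bbM_\calA f \circ \widehat\bbM_\calA g$, whereas you spell out both the identity and composition cases explicitly.
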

\begin{proof}
To see that $\widehat\bbM_\calA$~is a functor, note that
the uniqueness of the function~$f'$ in the definition of~$\widehat\bbM_\calA f$
implies that $\widehat\bbM_\calA(f \circ g) = \widehat\bbM_\calA f \circ \widehat\bbM_\calA g$.

For the second claim, fix a function $f : X \to Y$.
For every $\beta : \bbM Y \to \frakA \in \calA$,
Lemma~\ref{Lem: basic properties of val}\,(c) implies that
\begin{align*}
  \val({-};\beta) \circ \widehat\bbM f \circ \iota
   = \val({-};\beta \circ \bbM f) \circ \iota
   = \beta \circ \bbM f
   = \val({-};\beta) \circ \iota \circ \bbM f\,.
\end{align*}
Consequently, it follows by Corollary~\ref{Lem: val maps jointly monomorphic} that
$\widehat\bbM f \circ \iota = \iota \circ \bbM f$
\end{proof}

\begin{lem}
$\widehat\bbM_\calA$~forms a monad where the unit map is $\varepsilon := \iota_\calA \circ \sing$
and the multiplication $\mu : \widehat\bbM_\calA \circ \widehat\bbM_\calA \Rightarrow \widehat\bbM_\calA$
is uniquely determined by the equations
\begin{align*}
  \val({-};\beta) \circ \mu = \val\bigl({-}; \pi \circ \bbM\val({-};\beta)\bigr)\,,
  \quad\text{for all } \beta\,.
\end{align*}
\end{lem}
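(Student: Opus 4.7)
The plan is to define $\mu_X$ at each set~$X$ by invoking the universal property of the limit $\widehat\bbM_\calA X$, and then to derive naturality of~$\mu$ together with the three monad laws by reducing each to an identity of evaluation maps via Corollary~\ref{Lem: val maps jointly monomorphic}. Uniqueness of~$\mu$ from the displayed equations is immediate from that same corollary, since the family $(\val({-};\beta))_\beta$ is jointly monomorphic.

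For the construction, fix a set~$X$. For every $\beta : \bbM X \to \frakA$ with $\frakA \in \calA$, the composite $\hat\beta := \pi \circ \bbM\val({-};\beta) : \bbM\widehat\bbM_\calA X \to \frakA$ is a morphism of $\bbM$-algebras whose codomain lies in~$\calA$, so $\val({-};\hat\beta) : \widehat\bbM_\calA\widehat\bbM_\calA X \to A$ is defined. I claim that $(\val({-};\hat\beta))_\beta$ is a cone from $\widehat\bbM_\calA\widehat\bbM_\calA X$ to the diagram defining $\widehat\bbM_\calA X$. For each arrow $\varphi : \frakA \to \frakB$ in~$\calA$ with $\gamma = \varphi \circ \beta$, Lemma~\ref{Lem: basic properties of val}\,(b) and the fact that $\varphi$ is a morphism of $\bbM$-algebras (so $\varphi \circ \pi = \pi \circ \bbM\varphi$) give
\begin{align*}
  \varphi \circ \val({-};\hat\beta)
    = \val({-}; \varphi \circ \hat\beta)
    = \val({-}; \pi \circ \bbM(\varphi \circ \val({-};\beta)))
    = \val({-};\hat\gamma)\,.
\end{align*}
The universal property of $\widehat\bbM_\calA X$ as a limit in $\Alg(\bbM)$ (as in the existence proof earlier in this section) then yields a unique morphism $\mu_X : \widehat\bbM_\calA\widehat\bbM_\calA X \to \widehat\bbM_\calA X$ of $\bbM$-algebras with $\val({-};\beta) \circ \mu_X = \val({-};\hat\beta)$ for all~$\beta$, which is precisely the displayed equation.

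Naturality of~$\mu$ and the three monad laws are then each verified by post-composing with an arbitrary $\val({-};\beta)$ and computing using parts (a)--(c) of Lemma~\ref{Lem: basic properties of val}, the defining equation of~$\mu$, and the monad axioms for~$\bbM$. For instance, the unit law $\mu \circ \widehat\bbM_\calA\varepsilon = \id$ unfolds to
\begin{align*}
  \val({-};\beta) \circ \mu \circ \widehat\bbM_\calA\varepsilon
  &= \val({-}; \pi \circ \bbM(\val({-};\beta) \circ \iota_\calA \circ \sing)) \\
  &= \val({-}; \pi \circ \bbM\beta \circ \bbM\sing)
   = \val({-};\beta)\,,
\end{align*}
using part~(c), then part~(a), and finally $\pi \circ \bbM\beta = \beta \circ \Flat$ together with $\Flat \circ \bbM\sing = \id$. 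The other unit law $\mu \circ \varepsilon_{\widehat\bbM_\calA X} = \id$ reduces similarly to $\pi \circ \sing \circ \val({-};\beta) = \val({-};\beta)$ using naturality of $\sing$ and $\pi \circ \sing = \id$, while associativity $\mu \circ \mu_{\widehat\bbM_\calA X} = \mu \circ \widehat\bbM_\calA\mu_X$ reduces both sides to the manifestly symmetric expression $\val({-}; \pi \circ \bbM\val({-}; \pi \circ \bbM\val({-};\beta)))$ after applying the defining equation of~$\mu$ twice (on the left, directly; on the right, together with part~(c) and part~(b)).

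The only mildly substantial step is the cone verification for the construction of~$\mu_X$, which depends critically on every arrow of~$\calA$ being a morphism of $\bbM$-algebras so that it intertwines with the algebra products; all other parts of the proof are routine reductions using the basic properties of~$\val$.
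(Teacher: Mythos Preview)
Your proposal is correct and follows essentially the same approach as the paper: construct~$\mu_X$ via the universal property of the limit by exhibiting the family $\bigl(\val({-};\pi\circ\bbM\val({-};\beta))\bigr)_\beta$ as a cone, then verify naturality and the monad axioms by post-composing with $\val({-};\beta)$ and invoking joint monomorphicity. The only cosmetic difference is that the paper phrases the construction in the comma category $(\bbM X \downarrow \Alg(\bbM))$---and therefore first checks the compatibility $\beta = \val\bigl({-};\pi\circ\bbM\val({-};\beta)\bigr)\circ\iota\circ\bbM\iota\circ\sing$---whereas you invoke the limit directly in $\Alg(\bbM)$, which is slightly more streamlined; also, your citation of part~(b) in the associativity sketch is not actually needed (parts~(c) and the defining equation suffice there), but this is harmless.
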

\begin{proof}
To simplify notation, let us drop the subscript~$\calA$.
We define the multiplication $\mu : \widehat\bbM \circ \widehat\bbM \Rightarrow \widehat\bbM$ as follows.
For every morphism $\beta : \bbM X \to \frakA$ with $\frakA \in \calA$, we have
\begin{align*}
  \beta
  &= \beta \circ \pi \circ \sing \\
  &= \pi \circ \bbM\beta \circ \sing \\
  &= \pi \circ \bbM\val({-};\beta) \circ \bbM\iota \circ \sing \\
  &= \val\bigl({-};\pi \circ \bbM\val({-};\beta)\bigr) \circ \iota \circ \bbM\iota \circ \sing\,.
\end{align*}
Furthermore, for two such morphisms $\alpha : \bbM X \to \frakA$ and $\beta : \bbM X \to \frakB$
and a morphism $\varphi : \frakA \to \frakB$ with $\beta = \varphi \circ \alpha$, we have
\begin{align*}
  \varphi \circ \val\bigl({-};\pi \circ \bbM\val({-};\alpha)\bigr)
  &= \val\bigl({-};\varphi \circ \pi \circ \bbM\val({-};\alpha)\bigr) \\
  &= \val\bigl({-};\pi \circ \bbM\varphi \circ \bbM\val({-};\alpha)\bigr) \\
  &= \val\bigl({-};\pi \circ \bbM\val({-};\varphi \circ \alpha)\bigr) \\
  &= \val\bigl({-};\pi \circ \bbM\val({-};\beta)\bigr)\,.
\end{align*}
Consequently, the morphisms $\bigl(\val\bigl({-};\pi \circ \bbM\val({-};\beta)\bigr)\bigr)_\beta$
form a cone from
\begin{align*}
  \iota \circ \bbM\iota \circ \sing : \bbM X \to \widehat\bbM\widehat\bbM X
\end{align*}
to the diagram $(\bbM X \downarrow \calA)$.
As $\iota : \bbM X \to \widehat\bbM X$ is the limit of this cone, there exists a unique map
$\mu : \widehat\bbM\widehat\bbM X \to \widehat\bbM X$ such that
\begin{align*}
  &\mu \circ \iota \circ \bbM\iota \circ \sing = \iota \\
\prefixtext{and}
  &\val({-};\beta) \circ \mu = \val\bigl({-}; \pi \circ \bbM\val({-};\beta)\bigr)\,,
  \quad\text{for all } \beta\,.
\end{align*}
Note that the first of these equations follows from the second one since, for every~$\beta$,
\begin{align*}
  \val({-};\beta) \circ \mu \circ \iota \circ \bbM\iota \circ \sing
  &= \val\bigl({-}; \pi \circ \bbM\val({-};\beta)\bigr) \circ \iota \circ \bbM\iota \circ \sing \\
  &= \pi \circ \bbM\val({-};\beta) \circ \bbM\iota \circ \sing \\
  &= \pi \circ \bbM\beta \circ \sing \\
  &= \beta \circ \pi \circ \sing \\
  &= \beta \\
  &= \val({-};\beta) \circ \iota\,,
\end{align*}
which, by Corollary~\ref{Lem: val maps jointly monomorphic}, implies that
$\mu \circ \iota \circ \bbM\iota \circ \sing = \iota$.

Let us start by showing that these morphisms~$\mu$ form a natural transformation.
Hence, fix a function $f : X \to Y$. For every $\beta : \bbM Y \to \frakA$, we have
\begin{align*}
  \val({-};\beta) \circ \mu \circ \widehat\bbM\widehat\bbM f
  &= \val\bigl({-}; \pi \circ \bbM\val({-};\beta)\bigr) \circ \widehat\bbM\widehat\bbM f \\
  &= \val\bigl({-}; \pi \circ \bbM\val({-};\beta) \circ \bbM\widehat\bbM f\bigr) \\
  &= \val\bigl({-}; \pi \circ \bbM\val({-};\beta \circ \bbM f)\bigr) \\
  &= \val({-};\beta \circ \bbM f) \circ \mu \\
  &= \val({-};\beta) \circ \widehat\bbM f \circ \mu\,.
\end{align*}
By Corollary~\ref{Lem: val maps jointly monomorphic}, this implies that
$\mu \circ \widehat\bbM\widehat\bbM f = \widehat\bbM f \circ \mu$.

The fact that $\varepsilon := \iota \circ \sing$ is a natural transformation follows
immediately from the facts that $\iota$~and~$\sing$ are natural transformations.
It therefore remains to check the three axioms of a monad.
For every $\beta : \bbM X \to \frakA$, we have
\begin{align*}
  \val({-};\beta) \circ \mu \circ \varepsilon
  &= \val\bigl({-};\pi \circ \bbM\val({-};\beta)\bigr) \circ \iota \circ \sing \\
  &= \pi \circ \bbM\val({-};\beta) \circ \sing \\
  &= \val({-};\beta) \circ \pi \circ \sing \\
  &= \val({-};\beta)\,, \displaybreak[0]\\
  \val({-};\beta) \circ \mu \circ \widehat\bbM\varepsilon
  &= \val\bigl({-};\pi \circ \bbM\val({-};\beta)\bigr) \circ \widehat\bbM\varepsilon \\
  &= \val\bigl({-};\pi \circ \bbM\val({-};\beta) \circ \bbM\varepsilon\bigr) \\
  &= \val\bigl({-};\pi \circ \bbM\bigl(\val({-};\beta) \circ \iota \circ \sing\bigr)\bigr) \\
  &= \val\bigl({-};\pi \circ \bbM\bigl(\beta \circ \sing\bigr)\bigr) \\
  &= \val\bigl({-};\beta \circ \pi \circ \bbM\sing\bigr) \\
  &= \val({-};\beta)\,, \displaybreak[0]\\
\prefixtext{and}
  \val({-};\beta) \circ \mu \circ \widehat\bbM \mu
  &= \val\bigl({-};\pi \circ \bbM\val({-};\beta)\bigr) \circ \widehat\bbM \mu \\
  &= \val\bigl({-};\pi \circ \bbM\val({-};\beta) \circ \bbM\mu\bigr) \\
  &= \val\bigl({-};\pi \circ \bbM\bigl(\val({-};\beta) \circ \mu\bigr)\bigr) \\
  &= \val\bigl({-};\pi \circ \bbM\val\bigl({-};\pi \circ \bbM\val({-};\beta)\bigr)\bigr) \\
  &= \val\bigl({-};\pi \circ \bbM\val({-};\beta)\bigr) \circ \mu \\
  &= \val({-};\beta) \circ \mu \circ \mu\,.
\end{align*}
By Corollary~\ref{Lem: val maps jointly monomorphic}, this implies that
\begin{align*}
  \mu \circ \varepsilon = \id\,,\quad
  \mu \circ \widehat\bbM\varepsilon = \id\,,
  \qtextq{and}
  \mu \circ \widehat\bbM \mu = \mu \circ \mu\,.
\end{align*}
\upqed
\end{proof}

The next lemma states that, without loss of generality, we may assume that the morphisms
$\beta : \bbM X \to \frakA$ are all surjective. This will be convenient in some situations.
\begin{lem}\label{Lem: diagram cofiltered and surjective}
Let $X$~be a finite set and $\calA$~a pseudo-variety.
\begin{enuma}
\item $\calC_0 = (\bbM X \downarrow \calA)$ is cofiltered.
\item In the definition of\/~$\widehat\bbM_\calA X$, we can restrict the category~$\calC_0$
  to the surjective morphisms without changing the result.
\end{enuma}
\end{lem}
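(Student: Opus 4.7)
For part (a), I would verify directly the three conditions defining a cofiltered category. Non-emptiness follows because the terminal algebra~$\one$ is the empty product and so lies in the pseudo-variety~$\calA$, giving the unique morphism $\bbM X \to \one$ as an object of~$\calC_0$. For the existence of common predecessors of two objects $\beta : \bbM X \to \frakA$ and $\gamma : \bbM X \to \frakB$, I would take the subalgebra $\frakC \subseteq \frakA \times \frakB$ induced by the range of~$\langle\beta,\gamma\rangle$. Since $X$~is finite and $\frakA \times \frakB$ is sort-wise finite, $\frakC$~is finitely generated by $\langle\beta,\gamma\rangle(\sing[X])$ and hence finitary, so it lies in~$\calA$ by closure under finitary subalgebras of finite products. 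The two projections to~$\frakA$ and~$\frakB$ then give morphisms $\langle\beta,\gamma\rangle \to \beta$ and $\langle\beta,\gamma\rangle \to \gamma$ in~$\calC_0$. For equalising parallel morphisms $\varphi_1,\varphi_2 : \beta \to \gamma$, the identity $\varphi_1 \circ \beta = \gamma = \varphi_2 \circ \beta$ shows that $\varphi_1$ and~$\varphi_2$ agree on~$\rng\beta$. Taking $\frakC := \rng\beta$, which is again finitary (generated by~$\beta(\sing[X])$) and hence in~$\calA$, and factoring $\beta = i \circ \beta^*$ with $\beta^* : \bbM X \to \frakC$ surjective and $i : \frakC \hookrightarrow \frakA$ the inclusion, the morphism $i : \beta^* \to \beta$ satisfies $\varphi_1 \circ i = \varphi_2 \circ i$ by the preceding observation.

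For part~(b), the plan is to show that the full subcategory $\calC_0^{\mathrm{surj}} \subseteq \calC_0$ spanned by the surjective objects is an initial subcategory, whence the limit over~$\calC_0^{\mathrm{surj}}$ coincides with the limit over~$\calC_0$. This amounts to checking that, for every $\beta \in \calC_0$, the comma category $(\calC_0^{\mathrm{surj}} \downarrow \beta)$ is non-empty and connected. Non-emptiness is immediate from the factorisation $\beta = i \circ \beta^*$ produced in part~(a), which exhibits $(\beta^*, i)$ as such an object. For connectedness I would argue that $(\beta^*, i)$ is in fact terminal: given any $(\beta', f)$ with $\beta'$ surjective and $f \circ \beta' = \beta$, surjectivity of~$\beta'$ forces $\rng f = \rng \beta = \rng i$, so $f$ factors through~$i$ as $f = i \circ h$ for a unique $h : \frakA' \to \frakC$. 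Injectivity of~$i$ together with $i \circ h \circ \beta' = f \circ \beta' = \beta = i \circ \beta^*$ then forces $h \circ \beta' = \beta^*$, so $h$ is a morphism $(\beta', f) \to (\beta^*, i)$ in the comma category, and it is the only such.

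The real work is the bookkeeping in part~(a): one must keep track at each step that the constructed algebras remain finitary and hence in~$\calA$, relying on finiteness of~$X$ and on closure of pseudo-varieties under finitary subalgebras of finite products. Once this is in place, everything else is formal: part~(a) is a routine diagram chase, and part~(b) reduces to the standard fact that initial functors preserve limits.
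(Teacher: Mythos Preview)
Your argument is correct. Part~(a) matches the paper's proof almost verbatim: the paper uses the same image-of-the-diagonal construction for common predecessors and the same image factorisation for equalising parallel arrows (it omits the explicit non-emptiness check you give, but that is harmless). For part~(b) you take a slightly different route. The paper invokes Lemma~2.11.2 of~\cite{Borceux94a} and verifies two conditions: every $\beta \in \calC_0$ factors through a surjective one, and any cospan $\alpha \to \beta \leftarrow \alpha'$ with $\alpha,\alpha'$ surjective can be completed to a commuting square by a surjective object (built again as the image of a pairing map). You instead show directly that each comma category $(\calC_0^{\mathrm{surj}} \downarrow \beta)$ has a terminal object, namely the image factorisation~$(\beta^*,i)$, and conclude that the inclusion is initial. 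Both arguments establish the same initiality of the surjective subcategory and hence the same limit; your terminal-object argument is a little more self-contained since it avoids the external citation, while the paper's cospan check is closer in spirit to the cofilteredness verification in~(a).
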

\begin{proof}
(a) There are two axioms to check.
First, let $\alpha : \bbM X \to \frakA$ and $\beta : \bbM X \to \frakB$ be two objects of~$\calC_0$.
We have to find some $\gamma : \bbM X \to \frakC$ and morphisms $\varphi : \gamma \to \alpha$
and $\psi : \gamma \to \beta$.
Set $\gamma := \langle\alpha,\beta\rangle : \bbM X \to \frakA \times \frakB$,
let $\frakC \subseteq \frakA \times \frakB$ be the subalgebra induced by $\rng \gamma$,
and let $\gamma_0 : \bbM X \to \frakC$ be the corestriction of~$\gamma$.
Note that $\frakC$~is finitely generated (by the image of~$X$).
Furthermore, for each sort $\xi \in \Xi$, the set $C_\xi \subseteq A_\xi \times B_\xi$ is finite.
Hence, $\frakC \in \calA$, $\gamma_0 \in \calC_0$, and we have morphisms
$p : \gamma_0 \to \alpha$ and $q : \gamma_0 \to \beta$, where $p : C \to A$ and $q : C \to B$
are the two projections.

For the second axiom, consider two morphisms $\varphi,\psi : \alpha \to \beta$ with
$\alpha : \bbM X \to \frakA$ and $\beta : \bbM X \to \frakB$ in~$\calC_0$.
The set
\begin{align*}
  C := \set{ a \in A }{ \varphi(a) = \psi(a) }
\end{align*}
induces a subalgebra of~$\frakA$ since, for $s \in \bbM C$, we have
\begin{align*}
  \varphi(\pi(s)) = \pi(\bbM\varphi(s)) = \pi(\bbM\psi(s)) = \psi(\pi(s))\,.
\end{align*}
For $x \in X$, we have
\begin{align*}
  \varphi(\alpha(x)) = \beta(x) = \psi(\alpha(x))\,,
\end{align*}
which implies that $\alpha[X] \subseteq C$. Hence, $\rng \alpha \subseteq C$.
Let $\frakD \subseteq \frakC$ be the subalgebra induced by $\rng \alpha$,
let $\alpha_0 : \bbM X \to \frakD$ be the corresponding corestriction of~$\alpha$,
and let $i : \frakC \to \frakA$ be the inclusion morphism.
Since $\frakD$~is finitely generated (by $\alpha_0[X]$), we have $\frakD \in \calA$.
Furthermore, $i : \alpha_0 \to \alpha$ satisfies $\varphi \circ i = \psi \circ i$.

(b)
Let $\calC_{00}$~be the full subcategory of $\calC_0 = (\bbM X \downarrow \calA)$
consisting of all morphisms $\beta : \bbM X \to \frakA$ that are surjective.
By Lemma~2.11.2 of~\cite{Borceux94a}, it is sufficient to prove the following two properties.
\begin{enumr}
\item Every $\beta \in \calC_0$ factorises through some $\beta_0 \in \calC_{00}$.
\item For all $\alpha,\alpha' \in \calC_{00}$, $\beta \in \calC_0$, and all morphisms
  $\varphi : \alpha \to \beta$ and $\varphi' : \alpha' \to \beta$, there is some
  $\gamma \in \calC_{00}$ with morphisms $\psi : \gamma \to \alpha$ and
  $\psi' : \gamma \to \alpha'$ such that $\varphi \circ \psi = \varphi' \circ \psi'$.
\end{enumr}

(i)
Given $\beta : \bbM X \to \frakA$, let $\frakA_0$~be the subalgebra of~$\frakA$
induced by $\rng \beta$, let $i : \frakA_0 \to \frakA$ be the inclusion function,
and $\beta_0 : \bbM X \to \frakA_0$ be the corestriction of~$\beta$.
Then $\beta = i \circ \beta_0$.
Since $\calA$~is closed under finitary subalgebras, we have $\frakA_0 \in \calA$ and
$\beta_0 \in \calC_0$.

(ii) Consider
$\alpha : \bbM X \to \frakA$, $\alpha' : \bbM X \to \frakA'$ in~$\calC_{00}$,
$\beta : \bbM X \to \frakB$ in~$\calC_0$, and
$\varphi : \alpha \to \beta$ and $\varphi' : \alpha' \to \beta$.
Let $\frakC$~be the subalgebra of $\frakA \times \frakA'$ induced by the range of
$\gamma := \langle\alpha,\alpha'\rangle : \bbM X \to \frakA \times \frakA'$.
Then $\frakC \in \calA$ and $\gamma \in \calC_{00}$.
The two projections $p : \frakC \to \frakA$ and $p' : \frakC \to \frakA'$
are morphisms of~$\calC_{00}$ satisfying $\varphi \circ p = \varphi' \circ p'$.
\end{proof}

To continue our investigation of the monad~$\widehat\bbM_\calA$,
we require some tools from topology.
We start with a variant of Stone duality for ordered topological spaces,
see, e.g., Chapter~11 of~\cite{DaveyPriestley02}.
\begin{defi}
(a) A \emph{Priestley space} consists of an ordered set~$A$ (one-sorted)
equipped with a topology that is compact and has the following separation property\?:
for every pair of elements $a,b \in A$ with $a \nleq b$, there exists a clopen set $C \subseteq A$
which is upwards-closed and contains~$a$, but not~$b$.
A \emph{morphism of Priestley spaces} is a function $f : A \to B$ that is monotone and continuous.
We denote the category of all Priestley spaces and their morphisms by $\PSp$.

(b) We denote by $\Dist$ the category of all distributive lattices (with top and bottom elements)
and all lattice homomorphisms (preserving top and bottom).
\markenddef
\end{defi}
\begin{Rem}
Every Priestley space is a Stone space, i.e., compact, Hausdorff, and totally disconnected.
\end{Rem}
\begin{thm}[Priestley]
The category $\PSp$ is equivalent to $\Dist^\op$.
\end{thm}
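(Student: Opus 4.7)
The plan is to exhibit an explicit contravariant adjoint equivalence by constructing functors in both directions and verifying that the natural units and counits are isomorphisms.

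First I would define a functor $F : \PSp \to \Dist^\op$ which sends a Priestley space $X$ to the lattice $F(X)$ of clopen up-sets of~$X$, with meets and joins given by intersection and union (the top and bottom being $X$ and $\emptyset$). Distributivity is inherited from the lattice of subsets. On morphisms, a monotone continuous map $f : X \to Y$ is sent to the lattice homomorphism $U \mapsto f^{-1}[U]$, which is well defined since preimages of clopen up-sets under monotone continuous maps are clopen up-sets. In the other direction I would define $G : \Dist^\op \to \PSp$ by sending a distributive lattice~$L$ to the set $G(L)$ of prime filters of~$L$, ordered by inclusion, and equipped with the topology generated by the subbasic open sets $U_a := \{\,F \in G(L) : a \in F\,\}$ together with their complements $G(L) \setminus U_a$, for $a \in L$. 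A lattice homomorphism $h : L \to M$ is mapped to the function $G(h) : G(M) \to G(L)$ defined by $F \mapsto h^{-1}[F]$.

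Next I would verify that $G(L)$ is actually a Priestley space. Hausdorffness and the Priestley separation property are immediate since the subbasic sets $U_a$ are clopen up-sets and they, together with their complements, separate any two incomparable prime filters. The harder piece is compactness, which I would obtain from the prime filter theorem (an instance of the Boolean prime ideal theorem): every proper filter of a distributive lattice extends to a prime filter, which shows that the natural embedding of $G(L)$ into $\{0,1\}^L$ is closed, hence compact. I would then check functoriality of $G$, using that preimages of prime filters under lattice homomorphisms are prime filters and that the map $G(h)$ is monotone and continuous (continuity reduces to $G(h)^{-1}[U_a] = U_{h(a)}$).

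The remaining step is to exhibit natural isomorphisms $\eta_X : X \to G(F(X))$, $x \mapsto \{\,U \in F(X) : x \in U\,\}$, and $\varepsilon_L : L \to F(G(L))$, $a \mapsto U_a$. Injectivity and monotonicity of $\eta_X$ are immediate from the Priestley separation property, continuity follows from $\eta_X^{-1}[U_U] = U$, and surjectivity requires showing that every prime filter of $F(X)$ is of the form $\eta_X(x)$ for some~$x$; this uses compactness of~$X$ together with the separation property to produce a common point in the intersection of a prime filter. For $\varepsilon_L$, the map is clearly a lattice homomorphism, injectivity follows from the prime filter theorem (distinct elements are separated by some prime filter), and surjectivity reduces to showing that every clopen up-set of $G(L)$ has the form~$U_a$, which again uses compactness to write a clopen up-set as a finite union of basic clopen up-sets of the form $U_a$. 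The main obstacle is this last step together with the compactness of~$G(L)$, as both rely essentially on the prime filter theorem; once these are in hand, naturality of $\eta$~and~$\varepsilon$ is a routine diagram chase.
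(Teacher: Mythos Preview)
Your proposal is correct and follows exactly the standard approach the paper indicates: the paper does not actually prove this theorem but merely cites it as Priestley's result (referring to Chapter~11 of~\cite{DaveyPriestley02}) and sketches the two functors---clopen up-sets in one direction, prime filters with the patch topology in the other---which is precisely what you spell out. Your more detailed outline of the unit/counit isomorphisms and the role of the prime filter theorem is faithful to the classical argument; nothing further is needed here.
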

\noindent
To translate between these two categories we can map a Priestley space to the
lattices of its upwards-closed clopen subsets, and a distributive lattice
to the set of its prime filters (with a suitable topology).

We start by showing how to compute limits in $\PSp^\Xi$.
\begin{defi}
(a)
Let $(\mu_i)_{i \in I}$ be a cone where $\mu_i : A \to B_i$ and each $B_i$~is a topological space.
The \emph{cone topology} induced by $(\mu_i)_i$ is the topology on~$A$
which has a closed subbasis consisting of all sets of the form $\mu_i^{-1}[K]$
with $i \in I$ and $K \subseteq B_i$ closed.
If~$A$~is the limit of a diagram $D : I \to \Pos^\Xi$ and we do not specify a cone explicitly,
we will always consider the cone topology induced by the corresponding limiting cone.

(b)
For a functor $\bbM : \Pos^\Xi \to \Pos^\Xi$ for which we have defined a lifting to
$\PSp^\Xi \to \PSp^\Xi$,
we write $\PAlg(\bbM)$ for the category of $\bbM$-algebras in~$\PSp^\Xi$.
\markenddef
\end{defi}

\begin{Rem}
Let $X$~be a set and $\calA$~a pseudo-variety.
When we equip each $\frakA \in \calA$ with the discrete topology,
we can turn $\bbM X$ and $\widehat\bbM_\calA X$ into topological spaces where the topology
is induced by the cones $(\beta)_\beta$ and $(\val({-};\beta))_\beta$, respectively.
Then it follows by Lemma~\ref{Lem: basic properties of val}\,(d)
that the embedding $\iota_\calA : \bbM X \to \widehat\bbM_\calA X$
is dense with respect to these topologies.
In fact, the space $\widehat\bbM_\calA X$ can be seen as the topological completion of~$\bbM X$.
In particular, every element of $\widehat\bbM_\calA X$ is the limit of a suitable sequence
in $\bbM X$.
In the semigroup case, for instance, $\widehat\bbM_\calA X$ contains the
\emph{idempotent power}~$x^\pi$ which is the limit of the sequence $(x^{n!})_{n<\omega}$.
\end{Rem}

\begin{lem}\label{Lem: limit is in KHaus}
The forgetful functor\/ $\bbU : \PSp^\Xi \to \Pos^\Xi$ reflects limits.
More precisely, the limit\/ $\lim D$ of a diagram $D : I \to \PSp^\Xi$ is the space
obtained by equipping the set\/ $\lim {(\bbU \circ D)}$ with the cone topology.
\end{lem}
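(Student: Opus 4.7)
The plan is to define the limit explicitly as $L := \lim(\bbU\circ D)$, computed in $\Pos^\Xi$ with limiting cone $(\mu_i : L \to \bbU(D(i)))_{i \in I}$, to equip $L$ with the cone topology induced by $(\mu_i)_i$, and then to verify first that this turns $L$ into a Priestley space and second that it satisfies the universal property of a limit in $\PSp^\Xi$. Since everything in $\PSp^\Xi$ is defined sort-wise, we may argue for each sort $\xi \in \Xi$ separately.

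By construction the cone topology makes each map $\mu_i : L \to \bbU(D(i))$ continuous, since preimages of subbasic closed sets are closed. Combined with monotonicity, which holds because $(\mu_i)_i$ is a cone in $\Pos^\Xi$, this gives that each $\mu_i$ is a morphism of Priestley spaces, so $(\mu_i)_i$ is a cone in $\PSp^\Xi$ --- once we have established that $L$ itself is a Priestley space.

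The main step is therefore to check the Priestley axioms for $L$. For compactness, observe that $L$ embeds into the product $\prod_{i \in I} D(i)$ as the set of tuples $(x_i)_i$ satisfying $D(u)(x_i) = x_j$ for every morphism $u : i \to j$ of~$I$. The product is a Priestley space (products of Priestley spaces, taken sort-wise, are again Priestley and in particular compact by Tychonoff), and the defining equations cut out a closed subspace because each $D(u)$ is continuous and each $D(i)$ is Hausdorff. Hence $L$ is compact. For the Priestley separation axiom, suppose $a \nleq b$ in some sort~$L_\xi$. Since the ordering on a limit in $\Pos^\Xi$ is computed componentwise, there is some index $i$ with $\mu_i(a) \nleq \mu_i(b)$ in $D(i)_\xi$. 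Using that $D(i)_\xi$ is Priestley, choose an upward-closed clopen $K \subseteq D(i)_\xi$ with $\mu_i(a) \in K$ and $\mu_i(b) \notin K$. Then $\mu_i^{-1}[K]$ is upward-closed (since $\mu_i$ is monotone) and clopen (since $\mu_i$ is continuous and $K$ is clopen), and it separates $a$ from $b$.

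Finally, for the universal property, let $(\nu_i : X \to D(i))_i$ be any cone in $\PSp^\Xi$. Applying $\bbU$ yields a cone in $\Pos^\Xi$, so there exists a unique monotone $f : X \to L$ with $\mu_i \circ f = \nu_i$ for all~$i$, and uniqueness is automatic from uniqueness in $\Pos^\Xi$. It remains to verify that $f$ is continuous, and this is immediate: a subbasic closed set of $L$ has the form $\mu_i^{-1}[C]$ with $C \subseteq D(i)$ closed, and
\begin{align*}
  f^{-1}[\mu_i^{-1}[C]] = (\mu_i \circ f)^{-1}[C] = \nu_i^{-1}[C]
\end{align*}
is closed because $\nu_i$ is continuous. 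The only place where some care is needed is the Priestley separation axiom, but since ordering and topology on $L$ are both built componentwise from the $\mu_i$, the argument reduces cleanly to the separation property in the individual factors $D(i)_\xi$.
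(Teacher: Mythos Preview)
Your proof is correct and follows essentially the same approach as the paper: verify that the $\Pos^\Xi$-limit with the cone topology is Priestley (compactness via Tychonoff and closedness in the product, separation by pulling back a clopen up-set along some $\mu_i$), then check continuity of the mediating map on subbasic closed sets. The only organisational difference is that you verify the universal property of $L$ directly, whereas the paper assumes the $\PSp^\Xi$-limit $A$ exists abstractly and then produces mutually inverse comparison maps between $A$ and $L$; your route is slightly cleaner since it does not presuppose completeness of $\PSp^\Xi$.
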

\begin{proof}
Let $A := \lim D$ and $B := \lim {(\bbU \circ D)}$ and let $(\lambda_i)_i$~and~$(\mu_i)_i$
be the corresponding limiting cones.
We start by showing that the cone topology on~$B$ is sort-wise Priestley.
Note that $B_\xi$~is the subset of $\prod_{i \in I} D_\xi(i)$ consisting of all
families $(a_i)_i$ such that $a_l = Df(a_k)$, for all $I$-morphisms $f : k \to l$.
Hence, $B_\xi = \bigcap_f H_f$ where
\begin{align*}
  H_f := \bigset{ \textstyle (a_i)_i \in \prod_i D_\xi(i) }{ Df(a_k) = a_l }\,,
  \quad\text{for } f : k \to l\,.
\end{align*}
Since, for distinct $a,b \in D_\xi(k)$, we can always find a clopen set~$C$ with
$a \in C$ and $b \notin C$, we can express~$H_f$ as the intersection of all sets
of the from
\begin{align*}
  \bigl(\mu_k^{-1}[(Df)^{-1}[C]] \cap \mu_l^{-1}[C]\bigr)
  \cup \bigl(\mu_k^{-1}[(Df)^{-1}[C']]
        \cap \mu_l^{-1}[C']\bigr)\,,
\end{align*}
where $C,C'$~range over all partitions of $D_\xi(k)$ into two clopen classes.
It follows that the sets~$H_f$ are all closed.
By the Theorem of Tychonoff, the product $\prod_i D_\xi(i)$ is compact.
Consequently, $B_\xi = \bigcap_f H_f$ is a closed subset of a compact space and, therefore,
also compact.

To show that the topology is Priestley, consider two distinct elements $a \nleq b$ in~$B$.
By the definition of the ordering of a limit in $\Pos^\Xi$, there exists an index $i \in I$
with $\mu_i(a) \nleq \mu_i(b)$.
Therefore we can find a clopen, upwards-closed set $C \subseteq D(i)$
such that $\mu_i(a) \in C$ and $\mu_i(b) \notin C$.
The preimage $C' := \mu_i^{-1}[C]$ is clopen in~$B$ and satisfies
$a \in C'$ and $b \notin C'$.
Suppose that $C'$~is not upwards-closed.
Then there are elements $c \leq d$ with $c \in C'$ and $d \notin C'$.
Consequently, $\mu_i(c) \leq \mu_i(d)$ and $\mu_i(c) \in C$ and $\mu_i(d) \notin C$.
This contradicts the fact that $C$~is upwards-closed.

We have shown that $B$~with the cone topology belongs to $\PSp^\Xi$.
Since $B$~is the limit in $\Pos^\Xi$, there exists a unique map
$f : A \to B$ (in $\Pos^\Xi$) such that $\lambda_i = \mu_i \circ f$, for all~$i$.
Similarly, there exists a unique morphism $g : B \to A$ of $\PSp^\Xi$
such that $\mu_i = \lambda_i \circ g$.
We can see that the function~$f$ is continuous as follows.
Let $C = \mu_i^{-1}[K]$ for a basic closed set $K \subseteq B$.
Then $f^{-1}[C] = (\mu_i \circ f)^{-1}[K] = (\lambda_i)^{-1}[K]$.
Hence, continuity of~$\lambda_i$ implies that the preimage $f^{-1}[C]$ is closed.

Consequently, we can applying the same universality argument two more times
to obtain $f \circ g = \id$ and $g \circ f = \id$.
Therefore, $B$~and $A$ with the cone topology are isomorphic as topological space.
\end{proof}

The following result is Corollary~1.1.6. of~\cite{RibesZalesskii10}.
It will be our key topology-based argument in the proof of Theorem~\ref{Thm: hat F for subclass}
below.
(In \cite{RibesZalesskii10} the result is proved for Stone spaces.
This is sufficient since Priestley spaces are Stone and, by the previous lemma,
the limits in both categories are the same.)
\begin{lem}\label{Lem: surjective map to cofiltered diagram}
Let $D : I \to \PSp^\Xi$ be a cofiltered diagram and $(\mu_i)_i$ a cone from
$A \in \PSp^\Xi$ to~$D$ where each $\mu_i : A \to D(i)$ is surjective.
The induced morphism $\varphi : A \to \lim D$ is surjective.
\end{lem}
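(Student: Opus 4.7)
The plan is to reduce the statement to the classical compactness argument for cofiltered limits of compact spaces, applied sort-wise. Fix a sort $\xi \in \Xi$ and an element $x \in (\lim D)_\xi$. By Lemma~\ref{Lem: limit is in KHaus}, $(\lim D)_\xi$ is obtained from $\lim_i D(i)_\xi$ via the cone topology, so $x$ corresponds to a coherent family $(x_i)_{i \in I}$ with $x_i \in D(i)_\xi$ and $Df(x_k) = x_l$ for every $f : k \to l$ in $I$. To show that some $a \in A_\xi$ satisfies $\mu_i(a) = x_i$ for all~$i$, I would consider the family of closed sets
\begin{align*}
  F_i := \mu_i^{-1}(x_i) \subseteq A_\xi\,, \quad\text{for } i \in I\,.
\end{align*}
Each $F_i$ is closed by continuity of $\mu_i$, and nonempty by surjectivity of $\mu_i$. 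The goal then is to show $\bigcap_{i \in I} F_i \neq \emptyset$, because any element $a$ in this intersection is a preimage of $x$ under $\varphi$.

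The first main step is to verify the finite intersection property for the family $(F_i)_{i \in I}$. Given finitely many indices $i_1,\dots,i_n \in I$, cofilteredness of~$I$ yields an index $j$ together with morphisms $f_k : j \to i_k$ in~$I$. Since $(x_i)_i$ is a coherent family, we have $D f_k(x_j) = x_{i_k}$, and since $(\mu_i)_i$ is a cone, we have $\mu_{i_k} = D f_k \circ \mu_j$. Thus for any $a \in F_j = \mu_j^{-1}(x_j)$,
\begin{align*}
  \mu_{i_k}(a) = D f_k(\mu_j(a)) = D f_k(x_j) = x_{i_k}\,,
\end{align*}
so $F_j \subseteq \bigcap_{k=1}^{n} F_{i_k}$. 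Since $F_j$ is nonempty, the finite intersection has nonempty members.

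The second main step is to invoke compactness of the Priestley space $A_\xi$: a family of closed subsets of a compact space with the finite intersection property has nonempty total intersection. Hence there exists $a \in \bigcap_{i \in I} F_i$, and this $a$ satisfies $\mu_i(a) = x_i$ for every~$i$, i.e., $\varphi(a) = x$. Repeating this sort-wise yields surjectivity of $\varphi$ on every sort, which by definition means $\varphi$ is surjective in $\PSp^\Xi$.

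The only mild subtlety, and arguably the main obstacle, is handling the sort structure together with the fact that the topology on $\lim D$ is the cone topology rather than an a priori subspace topology; but Lemma~\ref{Lem: limit is in KHaus} already provides the explicit description of $\lim D$ as the subspace of $\prod_i D(i)$ carrying this topology, so one reads off the coherent-family characterisation of elements of $(\lim D)_\xi$ directly and no further work is needed. The rest is just the standard compactness-plus-cofilteredness argument.
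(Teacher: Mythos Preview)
Your argument is correct. The paper does not actually prove this lemma; it simply cites it as Corollary~1.1.6 of Ribes--Zalesskii, noting that Priestley spaces are Stone spaces and that the limits in $\PSp^\Xi$ and $\Pos^\Xi$ coincide by the preceding lemma. What you have written is precisely the standard compactness-plus-cofilteredness proof one finds in that reference (or in any treatment of profinite methods), carried out sort-wise: the fibres $F_i = \mu_i^{-1}(x_i)$ are nonempty closed subsets of the compact Hausdorff space~$A_\xi$, cofilteredness gives the finite intersection property via $F_j \subseteq F_{i_1} \cap \dots \cap F_{i_n}$, and compactness finishes. So there is nothing to compare beyond observing that you have supplied the argument the paper chose to outsource.
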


Our main technical tool in the next section is the following natural transformation
relating the functors $\widehat\bbM_\calA$ and $\widehat\bbM_\calB$, for different classes
$\calA$~and~$\calB$. The important case below will be where $\calA$~is the pseudo-variety
under consideration and $\calB$~the class of all finitary $\bbM$-algebras.
\begin{thm}\label{Thm: hat F for subclass}
Let $\calA \subseteq \calB \subseteq \Alg(\bbM)$.
\begin{enuma}
\item There exists a unique morphism $\varrho : \widehat\bbM_\calB \Rightarrow \widehat\bbM_\calA$
  of monads that makes the following diagram commute, for all morphisms $\beta : \bbM X \to \frakA$
  where $\frakA \in \calA$ and $X$~is an unsorted set.

{\centering
\includegraphics{Abstract-8.mps}
%
%
%
%
\par}

\item If $\calA$~and~$\calB$ are closed under subalgebras and~$X$ is finite,
  then the induced morphism $\varrho_X : \widehat\bbM_\calB X \to \widehat\bbM_\calA X$ is surjective.
\end{enuma}
\end{thm}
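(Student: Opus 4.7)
The plan is to lean on the universal property of the limit $\widehat\bbM_\calA X = \lim D_\calA$ and the joint-monomorphism corollary for its cone throughout. For part~(a), since $\calA\subseteq\calB$, the diagram $D_\calA:(\bbM X\downarrow\calA)\to\Alg(\bbM)$ is a full subdiagram of $D_\calB$, so the limiting cone $\bigl(\val_\calB({-};\beta)\bigr)_{\beta\in(\bbM X\downarrow\calB)}$ for $\widehat\bbM_\calB X$ restricts to a cone from $\widehat\bbM_\calB X$ to $D_\calA$ by retaining only the legs indexed by $\beta\in(\bbM X\downarrow\calA)$. The universal property of $\widehat\bbM_\calA X$ yields a unique component $\varrho_X:\widehat\bbM_\calB X\to\widehat\bbM_\calA X$ with $\val_\calA({-};\beta)\circ\varrho_X=\val_\calB({-};\beta)$ for every such $\beta$, which is exactly the claimed commuting diagram; global uniqueness of any such $\varrho_X$ also follows because the family $\bigl(\val_\calA({-};\beta)\bigr)_\beta$ is jointly monic.

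To upgrade $\varrho$ to a natural transformation and a morphism of monads, I would reduce each required identity to its post-composition with the evaluation maps $\val_\calA({-};\beta)$ and conclude by joint monomorphism. Naturality $\varrho_Y\circ\widehat\bbM_\calB f=\widehat\bbM_\calA f\circ\varrho_X$ becomes, after post-composing with $\val_\calA({-};\beta)$, an immediate consequence of part~(c) of the basic properties of $\val$ applied in both $\calA$ and $\calB$. The unit axiom $\varrho\circ\varepsilon_\calB=\varepsilon_\calA$ reduces to $\varrho_X\circ\iota_\calB=\iota_\calA$, which follows from part~(a) of the same lemma. The multiplication axiom $\varrho\circ\mu_\calB=\mu_\calA\circ(\varrho\circ\widehat\bbM_\calB\varrho)$ is the delicate point: one feeds the defining identity $\val({-};\beta)\circ\mu=\val\bigl({-};\pi\circ\bbM\val({-};\beta)\bigr)$ into both sides and, with naturality of $\varrho$ already in hand, checks that both sides post-compose to the same expression with any $\val_\calA({-};\beta)$.

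For part~(b) my plan is to pass to $\PSp^\Xi$ and invoke Lemma \emph{surjective map to cofiltered diagram}. Equipping each finitary algebra with the discrete topology turns it into a sort-wise finite Priestley space, and by Lemma \emph{limit is in KHaus} both $\widehat\bbM_\calA X$ and $\widehat\bbM_\calB X$ arise as limits in $\PSp^\Xi$ with continuous cones. Using the subalgebra closure of $\calA$ together with part~(b) of Lemma \emph{diagram cofiltered and surjective}, I would replace $(\bbM X\downarrow\calA)$ by its cofiltered subcategory of surjective morphisms without changing the limit. For every such surjective $\beta:\bbM X\to\frakA$, the identity $\val_\calB({-};\beta)\circ\iota_\calB=\beta$ forces $\val_\calB({-};\beta)$ to be surjective. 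Hence $\widehat\bbM_\calB X$ sits at the apex of a cone into a cofiltered diagram in $\PSp^\Xi$ all of whose legs are surjective, and Lemma \emph{surjective map to cofiltered diagram} yields surjectivity of the induced morphism, which is $\varrho_X$. The main obstacle I anticipate is securing the cofilteredness of this surjective subcategory under the weak hypothesis ``closed under subalgebras'' alone, since the textbook cofilteredness argument forms a subalgebra of a binary product inside the class; closing this gap will require carefully using products formed in the ambient category $\Alg(\bbM)$ and extracting a finitely generated, hence finitary, subalgebra that still lies in~$\calA$ through its subalgebra closure.
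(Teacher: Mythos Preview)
Your approach matches the paper's proof essentially step for step: part~(a) is obtained exactly as you describe (universal property of the limit for the components, then naturality and the monad-morphism axioms verified by post-composing with the evaluation maps and invoking joint monicity), and part~(b) proceeds by passing to Priestley spaces, restricting to the cofinal subdiagram of surjective legs, and invoking the surjectivity lemma for cofiltered limits. One small technical move you did not mention is that the paper first restricts to the subset $\Xi_0\subseteq\Xi$ of sorts~$\xi$ with $\bbM_\xi X\neq\emptyset$ before applying the topological lemma; this is harmless bookkeeping.

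Your anticipated obstacle is well spotted and, in fact, the paper does not resolve it either: the cofilteredness lemma it cites (Lemma on the diagram being cofiltered and surjective) is stated and proved under the assumption that~$\calA$ is a \emph{pseudo-variety}, because both the cofilteredness argument and the finality argument form a finitary subalgebra of a binary product. So the hypothesis ``closed under subalgebras'' in the theorem is, strictly speaking, too weak for the proof as written; what is really used is closure under finitary subalgebras of finite products. Your suggested workaround---forming the product in $\Alg(\bbM)$ and then passing to a finitely generated subalgebra---is exactly what that lemma does, but it only lands back in~$\calA$ if~$\calA$ has this stronger closure property. In the paper's applications $\calA$~and~$\calB$ are always pseudo-varieties, so the issue is moot in practice.
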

\begin{proof}
(a) For a given set~$X$,
the family $(\val_\calB({-};\beta))_{\beta \in (\bbM X \downarrow \calA)}$ forms
a cone from $\widehat\bbM_\calB X$ to the diagram defining $\widehat\bbM_\calA X$.
As the cone $(\val_\calA({-};\beta))_{\beta \in (\bbM X \downarrow \calA)}$ is limiting,
there exists a unique map $\varrho_X : \widehat\bbM_\calB X \to \widehat\bbM_\calA X$
such that
\begin{align*}
  \val_\calA({-};\beta) \circ \varrho_X = \val_\calB({-};\beta)\,,
  \quad\text{for all } \beta : \bbM X \to \frakA\,.
\end{align*}

As the equation $\val_\calA({-};\beta) \circ \iota_\calA = \beta$ was already proved in
Lemma~\ref{Lem: basic properties of val}\,(a),
it therefore remains to prove that the family $\varrho := (\varrho_X)_X$ forms a morphism
of monads.
To see that it is a natural transformation, consider a function $f : X \to Y$.
Then
\begin{align*}
  \val_\calB({-};\beta) \circ \widehat\bbM_\calB f \circ \varrho
  &= \val_\calB({-};\beta \circ \bbM f) \circ \varrho \\
  &= \val_\calA({-};\beta \circ \bbM f) \\
  &= \val_\calA({-};\beta) \circ \widehat\bbM_\calA f
   = \val_\calB({-};\beta) \circ \varrho \circ \widehat\bbM_\calA f\,.
\end{align*}
By Corollary~\ref{Lem: val maps jointly monomorphic}, it follows that
$\widehat\bbM_\calB f \circ \varrho = \varrho \circ \widehat\bbM_\calA f$,
as desired.

To check the two axioms of a morphism of monads,
let $\mu_\calA$~and~$\varepsilon_\calA$ be the multiplication and unit map of~$\widehat\bbM_\calA$,
and $\mu_\calB$~and~$\varepsilon_\calB$ those of~$\widehat\bbM_\calB$.
For every $\beta : \bbM X \to \frakA$ with $\frakA \in \calA$, we have
\begin{align*}
  \val_\calA({-};\beta) \circ \varrho \circ \mu_\calB
  &= \val_\calB({-};\beta) \circ \mu_\calB \\
  &= \val_\calB\bigl({-};\pi \circ \bbM\val_\calB({-};\beta)\bigr) \\
  &= \val_\calA\bigl({-};\pi \circ \bbM\val_\calB({-};\beta)\bigr) \circ \varrho \\
  &= \val_\calA\bigl({-};\pi \circ \bbM\val_\calA({-};\beta) \circ \bbM\varrho\bigr) \circ \varrho \\
  &= \val_\calA\bigl({-};\pi \circ \bbM\val_\calA({-};\beta)\bigr) \circ \widehat\bbM\varrho
        \circ \varrho \\
  &= \val_\calA({-}; \beta) \circ \mu_\calA \circ \widehat\bbM\varrho \circ \varrho \\[1ex]
\prefixtext{and}
  \val_\calA({-};\beta) \circ \varrho \circ \varepsilon_\calB
  &= \val_\calA({-};\beta) \circ \varrho \circ \iota_\calB \circ \sing \\
  &= \val_\calB({-};\beta) \circ \iota_\calB \circ \sing \\
  &= \beta \circ \sing \\
  &= \val_\calA({-};\beta) \circ \iota_\calA \circ \sing
   = \val_\calA({-};\beta) \circ \varepsilon_\calA\,.
\end{align*}
By Corollary~\ref{Lem: val maps jointly monomorphic}, it follows that
$\varrho \circ \mu_\calB = \mu_\calA \circ \widehat\bbM\varrho \circ \varrho$
and $\varrho \circ \varepsilon_\calB = \varepsilon_\calA$.

(b)
To apply the topological machinery we have just set up, we translate the problem
into the category of Priestley spaces.
We equip each algebra in~$\calB$ with the discrete topology. As these algebras are
finitary, the resulting topologies are sort-wise Priestley.
According to Lemma~\ref{Lem: diagram cofiltered and surjective}\,(b),
we can define the limits $\widehat\bbM_\calA X$~and~$\widehat\bbM_\calB X$
in terms of only the surjective morphisms $\beta : \bbM X \to \frakA$
with $\frakA$~in $\calA$~or~$\calB$.
Furthermore, it follows by Lemma~\ref{Lem: limit is in KHaus} that
$\widehat\bbM_\calA X$~and~$\widehat\bbM_\calB X$
are also sort-wise Priestley spaces when equipped with the cone topology.
In addition, the limits in the category $\PSp^\Xi$ coincide with
$\widehat\bbM_\calA X$ and $\widehat\bbM_\calB X$.

Let $\Xi_0 \subseteq \Xi$ be the set of all sorts~$\xi$ such that $\bbM_\xi X \neq \emptyset$.
By Lemma~\ref{Lem: basic properties of val}\,(d), it follows that
these are exactly the same sorts~$\xi$ with $\widehat\bbM_{\calA,\xi} X \neq \emptyset$,
$\widehat\bbM_{\calB,\xi} X \neq \emptyset$,
and with $A_\xi \neq \emptyset$, for $\frakA \in \calA$.
Consequently, we can perform the rest of the proof in the category $\Pos^{\Xi_0}$.
By the definition of the cone topology, all the maps
$\val_\calA({-};\beta)$ and $\val_\calB({-};\beta)$ are continuous.
Furthermore, since we restricted the diagram to surjective maps~$\beta$,
$\val_\calB({-};\beta) \circ \iota = \beta$ implies that the value maps
$\val_\calB({-};\beta)$ are also surjective.
By Lemma~\ref{Lem: diagram cofiltered and surjective}\,(a),
$\widehat\bbM_\calA X$~is a cofiltered limit.
Consequently, we can use Lemma~\ref{Lem: surjective map to cofiltered diagram},
to show that $\varrho : \widehat\bbM_\calB X \to \widehat\bbM_\calA X$ is surjective.
\end{proof}

In the remainder of this section we will prove that
algebras of the form $\widehat\bbM_\calA|_\Delta X$ are what is called
\emph{finitely copresentable} (at least if $X$~and~$\Delta$ are finite).
This is another result requiring us to work with Priestly spaces.
Already the next lemma fails in~$\Set$ or~$\Pos$.
Unfortunately, it also does only hold for finitely many sorts.
\begin{defi}
An object~$A$ of a category~$\calC$ is \emph{finitely copresentable} if,
for every cofiltered diagram $D : I \to \calC$ with limit~$C$
and limiting cone $(\lambda_i)_{i \in I}$,
and for every morphism $f : C \to A$, there exists an index $k \in I$ and
an essentially unique morphism $g : D(k) \to A$ such that $f = g \circ \lambda_k$.
Essentially uniqueness here means that, if $g' : D(k) \to A$ is another morphism
with $f = g' \circ \lambda_k$, then there exists an $I$-morphisms $h : l \to k$ with
$g \circ Dh = g' \circ Dh$.
\markenddef
\end{defi}
\begin{lem}\label{Lem: finite sets finite copresentable}
Let $\Xi$~be a finite set of sorts.
Every finite Priestley space is finitely copresentable in $\PSp^\Xi$.
\end{lem}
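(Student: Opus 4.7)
The plan is to deduce the statement from Priestley duality, reducing it to the elementary fact that finite distributive lattices are finitely presentable in $\Dist$. Under the equivalence $\PSp \simeq \Dist^{\op}$, the cofiltered limit $C$ in $\PSp$ corresponds to the filtered colimit of the distributive lattices dual to the $D(i)$, and a morphism $f : C \to A$ into a finite Priestley space corresponds to a homomorphism of distributive lattices from the finite dual~$L_A$ of~$A$ into that colimit. Now $L_A$ is presented by its finitely many elements as generators, subject to the finitely many defining relations $x_a \wedge x_b = x_{a \wedge b}$, $x_a \vee x_b = x_{a \vee b}$, $x_\top = \top$, $x_\bot = \bot$; this presentation shows directly that the functor represented by $L_A$ preserves filtered colimits, i.e., that $L_A$ is finitely presentable. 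Translating back through the duality gives finite copresentability of $A$ in $\PSp$.

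To lift from the single-sorted case to $\PSp^\Xi$ with $\Xi$ finite, one uses that $\PSp^\Xi$ is isomorphic to the product $\prod_{\xi \in \Xi} \PSp$ and that limits in such a product are computed sortwise. Given a cofiltered diagram $D : I \to \PSp^\Xi$ with limit~$C$ and limiting cone~$(\lambda_i)_i$, and a morphism $f : C \to A$ with $A$ finite, the single-sorted case applied to each component $f_\xi : C_\xi \to A_\xi$ provides an index $k_\xi \in I$ together with an essentially unique $g_\xi : D(k_\xi)_\xi \to A_\xi$ satisfying $f_\xi = g_\xi \circ \lambda_{k_\xi,\xi}$. Since $I$ is cofiltered and $\Xi$ is finite, a common index $k$ with morphisms $h_\xi : k \to k_\xi$ exists in~$I$; bundling the maps $g_\xi \circ D(h_\xi)_\xi$ over $\xi \in \Xi$ produces a sort-preserving morphism $g : D(k) \to A$ in $\PSp^\Xi$ with $f = g \circ \lambda_k$, and essential uniqueness transfers by the same combination argument applied to the sortwise uniqueness.

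The main obstacle, and the reason the lemma requires finite~$\Xi$, is precisely this last bundling step: without finiteness one cannot in general take a cofiltered lower bound inside $I$ of a $\Xi$-indexed family of indices. Everything else, including finite presentability of finite distributive lattices, the duality between $\PSp$ and $\Dist^{\op}$, and the sortwise computation of limits in a finite product, is standard. A more hands-on alternative, which avoids invoking Priestley duality, would pull back each clopen upset $f^{-1}[\Aboveseg a]$ to a clopen upset of some $D(k_a)$ (using that clopen upsets in a cofiltered limit of Priestley spaces arise from those of the diagram), merge the finitely many indices $k_a$ by cofilteredness, and then remove residual discrepancies between the pulled-back data and an actual morphism by passing to a further $l \to k$ using the compactness principle that a closed subset of $D(k)$ with empty preimage in~$C$ becomes empty along some~$D(h)$.
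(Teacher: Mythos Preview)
Your proof is correct and follows essentially the same route as the paper: both reduce via Priestley duality to the finite presentability of finite distributive lattices. The only difference is cosmetic---the paper applies the duality $\PSp^\Xi \simeq (\Dist^\Xi)^{\op}$ directly at the $\Xi$-sorted level and leaves finite presentability in $\Dist^\Xi$ as a well-known fact, whereas you work single-sorted first and then bundle across the finitely many sorts using cofilteredness of~$I$; your version has the virtue of making explicit where finiteness of~$\Xi$ is actually used.
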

\begin{proof}
First, note that the duality theorem implies that $\PSp^\Xi$ is equivalent
to $(\Dist^\Xi)^{\mathrm{op}}$.
Furthermore, the corresponding translation maps finite spaces to finite lattices.
Consequently it is sufficient to show that every finite lattice is
finitely presentable in $\Dist^\Xi$.
\end{proof}

It remains to transfer this result from $\PSp^\Xi$ to $\PAlg(\widehat\bbM)$.
We start with a variant of Remark~3.7\,(a) from~\cite{ChenAdamekMiliusUrbat16}.
As the proof is not included in the published version we have reproduced it here.
\begin{prop}\label{Prop: hat F preserves cofiltered limits}
Let $\calA$~be a class of finitary\/ $\bbM$-algebras.
The functor\/~$\widehat\bbM_\calA$ preserves cofiltered limits.
\end{prop}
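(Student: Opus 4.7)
The plan is to show that the canonical comparison morphism
\[
  \Phi : \widehat\bbM_\calA L \longrightarrow \lim_i \widehat\bbM_\calA D(i),
\]
induced by applying $\widehat\bbM_\calA$ to the limiting cone $(\lambda_i : L \to D(i))_i$ of a cofiltered diagram $D : I \to \Pos^\Xi$ with limit $L$, is an isomorphism. I~would do this by constructing an explicit inverse using the defining limit presentation of $\widehat\bbM_\calA$.

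First I~would unfold both sides. The left-hand side $\widehat\bbM_\calA L = \lim_\beta \frakA$ is the limit over morphisms $\beta : \bbM L \to \frakA$ with $\frakA \in \calA$. The right-hand side $\lim_i \widehat\bbM_\calA D(i)$ is a double limit, which by a Fubini-type swap of limits can be rewritten as a single cofiltered limit indexed by triples $(i,\frakA,\gamma)$ with $\gamma : \bbM D(i) \to \frakA$ and $\frakA \in \calA$. The comparison $\Phi$ corresponds to the canonical functor $(i,\frakA,\gamma) \mapsto (\frakA, \gamma \circ \bbM \lambda_i)$ from the right indexing category to the left one. To produce an inverse to $\Phi$ it then suffices to establish the following factorization lemma: every $\beta : \bbM L \to \frakA$ with $\frakA \in \calA$ admits some $i \in I$ and $\gamma : \bbM D(i) \to \frakA$ with $\beta = \gamma \circ \bbM \lambda_i$, essentially uniquely with respect to common refinement in~$I$.

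Granting the factorization, the inverse $\Psi : \lim_i \widehat\bbM_\calA D(i) \to \widehat\bbM_\calA L$ sends a coherent family $(\hat s_i)_i$ to the element of $\widehat\bbM_\calA L$ whose $\beta$-component, for a chosen factorization $\beta = \gamma \circ \bbM \lambda_i$, is $\val_\calA(\hat s_i; \gamma)$. Independence from the choice of factorization, and compatibility across different $\beta$'s, both reduce to refining $i$ further along~$I$ and using the coherence of the family $(\hat s_i)_i$ along the transitions of $D$. That $\Psi$ really inverts $\Phi$ is then verified by composing with each projection $\val_\calA({-};\beta)$ and applying Corollary~\ref{Lem: val maps jointly monomorphic}.

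The main obstacle is the factorization lemma itself, specifically its monotonicity aspect. Since $\frakA$ is finitary, $A$ is sort-wise finite, so via the free adjunction the task reduces to factorizing an order-preserving function $L \to A$ through some $\lambda_i$ by an order-preserving map. A~set-theoretic factorization is immediate from the classical fact that finite sort-wise sets are finitely copresentable with respect to cofiltered limits in $\Set^\Xi$. The delicate step is upgrading to a monotone factorization: any inversion $x \leq y$ in $D(i)$ with a chosen lift $g$ satisfying $g(x) \nleq g(y)$ must vanish after passing to some refinement $j \to i$, which follows from the cofilteredness of~$I$ combined with the absence of such inversions in the limit~$L$. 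One clean route, in the spirit of the preceding lemmas, is to equip all the sets in play with the cone topology so that the situation becomes one of Priestley spaces and reduces to Lemma~\ref{Lem: finite sets finite copresentable}.
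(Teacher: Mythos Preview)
Your overall strategy---reduce the comparison $\Phi$ to an initiality statement about the indexing categories, which in turn is the assertion that every $\beta : \bbM L \to \frakA$ with $\frakA \in \calA$ factors essentially uniquely through some $\bbM\lambda_i$---is sound and is an explicit unwinding of the paper's argument. The paper packages the same content differently: it writes $\widehat\bbM_\calA$ as the right Kan extension $\mathrm{Ran}_I I$, computes it as the end $\int_{\frakA} \PSp^\Xi({-},A) \pitchfork A$, and then commutes the end with the cofiltered limit using that each $\PSp^\Xi({-},A) \pitchfork A$ preserves cofiltered limits. The step ``$\PSp^\Xi({-},A)$ preserves cofiltered limits'' is literally your factorization lemma, so the two routes bottom out in the same place. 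Your version avoids the Kan-extension machinery; the paper's is shorter once that machinery is in hand.

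There is, however, a genuine error in your intermediate reasoning. The claim that ``finite sort-wise sets are finitely copresentable with respect to cofiltered limits in $\Set^\Xi$'' is false already for $\Xi$ a singleton: a two-element set is \emph{not} finitely copresentable in $\Set$ (take the cofiltered diagram $\bbN \leftarrow \bbN \leftarrow \cdots$ with shift maps; the limit is empty, and the essential-uniqueness condition fails). Likewise, your sketch for upgrading to a monotone factorization---fix the finitely many order inversions by refining---does not terminate in $\Pos^\Xi$, since there may be infinitely many inversions. You correctly sense the remedy at the end: one must work in $\PSp^\Xi$ throughout, where compactness supplies the missing finiteness and Lemma~\ref{Lem: finite sets finite copresentable} yields exactly the factorization you need. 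This is not a cosmetic afterthought but the load-bearing step, and the paper's proof is set in $\PSp^\Xi$ for precisely this reason. So rewrite your argument with $D : I \to \PSp^\Xi$ from the outset and invoke Lemma~\ref{Lem: finite sets finite copresentable} directly for the factorization; the rest of your outline then goes through.
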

\begin{proof}
We obtain a very concise proof if we employ a bit of category-theoretical machinery.
For the following material on \emph{right Kan extensions} and \emph{ends} we
refer the reader to Sections X.3~and~X.4 of~\cite{MacLane98}.
A~short introduction can also be found in Sections 1.1~and~1.2 of~\cite{Riehl14}.
We have tried to present the proof in a way that it should be intelligible
without knowledge of the actual definitions of these terms.

As already noted above one can define~$\widehat\bbM_\calA$ as the
\emph{codensity monad} associated with the forgetful functor
$I : \calA \to \PSp^\Xi$ maping an $\bbM$-algebra
$\frakA \in \calA$ to its universe~$A$ (equipped with the discrete topology).
By definition, this means that
\begin{align*}
  \widehat\bbM_\calA = \mathrm{Ran}_I\,I
\end{align*}
is the \emph{right Kan extension} of~$I$ along itself.
(This equation follows immediately from Section~X.3, Theorem~1 of~\cite{MacLane98}.)
Furthermore, (according to the dual of Section~X.4, Theorem~1 of~\cite{MacLane98})
we can compute a right Kan extension as
\begin{align*}
  (\mathrm{Ran}_I\,I)(X) = \int_{\frakA \in \calA} \PSp^\Xi(X, I\frakA) \pitchfork I\frakA
                         = \int_{\frakA \in \calA} \PSp^\Xi(X, A) \pitchfork A\,,
\end{align*}
where the integral sign is a certain kind of limit called an \emph{end}
and the \emph{power operator,} also called \emph{cotensor,}
${\pitchfork} : (\Set^\Xi)^\op \times \PSp^\Xi \to \PSp^\Xi$ is defined as iterated product
\begin{align*}
  X \pitchfork A = A^X = \prod_{x \in X} A\,.
\end{align*}
The characteristic property of the power is the equation
\begin{align*}
  \PSp^\Xi(B, X \pitchfork A) \cong \Set^\Xi(X, \PSp^\Xi(B,A))\,,
\end{align*}
for $X \in \Set^\Xi$ and $A,B \in \PSp^\Xi$.
For a fixed space~$A$ it follows that ${-} \pitchfork A : (\Set^\Xi)^\op \to \PSp^\Xi$
is the right adjoint of the hom-functor $\PSp^\Xi({-},A) : \PSp^\Xi \to (\Set^\Xi)^\op$.
This implies that ${-} \pitchfork Y$ preserves all limits
(cf. Proposition~3.2.2 of~\cite{Borceux94a}).
Furthermore, a space $A \in \PSp^\Xi$ is finitely copresentable if, and only if,
the hom-functor $\PSp^\Xi({-},A)$ preserves cofiltered surjective limits
(cf. Proposition~5.1.3 of~\cite{Borceux94b}).
As we have seen in Lemma~\ref{Lem: finite sets finite copresentable} that the universe
of a finitary $\bbM$-algebra is finitely copresentable in~$\PSp^\Xi$, it follows that
the composition $\PSp^\Xi({-},A) \pitchfork A$ preserves cofiltered limits,
for every $\frakA \in \calA$.

Given a cofiltered diagram $D : J \to \PSp^\Xi$, we therefore have
\begin{align*}
  \widehat\bbM_\calA(\lim D)
  &= (\mathrm{Ran}_I\,I)(\lim D) \\
  &= \int_{\frakA \in \calA} \PSp^\Xi(\lim_{j \in J} D(j), A) \pitchfork A \\
  &= \int_{\frakA \in \calA} \lim_{j \in J} {[\PSp^\Xi(D(j), A) \pitchfork A]} \\
  &= \lim_{j \in J} \int_{\frakA \in \calA} {[\PSp^\Xi(D(j), A) \pitchfork A]}
   = \lim_{j \in J} \widehat\bbM_\calA D(j)\,,
\end{align*}
where the fourth step follows by the fact that an end is a limit and limits commute.
\end{proof}

The following statement is the dual of Lemma~3.2 of~\cite{AdamekPorst04}.
\begin{lem}\label{Lem: algebras finitely copresentable}
Let $\calC$~be a category, $\bbM : \calC \to \calC$ a monad preserving cofiltered limits,
and\/ $\frakA$~an\/ $\bbM$-algebra with finitely copresentable domain~$A$.
Then\/ $\frakA$~is finitely copresentable in\/ $\Alg(\bbM)$.
\end{lem}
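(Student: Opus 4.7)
The plan is to use finite copresentability of~$A$ twice: once at the level of the carrier of~$\frakC$ to factor~$f$ through some $D(k)$, and once at the level of the carrier of~$\bbM\frakC$ to upgrade this factorisation to a morphism of $\bbM$-algebras, exploiting the preservation of cofiltered limits by~$\bbM$.

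First, I would recall that the forgetful functor $\Alg(\bbM) \to \calC$ creates limits, so the carrier of the limit $\frakC = \langle C,\pi_C\rangle$ of a cofiltered diagram $D : I \to \Alg(\bbM)$ with cone $(\lambda_i)_{i \in I}$ is the cofiltered limit in~$\calC$ of the carriers of the $D(i)$, with the same cone (in what follows I conflate each~$D(i)$ with its carrier). Given a morphism $f : \frakC \to \frakA$, applying finite copresentability of~$A$ to the underlying morphism $f : C \to A$ in~$\calC$ produces an index $k \in I$ and a morphism $g_0 : D(k) \to A$ in~$\calC$ with $f = g_0 \circ \lambda_k$.

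The hard part will be to show that after composing $g_0$ with a suitable $I$-morphism we obtain a genuine morphism of $\bbM$-algebras; this is where preservation of cofiltered limits enters. Since $f$ and $\lambda_k$ are algebra morphisms, the parallel pair $g_0 \circ \pi_{D(k)},\ \pi_A \circ \bbM g_0 : \bbM D(k) \to A$ in~$\calC$ satisfies
\begin{align*}
  g_0 \circ \pi_{D(k)} \circ \bbM\lambda_k
   &= g_0 \circ \lambda_k \circ \pi_C
    = f \circ \pi_C \\
   &= \pi_A \circ \bbM f
    = \pi_A \circ \bbM g_0 \circ \bbM\lambda_k\,.
\end{align*}
Because $\bbM$~preserves cofiltered limits, $\bbM C$~is the cofiltered limit of $\bbM \circ D$ with cone $(\bbM\lambda_i)_i$. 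The essential-uniqueness clause of finite copresentability of~$A$ then yields an $I$-morphism $h : l \to k$ with $g_0 \circ \pi_{D(k)} \circ \bbM Dh = \pi_A \circ \bbM g_0 \circ \bbM Dh$. Since $Dh$~is itself an algebra morphism, $\pi_{D(k)} \circ \bbM Dh = Dh \circ \pi_{D(l)}$, and setting $g := g_0 \circ Dh$ gives $g \circ \pi_{D(l)} = \pi_A \circ \bbM g$. Hence $g : D(l) \to \frakA$ is a morphism of $\bbM$-algebras, and the cone condition $Dh \circ \lambda_l = \lambda_k$ yields $g \circ \lambda_l = f$, as required.

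Essential uniqueness of the factorisation in $\Alg(\bbM)$ reduces to essential uniqueness in~$\calC$ via faithfulness of the forgetful functor: any two algebra morphisms $D(l) \to \frakA$ and $D(l') \to \frakA$ factoring~$f$ can be brought down to a common index by cofilteredness of~$I$, after which essential uniqueness of finite copresentability of~$A$ in~$\calC$ produces the required equalising $I$-morphism, which is then automatically an equaliser in $\Alg(\bbM)$.
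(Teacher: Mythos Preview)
Your proof is correct. The paper itself does not prove this lemma at all: it simply records that the statement is the dual of Lemma~3.2 of Ad\'amek--Porst and moves on. What you have written is precisely the standard argument behind that cited result (dualised): factor on underlying objects, then use preservation of the cofiltered limit by~$\bbM$ together with the essential-uniqueness half of copresentability to coequalise the two obstruction maps $g_0\circ\pi_{D(k)}$ and $\pi_A\circ\bbM g_0$, thereby upgrading $g_0$ to an algebra morphism after refining the index. The essential-uniqueness clause in $\Alg(\bbM)$ indeed reduces to the one in~$\calC$ by faithfulness of the forgetful functor, as you say. So there is nothing to compare --- you have supplied the proof the paper omits.
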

\begin{cor}\label{Cor: hat F_Delta X finitely copresentable}
Let $\Delta \subseteq \Xi$ be a finite set of sorts and $\calA$~a class of finitary $\bbM$-algebras.
For every finite set~$X \in \PSp^\Delta$,
the $\widehat\bbM_\calA|_\Delta$-algebra $\widehat\bbM_\calA|_\Delta X$
is finitely copresentable in $\PAlg(\widehat\bbM_\calA|_\Delta)$.
\end{cor}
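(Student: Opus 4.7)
The plan is to derive the corollary directly from Lemma~\ref{Lem: algebras finitely copresentable}, instantiated with the monad $\widehat\bbM_\calA|_\Delta$ on the category $\PSp^\Delta$ and with the free $\widehat\bbM_\calA|_\Delta$-algebra on~$X$, whose carrier is $\widehat\bbM_\calA|_\Delta X$ and whose product is the monad multiplication. Two hypotheses of that lemma need to be verified.

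The first is that $\widehat\bbM_\calA|_\Delta : \PSp^\Delta \to \PSp^\Delta$ preserves cofiltered limits. Proposition~\ref{Prop: hat F preserves cofiltered limits} delivers this for $\widehat\bbM_\calA$ viewed as an endofunctor of $\PSp^\Xi$. Since cofiltered limits in $\PSp^\Xi$ are obtained by placing the cone topology on the sort-wise $\Pos^\Xi$-limit (Lemma~\ref{Lem: limit is in KHaus}), and since the sort restriction $|_\Delta$ simply projects away sorts outside~$\Delta$, the restriction commutes with cofiltered limits, so the preservation property descends to $\widehat\bbM_\calA|_\Delta$ on $\PSp^\Delta$.

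The second hypothesis is that the generating object~$X$ is finitely copresentable in $\PSp^\Delta$; but $X$ is finite and $\Delta$~is finite, so this is exactly the content of Lemma~\ref{Lem: finite sets finite copresentable}. With both hypotheses in place, Lemma~\ref{Lem: algebras finitely copresentable} applies and immediately yields that $\widehat\bbM_\calA|_\Delta X$ is finitely copresentable in $\PAlg(\widehat\bbM_\calA|_\Delta)$.

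The only step requiring genuine care is the descent argument in the first check: because Lemma~\ref{Lem: finite sets finite copresentable} relies on Priestley duality, and that duality only yields the characterisation of finitely copresentable objects for a finite set of sorts, one really must pass from $\PSp^\Xi$ to~$\PSp^\Delta$ and then verify that restricting to sorts in~$\Delta$ preserves both the monad structure and its cofiltered-limit preservation. This is routine bookkeeping of the same flavour as the sort-restriction lemma appearing in Section~\ref{Sect: varieties}, and I do not expect any substantive obstacle beyond it.
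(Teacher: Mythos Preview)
Your argument is essentially identical to the paper's: verify via Lemma~\ref{Lem: finite sets finite copresentable} that $X$ is finitely copresentable in $\PSp^\Delta$, verify via Proposition~\ref{Prop: hat F preserves cofiltered limits} that $\widehat\bbM_\calA|_\Delta$ preserves cofiltered limits, and then invoke Lemma~\ref{Lem: algebras finitely copresentable}. Your explicit remarks on why cofiltered-limit preservation passes along the sort restriction $|_\Delta$ is a point the paper leaves implicit.
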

\begin{proof}
By Lemma~\ref{Lem: finite sets finite copresentable}, the set~$X$ (with the discrete topology)
is finitely copresentable in $\PSp^\Delta$.
As we have shown in Propositon~\ref{Prop: hat F preserves cofiltered limits}
that $\widehat\bbM|_\Delta$~preserves cofiltered limits,
the claim therefore follows by Lemma~\ref{Lem: algebras finitely copresentable}.
\end{proof}

\section{Axiomatisations}   
\label{Sect:axioms}

After the preparations in the previous section we are now able to
define the type of inequalities we use to axiomatise pseudo-varieties
and to prove the characterisation theorem.
\begin{defi}
Let $X$~be a finite unordered set and $\calA$~a class of finitary $\bbM$-algebras.

(a)
An \emph{$\bbM$-inequality over~$X$} is a statement of the form $s \leq t$
with $s,t \in \widehat\bbM X$.

(b) A~finitary $\bbM$-algebra~$\frakA$ \emph{satisfies} an $\bbM$-inequality $s \leq t$
over~$X$ if
\begin{align*}
  \val(s;\beta) \leq \val(t;\beta)\,,
  \qquad\text{for all } \beta : \bbM X \to \frakA\,.
\end{align*}
We write $\frakA \models s \leq t$ to denote this fact.

(c) The \emph{$\bbM$-theory}~$\Th(\calA)$ of~$\calA$ is the set
of all $\bbM$-inequalities $s \leq t$ satisfied by every algebra in~$\calA$.
(We do \emph{not} fix the set~$X$ these inequalities are over.)

(d) A~set~$\Phi$ of $\bbM$-inequalities (possibly over several different sets~$X$)
\emph{axiomatises} the following subclass of~$\calA$.
\begin{align*}
  \Mod_\calA(\Phi) :=
    \bigset{ \frakA \in \calA }{ \frakA \models s \leq t \text{ for all } s \leq t \in \Phi }\,.
\end{align*}
\upqed
\markenddef
\end{defi}

Let us start with the following important property connecting
the theory of a class~$\calA$ to the morphism~$\varrho_\calA$
from Theorem~\ref{Thm: hat F for subclass}.
\begin{lem}\label{Lem: free algebra satisfies all axioms}
Let $\calA$~be a class of\/ $\bbM$-algebras, $X$~a finite set,
and $s \leq t$ an\/ $\bbM$-inequality over~$X$.
Then
\begin{align*}
  s \leq t \in \Th(\calA) \quad\iff\quad \varrho_\calA(s) \leq \varrho_\calA(t)\,,
\end{align*}
where $\varrho_\calA : \widehat\bbM \Rightarrow \widehat\bbM_\calA$ is the morphism
from Theorem~\ref{Thm: hat F for subclass}.
\end{lem}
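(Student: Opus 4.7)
The plan is to unfold the definition of $\Th(\calA)$ and reduce the claimed equivalence to two previously established facts: the defining property of $\varrho_\calA$ from Theorem~\ref{Thm: hat F for subclass}(a), which says that $\val_\calA({-};\beta) \circ \varrho_\calA = \val({-};\beta)$ for every $\beta : \bbM X \to \frakA$ with $\frakA \in \calA$, and the joint-monotonicity characterisation of the ordering on $\widehat\bbM_\calA X$ from Lemma~\ref{Lem: basic properties of val}(e), which says that $u \leq v$ in $\widehat\bbM_\calA X$ holds if and only if $\val_\calA(u;\alpha) \leq \val_\calA(v;\alpha)$ for all $\alpha : \bbM X \to \frakC \in \calA$.

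For the forward direction I would assume $s \leq t \in \Th(\calA)$, i.e., $\val(s;\beta) \leq \val(t;\beta)$ for every $\beta : \bbM X \to \frakA$ with $\frakA \in \calA$. Substituting via the defining property of $\varrho_\calA$ gives $\val_\calA(\varrho_\calA(s);\beta) \leq \val_\calA(\varrho_\calA(t);\beta)$ for all such~$\beta$, and then the characterisation from Lemma~\ref{Lem: basic properties of val}(e) applied to $\widehat\bbM_\calA X$ yields $\varrho_\calA(s) \leq \varrho_\calA(t)$.

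For the converse I would simply apply monotonicity of each evaluation morphism $\val_\calA({-};\beta)$ to the assumed inequality $\varrho_\calA(s) \leq \varrho_\calA(t)$, obtaining $\val_\calA(\varrho_\calA(s);\beta) \leq \val_\calA(\varrho_\calA(t);\beta)$, and then invoke the same defining property of $\varrho_\calA$ to rewrite both sides as $\val(s;\beta) \leq \val(t;\beta)$, valid for every $\beta : \bbM X \to \frakA$ with $\frakA \in \calA$. This is precisely the assertion that $s \leq t \in \Th(\calA)$.

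There is really no substantive obstacle here; the statement is essentially a bookkeeping consequence of how $\varrho_\calA$ was constructed (as the unique mediating map into the limit defining $\widehat\bbM_\calA X$) together with the explicit description of the ordering on that limit. The only care needed is to be explicit that both directions use exactly the same quantification over $\beta$, so that no information is lost when passing between $\val({-};\beta)$ and $\val_\calA({-};\beta) \circ \varrho_\calA$.
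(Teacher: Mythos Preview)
Your proof is correct and follows essentially the same approach as the paper: both arguments unfold the definition of $\Th(\calA)$, use the defining identity $\val_\calA({-};\beta) \circ \varrho_\calA = \val({-};\beta)$ from Theorem~\ref{Thm: hat F for subclass}(a), and invoke the characterisation of the ordering on $\widehat\bbM_\calA X$ from Lemma~\ref{Lem: basic properties of val}(e). The paper merely presents the same reasoning as a single chain of equivalences rather than two separate implications.
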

\begin{proof}
By Lemma~\ref{Lem: basic properties of val}\,(e), we have
\begin{alignat*}{-1}
  &\frakA \models s \leq t\,, &&\quad\text{for all } \frakA \in \calA \\
\prefixtext{\iff\quad}
  &\val(s;\beta) \leq \val(t;\beta)\,, &&\quad\text{for all } \beta : \bbM X \to \frakA \in \calA \\
\prefixtext{\iff\quad}
  &\val_\calA(\varrho_\calA(s);\beta) \leq \val_\calA(\varrho_\calA(t);\beta)\,,
    &&\quad\text{for all } \beta : \bbM X \to \frakA \in \calA \\
\prefixtext{\iff\quad}
  &\varrho_\calA(s) \leq \varrho_\calA(t)\,.
\end{alignat*}
\upqed
\end{proof}

The easier direction is to show that every axiomatisable class is a pseudo-variety.
\begin{prop}\label{Prop: definable classes are pseudo-varieties}
Let $\calA$~be a pseudo-variety and $\Phi$~a set of\/ $\bbM$-inequalities.
Then\/ $\Mod_\calA(\Phi)$ is a pseudo-variety.
\end{prop}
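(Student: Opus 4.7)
The plan is to verify that $\Mod_\calA(\Phi)$ satisfies the three closure properties defining a pseudo-variety: closure under quotients, under finitary subalgebras of finite products, and under sort-accumulation points. Since $\Mod_\calA(\Phi)\subseteq\calA$ and~$\calA$ is already a pseudo-variety, any algebra produced by these operations automatically lies in~$\calA$ and is finitary; what remains is to check that each operation preserves satisfaction of every inequality $s\leq t\in\Phi$. The central reusable tool is naturality of evaluation, Lemma~\ref{Lem: basic properties of val}(b), namely $\varphi\circ\val({-};\beta)=\val({-};\varphi\circ\beta)$ for any $\bbM$-morphism~$\varphi$.

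The first three closures go by direct calculation. For \emph{quotients}, given a surjection $q:\frakA\to\frakB$ with $\frakA\models s\leq t$, Lemma~\ref{Lem: FX projective} (applicable since alphabets~$X$ of inequalities are unordered) lifts any $\beta:\bbM X\to\frakB$ to some $\hat\beta:\bbM X\to\frakA$ with $\beta=q\circ\hat\beta$; naturality then gives $\val(s;\beta)=q(\val(s;\hat\beta))\leq q(\val(t;\hat\beta))=\val(t;\beta)$. For \emph{finite products}, naturality with respect to each projection combined with the componentwise product order yields the claim. For \emph{subalgebras} $\frakC\subseteq\frakA$, the inclusion is an $\bbM$-morphism that reflects order, since subalgebra orders are inherited from the ambient algebra; so naturality transfers the inequality from~$\frakA$ to~$\frakC$.

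The delicate case is closure under \emph{sort-accumulation}. Let~$\frakB$ be a sort-accumulation point of $\Mod_\calA(\Phi)$ and fix $s\leq t\in\Phi$ over~$X$, with $s,t$ of sort~$\xi$. Let~$\Delta$ be the finite set of sorts consisting of~$\xi$ together with the sorts appearing in~$X$; by hypothesis I obtain an algebra $\frakA\in\Mod_\calA(\Phi)$ together with a surjective $\bbM|_\Delta$-morphism $q:\frakA|_\Delta\to\frakB|_\Delta$. Given $\beta:\bbM X\to\frakB$, Lemma~\ref{Lem: FX projective} applied to the monad~$\bbM|_\Delta$ lifts $\beta|_\Delta$ through~$q$ to $\alpha_0:\bbM|_\Delta X\to\frakA|_\Delta$ with $q\circ\alpha_0=\beta|_\Delta$; by freeness of~$\bbM X$, the function $\alpha_0\circ\sing:X\to A$ extends uniquely to a full $\bbM$-morphism $\alpha:\bbM X\to\frakA$, and by uniqueness of $\bbM|_\Delta$-extensions on $\bbM|_\Delta X$ the restriction $\alpha|_\Delta$ coincides with~$\alpha_0$. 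Now comes the product trick: apply Lemma~\ref{Lem: basic properties of val}(d) (in the category of all finitary $\bbM$-algebras, which is closed under finitary subalgebras) to the $\bbM$-morphism $\langle\beta,\alpha\rangle:\bbM X\to\frakB\times\frakA$, separately for~$s$ and~$t$. This produces common terms $u,v\in\bbM_\xi X$ with $\beta(u)=\val(s;\beta)$, $\alpha(u)=\val(s;\alpha)$, and analogously for~$v$ and~$t$. Since $u,v$ have sort $\xi\in\Delta$ they lie in $\bbM|_\Delta X$, so $\alpha(u)=\alpha_0(u)$ and hence $q(\val(s;\alpha))=q(\alpha_0(u))=\beta|_\Delta(u)=\val(s;\beta)$, and analogously $q(\val(t;\alpha))=\val(t;\beta)$. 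The hypothesis $\frakA\models s\leq t$ combined with monotonicity of~$q$ then delivers $\val(s;\beta)\leq\val(t;\beta)$.

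The main obstacle, resolved by the product trick, is that~$q$ is only an $\bbM|_\Delta$-morphism and so cannot be substituted directly into the naturality of~$\val$ for $\bbM$-morphisms. Applying Lemma~\ref{Lem: basic properties of val}(d) to the pairing $\langle\beta,\alpha\rangle$ lets me replace the abstract $\widehat\bbM$-term~$s$ by a concrete $\bbM$-term $u\in\bbM|_\Delta X$, on which $q$ does commute with the algebra products and on which the identity $q\circ\alpha_0=\beta|_\Delta$ becomes usable pointwise.
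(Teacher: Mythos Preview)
Your treatment of quotients is the paper's. For finitary subalgebras of finite products the paper handles the two operations in a single step, starting from a morphism~$\beta$ into the finitary subalgebra $\frakC\subseteq\prod_i\frakB_i$ and composing with the restricted projections $p_k|_\frakC:\frakC\to\frakB_k$; this matters because the full product need not be finitary, so $\val$ for a morphism landing in $\prod_i\frakB_i$ is not even defined. Your separate treatment of products and subalgebras should be read accordingly.

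For sort-accumulation points your route genuinely diverges from the paper's. The paper works entirely in $\Pos^\Delta$: since all sorts of~$X$ and the sort~$\xi$ of $s,t$ lie in~$\Delta$, it regards $s,t$ as elements of $\widehat\bbM|_\Delta X$ and applies naturality of evaluation for the monad $\bbM|_\Delta$ directly to the $\bbM|_\Delta$-morphism~$q$, never passing through concrete terms. Your product trick via Lemma~\ref{Lem: basic properties of val}(d) is a legitimate alternative that stays in the full $\bbM$-setting and sidesteps any identification between $\widehat\bbM$ and $\widehat{\bbM|_\Delta}$, at the price of an extra step. The same finitary-product issue recurs, however: $\frakB\times\frakA$ need not be finitary (the paper explicitly warns, after the definition of pseudo-variety, that products of finitary algebras can fail to be finitely generated), so $\val(s;\langle\beta,\alpha\rangle)$ is undefined and Lemma~\ref{Lem: basic properties of val}(d) does not apply to it as written. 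The repair is routine: corestrict $\langle\beta,\alpha\rangle$ to the subalgebra~$\frakC$ generated by its range, which is finitary (generated by the image of~$X$ and sort-wise finite); apply Lemma~\ref{Lem: basic properties of val}(d) to $\gamma:\bbM X\to\frakC$ to get $u\in\bbM_\xi X$ with $\gamma(u)=\val(s;\gamma)$; then Lemma~\ref{Lem: basic properties of val}(b) with the restricted projections $\frakC\to\frakB$ and $\frakC\to\frakA$ recovers $\beta(u)=\val(s;\beta)$ and $\alpha(u)=\val(s;\alpha)$, and your argument proceeds unchanged.
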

\begin{proof}
We have to check three closure properties.
First, consider a finitary subalgebra~$\frakA$ of a product $\prod_{i \in I} \frakB^i$
with $\frakB^i \in \Mod_\calA(\Phi)$. Let $p_k : \prod_i B^i \to B^k$ be the projection.
For $s \leq t \in \Phi$ over~$X$ and $\beta : \bbM X \to \frakA$ it follows that
\begin{align*}
  p_k(\val(s;\beta))
  =    \val(s;p_k \circ \beta)
  \leq \val(t;p_k \circ \beta)
  =    p_k(\val(t;\beta))\,,
\end{align*}
where the second step follows from the fact that $\frakB^k \models s \leq t$.
As the ordering of the product is defined component-wise, this implies that
$\val(s;\beta) \leq \val(t;\beta)$. Consequently, $\frakA \in \Mod_\calA(\Phi)$.

Next, consider a quotient $q : \frakB \to \frakA$ with $\frakB \in \Mod_\calA(\Phi)$.
Fix $s \leq t \in \Phi$ over~$X$ and $\beta : \bbM X \to \frakA$.
Since $q$~is surjective,
we can use Lemma~\ref{Lem: FX projective}
to find some $\gamma : \bbM X \to \frakA$ with
$\beta = q \circ \gamma$.
Then
\begin{align*}
  \val(s;\beta)
   =    \val(s; q \circ \gamma)
   =    q(\val(s;\gamma))
   \leq q(\val(t;\gamma))
   =    \val(t; q \circ \gamma)
   =    \val(t; \beta)\,,
\end{align*}
where the third step follows by monotonicity of~$q$ and the fact that
$\frakB \models s \leq t$. Consequently, $\frakA \in \Mod_\calA(\Phi)$.

Finally, suppose that $\frakA$~is a sort-accumulation point of $\Mod_\calA(\Phi)$.
Fix $s \leq t \in \Phi$ over~$X$ and $\beta : \bbM X \to \frakA$.
We have to show that
\begin{align*}
  \val_\calA(s;\beta) \leq \val_\calA(t;\beta)\,.
\end{align*}
Suppose that $s,t \in \widehat\bbM_\xi X$ and let $\Delta \subseteq \Xi$ be a finite set of sorts
containing~$\xi$ and all sorts in~$X$.
By assumption, there is some algebra $\frakB \in \Mod_\calA(\Phi)$ and a surjective morphism
$\mu : \frakB|_\Delta \to \frakA|_\Delta$.
By Lemma~\ref{Lem: FX projective}, we can find a morphism
$\gamma : \bbM|_\Delta X \to \frakB|_\Delta$ with $\beta|_\Delta = \mu \circ \gamma$.
Since $\frakB \models s \leq t$ and $s,t \in \widehat\bbM|_\Delta X$,
we have (working in the category $\Pos^\Delta$)
\begin{align*}
  \val_\calA(s;\beta|_\Delta)
  &= \val_\calA(s;\mu \circ \gamma) \\
  &= \mu(\val_\calA(s;\gamma)) \\
  &\leq \mu(\val_\calA(t;\beta)) \\
  &= \val_\calA(t;\mu \circ \gamma)
  = \val_\calA(t;\beta|_\Delta)\,.
\end{align*}
Since
$\val_\calA({-};\beta|_\Delta) = \val_\calA({-};\beta) \restriction \widehat\bbM|_\Delta X$,
it follows that $\frakA \models s \leq t$.
\end{proof}

For the converse statement -- that every pseudo-variety is axiomatisable -- we start with
a proposition.
\begin{prop}\label{Prop: pseudo-variety consists of all quotients}
Let $\calV$~be a pseudo-variety. Then
\begin{align*}
  \calV =
    \set{ \frakA }
        { \frakA \text{ a finitary quotient of\/ } \widehat\bbM_\calV X
          \text{ for some finite set } X }\,.
\end{align*}
\end{prop}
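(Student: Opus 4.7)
The plan is to establish the two inclusions separately. For the inclusion $\calV \subseteq \{\,\frakA \mid \frakA \text{ is a finitary quotient of } \widehat\bbM_\calV X \text{ for some finite } X\,\}$, I would take $\frakA \in \calV$ and, using that $\frakA$ is finitely generated, pick a finite generating set $X \subseteq A$. The inclusion extends to a surjective morphism $\beta : \bbM X \to \frakA$, which, since $\frakA \in \calV$, is an object of the comma category $(\bbM X \downarrow \calV)$ defining $\widehat\bbM_\calV X$. By Lemma~\ref{Lem: basic properties of val}\,(a) the evaluation morphism satisfies $\val_\calV({-};\beta) \circ \iota_\calV = \beta$, so the surjectivity of~$\beta$ forces $\val_\calV({-};\beta) : \widehat\bbM_\calV X \to \frakA$ to be surjective as well, exhibiting $\frakA$ as the desired quotient.

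For the converse, suppose $q : \widehat\bbM_\calV X \to \frakA$ is surjective with $\frakA$ finitary and $X$ finite; the aim is to show $\frakA \in \calV$ by invoking closure of $\calV$ under sort-accumulation points. Fix an arbitrary finite $\Delta \subseteq \Xi$; enlarging $\Delta$ if necessary to contain all sorts of~$X$ (which is harmless for the sort-accumulation property, since the accumulation condition for the smaller $\Delta$ follows from that for the larger one by restriction), restricting $q$ yields a surjection $q|_\Delta : \widehat\bbM_\calV|_\Delta X \to \frakA|_\Delta$. The task reduces to factoring this map through $\frakB|_\Delta$ for some $\frakB \in \calV$.

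This is where finite copresentability enters. Equip every finitary algebra with the discrete topology. By Lemma~\ref{Lem: diagram cofiltered and surjective}, $\widehat\bbM_\calV|_\Delta X$ is the cofiltered limit in $\PAlg(\widehat\bbM_\calV|_\Delta)$ of the diagram formed by the surjective morphisms $\beta : \bbM|_\Delta X \to \frakB|_\Delta$ with $\frakB \in \calV$, and its limiting cone consists of the evaluation maps $\val({-};\beta)$. Since $\Delta$ is finite and $\frakA|_\Delta$ is sort-wise finite, Lemma~\ref{Lem: finite sets finite copresentable} makes $A|_\Delta$ finitely copresentable in $\PSp^\Delta$; Proposition~\ref{Prop: hat F preserves cofiltered limits} together with Lemma~\ref{Lem: algebras finitely copresentable} then lift this to $\frakA|_\Delta$ being finitely copresentable in $\PAlg(\widehat\bbM_\calV|_\Delta)$. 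The desired factorisation $q|_\Delta = g \circ \val({-};\beta)$ now follows for some $\beta$ in the diagram and some morphism $g : \frakB|_\Delta \to \frakA|_\Delta$; since $q|_\Delta$ is surjective, so is~$g$. This exhibits $\frakA$ as a sort-accumulation point of $\calV$, whence $\frakA \in \calV$ by closure.

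The main obstacle is the categorical bookkeeping needed to transport finite copresentability from $\PSp^\Delta$ through the algebra category back to the statement in $\Alg(\bbM)$: one has to check that cofiltered limits of finitary algebras under the discrete topology agree with the corresponding limits in $\Pos^\Delta$ (via Lemma~\ref{Lem: limit is in KHaus}) and that the factorisation produced in the Priestley setting genuinely yields the required factorisation of $q|_\Delta$ as a morphism of $\bbM$-algebras. Both points rest on the topological machinery developed in the previous section and on the fact that $\widehat\bbM_\calV|_\Delta$ preserves cofiltered limits.
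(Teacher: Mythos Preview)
Your proposal is correct and follows essentially the same approach as the paper: the $(\subseteq)$ direction is identical, and for $(\supseteq)$ both arguments reduce to sort-accumulation and factor $q|_\Delta$ through some $\frakB|_\Delta$ via finite copresentability in the Priestley-algebra setting. Your account is in fact slightly more precise than the paper's, since you correctly single out $\frakA|_\Delta$ as the object whose finite copresentability is needed (invoking Lemma~\ref{Lem: finite sets finite copresentable}, Proposition~\ref{Prop: hat F preserves cofiltered limits}, and Lemma~\ref{Lem: algebras finitely copresentable} directly), whereas the paper's citation of Corollary~\ref{Cor: hat F_Delta X finitely copresentable} for $\widehat\bbM|_\Delta X$ is a minor misdirection.
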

\begin{proof}
$(\subseteq)$
Let $\frakA \in \calV$. As $\frakA$~is finitely generated, there exists
a surjective morphism $\beta : \bbM X \to \frakA$, for some finite set~$X$.
The claim follows since $\val({-};\beta) \circ \iota = \beta$ implies that that
$\val({-};\beta) : \widehat\bbM_\calV X \to \frakA$ is also surjective.

$(\supseteq)$
Let $\frakA$~be finitary and $\varphi : \widehat\bbM_\calV X \to \frakA$ surjective.
We have to show that $\frakA \in \calV$.
As $\calV$~is closed under sort-accumulation points,
it is sufficient to show that, for every finite set $\Delta \subseteq \Xi$
there is some algebra $\frakB \in \calV$ and a surjective morphism
$\frakB|_\Delta \to \frakA|_\Delta$.
Hence, fix $\Delta \subseteq \Xi$.
Note that, according to Lemma~\ref{Lem: limit is in KHaus}
we can define the set~$\widehat\bbM_\calV|_\Delta X$ as the limit of a cofiltered diagram
in $\PSp^\Delta$.
Furthermore, we have seen in Corollary~\ref{Cor: hat F_Delta X finitely copresentable} that
the $\widehat\bbM|_\Delta$-algebra $\widehat\bbM|_\Delta X$ is finitely copresentable in
$\PAlg(\widehat\bbM|_\Delta)$.
Therefore, there exists an algebra $\frakB \in \calV$ and morphisms
$\beta : \bbM X \to \frakB$ and $\mu : \frakB|_\Delta \to \frakA|_\Delta$ such that
$\varphi|_\Delta = \mu \circ \val({-};\beta)|_\Delta$.
Since $\varphi|_\Delta$~is surjective, so is~$\mu$. Consequently, $\frakA|_\Delta$~is
a quotient of~$\frakB|_\Delta$ and $\frakB \in \calV$.
\end{proof}
\begin{cor}\label{Cor: Mod(Th(V)) = V}
Let $\calV$~and~$\calW$ be pseudo-varieties.
\begin{enuma}
\item $\calV \subseteq \calW \quad\iff\quad \Th(\calV) \supseteq \Th(\calW)\,.$
\item $\Mod(\Th(\calV)) = \calV\,.$
\end{enuma}
\end{cor}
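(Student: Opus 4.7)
The plan is to establish~(b) first and then derive~(a) as an immediate formal consequence. The inclusion $\calV \subseteq \Mod(\Th(\calV))$ in~(b) is trivial: every $\frakA \in \calV$ satisfies every $\bbM$-inequality in $\Th(\calV)$ by the very definition of $\Th(\calV)$.

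For the non-trivial inclusion $\Mod(\Th(\calV)) \subseteq \calV$, the strategy is to exhibit each $\frakA$ in the model class as a finitary quotient of the free algebra~$\widehat\bbM_\calV X$ for some finite alphabet~$X$, and then invoke Proposition~\ref{Prop: pseudo-variety consists of all quotients}. Given $\frakA \in \Mod(\Th(\calV))$, which is finitary, I pick a finite unordered set~$X$ together with a surjective morphism $\beta : \bbM X \to \frakA$ obtained from a finite generating set of~$\frakA$. By the universal property of the limit defining $\widehat\bbM X$, the morphism~$\beta$ extends to a surjective morphism of $\bbM$-algebras $\val({-};\beta) : \widehat\bbM X \to \frakA$. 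The key step is to show that this morphism factors through $\varrho_\calV : \widehat\bbM X \to \widehat\bbM_\calV X$. For this, note that whenever $s,t \in \widehat\bbM X$ satisfy $\varrho_\calV(s) \leq \varrho_\calV(t)$, Lemma~\ref{Lem: free algebra satisfies all axioms} yields $s \leq t \in \Th(\calV)$, and the hypothesis $\frakA \models s \leq t$ then gives $\val(s;\beta) \leq \val(t;\beta)$. Hence $\ker \varrho_\calV \subseteq \ker \val({-};\beta)$. Since both~$\calV$ and the ambient class of all finitary $\bbM$-algebras are closed under subalgebras and~$X$ is finite, Theorem~\ref{Thm: hat F for subclass}(b) ensures that $\varrho_\calV$ is surjective. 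The Factorisation Lemma, followed by Lemma~\ref{Lem: factorisation is morphism}, then produces a surjective morphism of $\bbM$-algebras $h : \widehat\bbM_\calV X \to \frakA$ with $\val({-};\beta) = h \circ \varrho_\calV$. Proposition~\ref{Prop: pseudo-variety consists of all quotients} now yields $\frakA \in \calV$.

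For part~(a), the forward direction is immediate: enlarging the class can only shrink its theory. For the converse, $\Th(\calV) \supseteq \Th(\calW)$ gives $\Mod(\Th(\calV)) \subseteq \Mod(\Th(\calW))$, and applying~(b) to both sides yields $\calV \subseteq \calW$.

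The main obstacle I foresee is conceptual rather than computational: the objects $\widehat\bbM X$ and $\widehat\bbM_\calV X$ carry two natural algebraic structures, one as algebras for their respective codensity monads and one as $\bbM$-algebras inherited from the limit construction in $\Alg(\bbM)$. The factorisation argument has to be carried out in the $\bbM$-algebra category, and it relies on the fact that $\varrho_\calV$ is a morphism of $\bbM$-algebras — this follows from its construction as the unique map between two limiting cones taken in the comma category $(\bbM X \downarrow \Alg(\bbM))$, and should be noted explicitly to justify the application of Lemma~\ref{Lem: factorisation is morphism}.
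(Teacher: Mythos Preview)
Your proof is correct but inverts the paper's order of attack. The paper proves~(a) directly first: from $\Th(\calW) \subseteq \Th(\calV)$ it uses Lemma~\ref{Lem: free algebra satisfies all axioms} to obtain $\ker \varrho_{\calW,X} \subseteq \ker \varrho_{\calV,X}$, factorises to get a surjection $\widehat\bbM_\calW X \to \widehat\bbM_\calV X$, and then appeals to Proposition~\ref{Prop: pseudo-variety consists of all quotients} to conclude $\calV \subseteq \calW$. Part~(b) is then derived from~(a) after first noting, via Proposition~\ref{Prop: definable classes are pseudo-varieties}, that $\Mod(\Th(\calV))$ is itself a pseudo-variety (so that~(a) applies to it). You do the opposite: prove~(b) directly by factorising $\val({-};\beta)$ through $\varrho_\calV$ for each individual $\frakA$, and then read off~(a) from~(b) by the obvious monotonicity of $\Mod$. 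The underlying mechanics --- Lemma~\ref{Lem: free algebra satisfies all axioms}, surjectivity of~$\varrho$ from Theorem~\ref{Thm: hat F for subclass}(b), the Factorisation Lemma, and Proposition~\ref{Prop: pseudo-variety consists of all quotients} --- are identical in both routes. Your ordering has the minor advantage that it does not need Proposition~\ref{Prop: definable classes are pseudo-varieties} as an intermediate step; the paper's ordering has the advantage that~(a) is established independently and can be cited on its own. Your closing remark about the $\bbM$-algebra structure on $\widehat\bbM X$ and $\widehat\bbM_\calV X$, and on $\varrho_\calV$ being an $\bbM$-morphism, is well taken: the paper relies on this implicitly (these objects are constructed as limits in $\Alg(\bbM)$ via the comma category), and making it explicit is a genuine improvement in rigour.
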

\begin{proof}
(a) $(\Rightarrow)$ follows immediately by definition. For $(\Leftarrow)$,
let $\varrho_{\calV,X} : \widehat\bbM X \to \widehat\bbM_\calV X$
and $\varrho_{\calW,X} : \widehat\bbM X \to \widehat\bbM_\calW X$
be the morphisms from Theorem~\ref{Thm: hat F for subclass}.
It follows by Lemma~\ref{Lem: free algebra satisfies all axioms} that
\begin{align*}
  \Th(\calW) \subseteq \Th(\calV)
  \qtextq{implies}
  \ker \varrho_{\calW,X} \subseteq \ker \varrho_{\calV,X}\,.
\end{align*}
Hence, we can use the Factorisation Lemma to find a morphism
$q_X : \widehat\bbM_\calW X \to \widehat\bbM_\calV X$
such that $\varrho_{\calV,X} = q_X \circ \varrho_{\calW,X}$.
By Theorem~\ref{Thm: hat F for subclass}, the morphism $\varrho_{\calV,X}$ is surjective.
Hence, so is~$q_X$. That means that $\widehat\bbM_\calV X$~is a quotient of~$\widehat\bbM_\calW X$.
Consequently, every quotient of~$\widehat\bbM_\calV X$ is also a quotient of~$\widehat\bbM_\calW X$
and it follows by Proposition~\ref{Prop: pseudo-variety consists of all quotients}
that $\calV \subseteq \calW$.

(b)
We have seen in Proposition~\ref{Prop: definable classes are pseudo-varieties}
that the class $\calW := \Mod(\Th(\calV))$ is a pseudo-variety.
We have to show that $\calV = \calW$.

$(\subseteq)$
Let $\frakA \in \calV$. Then we have $\frakA \models s \leq t$, for every $s \leq t$ in $\Th(\calV)$.
This implies that $\frakA \in \Mod(\Th(\calV)) = \calW$.

$(\supseteq)$
By~(a) it is sufficient to prove that $\Th(\calW) \supseteq \Th(\calV)$.
Hence, let $s \leq t$ be in $\Th(\calV)$.
Then $\frakA \models s \leq t$, for all $\frakA \in \Mod(\Th(\calV)) = \calW$,
which implies that $s \leq t$ belongs to $\Th(\calW)$.
\end{proof}

We are finally able to state our Reiterman theorem for pseudo-varieties of $\bbM$-algebras.
\begin{thm}
Let $\calF$~be the class of all finitary $\bbM$-algebras.
A class~$\calV$ is a pseudo-variety if, and only if,
it is of the form $\calV = \Mod_\calF(\Phi)$, for some set~$\Phi$ of $\bbM$-inequalities.
\end{thm}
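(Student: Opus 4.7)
The proof should essentially be a direct consequence of results already assembled in the paper, with the theorem being the clean packaging of this machinery. The plan is to handle the two directions separately, deferring almost all the work to Proposition \ref{Prop: definable classes are pseudo-varieties} and Corollary \ref{Cor: Mod(Th(V)) = V}.

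For the easy direction $(\Leftarrow)$, I first observe that $\calF$ itself is a pseudo-variety: it is closed under finitary subalgebras of finite products and quotients by general principles, and closure under sort-accumulation points is trivial since taking a sort-accumulation point of finitary algebras yields a finitary algebra (each finite restriction being a quotient of a finite algebra). Once $\calF$ is known to be a pseudo-variety, Proposition~\ref{Prop: definable classes are pseudo-varieties} applied with $\calA := \calF$ immediately yields that $\Mod_\calF(\Phi)$ is a pseudo-variety for any set $\Phi$ of $\bbM$-inequalities.

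For the hard direction $(\Rightarrow)$, given a pseudo-variety $\calV$, I would take $\Phi := \Th(\calV)$ and prove that $\calV = \Mod_\calF(\Th(\calV))$. The inclusion $\calV \subseteq \Mod_\calF(\Th(\calV))$ is immediate from the definition of $\Th(\calV)$. The reverse inclusion $\Mod_\calF(\Th(\calV)) \subseteq \calV$ is precisely what Corollary~\ref{Cor: Mod(Th(V)) = V}(b) asserts (with the understanding that $\Mod$ there ranges over all finitary algebras, i.e., $\Mod = \Mod_\calF$).

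Thus the proof is essentially a two-line argument:
\begin{align*}
  \calV \text{ is a pseudo-variety}
  &\iff \calV = \Mod_\calF(\Th(\calV))
  \iff \calV = \Mod_\calF(\Phi) \text{ for some } \Phi.
\end{align*}
The only step that might need a brief justification is the observation that $\calF$ qualifies as a pseudo-variety so that Proposition~\ref{Prop: definable classes are pseudo-varieties} applies; this is the point where one must verify that sort-accumulation points of finitary algebras remain finitary, which follows from the definition since for each finite $\Delta \subseteq \Xi$ the restriction $\frakB|_\Delta$ is witnessed to be finitary by having some finitary $\frakA|_\Delta$ as a quotient (reversing the arrows via Lemma~\ref{Lem: FX projective}-style reasoning, or more directly from the quotient relation in the definition of sort-accumulation point). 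No new technical obstacle arises beyond what has already been proved.
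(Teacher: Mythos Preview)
Your proposal is correct and follows essentially the same two-line argument as the paper: $(\Leftarrow)$ is Proposition~\ref{Prop: definable classes are pseudo-varieties} with $\calA = \calF$, and $(\Rightarrow)$ takes $\Phi := \Th(\calV)$ and invokes Corollary~\ref{Cor: Mod(Th(V)) = V}\,(b).

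One small remark: your justification that $\calF$ is a pseudo-variety is more complicated than it needs to be, and the sort-accumulation argument you sketch (``each finite restriction being a quotient of a finite algebra'') gets the direction of the quotient map mixed up and would not by itself show finite generation. The clean observation is simply that a pseudo-variety is \emph{by definition} a class of finitary algebras, so the closure conditions only need to be checked for finitary candidates; since $\calF$ already contains every finitary $\bbM$-algebra, all three closure properties hold vacuously. No Lemma~\ref{Lem: FX projective}-style reasoning is required.
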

\begin{proof}
$(\Leftarrow)$ was already proved in Proposition~\ref{Prop: definable classes are pseudo-varieties},
and $(\Rightarrow)$ follows by Corollary~\ref{Cor: Mod(Th(V)) = V}
since $\calV = \Mod_\calF(\Th(\calV))$.
\end{proof}

\section{Logics}   
\label{Sect:logic}

A major application of algebraic language theory consists in deriving criteria
for when a language is definable in a given logic.
In this section we will introduce an abstract framework covering
a large number of the logics used in practice.
Our focus will be on isolating some abstract properties of a logic
ensuring that the corresponding language family forms a variety and,
thus, fits into our framework.
In the next section we will then investigate what it means for a language
to be definable in a given logic.

Over the years several abstract logical frameworks have been in use, most of
them not developed enough to be ever published.
Among the major ones are the framework for abstract model theory proposed by
Barwise (see, e.g.,~\cite{BarwiseFeferman85}),
the notion of an \emph{abstract elementary class} introduced by Shelah
(see, e.g.,~\cite{Baldwin09}),
and the theory of \emph{institutions} developed by
Goguen and Burstall (see, e.g.,~\cite{Diaconescu08}).
The framework presented here is somewhat similar to the latter,
the main difference being that, in the following definition, we do not equip our class
of models with the structure of a category.
\begin{defi}
(a) A \emph{logic} is a triple $\langle L,\calM,{\models}\rangle$
consisting of a ($\Xi$-sorted, unordered) set~$L$ of \emph{formulae,}
a ($\Xi$-sorted, unordered) class~$\calM$ of \emph{models,}
and a \emph{satisfaction relation} ${\models} \subseteq \calM \times L$.
To keep notation light, we usually identify a logic with its set of formulae~$L$.

(b) A \emph{morphism} of logics
$\langle\lambda,\mu\rangle : \langle L,\calM,{\models}\rangle \to \langle L',\calM',{\models'}\rangle$
consists of two functions $\lambda : L \to L'$ and $\mu : \calM' \to \calM$ such that
\begin{align*}
  M' \models' \lambda(\varphi) \quad\iff\quad \mu(M') \models \varphi\,,
  \quad\text{for all } \varphi \in L_\xi \text{ and } M' \in \calM'_\xi\,.
\end{align*}
We denote the category of all logics and their morphisms by $\Log$.

(c) The \emph{$L$-theory} of a model $M \in \calM_\xi$ is
\begin{align*}
  \Th_L(M) := \set{ \varphi \in L_\xi }{ M \models \varphi }\,.
\end{align*}
For two models $M$~and~$N$, we define
\begin{alignat*}{-1}
  M &\sqsubseteq_L N &&\quad\defiff\quad &\Th_L(M) &\subseteq \Th_L(N)\,, \\
  M &\equiv_L N &&\quad\defiff\quad &\Th_L(M) &= \Th_L(N)\,.
\end{alignat*}

(d) The class of \emph{models} of a formula $\varphi \in L_\xi$ is the set
\begin{align*}
  \Mod(\varphi) := \set{ M \in \calM_\xi }{ M \models \varphi }\,.
\end{align*}
A~class $\calC \subseteq \calM_\xi$ is \emph{$L$-definable} if
$\calC = \Mod(\varphi)$, for some $\varphi \in L_\xi$.

(e) A logic~$L$ is \emph{lattice closed} if the collection of all $L$-definable
classes is closed under finite intersections and unions.
\markenddef
\end{defi}

As an example, for a given signature~$\Sigma$, a set~$X_1$ of first-order variables,
and a set $X_2$~of set variables, we can define \emph{monadic second-order logic} as
\begin{align*}
  \bigl\langle \MSO[\Sigma,X_1,X_2],\, \Alg[\Sigma,X_1,X_2],\, {\models}\bigr\rangle
\end{align*}
where $\MSO[\Sigma,X_1,X_2]$ is the set of all monadic second-order formulae over the
signature~$\Sigma$ with free first-order variables in~$X_1$ and
free monadic second-order variables in~$X_2$\?;
and $\Alg[\Sigma,X_1,X_2]$ is the set of all triples $\langle\frakA,\beta_1,\beta_2\rangle$
where $\frakA$~is a $\Sigma$-structure and $\beta_1 : X \to A$ and $\beta_2 : X \to \PSet(A)$
are variable assignments.

Every $\MSO$-interpretation~$\tau$ (from the signature~$\Sigma$ to~$\Gamma$) gives rise
to a morphism $\MSO[\Gamma,\emptyset,\emptyset] \to \MSO[\Sigma,\emptyset,\emptyset]$
since, according to the Interpretation Lemma, we can construct, for every formula
$\varphi \in \MSO[\Gamma,\emptyset,\emptyset]$,
some formula $\varphi^\tau \in \MSO[\Sigma,\emptyset,\emptyset]$ with
\begin{align*}
  \tau(\frakA) \models \varphi \quad\iff\quad \frakA \models \varphi^\tau,
  \quad\text{for all $\Sigma$-structures } \frakA\,.
\end{align*}

Let us isolate a few simple conditions for when a class of models is definable.
\begin{lem}\label{Lem: L-definable iff closed under L-equivalence}
Let $\langle L,\calM,{\models}\rangle$ and $\langle L',\calM',{\models}\rangle$ be lattice-closed
logics.
\begin{enuma}
\item A class $\calC \subseteq \calM_\xi$ is $L$-definable if, and only if,
  there exists a finite subset $\Delta \subseteq L_\xi$ such that
  \begin{align*}
    M \in \calC \qtextq{and} M \sqsubseteq_\Delta N \qtextq{implies} N \in \calC\,.
  \end{align*}
\item For sort-wise finite sets $\Delta \subseteq L$ and $\Delta' \subseteq L'$, and a function
  $f : \calM \to \calM'$ the following two statements are equivalent\?:
  \begin{enum1}
  \item $M \sqsubseteq_\Delta N \qtextq{implies} f(M) \sqsubseteq_{\Delta'} f(N)\,.$
  \item The preimage $f^{-1}[\calC']$ of a $\Delta'$-definable class $\calC' \subseteq \calM'$
    is $\Delta$-definable.
  \end{enum1}
\end{enuma}
\end{lem}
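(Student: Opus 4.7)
My plan is to prove~(a) directly and then obtain~(b) as two symmetric applications of~(a). The heart of the argument is~(a), which converts a closure condition on~$\calC$ under the preorder~$\sqsubseteq_\Delta$ into a concrete definition of~$\calC$ as a positive Boolean combination of formulas from~$\Delta$.

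For~(a), the forward implication is immediate: if $\calC = \Mod(\varphi)$, take $\Delta := \{\varphi\}$, and observe that $M \in \calC$ gives $\varphi \in \Th_\Delta(M) \subseteq \Th_\Delta(N)$, whence $N \in \calC$. For the backward implication, fix a finite $\Delta \subseteq L_\xi$ with the stated closure property. Since $\Delta$~is finite, the family of realised $\Delta$-types $\calT := \set{ \Th_\Delta(M) }{ M \in \calC }$ is finite as well, and I~would then establish the equality
\begin{align*}
  \calC = \bigcup_{T \in \calT} \bigcap_{\varphi \in T} \Mod(\varphi)\,.
\end{align*}
The inclusion~$\subseteq$ is witnessed by $T := \Th_\Delta(M)$\?; for~$\supseteq$, any~$N$ on the right-hand side satisfies $T \subseteq \Th_\Delta(N)$ for some $T = \Th_\Delta(M)$ with $M \in \calC$, i.e.\ $M \sqsubseteq_\Delta N$, so the closure hypothesis yields $N \in \calC$. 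Since each $\Mod(\varphi)$ is $L$-definable, lattice-closure of~$L$ turns this finite union of finite intersections into an $L$-definable class.

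For (1)~$\Rightarrow$~(2) in~(b), let $\calC' \subseteq \calM'_\xi$ be $\Delta'$-definable and set $\calC := f^{-1}[\calC']$. The proof of~(a) already shows that any $\Delta'$-definable class is closed under~$\sqsubseteq_{\Delta'}$. So if $M \in \calC$ and $M \sqsubseteq_\Delta N$, then~(1) yields $f(M) \sqsubseteq_{\Delta'} f(N)$, hence $f(N) \in \calC'$ and $N \in \calC$. Thus $\calC$~is closed under~$\sqsubseteq_\Delta$, and a second appeal to~(a) exhibits it as a positive Boolean combination of formulas from~$\Delta$, i.e.\ as $\Delta$-definable.

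Conversely, for (2)~$\Rightarrow$~(1), suppose $M \sqsubseteq_\Delta N$ and pick any $\varphi \in \Delta'$ with $f(M) \models \varphi$. The class $\calC' := \Mod(\varphi)$ is trivially $\Delta'$-definable, so~(2) makes $f^{-1}[\calC']$ $\Delta$-definable, hence closed under~$\sqsubseteq_\Delta$ by~(a). Since $M \in f^{-1}[\calC']$, this forces $f(N) \models \varphi$, and varying~$\varphi$ gives $f(M) \sqsubseteq_{\Delta'} f(N)$. The only delicate point worth highlighting is that, because the ambient logics are not assumed closed under negation, one must consistently work with the one-sided preorder~$\sqsubseteq$ together with \emph{positive} Boolean combinations in place of the more familiar~$\equiv$ and unrestricted Boolean combinations\?; this is precisely where the lattice-closure hypothesis enters.
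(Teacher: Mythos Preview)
Your proof is correct and follows essentially the same approach as the paper's. The only cosmetic difference is in the direction $(2)\Rightarrow(1)$ of part~(b): the paper bundles all formulae satisfied by~$f(M)$ into the single $\Delta'$-definable class $\calC'_M := \set{ H }{ f(M) \sqsubseteq_{\Delta'} H }$ and applies~(2) once, whereas you apply~(2) to each $\Mod(\varphi)$ with $\varphi \in \Th_{\Delta'}(f(M))$ separately --- the two arguments are interchangeable.
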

\begin{proof}
(a)
$(\Rightarrow)$
Let $\varphi \in L_\xi$ be a formula defining~$\calC$ and set $\Delta := \{\varphi\}$.
Suppose that $M \in \calC$ and $M \sqsubseteq_\Delta N$.
Then $M \models \varphi$, which implies that $N \models \varphi$.
Hence, $N \in \calC$.

$(\Leftarrow)$
Set
\begin{align*}
  \varphi := \Lor {\bigset{ \Land\Th_\Delta(M) }{ M \in \calC }}\,.
\end{align*}
Note that this disjunction is finite since there are only finitely many subsets of~$\Delta_\xi$.
For $N \in \calM$, it follows that
\begin{align*}
  N \models \varphi
  &\quad\iff\quad
  N \models \Land\Th_\Delta(M)\,, \quad\text{for some } M \in \calC \\
  &\quad\iff\quad
  M \sqsubseteq_\Delta N\,, \quad\text{for some } M \in \calC \\
  &\quad\iff\quad
  N \in \calC\,.
\end{align*}

(b) $(1) \Rightarrow (2)$
Suppose that $\calC' \subseteq \calM'$ is $\Delta'$-definable.
By~(a) (applied to the logic~$\Delta$ instead of~$L$), it is sufficient to show
that $M \in f^{-1}[\calC']$ and $M \sqsubseteq_\Delta N$ implies $N \in f^{-1}[\calC']$.
Hence, let $M \in f^{-1}[\calC']$ and $M \sqsubseteq_\Delta N$.
Then we have $f(M) \in \calC'$ and $f(M) \sqsubseteq_{\Delta'} f(N)$, by~(1).
Consequently, (a)~implies that $f(N) \in \calC'$, that is, $N \in f^{-1}[\calC']$.

$(2) \Rightarrow (1)$
Suppose that $M \sqsubseteq_\Delta N$.
To show that $f(M) \sqsubseteq_{\Delta'} f(N)$, we consider the class
$\calC'_M := \set{ H \in \calM' }{ f(M) \sqsubseteq_{\Delta'} H }$.
Note that $\calC'_M$~is $\Delta'$-definable by~(a).
By~(2), we know that $f^{-1}[\calC'_M]$ is $\Delta$-definable.
Consequently, it follows by~(a) that
$M \in f^{-1}[\calC'_M]$ and $M \sqsubseteq_\Delta N$ implies $N \in f^{-1}[\calC'_M]$,
that is, $f(M) \sqsubseteq_{\Delta'} f(N)$.
\end{proof}

\begin{lem}\label{Lem: morphisms of logics preserve definability}
Let $\langle L,\calM,{\models}\rangle$ and $\langle L',\calM',{\models}\rangle$ be logics and
$\mu : \calM' \to \calM$ a function.
The following statements are equivalent.
\begin{enum1}
\item There exists a function $\lambda : L \to L'$ such that
  $\langle\lambda,\mu\rangle : \langle L,\calM,{\models}\rangle \to \langle L',\calM',{\models}\rangle$
  is a morphism of logics.
\item If $\calC \subseteq \calM$ is $L$-definable, then $\mu^{-1}[\calC]$~is $L'$-definable.
\end{enum1}
If $L'$~is lattice closed, the following statement is also equivalent to those above.
\begin{enum1}[start=3]
\item For every sort-wise finite $\Delta \subseteq L$, there exists a sort-wise finite
  $\Delta' \subseteq L'$ such that
  \begin{align*}
    M \sqsubseteq_{\Delta'} N \qtextq{implies} \mu(M) \sqsubseteq_\Delta \mu(N)\,.
  \end{align*}
\end{enum1}
\end{lem}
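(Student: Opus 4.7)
The plan is to establish $(1)\Leftrightarrow(2)$ first -- both directions are essentially unpackings of the definition of a morphism of logics and need no lattice-closure hypothesis -- and then to deduce $(2)\Leftrightarrow(3)$ using the lattice-closure assumption on~$L'$ together with Lemma~\ref{Lem: L-definable iff closed under L-equivalence}(a). For $(1)\Rightarrow(2)$, if $\calC = \Mod(\varphi)$ with $\varphi \in L_\xi$, then the defining biconditional of the morphism immediately gives $\mu^{-1}[\calC] = \Mod(\lambda(\varphi))$. For $(2)\Rightarrow(1)$, I~would define~$\lambda$ pointwise\?: for each $\varphi \in L_\xi$ apply~$(2)$ to the $L$-definable class $\Mod(\varphi)$ to pick some $\lambda(\varphi) \in L'_\xi$ with $\mu^{-1}[\Mod(\varphi)] = \Mod(\lambda(\varphi))$. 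Sort-preservation of~$\mu$ ensures this choice can be made in the correct sort, and the morphism equation then holds by construction.

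Now assume $L'$~is lattice closed. For $(2)\Rightarrow(3)$, given a sort-wise finite $\Delta \subseteq L$, I~would apply~$(2)$ formula-by-formula\?: for each $\varphi \in \Delta$ pick $\psi_\varphi \in L'$ of the same sort as~$\varphi$ with $\mu^{-1}[\Mod(\varphi)] = \Mod(\psi_\varphi)$, and set $\Delta' := \set{ \psi_\varphi }{ \varphi \in \Delta }$. Then $\Delta'$~is sort-wise finite, and unwinding both preorders yields directly that $M \sqsubseteq_{\Delta'} N$ is equivalent to $\mu(M) \sqsubseteq_\Delta \mu(N)$. For $(3)\Rightarrow(2)$, given an $L$-definable class $\calC = \Mod(\varphi) \subseteq \calM_\xi$, I~would apply~$(3)$ to the singleton $\Delta := \{\varphi\}$ to obtain a sort-wise finite $\Delta' \subseteq L'$ with $M \sqsubseteq_{\Delta'} N$ implying $\mu(M) \sqsubseteq_{\{\varphi\}} \mu(N)$. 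Hence $\mu^{-1}[\calC]$ is closed under the preorder induced by the sort-$\xi$ part of~$\Delta'$, and Lemma~\ref{Lem: L-definable iff closed under L-equivalence}(a) -- which is exactly where lattice closure of~$L'$ enters -- then certifies that $\mu^{-1}[\calC]$ is $L'$-definable.

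No serious obstacle is expected\?: once Lemma~\ref{Lem: L-definable iff closed under L-equivalence}(a) is recognised as the finite-approximation tool, the arguments are essentially mechanical. The one point requiring mild care is sort bookkeeping, namely that $\lambda(\varphi)$ and each $\psi_\varphi$ can be chosen of the same sort as~$\varphi$\?; but this is forced by the model side of the defining biconditional, given that $\mu$~is necessarily sort-preserving.
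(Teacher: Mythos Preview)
Your proposal is correct and follows essentially the same route as the paper. The only cosmetic difference is that the paper organises the second equivalence as $(1)\Rightarrow(3)$ (setting $\Delta' := \lambda[\Delta]$) and $(3)\Rightarrow(2)$, whereas you route through $(2)\Rightarrow(3)$ by picking the $\psi_\varphi$ directly; since your $(2)\Rightarrow(1)$ already constructs exactly such a $\lambda$, the two arguments coincide.
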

\begin{proof}
(1)~$\Rightarrow$~(2)
Let $\varphi \in L$ be a formula defining~$\calC$. For $M \in \calM$, it follows that
\begin{align*}
  M \in \mu^{-1}[\calC]
  \quad\iff\quad
  \mu(M) \in \calC
  \quad\iff\quad
  \mu(M) \models \varphi
  \quad\iff\quad
  M \models \lambda(\varphi)\,.
\end{align*}
Thus, $\lambda(\varphi)$ defines $\mu^{-1}[\calC]$.

(2)~$\Rightarrow$~(1)
We define $\lambda : L \to L'$ as follows. For each $\varphi \in L$,
the class $\Mod(\varphi)$ is obviously $L$-definable.
By assumption it follows that the preimage $\mu^{-1}[\Mod(\varphi)]$
is defined by some formula $\varphi' \in L'$.
We set $\lambda(\varphi) := \varphi'$.

To see that $\langle\lambda,\mu\rangle$ is a morphism of logics, fix $M \in \calM'$ and
$\varphi \in L$. Then
\begin{align*}
  \mu(M) \models \varphi
  &\quad\iff\quad
  \mu(M) \in \Mod(\varphi) \\
  &\quad\iff\quad
  M \in \mu^{-1}[\Mod(\varphi)]
  \quad\iff\quad
  M \models \lambda(\varphi)\,.
\end{align*}

(1)~$\Rightarrow$~(3)
Given $\Delta \subseteq L$, we set $\Delta' := \lambda[\Delta]$.
Suppose that $M \sqsubseteq_{\Delta'} N$.
For every $\varphi \in \Delta$, we then have the implications
\begin{align*}
  \mu(M) \models \varphi
  \quad\Rightarrow\quad
  M \models \lambda(\varphi)
  \quad\Rightarrow\quad
  N \models \lambda(\varphi)
  \quad\Rightarrow\quad
  \mu(N) \models \varphi\,.
\end{align*}
Consequently, $\mu(M) \sqsubseteq_\Delta \mu(N)$.

(3)~$\Rightarrow$~(2)
Let $\calC \subseteq \calM$ be defined by the formula $\varphi \in L$.
By assumption, there is some sort-finite set $\Delta' \subseteq L'$ such that
\begin{align*}
  M \sqsubseteq_{\Delta'} N \qtextq{implies} \mu(M) \sqsubseteq_{\varphi} \mu(N)\,.
\end{align*}
We use Lemma~\ref{Lem: L-definable iff closed under L-equivalence}\,(a) to show
that $\mu^{-1}[\calC]$ is $\Delta'$-definable.
Hence, suppose that $M \sqsubseteq_{\Delta'} N$ and $M \in \mu^{-1}[\calC]$.
Then $\mu(M) \sqsubseteq_{\varphi} \mu(N)$ and, therefore,
\begin{align*}
  M \in \mu^{-1}[\calC]
  \quad\Rightarrow\quad
  \mu(M) \models \varphi
  \quad\Rightarrow\quad
  \mu(N) \models \varphi
  \quad\Rightarrow\quad
  N \in \mu^{-1}[\calC]\,.
\end{align*}
\upqed
\end{proof}

Here, we are mainly interested in logics whose set of models is of the form $\bbM\Sigma$
with $\Sigma \in \Alp$, as these can be used to define languages.
As with families of languages, we also need to consider families of logics indexed by
the alphabet used.
\begin{defi}
(a) A logic~$L$ is \emph{over} an alphabet~$\Sigma$ if its class of models is~$\bbM\Sigma$.

(b) A \emph{family of logics} is a functor $L : \Alp \to \Log$ such that
\begin{itemize}
\item for every alphabet~$\Sigma$, the image~$L[\Sigma]$ is a logic over~$\Sigma$,
\item for every function $f : \Sigma \to \Gamma$, the image~$L[f]$
  is a morphism $\langle\lambda,\mu\rangle$ of logics with $\mu = \bbM f$.
\end{itemize}

(c) Let $L$~be a family of logics.
A family of languages~$\calK$ is \emph{$L$-definable} if
\begin{align*}
  \calK_\xi[\Sigma] \subseteq \set{ \Mod(\varphi) }{ \varphi \in L_\xi[\Sigma] }\,,
  \quad\text{for all } \Sigma \in \Alp \text{ and } \xi \in \Xi\,.
\end{align*}

(d) Let $L$~be a family of logics and $A$~a finite ordered set.
We call a subset $K \subseteq \bbM A$ \emph{$L$-definable,}
if its unordered version $(\bbM\iota)^{-1}[K] \subseteq \bbM\bbV A$ is $L$-definable.

(e) A family~$L$ of logics is \emph{varietal} if the class of all $L$-definable
languages forms a variety of languages.

(f) We call a family of logics~$L$ \emph{(sort-wise) finite} if, for every alphabet~$\Sigma$,
the set of formulae $L[\Sigma]$ is (sort-wise) finite (up to logical equivalence).

(g) To keep notation light we will drop the signature from the notation
in cases where it is understood. Thus, we frequently write $L$ instead of $L[\Sigma]$.
\markenddef
\end{defi}

For the word monad $\bbM A := A^+$ and monadic second-order logic, we can define
a~family $\MSO$ that maps an alphabet~$\Sigma$ to the logic
$\MSO[\hat\Sigma,\emptyset,\emptyset]$ where
\begin{align*}
  \hat\Sigma := \{E,{\leq}\} \cup \set{ P_a }{ a \in \Sigma }
\end{align*}
is the signature consisting of the successor relation~$E$,
the ordering~$\leq$, and predicates~$P_a$ for all letters in~$\Sigma$.

As the notion of a logic is very general,
there is not much one can prove for an arbitrary logic.
To get non-trivial statements we need some kind of restriction.
As languages come equipped with a monadic composition operation,
it is natural to require our logics to be well-behaved under this form of composition.
This leads to the following definition.
\begin{defi}
A family~$L$ of logics is \emph{$\bbM$-compositional} if, for every finite subfamily
$\Phi \subseteq L$, there exists some sort-wise finite subfamily $\Phi \subseteq \Delta \subseteq L$
such that, for all alphabets~$\Sigma$,
the relation~$\sqsubseteq_{\Delta[\Sigma]}$ is a congruence ordering on~$\bbM\Sigma$.
\markenddef
\end{defi}
For instance, for words $u,u',v,v' \in \Sigma^+$ we have
\begin{align*}
  u \equiv_{\MSO_m} u' \qtextq{and} v \equiv_{\MSO_m} v'
  \qtextq{implies}
  uv \equiv_{\MSO_m} u'v'\,,
\end{align*}
where $\MSO_m$ denotes the set of $\MSO$-formulae of quantifier rank at most~$m$.
Consequently, $\MSO$~is $\bbM$-compositional for the word monad $\bbM A = A^+$.

The importance of $\bbM$-compositionality stems from the fact
the set of theories of such a logic forms an $\bbM$-algebra.
\begin{prop}\label{Prop: theory algebras}
A family of logics~$L$ is $\bbM$-compositional if, and only if, for every finite subfamily
$\Phi \subseteq L$, there exist
\begin{itemize}
\item a sort-wise finite subfamily $\Phi \subseteq \Delta \subseteq L$,
\item a functor $\Theta_\Delta : \Alp \to \Alg(\bbM)$, and
\item a surjective natural transformation
  $\theta_\Delta : (\bbM \restriction \Alp) \Rightarrow \Theta_\Delta$
\end{itemize}
such that
\begin{align*}
  s \sqsubseteq_{\Delta[\Sigma]} t \quad\iff\quad \theta_\Delta(s) \leq \theta_\Delta(t)\,,
  \quad\text{for all } s,t \in \bbM\Sigma\,.
\end{align*}
\end{prop}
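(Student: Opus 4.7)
The overall plan is to use the quotient construction of Section~\ref{Sect:quotients} as the bridge between the two conditions, with Proposition~\ref{Prop: characterisations of congruences} doing the heavy lifting in both directions. For the easy $(\Leftarrow)$ direction, suppose we are given $\Delta$, $\Theta_\Delta$, and $\theta_\Delta$ as in the statement. Then for each alphabet $\Sigma$, the component $\theta_\Delta[\Sigma] : \bbM\Sigma \to \Theta_\Delta(\Sigma)$ is a morphism of $\bbM$-algebras whose kernel is, by hypothesis, precisely $\sqsubseteq_{\Delta[\Sigma]}$. Proposition~\ref{Prop: characterisations of congruences}, equivalence $(2)\Rightarrow(1)$, immediately yields that $\sqsubseteq_{\Delta[\Sigma]}$ is a congruence ordering, so $L$ is $\bbM$-compositional.

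For $(\Rightarrow)$, given a finite $\Phi \subseteq L$, I would invoke $\bbM$-compositionality to pick a sort-wise finite subfamily $\Delta \supseteq \Phi$ such that each $\sqsubseteq_{\Delta[\Sigma]}$ is a congruence ordering on~$\bbM\Sigma$. The obvious candidates are then $\Theta_\Delta(\Sigma) := \bbM\Sigma/{\sqsubseteq_{\Delta[\Sigma]}}$, endowed with the quotient $\bbM$-algebra structure guaranteed by Proposition~\ref{Prop: quotient map is morphism}, together with the corresponding quotient morphism $\theta_\Delta[\Sigma] : \bbM\Sigma \to \Theta_\Delta(\Sigma)$. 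Surjectivity of each $\theta_\Delta[\Sigma]$ and the identity $\ker\theta_\Delta[\Sigma] = {\sqsubseteq_{\Delta[\Sigma]}}$ are immediate from the construction, giving the required characterisation.

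The real work is upgrading the object-wise assignment $\Sigma \mapsto \Theta_\Delta(\Sigma)$ to a functor $\Alp \to \Alg(\bbM)$ and making $\theta_\Delta$ natural. Given $f : \Sigma \to \Gamma$ in~$\Alp$, I would define $\Theta_\Delta(f)$ via the Factorisation Lemma together with Lemma~\ref{Lem: factorisation is morphism} as the unique morphism satisfying $\Theta_\Delta(f) \circ \theta_\Delta[\Sigma] = \theta_\Delta[\Gamma] \circ \bbM f$. Its existence reduces to the inclusion $\ker\theta_\Delta[\Sigma] \subseteq \ker(\theta_\Delta[\Gamma] \circ \bbM f)$, i.e.\ to
\begin{align*}
  s \sqsubseteq_{\Delta[\Sigma]} t \qtextq{implies} \bbM f(s) \sqsubseteq_{\Delta[\Gamma]} \bbM f(t)\,.
\end{align*}
Functoriality (preservation of identities and composition) then follows from uniqueness in the Factorisation Lemma, and the defining equation is exactly naturality of $\theta_\Delta$.

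The delicate step, and the main obstacle, is proving this last implication. It hinges on the functoriality of $L : \Alp \to \Log$: for each $\psi \in \Delta[\Gamma]$, the morphism of logics $L[f]$ should supply a companion $\lambda(\psi) \in L[\Sigma]$ characterised semantically by $u \models \lambda(\psi) \iff \bbM f(u) \models \psi$, and then the implication follows by transporting truth of $\lambda(\psi)$ from $s$ to $t$. For this argument to land inside $\Delta[\Sigma]$ rather than merely $L[\Sigma]$, one must be able to choose $\Delta$ as a genuine subfunctor of~$L$, i.e.\ closed under all such pullbacks. If ``subfamily'' in the definition of $\bbM$-compositionality is read as ``subfunctor'' this is automatic; otherwise one has to iteratively close~$\Delta$ under pullbacks and verify that sort-wise finiteness is preserved. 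In the intended applications (for instance quantifier-rank truncations of $\MSO$, where pullback along $\bbM f$ does not increase quantifier rank) this is harmless, and that is precisely the regime the proposition is aimed at.
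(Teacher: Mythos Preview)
Your argument is essentially the paper's: same quotient construction $\Theta_\Delta(\Sigma) := \bbM\Sigma/{\sqsubseteq_{\Delta[\Sigma]}}$, same use of the Factorisation Lemma (plus Lemma~\ref{Lem: factorisation is morphism}) to define $\Theta_\Delta f$, and the same formula-pullback via $L[f] = \langle\lambda,\bbM f\rangle$ to establish the kernel inclusion. The subtlety you flag --- that one needs $\lambda(\varphi) \in \Delta[\Sigma]$ for each $\varphi \in \Delta[\Gamma]$ --- is real; the paper performs that step without comment, tacitly treating a ``subfamily'' as stable under each~$\lambda$.
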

\begin{proof}
$(\Leftarrow)$
Given $\Phi \subseteq L$, choose $\Phi \subseteq \Delta \subseteq L$ such that
$s \sqsubseteq_{\Delta[\Sigma]} t$ is equivalent to $\theta_\Delta(s) \leq \theta_\Delta(t)$.
This implies that the relation~$\sqsubseteq_{\Delta[\Sigma]}$ is equal to the kernel
of~$\theta_\Delta$, which is a congruence ordering.

$(\Rightarrow)$
Given $\Phi \subseteq L$, choose $\Phi \subseteq \Delta \subseteq L$ such that
$\sqsubseteq_\Delta$~is a congruence ordering.
We set $\Theta_\Delta\Sigma := \bbM\Sigma/{\sqsubseteq_{\Delta[\Sigma]}}$ and choose for
$\theta_\Delta : \bbM\Sigma \to \bbM\Sigma/{\sqsubseteq_{\Delta[\Sigma]}}$ the quotient map.
Given a function $f : \Sigma \to \Gamma$,
we define the morphism $\Theta_\Delta f : \Theta_\Delta\Sigma \to \Theta_\Delta\Gamma$
as follows.
We will show that $\ker \theta_\Delta \subseteq \ker {(\theta_\Delta \circ \bbM f)}$.
Then we can apply the Factorisation Lemma to obtain a unique morphism
$\psi : \Theta_\Delta\Sigma \to \Theta_\Delta\Gamma$ with
\begin{align*}
  \psi \circ \theta_\Delta = \theta_\Delta \circ \bbM f\,,
\end{align*}
and we set $\Theta_\Delta f := \psi$.

To prove the above claim, note that,
by definition of a family of logics, $L[f] = \langle\lambda,\bbM f\rangle$ is a morphism of logics.
Given $s,t \in \bbM\Sigma$ with $s \sqsubseteq_{\Delta[\Sigma]} t$ and a formula
$\varphi \in \Delta$, it therefore follows that
\begin{align*}
  \bbM f(s) \models \varphi
  \quad\Rightarrow\quad
  s \models \lambda(\varphi)
  \quad\Rightarrow\quad
  t \models \lambda(\varphi)
  \quad\Rightarrow\quad
  \bbM f(t) \models \varphi\,.
\end{align*}
Hence, $\bbM f(s) \sqsubseteq_{\Delta[\Gamma]} \bbM f(t)$, as desired.

From the definition, it immediately follows that $\theta_\Delta$~is a natural transformation
$\bbM \Rightarrow \Theta_\Delta$ since
$\Theta_\Delta f \circ \theta_\Delta = \theta_\Delta \circ \bbM f$.
Hence, it remains to show that $\Theta_\Delta$~is a functor.
Consider two functions $f : \Sigma \to \Gamma$ and $g : \Gamma \to \Upsilon$.
By the equation we have just established, we have
\begin{align*}
  \Theta_\Delta(g \circ f) \circ \theta_\Delta
   = \theta_\Delta \circ \bbM(g \circ f)
  &= \theta_\Delta \circ \bbM g \circ \bbM f \\
  &= \Theta_\Delta g \circ \theta_\Delta \circ \bbM f
   = \Theta_\Delta g \circ \Theta_\Delta f \circ \theta_\Delta\,.
\end{align*}
By surjectivity of~$\theta_\Delta$, this implies that
$\Theta_\Delta(g \circ f) = \Theta_\Delta g \circ \Theta_\Delta f$.
\end{proof}
It follows immediately from the definition that the theory algebra~$\Theta_\Delta\Sigma$
recognises every $\Delta$-definable language.
\begin{lem}\label{Lem: Theta recognises Delta-definable languages}
The morphism $\theta_\Delta : \bbM\Sigma \to \Theta_\Delta\Sigma$ recognises
every $\Delta$-definable language $K \subseteq \bbM\Sigma$.
\end{lem}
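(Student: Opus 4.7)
The proof will be a short direct computation, and Proposition \ref{Prop: theory algebras} does essentially all the work. Here is the plan.

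Let $K \subseteq \bbM_\xi\Sigma$ be $\Delta$-definable, so $K = \Mod(\varphi)$ for some $\varphi \in \Delta_\xi[\Sigma]$. I would then set
\begin{align*}
  P := \Aboveseg \theta_\Delta[K] \subseteq \Theta_{\Delta,\xi}\Sigma\,,
\end{align*}
which is upwards closed by construction, and check that $K = \theta_\Delta^{-1}[P]$.

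The inclusion $K \subseteq \theta_\Delta^{-1}[P]$ is immediate, since $\theta_\Delta[K] \subseteq P$. For the converse, suppose $t \in \bbM_\xi\Sigma$ satisfies $\theta_\Delta(t) \in P$. By the definition of~$P$ there is some $s \in K$ with $\theta_\Delta(s) \leq \theta_\Delta(t)$. Proposition~\ref{Prop: theory algebras} translates this into $s \sqsubseteq_{\Delta[\Sigma]} t$, i.e.\ $\Th_{\Delta[\Sigma]}(s) \subseteq \Th_{\Delta[\Sigma]}(t)$. Since $\varphi \in \Delta$ and $s \models \varphi$, it follows that $t \models \varphi$, so $t \in K$.

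There is no real obstacle here: the whole point of Proposition~\ref{Prop: theory algebras} is to turn satisfaction of $\Delta$-formulae into an order in $\Theta_\Delta\Sigma$, and the upwards-closure trick is the standard way to convert a language that is saturated under $\sqsubseteq_{\Delta[\Sigma]}$ into the preimage of an upwards-closed set. The only detail worth mentioning explicitly in the write-up is that one must take the upward closure of $\theta_\Delta[K]$ (rather than $\theta_\Delta[K]$ itself) so that the recognising set~$P$ is genuinely upwards closed as required by the definition of recognisability.
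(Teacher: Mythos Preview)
Your proof is correct and is essentially identical to the paper's: both set $P := \Aboveseg\theta_\Delta[K]$ and use the equivalence between $\theta_\Delta(s) \leq \theta_\Delta(t)$ and $s \sqsubseteq_{\Delta[\Sigma]} t$ from Proposition~\ref{Prop: theory algebras}. The only cosmetic difference is that the paper cites Lemma~\ref{Lem: L-definable iff closed under L-equivalence}\,(a) for the final step, whereas you unfold the definition of $\sqsubseteq_\Delta$ directly to conclude $t \models \varphi$.
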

\begin{proof}
We claim that $K = \theta_\Delta^{-1}[P]$ where
\begin{align*}
  P := \set{ \sigma }{ \theta_\Delta(s) \leq \sigma \text{ for some } s \in K }\,.
\end{align*}
Clearly, $\theta_\Delta[K] \subseteq P$. Conversely, we have
\begin{alignat*}{-1}
  \theta_\Delta(t) \in P
  &\quad\Rightarrow\quad
  \theta_\Delta(s) \leq \theta_\Delta(t)\,, &&\quad\text{for some } s \in K\,, \\
  &\quad\Rightarrow\quad
  s \sqsubseteq_\Delta t\,, &&\quad\text{for some } s \in K\,, \\
  &\quad\Rightarrow\quad
  t \in K\,,
\end{alignat*}
where the last step follows by Lemma~\ref{Lem: L-definable iff closed under L-equivalence}\,(a).
\end{proof}

Next, let us take a look at the closure properties of definable languages.
Our first observation concerns closure under inverse relabellings,
which holds for every logic~$L$.
Then we show that $\bbM$-compositionality implies, but is slightly stronger than,
closure under derivatives.
\begin{lem}\label{Lem: closure under inverse relabellings}
Let $L$~be a family of logics. The class of $L$-definable languages is closed
under inverse relabellings.
\end{lem}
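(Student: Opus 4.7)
The proof will be a direct application of the categorical axioms we already have in hand, so the plan is short. Fix a function $f : \Sigma \to \Gamma$ of alphabets and a language $K \subseteq \bbM_\xi\Gamma$ that is $L$-definable, say $K = \Mod(\varphi)$ for some formula $\varphi \in L_\xi[\Gamma]$. The goal is to exhibit a formula in $L_\xi[\Sigma]$ whose class of models equals $(\bbM f)^{-1}[K]$.

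The key observation is that $f$ is a morphism of $\Alp$, so by the very definition of a family of logics, $L[f]$ is a morphism of logics of the form $\langle\lambda, \bbM f\rangle : L[\Gamma] \to L[\Sigma]$. I will simply invoke Lemma~\ref{Lem: morphisms of logics preserve definability}, taking the pair $\langle L[\Gamma],\bbM\Gamma,{\models}\rangle$ as the source logic, $\langle L[\Sigma],\bbM\Sigma,{\models}\rangle$ as the target, and $\mu := \bbM f$. Condition~(1) of that lemma is satisfied by the witness $\lambda$ from $L[f]$; hence condition~(2) supplies a formula~$\psi \in L_\xi[\Sigma]$ with $\Mod(\psi) = (\bbM f)^{-1}[\Mod(\varphi)] = (\bbM f)^{-1}[K]$, as desired.

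If one prefers to avoid appealing to the lemma, the same argument can be unfolded in one line: setting $\psi := \lambda(\varphi)$ and using the defining property of a morphism of logics,
\begin{align*}
  s \in (\bbM f)^{-1}[K]
    \quad\iff\quad \bbM f(s) \models \varphi
    \quad\iff\quad s \models \lambda(\varphi)\,,
\end{align*}
so $\Mod(\lambda(\varphi)) = (\bbM f)^{-1}[K]$. There is no hard step here; the entire content of the lemma is repackaged in the functoriality clause of the definition of a family of logics, so no further structure on~$L$ (such as lattice-closure or $\bbM$-compositionality) is required.
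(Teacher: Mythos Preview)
Your proof is correct and essentially identical to the paper's own argument: both observe that $L[f]=\langle\lambda,\bbM f\rangle$ is a morphism of logics by the definition of a family of logics, and then invoke Lemma~\ref{Lem: morphisms of logics preserve definability} to conclude that $(\bbM f)^{-1}[K]$ is $L$-definable. Your additional one-line unfolding via $\psi:=\lambda(\varphi)$ is just the content of that lemma made explicit, not a different route.
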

\begin{proof}
If $f : \Sigma \to \Gamma$ is a morphism of $\Alp$,
it follows by the definition of a family of logics that there is some function~$\lambda$ such that
$L[f] = \langle \lambda,\bbM f\rangle$ is a morphism of logics.
Consequently, we can use Lemma~\ref{Lem: morphisms of logics preserve definability}
to show that $(\bbM f)^{-1}[K]$ is $L$-definable, for every
$L$-definable language $K \subseteq \bbM\Gamma$.
\end{proof}
\begin{lem}\label{Lem: closure under adding contexts}
Let $L$~be an $\bbM$-compositional family of logics, and let $\Delta \subseteq L$ be a subfamily
such that $\sqsubseteq_\Delta$~is a congruence ordering.
Then
\begin{align*}
  s \sqsubseteq_\Delta t \qtextq{implies} p[s] \sqsubseteq_\Delta p[t]\,,
  \quad\text{for all } s,t \in \bbM\Sigma \text{ and } p \in \bbM(\Sigma+\Box)\,.
\end{align*}
\end{lem}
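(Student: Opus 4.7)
\emph{Plan.} The approach is to reduce the claim to Lemma~\ref{Lem: applying context is monotonous}(a) by passing to the quotient algebra. Since $\sqsubseteq_\Delta$ is a congruence ordering on $\bbM\Sigma$, Proposition~\ref{Prop: quotient map is morphism} furnishes an $\bbM$-algebra $\frakB := \bbM\Sigma/{\sqsubseteq_\Delta}$ together with a quotient morphism $q : \bbM\Sigma \to \frakB$ whose kernel is precisely $\sqsubseteq_\Delta$. In particular, $s \sqsubseteq_\Delta t$ is equivalent to $q(s) \leq q(t)$ in~$\frakB$.

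The key technical ingredient is that context application commutes with morphisms of $\bbM$-algebras: for any morphism $\varphi : \frakA \to \frakC$ of $\bbM$-algebras, any $a \in A$, and any context $p \in \bbM(A + \Box)$, one has
\begin{align*}
  \varphi(p[a]) = \bbM(\varphi + 1)(p)\bigl[\varphi(a)\bigr].
\end{align*}
This is a direct consequence of the universal property of the free algebra $\bbM(A + \Box)$: both sides arise from applying the unique morphism $\bbM(A + \Box) \to \frakC$ that extends the function $A + \Box \to C$ sending $\Box$ to $\varphi(a)$ and $c \in A$ to $\varphi(c)$.

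Given $s \sqsubseteq_\Delta t$ and $p \in \bbM(\Sigma + \Box)$, I would first unfold the convention that identifies such a $p$ with $\bbM(\sing + 1)(p) \in \bbM(\bbM\Sigma + \Box)$, and then apply the commutativity identity to the morphism~$q$, obtaining
\begin{align*}
  q(p[s]) = p^*\bigl[q(s)\bigr]
  \qquad\text{and}\qquad
  q(p[t]) = p^*\bigl[q(t)\bigr],
\end{align*}
where $p^* := \bbM(q \circ \sing + 1)(p) \in \bbM(B + \Box)$. Since $q(s) \leq q(t)$, Lemma~\ref{Lem: applying context is monotonous}(a) applied in~$\frakB$ to the context $p^*$ gives $p^*[q(s)] \leq p^*[q(t)]$, whence $q(p[s]) \leq q(p[t])$ and therefore $p[s] \sqsubseteq_\Delta p[t]$.

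The only mildly tricky point is the bookkeeping around the free-algebra convention $\bbM(\sing + 1)$ and the functoriality identity $\bbM(q + 1) \circ \bbM(\sing + 1) = \bbM(q \circ \sing + 1)$; no deep idea is needed beyond the universal property of the free algebra and the monotonicity of context application already established for arbitrary $\bbM$-algebras. Note in passing that the hypothesis of $\bbM$-compositionality does not enter the proof directly: it is used only to ensure that a suitable $\Delta$ making $\sqsubseteq_\Delta$ a congruence ordering exists in the first place.
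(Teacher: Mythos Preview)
Your proof is correct and essentially identical to the paper's: the paper uses the notation $\theta_\Delta : \bbM\Sigma \to \Theta_\Delta\Sigma$ from Proposition~\ref{Prop: theory algebras} for what you call the quotient map $q : \bbM\Sigma \to \frakB$, sets $p' := \bbM(\theta_\Delta + 1)(p)$ (your $p^*$ after unfolding the $\bbM(\sing+1)$ convention), and invokes Lemma~\ref{Lem: applying context is monotonous}(a) in exactly the same way. You are slightly more explicit about the identity $\varphi(p[a]) = \bbM(\varphi+1)(p)[\varphi(a)]$, which the paper uses tacitly.
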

\begin{proof}
Set $p' := \bbM(\theta_\Delta + 1)(p)$.
By Lemma~\ref{Lem: applying context is monotonous}\,(a),
\begin{align*}
  a \leq b \qtextq{implies}
  p'[a] \leq p'[b]\,, \quad\text{for } a,b \in \Theta_\Delta\Sigma\,.
\end{align*}
Consequently,
\begin{align*}
  s \sqsubseteq_\Delta t
  &\quad\Rightarrow\quad
  \theta_\Delta(s) \leq \theta_\Delta(t) \\
  &\quad\Rightarrow\quad
  \theta_\Delta(p[s]) = p'(\theta_\Delta(s)) \leq p'(\theta_\Delta(t)) = \theta_\Delta(p[t]) \\
  &\quad\Rightarrow\quad
  p[s] \sqsubseteq_\Delta p[t]\,.
\end{align*}
\upqed
\end{proof}

Usually, the theory algebras~$\Theta_\Delta\Sigma$ from Proposition~\ref{Prop: theory algebras}
are not very well understood. (Otherwise, we would not need to introduce
a special algebraic framework to study definability questions.)
To shed a bit more light on what these algebras look like, we present an alternative
construction for the theory functor~$\Theta$.
\begin{defi}
Let~$L$ be a family of logics such that every $L$-definable language has a syntactic algebra.
The \emph{syntactic theory morphism} (for an alphabet~$\Sigma$) is
\begin{align*}
  \tilde\theta_L := \langle\syn_{\Mod(\varphi)}\rangle_{\varphi \in L[\Sigma]} :
  \bbM\Sigma \to \prod_{\varphi \in L[\Sigma]} \Syn(\Mod(\varphi))\,.
\end{align*}
\upqed
\markenddef
\end{defi}

\begin{lem}\label{Lem: kernel of tilde theta}
Let $L$~be a family of lattice-closed logics such that every $L$-definable language has
a syntactic algebra, and let $\Delta \subseteq L$ be sort-wise finite.
The following statements are equivalent.
\begin{enum1}
\item The class of $\Delta$-definable languages is closed under derivatives.
\item $s \sqsubseteq_\Delta t \quad\iff\quad \tilde\theta_\Delta(s) \leq \tilde\theta_\Delta(t)\,.$
\item $s \sqsubseteq_\Delta t \quad\Rightarrow\quad p[s] \sqsubseteq_\Delta p[t]\,,
  \quad\text{for all contexts } p\,.$
\end{enum1}
\end{lem}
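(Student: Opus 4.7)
The plan is to establish the cycle $(1) \Rightarrow (3) \Rightarrow (2) \Rightarrow (1)$.

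For $(1) \Rightarrow (3)$, I would fix $s, t \in \bbM\Sigma$ with $s \sqsubseteq_\Delta t$, a context $p$, and an arbitrary $\varphi' \in \Delta$. By~(1), the derivative $p^{-1}[\Mod(\varphi')]$ is $\Delta$-definable, say by some $\psi \in \Delta$, so that $p[a] \models \varphi'$ is equivalent to $a \models \psi$ for every~$a$. The implication $s \models \psi \Rightarrow t \models \psi$ coming from $s \sqsubseteq_\Delta t$ then yields $p[s] \models \varphi' \Rightarrow p[t] \models \varphi'$; letting $\varphi'$ vary gives $p[s] \sqsubseteq_\Delta p[t]$.

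For $(3) \Rightarrow (2)$, I would observe that by the definition of the product ordering and of the quotient ordering on each $\Syn(\Mod(\varphi))$, the inequality $\tilde\theta_\Delta(s) \leq \tilde\theta_\Delta(t)$ is equivalent to $s \preceq_{\Mod(\varphi)} t$ for every $\varphi \in \Delta$, i.e., to the statement that $q[s] \models \varphi \Rightarrow q[t] \models \varphi$ for every $\varphi \in \Delta$ and every context~$q$. The reverse direction of~(2) then follows by taking the trivial context $q := \sing(\Box)$, which satisfies $q[a] = a$ by the unit law of the evaluating morphism $\sigma_a$. For the forward direction, I would invoke~(3): from $s \sqsubseteq_\Delta t$ one obtains $q[s] \sqsubseteq_\Delta q[t]$ for every context~$q$, which supplies exactly the implications required.

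For $(2) \Rightarrow (1)$, let $K = \Mod(\varphi)$ with $\varphi \in \Delta_\xi[\Sigma]$ and let $p \in \bbM_\xi(\Sigma + \Box)$ be any context (with $\Box$ of some sort $\zeta$). Since $\Delta[\Sigma]$ is sort-wise finite and $L$ is lattice-closed, Lemma~\ref{Lem: L-definable iff closed under L-equivalence}\,(a), applied at sort $\zeta$, reduces the $\Delta$-definability of $p^{-1}[K]$ to showing that $p^{-1}[K]$ is closed under $\sqsubseteq_\Delta$. So suppose $s \in p^{-1}[K]$ and $s \sqsubseteq_\Delta t$. By~(2) we have $\tilde\theta_\Delta(s) \leq \tilde\theta_\Delta(t)$; projecting to the $\varphi$-component gives $s \preceq_{\Mod(\varphi)} t$, and specialising the defining condition of $\preceq_{\Mod(\varphi)}$ to the context~$p$ yields $p[s] \models \varphi \Rightarrow p[t] \models \varphi$, i.e., $t \in p^{-1}[K]$.

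I do not anticipate any substantive obstacle, since each step is essentially an unfolding of definitions. The only minor points of care are keeping sorts aligned (in particular that $p^{-1}[K]$ lives at the sort of $\Box$, where sort-wise finiteness of $\Delta$ is applied) and noting that the defining formula produced by Lemma~\ref{Lem: L-definable iff closed under L-equivalence}\,(a) a priori lies in $L$ but is a Boolean combination of formulas of $\Delta$, so in (1) the class of $\Delta$-definable languages should be understood modulo such Boolean combinations.
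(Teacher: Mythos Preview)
Your proposal is correct and takes essentially the same approach as the paper. The paper organises the argument as two bi-implications, proving $(1)\Leftrightarrow(3)$ by a direct citation of Lemma~\ref{Lem: L-definable iff closed under L-equivalence}\,(b) (with $f$ the map $s\mapsto p[s]$) and $(2)\Leftrightarrow(3)$ via precisely the same unfolding of $\tilde\theta_\Delta$ that you give; your cycle $(1)\Rightarrow(3)\Rightarrow(2)\Rightarrow(1)$ simply spells out the content of that citation. Your closing remark about $\Delta$-definability being taken modulo positive Boolean combinations is apt and is exactly the implicit convention the paper relies on when invoking Lemma~\ref{Lem: L-definable iff closed under L-equivalence} for the subset~$\Delta$.
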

\begin{proof}
$(1) \Leftrightarrow (3)$ follows by
Lemma~\ref{Lem: L-definable iff closed under L-equivalence}\,(b).

$(2) \Leftrightarrow (3)$
First, note that in~(3) we can replace the implication by an equivalence
since $p[s] \sqsubseteq_\Delta p[t]$ implies $s \sqsubseteq_\Delta t$,
if we choose for~$p$ the empty context~$\Box$.
Consequently, the equivalence of $(2)$~and~$(3)$ follows from the fact that,
for $s,t \in \bbM\Sigma$,
\begin{alignat*}{-1}
  &\tilde\theta_\Delta(s) \leq \tilde\theta_\Delta(t) \\
  \iff\quad& s \preceq_{\Mod(\varphi)} t\,,
     &&\quad\text{for all } \varphi \in \Delta \\
  \iff\quad& p[s] \models \varphi \ \Rightarrow\ p[t] \models \varphi\,,
     &&\quad\text{for all contexts } p \text{ and all } \varphi \in \Delta \\
  \iff\quad& p[s] \sqsubseteq_\Delta p[t]\,,
    &&\quad\text{for all contexts } p\,.
\end{alignat*}
\upqed
\end{proof}
\begin{thm}\label{Thm: characterisation of compositional logics}
Let $L$~be a family of lattice-closed logics such that every $L$-definable language has
a syntactic algebra. The following statements are equivalent.
\begin{enum1}
\item $L$~is $\bbM$-compositional.
\item For every finite $\Phi \subseteq L$, there exists a sort-wise finite
  $\Phi \subseteq \Delta \subseteq L$ such that the class of $\Delta$-definable languages
  is closed under derivatives.
\end{enum1}
\end{thm}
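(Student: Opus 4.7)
The plan is to read this equivalence as a direct consequence of Lemmas~\ref{Lem: closure under adding contexts} and~\ref{Lem: kernel of tilde theta}, which together already encode all the work. Roughly, $\bbM$-compositionality is the same as condition~(3) of Lemma~\ref{Lem: kernel of tilde theta} (``contexts act on $\sqsubseteq_\Delta$''), and that lemma then identifies this condition with closure under derivatives.

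For $(1)\Rightarrow(2)$, given a finite $\Phi \subseteq L$, I would invoke $\bbM$-compositionality to obtain a sort-wise finite $\Phi \subseteq \Delta \subseteq L$ such that $\sqsubseteq_{\Delta[\Sigma]}$ is a congruence ordering on $\bbM\Sigma$ for every alphabet~$\Sigma$. Lemma~\ref{Lem: closure under adding contexts} then yields $s \sqsubseteq_\Delta t \Rightarrow p[s] \sqsubseteq_\Delta p[t]$ for every context~$p$, which is condition~(3) of Lemma~\ref{Lem: kernel of tilde theta}. Applying that lemma delivers its condition~(1), namely closure of the class of $\Delta$-definable languages under derivatives.

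For $(2)\Rightarrow(1)$, given a finite $\Phi$ choose a sort-wise finite $\Phi \subseteq \Delta \subseteq L$ so that the class of $\Delta$-definable languages is closed under derivatives. Lemma~\ref{Lem: kernel of tilde theta} then upgrades this to its condition~(2), i.e.,
\begin{align*}
  s \sqsubseteq_\Delta t \quad\iff\quad \tilde\theta_\Delta(s) \leq \tilde\theta_\Delta(t)\,,
\end{align*}
which says precisely that $\sqsubseteq_{\Delta[\Sigma]} = \ker \tilde\theta_\Delta$. The target of $\tilde\theta_\Delta$ is the product of the syntactic algebras $\Syn(\Mod(\varphi))$ for $\varphi \in \Delta$, each of which exists by the standing hypothesis on $L$; since $\Alg(\bbM)$ has small products computed sortwise on universes, $\tilde\theta_\Delta$ is a genuine morphism of $\bbM$-algebras. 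Hence its kernel is a congruence ordering by Proposition~\ref{Prop: characterisations of congruences}, so $\sqsubseteq_{\Delta[\Sigma]}$ is a congruence ordering on every $\bbM\Sigma$, witnessing $\bbM$-compositionality.

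There is no real obstacle: the heavy lifting (in particular the context-manipulation arguments for derivatives and the boolean combinations used to assemble the defining formulae) has already been done in Lemmas~\ref{Lem: L-definable iff closed under L-equivalence}, \ref{Lem: closure under adding contexts} and~\ref{Lem: kernel of tilde theta}. The only point that deserves explicit mention is the verification that $\tilde\theta_\Delta$ is an $\bbM$-algebra morphism, which is immediate once one notes that each coordinate $\syn_{\Mod(\varphi)}$ is a morphism by assumption and that products are formed sortwise.
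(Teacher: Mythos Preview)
Your proposal is correct and follows essentially the same approach as the paper. In both directions the argument routes through Lemma~\ref{Lem: kernel of tilde theta}: for $(1)\Rightarrow(2)$ the paper cites Lemma~\ref{Lem: L-definable iff closed under L-equivalence}\,(b) together with Lemma~\ref{Lem: closure under adding contexts}, which is exactly the implication $(3)\Rightarrow(1)$ of Lemma~\ref{Lem: kernel of tilde theta} that you invoke; and for $(2)\Rightarrow(1)$ both you and the paper use Lemma~\ref{Lem: kernel of tilde theta} to identify $\sqsubseteq_\Delta$ with $\ker\tilde\theta_\Delta$ and conclude that it is a congruence ordering.
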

\begin{proof}
$(1)\Rightarrow(2)$
This follows immediately from Lemma~\ref{Lem: L-definable iff closed under L-equivalence}\,(b)
together with Lemma~\ref{Lem: closure under adding contexts}.

$(2)\Rightarrow(1)$
Given a subfamily $\Delta \subseteq L$ with the above closure properties,
it follows by Lemma~\ref{Lem: kernel of tilde theta}
that ${\sqsubseteq_\Delta} = \ker \tilde\theta_\Delta$.
In particular, ${\sqsubseteq_\Delta}$~is a congruence ordering.
\end{proof}

Apart from a criterion for $\bbM$-compositionality, this theorem also gives us an
explicit construction of the theory algebra $\Theta_\Delta\Sigma$ in
language-theoretic terms.
It therefore provides a more direct link between properties of a logic~$L$ and
properties of the class of $L$-definable languages.

\section{Definable algebras}   
\label{Sect:definable}

We have finally arrived at the central part of this article where we combine algebra
and logic.
It follows from Theorem~\ref{Thm: variety theorem} that, to every varietal logic~$L$,
there corresponds a unique pseudo-variety~$\calV$ of $\bbM$-algebras recognising the
family of $L$-definable languages.
We would like to use these $\bbM$-algebras to study the expressive power of our logic~$L$.
To do so, we need to know as much as possible about how the algebras in~$\calV$ look like.
Unfortunately, Theorem~\ref{Thm: variety theorem} does not tell us very much about that.
The following definition provides a slightly more concrete description.
\begin{defi}
Let $\frakA$~be an $\bbM$-algebra and $L$~a family of logics.

(a) A~finite subset $C \subseteq A$ is \emph{$L$-definably embedded} in~$\frakA$ if,
for every element $a \in A$, the preimage
\begin{align*}
  \pi^{-1}(\Aboveseg a) \cap \bbM C \text{ is $L$-definable.}
\end{align*}

(b) $\frakA$~is \emph{locally $L$-definable} if every finite subset $C \subseteq A$
is $L$-definably embedded in~$\frakA$.

(c) $\frakA$~is \emph{$L$-definable} if it is finitary and locally $L$-definable.
\markenddef
\end{defi}
\begin{Rem}
For the word functor $\bbM A = A^+$, every finite algebra (i.e., every finite semigroup)
is $\MSO$-definable since we can evaluate products in $\MSO$.
(Just guess a labelling that associates with every position the product of the corresponding
prefix.)

The same is true for the functor
$\bbM\langle A_1,A_\infty\rangle = \langle A_1^+,\ A_1^+A_\infty \cup A_1^\omega\rangle$
for infinite words and for the functor for finite trees.
(For the former, one can use a reduction to the semigroup case
via a simple application of the Theorem of Ramsey\?; for the latter,
one can compute the product of a tree bottom-up similarly to the semigroup case.)

For infinite trees the situation is more complicated\?:
Boja\'nczyk and Klin~\cite{BojanczykKlin18} have constructed an example of a finitary
algebra that is not $\MSO$-definable.
\end{Rem}

If our logic~$L$ is sufficiently well-behaved, it immediately follows from this definition
that $L$-definable algebras only recognise $L$-definable languages.
(The converse, that every $L$-definable language is recognised by some $L$-definable algebra,
is harder to prove. We will do so later in this section.)
Note that this correspondence, besides being trivial, is also not that useful for
understanding the expressive power of~$L$, as the definition makes essential use of
$L$-definability. But the above definition can serve as a starting point for deriving
more useful descriptions -- that of course will be specific to the logic in question.

Before proving that the $L$-definable algebras are exactly those that only recognise
$L$-definable languages, let us start by looking at definably embedded sets.
\begin{lem}\label{Lem: characterisation of regular embedded}
Let\/ $\frakA$~be a finitary\/ $\bbM$-algebra and $L$~a family of lattice-closed logics.
A~finite set $C \subseteq A$ is $L$-definably embedded in\/~$\frakA$ if, and only if,
there exists a sort-wise finite set $\Delta \subseteq L[C]$ such that
\begin{align*}
  s \sqsubseteq_\Delta t \qtextq{implies} \pi(s) \leq \pi(t)\,,
  \quad\text{for all } s,t \in \bbM C\,.
\end{align*}
\end{lem}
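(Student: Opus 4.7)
The plan is to prove both implications directly from the definition of $L$-definably embedded, using Lemma~\ref{Lem: L-definable iff closed under L-equivalence}\,(a) as the key bridge between $L$-definability and closure under $\sqsubseteq_\Delta$-equivalence. I~would proceed as follows.

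For the direction $(\Leftarrow)$, I would assume a sort-wise finite $\Delta \subseteq L[C]$ with $s \sqsubseteq_\Delta t \Rightarrow \pi(s) \leq \pi(t)$. Fix $a \in A_\xi$ and set $K_a := \pi^{-1}(\Aboveseg a) \cap \bbM_\xi C$. Since $\Delta_\xi$ is finite and $L$ is lattice closed, Lemma~\ref{Lem: L-definable iff closed under L-equivalence}\,(a) says that it suffices to check that $K_a$ is closed under $\sqsubseteq_{\Delta_\xi}$\?; but if $s \in K_a$ and $s \sqsubseteq_{\Delta_\xi} t$, then our hypothesis gives $\pi(t) \geq \pi(s) \geq a$, so $t \in K_a$.

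For the direction $(\Rightarrow)$, I would exploit that $\frakA$ is finitary, so $A$ is sort-wise finite. By hypothesis, for each $a \in A_\xi$ we can pick a formula $\varphi_a \in L_\xi[C]$ defining $\pi^{-1}(\Aboveseg a) \cap \bbM_\xi C$. Setting $\Delta := \set{ \varphi_a }{ a \in A }$, the component $\Delta_\xi = \set{ \varphi_a }{ a \in A_\xi }$ is finite because $A_\xi$ is, so $\Delta$ is sort-wise finite. Now fix $s,t \in \bbM_\xi C$ with $s \sqsubseteq_\Delta t$, and let $a := \pi(s) \in A_\xi$. Since $s \models \varphi_a$ and $\varphi_a \in \Delta_\xi$, also $t \models \varphi_a$, i.e.\ $\pi(t) \geq a = \pi(s)$, as required.

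There is really no serious obstacle here\?; the only point requiring care is the bookkeeping of sorts, namely that $\pi$ preserves sort so that both $\pi(s)$ and the formula $\varphi_{\pi(s)}$ live in sort~$\xi$ when $s \in \bbM_\xi C$, and that sort-wise finiteness of~$\Delta$ is exactly what is needed for Lemma~\ref{Lem: L-definable iff closed under L-equivalence}\,(a) to apply in each sort separately. The lattice closure of~$L$ is used only through that lemma.
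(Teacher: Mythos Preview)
Your proof is correct and essentially identical to the paper's own argument: both directions pick defining formulae $\varphi_a$ (the paper writes this as a function $\vartheta : A \to L$) for the sets $\pi^{-1}(\Aboveseg a)\cap\bbM C$, set $\Delta$ to be their range, and then use Lemma~\ref{Lem: L-definable iff closed under L-equivalence}\,(a) for the converse. Your explicit remark that sort-wise finiteness of~$\Delta$ comes from sort-wise finiteness of~$A$ (via finitariness of~$\frakA$) is a point the paper leaves implicit.
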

\begin{proof}
$(\Rightarrow)$
Choose a function (of unordered sets) $\vartheta : A \to L$ such that the
formula~$\vartheta(a)$ defines the set $\pi^{-1}(\Aboveseg a) \cap \bbM C$.
We claim that the set $\Delta := \rng \vartheta$ has the desired properties.
Consider $s,t \in \bbM C$ with $s \sqsubseteq_\Delta t$.
Then
\begin{align*}
  s \models \vartheta(\pi(s)) \qtextq{implies} t \models \vartheta(\pi(s))\,.
\end{align*}
Hence, $\pi(t) \geq \pi(s)$.

$(\Leftarrow)$
Let $\Delta$~be a sort-wise finite set such that
\begin{align*}
  s \sqsubseteq_\Delta t \qtextq{implies} \pi(s) \leq \pi(t)\,,
  \quad\text{for } s,t \in \bbM C\,.
\end{align*}
Given $a \in A_\xi$, we set $K := \pi^{-1}(\Aboveseg a) \cap \bbM C$.
It follows that
\begin{align*}
  s \in K \qtextq{and} s \sqsubseteq_{\Delta_\xi} t \qtextq{implies} t \in K\,.
\end{align*}
Hence, we can use Lemma~\ref{Lem: L-definable iff closed under L-equivalence}\,(a) to show
that $K$~is $L$-definable.
\end{proof}

In general, the closure properties of definably embedded sets are rather weak.
To make them better behaved we have to impose some restriction on the logic~$L$.
\begin{lem}\label{Lem: generated subsets are definable embedded}
Let $\frakA$~be an $\bbM$-algebra, $L$~a family of logics, and $C \subseteq A$ finite set that is
$L$-definably embedded in~$\frakA$.
\begin{enuma}
\item Every subset of~$C$ is $L$-definably embedded in~$\frakA$.
\item If the class of $L$-definable languages is closed under inverse morphisms,
  every finite subset $D \subseteq \langle C\rangle$ is $L$-definably embedded in~$\frakA$,
  where $\langle C\rangle$ denotes the subalgebra generated by~$C$.
\end{enuma}
\end{lem}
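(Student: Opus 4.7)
\medskip

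My plan is to prove (a) by a direct reduction to closure under inverse relabellings, and (b) by a monadic-substitution construction that rewrites $D$-products as $C$-products.

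For (a), given $C' \subseteq C$ I will use the inclusion $i : C' \hookrightarrow C$ to establish, for each $a \in A$, the identity
\[
  \pi^{-1}(\Aboveseg a) \cap \bbM C' \;=\; (\bbM i)^{-1}[\pi^{-1}(\Aboveseg a) \cap \bbM C],
\]
which holds because applying $\bbM i$ to an element of $\bbM C'$ does not alter its product in $\frakA$. The right-hand side is an inverse relabelling of a language which is $L$-definable by hypothesis, so Lemma \ref{Lem: closure under inverse relabellings} gives the conclusion without any extra assumption on $L$.

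For (b), the idea is to encode each element of $D \subseteq \langle C\rangle$ by a distinguished representative in $\bbM C$ and propagate this encoding through the monad structure. Concretely, I pick for each $d \in D$ some $s_d \in \bbM C$ with $\pi(s_d) = d$, and assemble these into a function $g : D \to \bbM C$ with $\pi \circ g = \id_D$. Its Kleisli extension $\hat g := \Flat \circ \bbM g : \bbM D \to \bbM C$ is then a morphism between the free $\bbM$-algebras, which is exactly the kind covered by the assumed closure under inverse morphisms. The technical heart is the identity $\pi \circ \hat g = \pi$ on $\bbM D$, which will come out of the monad laws via
\[
  \pi \circ \hat g \;=\; \pi \circ \Flat \circ \bbM g \;=\; \pi \circ \bbM\pi \circ \bbM g \;=\; \pi \circ \bbM(\pi \circ g) \;=\; \pi,
\]
using associativity of $\pi$ in the second step and $\pi \circ g = \id_D$ in the last. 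From this identity the equation
\[
  \hat g^{-1}[\pi^{-1}(\Aboveseg a) \cap \bbM C] \;=\; \pi^{-1}(\Aboveseg a) \cap \bbM D
\]
is immediate, and invoking the hypothesis that the class of $L$-definable languages is closed under inverse morphisms then yields the $L$-definability of $\pi^{-1}(\Aboveseg a) \cap \bbM D$.

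The only subtlety, which I expect to be the main bookkeeping obstacle rather than a mathematical one, is that $C$, $C'$, $D$ are subsets of $A$ and inherit non-trivial orders, whereas alphabets are required to be unordered. This is handled by appealing to the standing convention that the $L$-definability of a subset of $\bbM A$, for $A$ a finite ordered set, is defined through its unordered version via $\bbV$, so the inherited orders may be forgotten and the morphism $\hat g$ may be treated as one between free algebras on alphabets.
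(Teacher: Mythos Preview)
Your proposal is correct and follows the same overall strategy as the paper: for~(a) pull back along the inclusion relabelling, and for~(b) choose representatives $g(d) \in \bbM C$ with $\pi(g(d)) = d$, pass to the Kleisli extension, and apply closure under inverse morphisms. Part~(a) is identical to the paper's argument.

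For~(b) there are two differences worth noting. First, your associativity computation
\[
  \pi \circ \hat g \;=\; \pi \circ \Flat \circ \bbM g \;=\; \pi \circ \bbM\pi \circ \bbM g \;=\; \pi \circ \bbM(\pi \circ g) \;=\; \pi
\]
is more direct than the paper's route. The paper first reduces to $D = \langle C\rangle$ (so that $\pi : \bbM D \to D$ is available), then establishes the corresponding identity $\iota \circ \bbV\pi \circ \delta = \pi \circ \varphi$ by checking that both sides agree on $\rng\sing$ and are morphisms of $\bbM$-algebras, the latter via a separate calculation through the order-agnostic isomorphism~$\delta$. Your argument bypasses~$\delta$ entirely and does not need the reduction to $D = \langle C\rangle$.

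Second, the ordering subtlety you flag at the end is exactly where the paper invests its care. Since~$g$ need not be monotone, the paper works from the outset with $f : \bbV D \to \bbM C$ and its extension $\varphi : \bbM\bbV D \to \bbM C$; the identity it proves then yields
\[
  (\bbM\iota)^{-1}\bigl[\pi^{-1}(\Aboveseg a) \cap \bbM D\bigr] \;=\; \varphi^{-1}\bigl[\pi^{-1}(\Aboveseg a) \cap \bbM C\bigr],
\]
matching the definition of $L$-definability for ordered alphabets. Your displayed equation $\hat g^{-1}[\pi^{-1}(\Aboveseg a) \cap \bbM C] = \pi^{-1}(\Aboveseg a) \cap \bbM D$ is the same statement modulo this $\bbV$-bookkeeping, and once~$g$ is taken with domain~$\bbV D$ your computation goes through verbatim with $\pi \circ g = \iota_D$ in place of $\id_D$. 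So your sketch is sound; only the bookkeeping you anticipated remains to be filled in.
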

\begin{proof}
(a) Fix $D \subseteq C$ and let $i : D \to C$ be the inclusion map. Then
\begin{align*}
  \pi^{-1}(\Aboveseg a) \cap \bbM D
  = \bigl(\pi^{-1}(\Aboveseg a) \cap \bbM C\bigr) \cap \bbM D
  = (\bbM i)^{-1}\bigl(\pi^{-1}(\Aboveseg a) \cap \bbM C\bigr)
\end{align*}
is the image of an $L$-definable set under an inverse relabelling
and, therefore, $L$-definable by Lemma~\ref{Lem: closure under inverse relabellings}.

(b)
By~(a) it is sufficient to consider the case where $D = \langle C\rangle$.
For every $d \in D$, we can find an element $f(d) \in \bbM C$ such that $\pi(f(d)) = d$.
This defines a function~$f$ with $\pi \circ f = \id_D$.
But note that, in general, $f$~is not monotone.
Thus, we only obtain a function $f : \bbV D \to \bbM C$.
Let $\varphi : \bbM\bbV D \to \bbM C$ be the (unique) extension of
$f : \bbV D \to \bbM C$ to~$\bbM\bbV D$.
Let $\delta : \bbM\circ\bbV \Rightarrow \bbV\circ\bbM$ be the natural isomorphism
obtained by the fact that $\bbM$~is order agnostic.
Then

\noindent
\begin{minipage}[t]{0.6\textwidth}
\vspace*{0pt}%
\begin{align*}
  \iota \circ \bbV\pi \circ \delta \circ \sing
  &= \iota \circ \bbV\pi \circ \bbV\sing \\
  &= \iota \\
  &= \pi \circ f \\
  &= \pi \circ \varphi \circ \sing\,.
\end{align*}%
\end{minipage}%
\begin{minipage}[t]{0.4\textwidth}
\vspace*{\abovedisplayskip}%
\includegraphics{Abstract-9.mps}
%
%
%
%
\end{minipage}%
\par\bigskip

\noindent
Note that $\iota \circ \bbV\pi \circ \delta$ and $\pi \circ \varphi$ are both morphisms
of $\bbM$-algebras. The latter is a composition of two morphism $\varphi : \bbM\bbV D \to \bbM C$
and $\pi : \bbM C \to D$.
Concerning the former, we have
\begin{align*}
  (\iota \circ \bbV\pi \circ \delta) \circ \pi
  &= (\iota \circ \bbV\pi \circ \delta) \circ \Flat \\
  &= \pi \circ \iota \circ \delta \circ \Flat \\
  &= \pi \circ \bbM\iota \circ \Flat \\
  &= \pi \circ \Flat \circ \bbM\bbM\iota \\
  &= \pi \circ \Flat \circ \bbM(\iota \circ \delta) \\
  &= \pi \circ \bbM\pi \circ \bbM\iota \circ \bbM\delta
   = \pi \circ \bbM(\iota \circ \bbV\pi \circ \delta)\,.
\end{align*}
As morphisms from a free $\bbM$-algebra are uniquely determined by their restriction
to $\rng \sing$, we therefore have $\iota \circ \bbV\pi \circ \delta = \pi \circ \varphi$.
For $a \in A$, it follows that
\begin{align*}
  (\bbM\iota)^{-1}[\pi^{-1}[\Aboveseg a] \cap \bbM D]
  &= (\pi \circ \bbM\iota)^{-1}[\Aboveseg a] \cap \bbV\bbM D \\
  &= (\pi \circ \iota \circ \delta)^{-1}[\Aboveseg a] \cap \bbV\bbM D \\
  &= (\iota \circ \bbV\pi \circ \delta)^{-1}[\Aboveseg a] \cap \bbV\bbM D \\
  &= (\pi \circ \varphi)^{-1}[\Aboveseg a] \cap \bbV\bbM D
   = \varphi^{-1}[\pi^{-1}[\Aboveseg a] \cap \bbM C]\,,
\end{align*}
which is the image of an $L$-definable language under an inverse morphism.
The way we defined $L$-definability for ordered sets, this implies that
$\pi^{-1}[\Aboveseg a] \cap \bbM D$ is also $L$-definable.
\end{proof}

It follows immediately from the definition that an $L$-definable algebra only recognises
$L$-definable languages. We start with a slightly more precise statement.
\begin{thm}\label{Thm: L-definable algebras recognise L-definable languages}
Let $L$~be a varietal family of logics.
A~finitary $\bbM$-algebra~$\frakA$ is $L$-definable if, and only if,
every language recognised by~$\frakA$ is $L$-definable.
\end{thm}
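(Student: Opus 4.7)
The plan is for the whole argument to hinge on a single identity. Given any finite $C \subseteq A$, the function $j := i \circ \iota : \bbV C \to A$, where $i : C \hookrightarrow A$ is the inclusion and $\iota : \bbV C \to C$ forgets the order, extends uniquely to a morphism $\psi := \pi \circ \bbM j : \bbM\bbV C \to \frakA$ of $\bbM$-algebras. Using that $\bbM$ preserves injectivity, so that $\bbM C$ embeds into $\bbM A$ and $\pi \circ \bbM i$ becomes simply the restriction of $\pi$ to $\bbM C$, one verifies
\[
  \psi^{-1}[\Aboveseg a] \;=\; (\bbM\iota)^{-1}\bigl[\pi^{-1}[\Aboveseg a] \cap \bbM C\bigr],
  \qquad\text{for every } a \in A.
\]
By the convention that an $L$-definable language over an ordered set is one whose unordered version via $\bbM\iota$ is $L$-definable, the left side being $L$-definable is precisely the same assertion as the right side being $L$-definable. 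Thus ``$C$ is $L$-definably embedded in $\frakA$'' coincides with ``every language of the form $\psi^{-1}[\Aboveseg a]$ is recognised as $L$-definable by $\psi$'', and this observation drives both directions.

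For direction $(\Leftarrow)$, I would fix an arbitrary finite $C \subseteq A$ and $a \in A$, form the morphism $\psi : \bbM\bbV C \to \frakA$ above, and observe that $\Aboveseg a$ is upwards closed so $\psi^{-1}[\Aboveseg a]$ is a language recognised by $\frakA$. The hypothesis makes it $L$-definable, and by the identity the set $\pi^{-1}[\Aboveseg a] \cap \bbM C$ is $L$-definable over the ordered set $C$. Since $C$ was arbitrary and $\frakA$ is assumed finitary, $\frakA$ is $L$-definable.

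For direction $(\Rightarrow)$, let $K = \varphi^{-1}[P] \subseteq \bbM_\xi\Sigma$ be recognised by $\varphi : \bbM\Sigma \to \frakA$ with $P \subseteq A_\xi$ upwards closed. Since $A_\xi$ is finite we have $P = \bigcup_{a \in P} \Aboveseg a$, and closure of the $L$-definable languages under finite unions (available because $L$ is varietal) reduces the task to showing each $\varphi^{-1}[\Aboveseg a]$ is $L$-definable. Set $C := \{\varphi(\sing(c)) : c \in \Sigma\}$, a finite subset of $A$, and let $g : \Sigma \to \bbV C$ send $c \mapsto \varphi(\sing(c))$, which is a morphism in $\Alp$ because $\Sigma$ is unordered. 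A routine comparison on the generators $\sing[\Sigma]$ shows $\varphi = \psi \circ \bbM g$, hence $\varphi^{-1}[\Aboveseg a] = (\bbM g)^{-1}[\psi^{-1}[\Aboveseg a]]$. The inner language $\psi^{-1}[\Aboveseg a]$ is $L$-definable by the central identity and the assumption that $C$ is $L$-definably embedded, and closure under inverse relabellings (Lemma~\ref{Lem: closure under inverse relabellings}) then yields $L$-definability of $\varphi^{-1}[\Aboveseg a]$.

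The only real obstacle is the bookkeeping underlying the central identity and the factorisation $\varphi = \psi \circ \bbM g$: both are straightforward consequences of the freeness of $\bbM\Sigma$ and $\bbM\bbV C$, the preservation of injectivity by $\bbM$, and the definitions, but they must be laid out carefully so as not to conflate the ordered subset $C \subseteq A$ with its unordered version $\bbV C$ used as an alphabet.
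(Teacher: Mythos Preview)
Your proposal is correct and follows essentially the same approach as the paper. Both directions proceed identically in spirit: for $(\Leftarrow)$, the language $\pi^{-1}[\Aboveseg a]\cap\bbM C$ is exhibited as recognised by~$\frakA$ via the restricted product; for $(\Rightarrow)$, one takes $C=\rng(\varphi\circ\sing)$, decomposes $P=\bigcup_{a\in P}\Aboveseg a$, and pulls back along the map $\Sigma\to C$.

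Two minor points where your write-up is slightly cleaner than the paper's: you are explicit about the passage through~$\bbV C$ (the paper glosses over the ordered/unordered distinction when it invokes the formulae~$\vartheta_a$), and you observe that closure under inverse \emph{relabellings} suffices for the pull-back step, whereas the paper invokes the stronger closure under inverse morphisms (which is of course also available, since $L$ is varietal).
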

\begin{proof}
$(\Leftarrow)$
If some finite subset $C \subseteq A$ is not $L$-definably embedded,
we can find an element $a \in A$ such that
the preimage $K := \pi^{-1}[\Aboveseg a] \cap \bbM C$ is not $L$-definable.
Thus, the restriction $\pi \restriction \bbM C : \bbM C \to \frakA$ of the
product is a morphism recognising the non-$L$-definable language~$K$.

$(\Rightarrow)$
Let $\varphi : \bbM\Sigma \to \frakA$ be a morphism and $P \subseteq A_\xi$ an upwards closed set.
By assumption, the set $C := \rng (\varphi \circ \sing)$ is $L$-definably embedded in~$\frakA$.
For every $a \in A_\xi$, we can therefore fix an $L$-formula~$\vartheta_a$ defining the set
$\pi^{-1}(\Aboveseg a) \cap \bbM C$. Setting $\varphi_0 := \varphi \circ \sing$
it follows that
\begin{align*}
  t \in \varphi^{-1}[P]
  \quad\iff\quad
  \varphi(t) \in P
  \quad\iff\quad
  \pi(\bbM\varphi_0(t)) \in P
  \quad\iff\quad
  \bbM\varphi_0(t) \models \Lor_{a \in P} \vartheta_a\,.
\end{align*}
(For the last step, note that $\bbM\varphi_0(t) \in \bbM C$.)
As the $L$-definable languages are closed under inverse morphisms,
we can find a formula $\chi \in L$ such that
\begin{align*}
  \bbM\varphi_0(t) \models \Lor_{a \in P} \vartheta_a
  \quad\iff\quad
  t \models \chi\,.
\end{align*}
Consequently, $\chi$~defines $\varphi^{-1}[P]$.
\end{proof}

Obvious candidates for $L$-definable algebras are those of the form~$\Theta_\Delta\Sigma$
from Proposition~\ref{Prop: theory algebras}.
Below we will characterise under which conditions these are $L$-definable.
The proof rests on the following technical result.
\begin{lem}
Let $L$~be an $\bbM$-compositional family of lattice-closed logics.
For every sort-wise finite set $\Delta \subseteq L$ such that $\sqsubseteq_\Delta$~is
a congruence ordering,
the set $\rng (\theta_\Delta \circ \sing)$ is $L$-definably embedded in~$\Theta_\Delta\Sigma$.
\end{lem}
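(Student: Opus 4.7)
The plan is to invoke Lemma~\ref{Lem: characterisation of regular embedded}: writing $C := \rng(\theta_\Delta \circ \sing)$, it suffices to produce a sort-wise finite $\Delta' \subseteq L[C]$ such that $s \sqsubseteq_{\Delta'} t$ implies $\pi(s) \leq \pi(t)$ for all $s,t \in \bbM C$, where on the right $\pi$ is the product of $\Theta_\Delta\Sigma$.

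The set $\Delta'$ will be obtained by transporting $\Delta$ along a section of $\theta_\Delta \circ \sing$. By the very definition of $C$ I can pick, sort-preservingly, a function $\eta : C \to \Sigma$ satisfying $\theta_\Delta(\sing(\eta(c))) = c$ for every $c \in C$. Functoriality of $L$ then yields a morphism of logics $L[\eta] = \langle\lambda,\bbM\eta\rangle$, and I take $\Delta' := \lambda[\Delta]$; sort-wise finiteness of $\Delta'$ is inherited from that of $\Delta$.

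To see that this $\Delta'$ works, I would first verify the identity
\begin{align*}
  \pi(s) = \theta_\Delta(\bbM\eta(s))\,,\quad\text{for all } s \in \bbM C,
\end{align*}
which follows because $\theta_\Delta \circ \sing \circ \eta$ equals the inclusion $C \hookrightarrow \Theta_\Delta\Sigma$, combined with the fact that $\theta_\Delta$ is a morphism of $\bbM$-algebras (so $\pi \circ \bbM\theta_\Delta = \theta_\Delta \circ \Flat$) and the monad law $\Flat \circ \bbM\sing = \id$. Once this is in hand, given $s \sqsubseteq_{\Delta'} t$, the morphism-of-logics property translates this directly into $\bbM\eta(s) \sqsubseteq_\Delta \bbM\eta(t)$; Proposition~\ref{Prop: theory algebras} then delivers $\theta_\Delta(\bbM\eta(s)) \leq \theta_\Delta(\bbM\eta(t))$, and the displayed identity converts this into $\pi(s) \leq \pi(t)$.

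The main obstacle will be disentangling the ordered/unordered conventions: alphabets in $\Alp$ must be unordered, yet $C$ inherits a partial order from $\Theta_\Delta\Sigma$, and $L$-definability of subsets of $\bbM C$ is defined via $\bbV C$. Once $\eta$ is viewed as a morphism $\bbV C \to \Sigma$ in $\Alp$ and Lemma~\ref{Lem: characterisation of regular embedded} is applied in that version, the remaining algebraic manipulations are routine and the plan goes through.
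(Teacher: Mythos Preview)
Your proposal is correct and follows essentially the same route as the paper: both choose a section of $\theta_\Delta\circ\sing$ (your~$\eta$, the paper's~$g:\bbV C\to\bbV\Sigma$), use it to identify $\pi\restriction\bbM C$ with $\theta_\Delta\circ\bbM\eta$, and then exploit that $\theta_\Delta$ encodes~$\sqsubseteq_\Delta$. The only cosmetic difference is packaging: you invoke Lemma~\ref{Lem: characterisation of regular embedded} to conclude, whereas the paper computes the preimages $\pi_0^{-1}[\Aboveseg a]$ directly and appeals to closure under inverse relabellings; the ordered/unordered bookkeeping you flag is exactly the $\delta,\iota$ machinery the paper spells out.
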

\begin{proof}
Set $f := \theta_\Delta \circ \sing : \Sigma \to \Theta_\Delta\Sigma$ and $C := \rng f$.
Choose a right inverse $g : \bbV C \to \bbV\Sigma$ of $\bbV f : \bbV\Sigma \to \bbV C$ and
let $\pi_0 : \bbM C \to \Theta_\Delta\Sigma$ be the restriction of the product
of~$\Theta_\Delta\Sigma$ to~$\bbM C$.
Recall the natural transformations $\delta$~and~$\iota$ from Definition~\ref{Def: bbV and iota}.
Then
\begin{align*}
  \bbV\pi_0 \circ \delta
  &= \bbV\pi_0 \circ \delta \circ \bbM(\bbV f \circ g) \\
  &= \bbV\pi \circ \delta \circ \bbM(\bbV\theta_\Delta \circ \bbV\sing \circ g) \\
  &= \bbV\pi \circ \delta \circ \bbM\bbV\theta_\Delta \circ \bbM\bbV\sing \circ \bbM g \\
  &= \bbV\pi \circ \bbV\bbM\theta_\Delta \circ \bbV\bbM\sing \circ \delta \circ \bbM g \\
  &= \bbV\theta_\Delta \circ \bbV\pi \circ \bbV\bbM\sing \circ \delta \circ \bbM g
   = \bbV\theta_\Delta \circ \delta \circ \bbM g\,.
\end{align*}
To show that $C$~is $L$-definably embedded in~$\Theta_\Delta\Sigma$,
we fix an element $a \in \Theta_\Delta\Sigma$. Then
\begin{align*}
  (\bbM\iota)^{-1}[\pi_0^{-1}[\Aboveseg a]]
  &= (\pi_0 \circ \bbM\iota)^{-1}[\Aboveseg a] \\
  &= (\pi_0 \circ \iota \circ \delta)^{-1}[\Aboveseg a] \\
  &= (\iota \circ \bbV\pi_0 \circ \delta)^{-1}[\Aboveseg a] \\
  &= (\iota \circ \bbV\theta_\Delta \circ \delta \circ \bbM g)^{-1}[\Aboveseg a] \\
  &= (\theta_\Delta \circ \iota \circ \delta \circ \bbM g)^{-1}[\Aboveseg a]
   = (\bbM\iota \circ \bbM g)^{-1}[\theta_\Delta^{-1}[\Aboveseg a]]\,.
\end{align*}
We can use Lemma~\ref{Lem: L-definable iff closed under L-equivalence} to show that
the language $K := \theta^{-1}_\Delta[\Aboveseg a]$ is $L$-definable\?:
given $s \sqsubseteq_\Delta t$ and $s \in K$, we have
$\theta_\Delta(s) \leq \theta_\Delta(t)$ and $\theta_\Delta(s) \geq a$.
Thus, $\theta_\Delta(t) \geq a$, i.e., $t \in K$.

To conclude the proof, note that we have shown in
Lemma~\ref{Lem: closure under inverse relabellings} that the class of $L$-definable languages
is closed under inverse relabellings.
Consequently, the language
$(\bbM\iota)^{-1}[\pi_0^{-1}[\Aboveseg a]] = (\bbM(\iota \circ g))^{-1}[K]$ is $L$-definable and,
therefore, so is
$\pi_0^{-1}[\Aboveseg a] = \pi^{-1}[\Aboveseg a] \cap \bbM C$.
\end{proof}

\begin{thm}\label{Thm: theory algebras are L-definable}
Let $L$~be an $\bbM$-compositional family of lattice-closed logics.
The following statements are equivalent.
\begin{enum1}
\item $L$~is varietal.
\item Every algebra of the form $\Theta_\Delta\Sigma$ is $L$-definable.
\item The class of $L$-definable languages is closed under inverse morphisms.
\end{enum1}
\end{thm}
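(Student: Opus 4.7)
The plan is to establish the three-way equivalence through the two arcs (1)$\Leftrightarrow$(3) and (2)$\Leftrightarrow$(3). For (1)$\Leftrightarrow$(3), the forward direction is immediate from the definition of a variety of languages. For (3)$\Rightarrow$(1), I will verify all three closure properties of $L$-definable languages: finite unions and intersections come from lattice-closedness, inverse morphisms from~(3), and derivatives from compositionality. Concretely, given $L$-definable $K=\Mod(\varphi)$ and a context $p$, compositionality supplies a sort-wise finite $\Delta\ni\varphi$ with $\sqsubseteq_\Delta$ a congruence ordering; Lemma~\ref{Lem: closure under adding contexts} then yields $s\sqsubseteq_\Delta t\Rightarrow p[s]\sqsubseteq_\Delta p[t]$, and since $\Mod(\varphi)$ is upwards closed under $\sqsubseteq_\Delta$ so is $p^{-1}[K]$, whence Lemma~\ref{Lem: L-definable iff closed under L-equivalence}\,(a) identifies $p^{-1}[K]$ as $\Delta$-definable, hence $L$-definable.

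For (3)$\Rightarrow$(2), each $\Theta_\Delta\Sigma$ is finitary: sort-wise finite because $\Delta$ admits only finitely many distinct $\Delta$-theories per sort, and finitely generated by $C_0 := \rng(\theta_\Delta\circ\sing)$, the image under the surjective morphism $\theta_\Delta$ of the generating set $\sing[\Sigma]$ of $\bbM\Sigma$. The lemma immediately preceding the theorem shows that $C_0$ itself is $L$-definably embedded in $\Theta_\Delta\Sigma$, and (3) combined with Lemma~\ref{Lem: generated subsets are definable embedded}\,(b) propagates $L$-definable embeddedness from $C_0$ to every finite subset of $\langle C_0\rangle = \Theta_\Delta\Sigma$, so $\Theta_\Delta\Sigma$ is $L$-definable.

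For (2)$\Rightarrow$(3), take an $L$-definable $K\subseteq\bbM\Gamma$ defined by $\varphi$ and a morphism $\psi\colon\bbM\Sigma\to\bbM\Gamma$. Compositionality supplies sort-wise finite $\Delta\ni\varphi$ with $\sqsubseteq_\Delta$ a congruence ordering, and Lemma~\ref{Lem: Theta recognises Delta-definable languages} writes $K=\theta_\Delta^{-1}[P]$ for a sort-wise finite upwards closed $P$. Thus $\psi^{-1}[K] = (\theta_\Delta\circ\psi)^{-1}[P]$, and the composed morphism factors as $t\mapsto\pi(\bbM\psi_0(t))$ where $\psi_0 := \theta_\Delta\circ\psi\circ\sing\colon\Sigma\to\Theta_\Delta\Gamma$ takes values in the finite set $C := \rng\psi_0$. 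By~(2), $C$ is $L$-definably embedded in $\Theta_\Delta\Gamma$, so each $\pi^{-1}[\Aboveseg a]\cap\bbM C$ is $L$-definable as a subset of the ordered set $\bbM C$. The final step exhibits $\psi^{-1}[K]$ as the inverse relabelling along $\psi_0\colon\Sigma\to\bbV C$ --- factored through $\iota_C$ exactly as in the proof of Lemma~\ref{Lem: generated subsets are definable embedded}\,(b) --- of the finite union $\bigcup_{a\in P}(\pi^{-1}[\Aboveseg a]\cap\bbM C)$, which yields $L$-definability via Lemma~\ref{Lem: closure under inverse relabellings} together with lattice-closedness. The principal bookkeeping obstacle will be this last step: since $C$ is generally ordered as a subset of $\Theta_\Delta\Gamma$ while alphabets must be unordered, the inverse-relabelling argument must route carefully through $\bbV C$ and the natural transformation $\iota$.
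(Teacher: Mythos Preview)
Your proposal is correct. The implications (1)$\Leftrightarrow$(3) and (3)$\Rightarrow$(2) follow the paper's argument essentially verbatim: the paper also invokes lattice-closedness, Theorem~\ref{Thm: characterisation of compositional logics} (which packages your derivative argument via Lemma~\ref{Lem: closure under adding contexts}), the preceding lemma for the generating set $C_0$, and Lemma~\ref{Lem: generated subsets are definable embedded}\,(b).

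Your route for (2)$\Rightarrow$(3), however, differs genuinely from the paper's and is arguably more direct. The paper proceeds by constructing a second sort-wise finite fragment $\Delta'\subseteq L$ with $s\sqsubseteq_{\Delta'[\Sigma]}t\Rightarrow\psi(s)\sqsubseteq_{\Delta[\Gamma]}\psi(t)$: it uses Lemma~\ref{Lem: characterisation of regular embedded} to extract from the $L$-definable embedding of $C=\rng(\theta_\Delta\circ\psi\circ\sing)$ a finite $\Psi$ controlling the product on $\bbM C$, then Lemma~\ref{Lem: morphisms of logics preserve definability} to pull $\Psi$ back along the relabelling $\bbM\psi_0$, and finally Lemma~\ref{Lem: L-definable iff closed under L-equivalence}\,(b). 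You instead reuse the computation from Theorem~\ref{Thm: L-definable algebras recognise L-definable languages}: write $\psi^{-1}[K]=(\pi\circ\bbM\psi_0)^{-1}[P]$, factor $\psi_0$ through $\bbV C$ via $\iota_C$ (legitimate since $\Sigma$ is unordered), and recognise the result as the inverse \emph{relabelling} along $\psi_0':\Sigma\to\bbV C$ of the finite union $\bigcup_{a\in P}(\bbM\iota_C)^{-1}[\pi^{-1}[\Aboveseg a]\cap\bbM C]$, which is $L$-definable by~(2) and lattice-closedness. The crucial point---which you handle correctly---is that only inverse relabellings are needed here, not inverse morphisms, so there is no circularity. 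Your argument is shorter and avoids introducing the auxiliary fragment~$\Delta'$; the paper's argument, on the other hand, yields the slightly stronger intermediate statement that $\sqsubseteq_{\Delta'}$ refines $\sqsubseteq_\Delta$ along~$\psi$, which may be of independent interest.
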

\begin{proof}
(1)~$\Rightarrow$~(3) is trivial.

(3)~$\Rightarrow$~(1) Closure under inverse morphisms and under finite unions and intersections
holds by assumption, while closure under derivatives
was shown in Theorem~\ref{Thm: characterisation of compositional logics}.

(3)~$\Rightarrow$~(2)
First, note that $\Theta_\Delta\Sigma$ is finitary\?:
it is generated by the finite set $\rng (\theta_\Delta \circ \sing)$
and, for every sort $\xi \in \Xi$,
there are only finitely many elements in $(\Theta_\Delta\Sigma)_\xi$,
since there are only finitely many subsets of $\Delta_\xi[\Sigma]$.

Hence, it remains to show that every finite subset is $L$-definably embedded.
Let $D \subseteq \Theta_\Delta\Sigma$ be finite.
We have shown in the preceding lemma that the set $C := \rng (\theta_\Delta \circ \sing)$
is $L$-definably embedded in $\Theta_\Delta\Sigma$.
As~$C$~generates $\Theta_\Delta\Sigma$, we have $D \subseteq \langle C\rangle$.
Consequently, it follows by Lemma~\ref{Lem: generated subsets are definable embedded}\,(b)
that $D$~is also $L$-definably embedded.

(2)~$\Rightarrow$~(3)
Fix a morphism $\varphi : \bbM\Sigma \to \bbM\Gamma$ and an $L$-definable language
$K \subseteq \bbM_\xi\Gamma$.
We will construct two sort-wise finite sets $\Delta,\Delta' \subseteq L$ such that
$K$~is $\Delta[\Gamma]$-definable and
\begin{align*}
  s \sqsubseteq_{\Delta'[\Sigma]} t \qtextq{implies}
  \varphi(s) \sqsubseteq_{\Delta[\Gamma]} \varphi(t)\,,
  \quad\text{for all } s,t \in \bbM_\xi\Sigma\,.
\end{align*}
Then it follows by Lemma~\ref{Lem: L-definable iff closed under L-equivalence}\,(b)
that $\varphi^{-1}[K]$ is $L$-definable.

Hence, it remains to find the sets $\Delta$~and~$\Delta'$.
As $L$~is $\bbM$-compositional, we can
choose a sort-wise finite subset $\Delta \subseteq L$ such that $K$~is $\Delta[\Gamma]$-definable
and $\sqsubseteq_\Delta$~is a congruence ordering.
Set
\begin{align*}
  f := \theta_\Delta \circ \varphi \circ \sing : \Sigma \to \Theta_\Delta\Gamma
  \qtextq{and}
  C := \rng f\,.
\end{align*}
By assumption, $C$~is $L$-definably embedded in~$\Theta_\Delta\Gamma$.
We can therefore use Lemma~\ref{Lem: characterisation of regular embedded} to find a sort-wise 
finite subset $\Psi \subseteq L$ such that
\begin{align*}
  u \sqsubseteq_\Psi v
  \qtextq{implies}
  \pi(u) \leq \pi(v)\,,
  \quad\text{for all } u,v \in \bbM C\,.
\end{align*}
Let $\Delta_0 \subseteq \Delta$ be the (finite) subset of all formulae whose sort is equal to the
sort of some element of~$C$.
We have shown in Lemma~\ref{Lem: closure under inverse relabellings}
that $L$-definable languages are closed under inverse relabellings.
Therefore, we can use Lemma~\ref{Lem: morphisms of logics preserve definability}
to find a sort-wise finite set $\Psi_\xi \cup \Delta_0 \subseteq \Delta' \subseteq L$ such that
\begin{align*}
  s \sqsubseteq_{\Delta'[\Sigma]} t \qtextq{implies}
  \bbM f(s) \sqsubseteq_\Psi \bbM f(t)\,.
\end{align*}
For $s,t \in \bbM_\xi\Sigma$, it follows that
\begin{align*}
  s \sqsubseteq_{\Delta'[\Sigma]} t
  &\quad\Rightarrow\quad
  \bbM f(s) \sqsubseteq_\Psi \bbM f(t) \\
  &\quad\Rightarrow\quad
  \theta_\Delta(\varphi(s)) = \pi(\bbM f(s)) \leq \pi(\bbM f(t)) = \theta_\Delta(\varphi(t)) \\
  &\quad\Rightarrow\quad
  \varphi(s) \sqsubseteq_\Delta \varphi(t)\,.
\end{align*}
\upqed
\end{proof}

As a consequence we obtain the following counterpart to
Theorem~\ref{Thm: L-definable algebras recognise L-definable languages}.
\begin{cor}\label{Cor: L-definable iff recognised by L-definable algebra}
Let $L$~be an $\bbM$-compositional, varietal family of logics.
A~language $K \subseteq \bbM\Sigma$ is $L$-definable if, and only if, it is recognised
by an $L$-definable algebra.
\end{cor}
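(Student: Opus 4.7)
The plan is to split the proof into the two familiar directions, with the heavy lifting on only one side. The direction $(\Leftarrow)$ follows instantly from Theorem~\ref{Thm: L-definable algebras recognise L-definable languages}, which already asserts that every language recognised by an $L$-definable algebra is $L$-definable. So the work is entirely in establishing $(\Rightarrow)$.

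For $(\Rightarrow)$, given an $L$-definable language $K \subseteq \bbM\Sigma$, fix a formula $\varphi \in L[\Sigma]$ with $K = \Mod(\varphi)$. Since $L$~is $\bbM$-compositional, we can enlarge the finite set $\{\varphi\}$ to a sort-wise finite $\Delta \subseteq L$ such that $\sqsubseteq_\Delta$ is a congruence ordering on $\bbM\Sigma$, and hence the theory functor $\theta_\Delta : \bbM\Sigma \to \Theta_\Delta\Sigma$ from Proposition~\ref{Prop: theory algebras} is available. Because $\varphi \in \Delta$, the language~$K$ is $\Delta$-definable, so Lemma~\ref{Lem: Theta recognises Delta-definable languages} tells us that $\theta_\Delta$ recognises~$K$. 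It remains to observe that $\Theta_\Delta\Sigma$~is itself $L$-definable: being varietal, the class of $L$-definable languages is in particular closed under finite unions and intersections, so each logic $L[\Sigma]$ is lattice closed; the hypotheses of Theorem~\ref{Thm: theory algebras are L-definable} are therefore met, and since $L$~is varietal (condition~(1) of that theorem), condition~(2) gives that every $\Theta_\Delta\Sigma$ is $L$-definable.

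The only potential subtlety is the bookkeeping to see that ``varietal'' in the hypothesis implies ``lattice closed'' in the sense required by Theorem~\ref{Thm: theory algebras are L-definable}. This is immediate once one unpacks the definitions (closure of the family of $L$-definable languages under finite Boolean combinations is exactly lattice-closedness of each fibre $L[\Sigma]$, since the models of $L[\Sigma]$ are the elements of $\bbM\Sigma$). Once this is noted, the rest of the proof is simply a concatenation of Lemma~\ref{Lem: Theta recognises Delta-definable languages}, Theorem~\ref{Thm: theory algebras are L-definable}, and Theorem~\ref{Thm: L-definable algebras recognise L-definable languages}, and no further construction is needed.
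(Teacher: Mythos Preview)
Your proof is correct and follows essentially the same route as the paper's own argument: the $(\Leftarrow)$ direction is Theorem~\ref{Thm: L-definable algebras recognise L-definable languages}, and for $(\Rightarrow)$ one picks a sort-wise finite $\Delta$ containing a defining formula so that $\Theta_\Delta\Sigma$ exists, invokes Lemma~\ref{Lem: Theta recognises Delta-definable languages} to see that $\theta_\Delta$ recognises~$K$, and then Theorem~\ref{Thm: theory algebras are L-definable} to see that $\Theta_\Delta\Sigma$ is $L$-definable. Your extra care in checking that ``varietal'' supplies the lattice-closedness hypothesis of Theorem~\ref{Thm: theory algebras are L-definable} is a reasonable addition (though note that a variety is only closed under \emph{positive} boolean combinations, which is exactly what lattice-closedness asks for).
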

\begin{proof}
$(\Leftarrow)$ follows from Theorem~\ref{Thm: L-definable algebras recognise L-definable languages}.
For $(\Rightarrow)$, fix some sort-wise finite $\Delta \subseteq L$ such that
$K$~is $\Delta$-definable and $\Theta_\Delta\Sigma$~exists.
The claim follows as we have seen in Lemma~\ref{Lem: Theta recognises Delta-definable languages}
that the morphism $\theta_\Delta : \bbM\Sigma \to \Theta_\Delta\Sigma$
recognises~$L$ and in the preceding theorem that the algebra $\Theta_\Delta\Sigma$ is $L$-definable.
\end{proof}

For syntactic algebras, we obtain similar statements.
\begin{lem}\label{Lem: image of syntactic morphism is definably embedded}
Let $L$~be a varietal family of logics.
If $K \subseteq \bbM_\xi\Sigma$ is an $L$-definable language with a syntactic algebra,
then $\Syn(K)$ is $L$-definable.
\end{lem}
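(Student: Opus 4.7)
The plan is to mirror the proof of Theorem~\ref{Thm: theory algebras are L-definable}, substituting the syntactic morphism $\syn_K$ for the theory morphism~$\theta_\Delta$. Since $K$ has a syntactic algebra, $\preceq_K$ is a congruence ordering with finitary index, so Proposition~\ref{Prop: quotient map is morphism} guarantees that $\Syn(K)$ is a well-defined $\bbM$-algebra with sort-wise finite universe, and it is finitely generated by the image $C := \rng(\syn_K \circ \sing)$; hence $\Syn(K)$ is finitary, and what remains is local $L$-definability. Every finite subset of~$\Syn(K)$ is contained in the subalgebra $\langle C\rangle = \Syn(K)$ generated by~$C$, and because $L$ is varietal the class of $L$-definable languages is closed under inverse morphisms; Lemma~\ref{Lem: generated subsets are definable embedded}\,(b) therefore reduces the task to showing that $C$ itself is $L$-definably embedded in~$\Syn(K)$.

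Fix $a \in \Syn(K)$ and write $\pi_0$ for the restriction of the product of $\Syn(K)$ to~$\bbM C$. I must show that $(\bbM\iota)^{-1}[\pi_0^{-1}[\Aboveseg a]] \subseteq \bbM\bbV C$ is $L$-definable over the alphabet~$\bbV C$. The key input is that $\syn_K^{-1}[\Aboveseg a] \subseteq \bbM\Sigma$ is already $L$-definable: since $\syn_K$ recognises this language (with the upwards closed set $\Aboveseg a$), Proposition~\ref{Prop: comparing syntactic congruences}\,(4) expresses it as a finite positive boolean combination $\bigcup_i \bigcap_k p_{ik}^{-1}[K]$ of derivatives of~$K$; varietality of~$L$ forces the class of $L$-definable languages to be closed under derivatives and under finite unions and intersections, so this combination is $L$-definable.

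To transfer $L$-definability from $\syn_K^{-1}[\Aboveseg a]$ down to $(\bbM\iota)^{-1}[\pi_0^{-1}[\Aboveseg a]]$, I choose a right inverse $g : \bbV C \to \bbV\Sigma$ of the surjection $\bbV(\syn_K \circ \sing)$ and run the same diagram chase that proves the analogous identity in the preceding lemma on theory algebras; the only inputs used are the identity $\syn_K = \pi \circ \bbM(\syn_K \circ \sing)$ (the universal property of the free algebra), the monad laws for~$\bbM$, and the naturality axioms for $\delta$ and~$\iota$. This produces the equation
\[
  (\bbM\iota)^{-1}[\pi_0^{-1}[\Aboveseg a]] = (\bbM(\iota \circ g))^{-1}[\syn_K^{-1}[\Aboveseg a]],
\]
exhibiting the left-hand side as an inverse relabelling of an $L$-definable language, hence $L$-definable by Lemma~\ref{Lem: closure under inverse relabellings}. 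Combined with the reduction of the first paragraph, this finishes the argument.

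The only genuinely technical step is the displayed equation. It is not conceptually hard, but the diagram chase involves several applications of naturality, the monad laws, and the interaction between $\bbM$ and~$\bbV$ encoded by~$\delta$; fortunately the computation is already executed almost verbatim in the immediately preceding lemma (about $\rng(\theta_\Delta \circ \sing)$ being $L$-definably embedded in $\Theta_\Delta\Sigma$), so it is a matter of copying it with $\syn_K$ in place of~$\theta_\Delta$. Everything else is bookkeeping that assembles the closure properties of~$L$ with the universal properties of the syntactic quotient.
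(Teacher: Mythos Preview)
Your proof is correct, but it takes a longer route than the paper's. You mirror the argument for theory algebras: reduce to the generating set $C = \rng(\syn_K \circ \sing)$ via Lemma~\ref{Lem: generated subsets are definable embedded}\,(b), show that $\syn_K^{-1}[\Aboveseg a]$ is $L$-definable using Proposition~\ref{Prop: comparing syntactic congruences}, and then rerun the diagram chase with~$\delta$ and the right inverse~$g$ to pull this back to $\bbM\bbV C$. All of that works.

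The paper bypasses the reduction step and the diagram chase entirely by invoking Proposition~\ref{Prop: languages recognised by Syn(K)} instead. For an \emph{arbitrary} finite $C \subseteq \Syn(K)$, the language $\pi^{-1}[\Aboveseg a] \cap \bbM C$ is recognised by the morphism $\pi \restriction \bbM C : \bbM C \to \Syn(K)$, and Proposition~\ref{Prop: languages recognised by Syn(K)} already says that \emph{any} language recognised by $\Syn(K)$ via \emph{any} morphism has the form $\varphi^{-1}\bigl[\bigcup_i \bigcap_k p_{ik}^{-1}[K]\bigr]$. Varietality of~$L$ then finishes the argument in one line. What this buys is that there is no need to single out a generating set, no right inverse~$g$, and no order-agnosticity machinery; the inverse-morphism closure is absorbed into the~$\varphi^{-1}$ supplied by Proposition~\ref{Prop: languages recognised by Syn(K)}. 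Your approach, on the other hand, makes the parallel with the theory-algebra case explicit, which has some expository value but is unnecessary here because syntactic algebras come with the stronger universal property packaged in Proposition~\ref{Prop: languages recognised by Syn(K)}.
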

\begin{proof}
Clearly, $\Syn(K)$ is finitary. Hence, it remains to prove that it is locally $L$-definable.
Let $C \subseteq \Syn(K)$ be finite.
Then $N := \pi^{-1}(\Aboveseg a) \cap \bbM C$ is recognised by the restriction
$\pi \restriction \bbM C : \bbM C \to \Syn(K)$.
By Proposition~\ref{Prop: languages recognised by Syn(K)} it therefore follows that
$N$~is of the form
\begin{align*}
  N = \varphi^{-1}\bigl[\bigcup_{i<m} \bigcap_{k < n_i} p_{ik}^{-1}[K]\bigr]\,,
\end{align*}
for some morphism $\varphi : \bbM C \to \bbM\Sigma$ and contexts~$p_{ik}$.
By the assumed closure properties, all such languages are $L$-definable.
Consequently, it follows by
Theorem~\ref{Thm: L-definable algebras recognise L-definable languages}
that $C$~is $L$-definably embedded in $\Syn(K)$.
\end{proof}

\begin{thm}
Let $L$~be a family of lattice-closed logics such that every $L$-definable language
has a syntactic algebra. The following statements are equivalent.
\begin{enum1}
\item $L$~is varietal.
\item For every $L$-definable language $K \subseteq \bbM\Sigma$,
  the syntactic algebra $\Syn(K)$ is $L$-definable.
\end{enum1}
\end{thm}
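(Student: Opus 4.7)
The direction $(1)\Rightarrow(2)$ is already Lemma~\ref{Lem: image of syntactic morphism is definably embedded}: under the hypothesis that $L$ is varietal, the syntactic algebra of any $L$-definable language is itself $L$-definable. No further work is needed there.

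For $(2)\Rightarrow(1)$, the plan is to establish the three closure properties of a variety of languages directly. Closure under finite unions and intersections is immediate from the hypothesis that $L$ is lattice closed. The unifying idea for the other two closures is that in both cases the resulting language is recognised by a morphism into $\Syn(K)$. Explicitly, for inverse morphisms $\varphi: \bbM\Sigma \to \bbM\Gamma$ applied to an $L$-definable $K \subseteq \bbM\Gamma$: by Lemma~\ref{Lem: morphisms recognising K} we may write $K = \syn_K^{-1}[P]$ for an upwards-closed $P$, so $\varphi^{-1}[K] = (\syn_K \circ \varphi)^{-1}[P]$ is recognised by $\syn_K \circ \varphi$. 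For derivatives, Proposition~\ref{Prop: comparing syntactic congruences} directly yields that $\syn_K$ recognises $p^{-1}[K]$ for every context~$p$.

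Both cases therefore reduce to the following key step: \emph{if $\frakA$ is $L$-definable and $\psi: \bbM\Sigma \to \frakA$ is any morphism, then every language $\psi^{-1}[Q]\subseteq \bbM\Sigma$ with $Q\subseteq A$ upwards closed is $L$-definable}. Applying this step to $\frakA := \Syn(K)$ (which is $L$-definable by hypothesis (2)) completes both arguments. The key step is essentially the $(\Rightarrow)$ direction of Theorem~\ref{Thm: L-definable algebras recognise L-definable languages}; following its proof, we set $\psi_0 := \psi \circ \sing : \Sigma \to A$ with $C := \rng \psi_0$ finite, fix formulae $\vartheta_a$ defining $\pi^{-1}[\Aboveseg a]\cap \bbM C$ inside $\bbM C$, and rewrite $t\in \psi^{-1}[Q]$ as $\bbM\psi_0(t)\models \Lor_{a\in Q}\vartheta_a$.

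The subtle point, which is also the main potential obstacle, is avoiding circularity: Theorem~\ref{Thm: L-definable algebras recognise L-definable languages} is formally stated under the assumption that $L$ is varietal, which is precisely what we are trying to prove. The remedy is to observe that this direction of its proof only needs closure under inverse \emph{relabellings}, not inverse morphisms in general, and such closure is automatic by Lemma~\ref{Lem: closure under inverse relabellings}. Indeed, $\psi_0: \Sigma \to C$ factors as $\Sigma\xrightarrow{\hat\psi_0}\bbV C \xrightarrow{\iota} C$, so $\bbM\psi_0 = \bbM\iota\circ\bbM\hat\psi_0$ where $\hat\psi_0$ is an honest alphabet morphism into the unordered set $\bbV C$; the definition of $L$-definability for ordered sets precisely says that $(\bbM\iota)^{-1}$ of the relevant set is $L$-definable over $\bbV C$, and then inverse-relabelling closure under $\bbM\hat\psi_0$ finishes the argument. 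Once this refinement is in place, the three closure axioms of a variety of languages are verified and $L$ is varietal.
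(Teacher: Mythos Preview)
Your proposal is correct and follows essentially the same route as the paper: $(1)\Rightarrow(2)$ via Lemma~\ref{Lem: image of syntactic morphism is definably embedded}, and $(2)\Rightarrow(1)$ by reducing both closure under inverse morphisms and closure under derivatives to the fact that $\Syn(K)$ recognises the language in question, then invoking the $(\Rightarrow)$ direction of Theorem~\ref{Thm: L-definable algebras recognise L-definable languages}. Your explicit treatment of the potential circularity---observing that the relevant direction of that theorem needs only closure under inverse \emph{relabellings} (available unconditionally by Lemma~\ref{Lem: closure under inverse relabellings}) rather than general inverse morphisms, via the factorisation $\psi_0 = \iota \circ \hat\psi_0$ through~$\bbV C$---is a point the paper leaves implicit, and your making it precise is a genuine improvement in rigour.
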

\begin{proof}
(1)~$\Rightarrow$~(2) follows by
Lemma~\ref{Lem: image of syntactic morphism is definably embedded}.
For (2)~$\Rightarrow$~(1), fix an $L$-definable language $K \subseteq \bbM_\xi\Gamma$.
Then $K \subseteq \syn_K^{-1}[P]$ where $P := \syn_K[K]$.
For closure under inverse morphisms, consider $\varphi : \bbM\Sigma \to \bbM\Gamma$. Then
\begin{align*}
  \varphi^{-1}[K] = \varphi^{-1}[\syn_K^{-1}[P]]
  = (\syn_K \circ \varphi)^{-1}[P]\,,
\end{align*}
which is $L$-definable by Theorem~\ref{Thm: L-definable algebras recognise L-definable languages}.

For closure under derivatives, consider a context $p \in \bbM(\Gamma + \Box)$.
By Proposition~\ref{Prop: comparing syntactic congruences},
there exists an upwards closed set $Q \subseteq \Syn(K)$ such that $p^{-1}[K] = \syn_K^{-1}[Q]$.
Consequently, $p^{-1}[K]$ is recognised by $\Syn(K)$ and, hence, $L$-definable by
Theorem~\ref{Thm: L-definable algebras recognise L-definable languages}.
\end{proof}

Next, let us take a look at the closure properties of $L$-definable algebras.
\begin{prop}\label{Prop: L-definable algebras form pseudo-variety}
Let $L$~be an $\bbM$-compositional logic that is lattice closed.
The class of $L$-definable $\bbM$-algebras is a pseudo-variety.
\end{prop}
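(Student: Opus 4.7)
The plan is to verify each of the three closure properties of a pseudo-variety -- quotients, finitary subalgebras of finite products, and sort-accumulation points -- directly from the definition of local $L$-definability. In each case I~fix a finite subset $D \subseteq B$ of the target algebra~$\frakB$ together with an element $b \in B_\xi$, and exhibit $\pi_B^{-1}[\Aboveseg b] \cap \bbM_\xi D$ as an $L$-definable language. The only tools needed are closure of the $L$-definable languages under inverse relabellings (Lemma~\ref{Lem: closure under inverse relabellings}) and under finite intersections and unions (from lattice-closedness of~$L$). Finitariness of the source algebras will reduce every relevant upwards-closed set to the up-closure of a finite antichain, providing the finiteness the Boolean combinations need.

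For closure under quotients, given a surjective morphism $q : \frakA \to \frakB$ with $\frakA$~$L$-definable, I~pick a section $f : D \to A$ of~$q$ (automatically monotone since $D$~is unordered) and set $C := f[D]$. Functoriality of~$\bbM$ together with the equation $q \circ f = \id_D$ and the fact that $q$~is a morphism of $\bbM$-algebras gives $\pi_B(s) = q(\pi_A(\bbM f(s)))$ for every $s \in \bbM D$. Because $\frakA$~is finitary, $q^{-1}[\Aboveseg b] \cap A_\xi$ is a finite upwards-closed subset of~$A_\xi$, hence equal to $\Aboveseg M$ for a finite antichain $M \subseteq A_\xi$, and
\begin{align*}
  \pi_B^{-1}[\Aboveseg b] \cap \bbM_\xi D
  = \bigcup_{a \in M} (\bbM f)^{-1}\bigl[\pi_A^{-1}[\Aboveseg a] \cap \bbM C\bigr]
\end{align*}
is a finite union of inverse relabellings of $L$-definable languages. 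Closure under finitary subalgebras of finite products $\frakB \subseteq \prod_{i<n} \frakA^i$ is dual\?: using the componentwise characterisation of the product ordering, $\pi_B(s) \geq b$ is equivalent to $\pi_{A^i}(\bbM g_i(s)) \geq p_i(b)$ for every $i < n$, where $g_i : D \to p_i[D]$ is the corestriction of the $i$-th projection, so $\pi_B^{-1}[\Aboveseg b] \cap \bbM_\xi D$ becomes a finite \emph{intersection} of inverse relabellings of $L$-definable languages.

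For closure under sort-accumulation points, fix a finite $\Delta \subseteq \Xi$ containing~$\xi$ and every sort occurring in~$D$, and apply the definition of sort-accumulation point to obtain an $L$-definable algebra~$\frakA$ together with a surjective morphism $q : \frakA|_\Delta \to \frakB|_\Delta$ in $\Alg(\bbM|_\Delta)$. Since $b$~and every element of $\bbM_\xi D$ already live inside~$\bbM|_\Delta$, the quotient argument can be replayed verbatim in $\Alg(\bbM|_\Delta)$. The main point that requires care is the transfer to the reduct\?: one must check that the $L$-definable preimages $\pi_A^{-1}[\Aboveseg a] \cap \bbM C$ guaranteed by $L$-definability of the full algebra~$\frakA$ continue to do service for~$\frakA|_\Delta$. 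This is immediate once one observes that $C \subseteq A|_\Delta$ and that the language in question is literally the same subset of $\bbM C$ in either reading. I~expect this bookkeeping between $\bbM$ and $\bbM|_\Delta$ to be the only genuine subtlety\?; the rest amounts to careful tracking of sorts, of sections of surjections, and of the componentwise product ordering.
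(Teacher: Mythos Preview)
Your approach is correct and close in spirit to the paper's, with a different organisation. The paper first shows that \emph{locally} $L$-definable algebras are closed under arbitrary subalgebras and arbitrary products (two separate, very short arguments), deduces closure of $L$-definable algebras under finitary subalgebras of finite products from that, skips quotients altogether (they are subsumed by sort-accumulation points), and finally handles sort-accumulation points with exactly the section-based computation you describe. Your choice to prove quotients first and then reuse that argument for sort-accumulation points is equally valid and arguably makes the dependency structure more transparent.

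There is one technical slip. The section $f : D \to A$ of~$q$ is \emph{not} ``automatically monotone since $D$ is unordered''\?: $D$~is a subset of the ordered universe~$B$ and inherits a possibly nontrivial order, and a section of a surjective monotone map need not be monotone, so $\bbM f$~is not a priori defined as a morphism of~$\Pos^\Xi$. The fix is to take $f : \bbV D \to A$ instead (any function out of an unordered set is monotone) and to compute with
\begin{align*}
  (\bbM\iota)^{-1}\bigl[\pi_B^{-1}[\Aboveseg b] \cap \bbM D\bigr] \subseteq \bbM\bbV D\,,
\end{align*}
which is exactly what the definition of $L$-definability over an ordered finite set~$D$ unfolds to anyway; this is precisely what the paper does in its sort-accumulation step. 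The same $\bbV/\iota$ bookkeeping is needed (but routine) when you invoke closure under inverse relabellings for the maps $\bbM f$ and~$\bbM g_i$, since Lemma~\ref{Lem: closure under inverse relabellings} is stated for morphisms of~$\Alp$, i.e.\ functions between \emph{unordered} alphabets. With this adjustment your computations go through verbatim.
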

\begin{proof}
We start by proving that the class of locally $L$-definable $\bbM$-algebras
is closed under arbitrary (a)~subalgebras and (b)~products.
This implies that that the class of $L$-definable $\bbM$-algebras is closed
under finitary subalgebras of finite products.
Having done so, it is then sufficient to show that the latter class is also
closed under (c)~sort-accumulation points.
Hence, we have three statements to prove.

(a) Suppose that $\frakA \subseteq \frakB$ where $\frakB$~is locally $L$-definable.
Given a finite set $C \subseteq A$ and an element $a \in A$, note that the set
$K := \pi^{-1}[\Aboveseg a] \cap \bbM C$ has the same value when evaluated in~$\frakA$
and in~$\frakB$.
(To see this, note that $\pi[\bbM C] \subseteq A$ as $A$~is closed under~$\pi$.
Hence, $\Aboveseg a \cap \pi[\bbM C]$ has the same value in both algebras.)
By our assumption on~$\frakB$ it thus follows that $K$~is $L$-definable.

(b) First, note that the empty product~$\frakA$ has exactly one element~$1_\xi$ of each sort~$\xi$.
Consequently, $\pi^{-1}(\Aboveseg 1_\xi) \cap \bbM C = \bbM C$,
which is $L$-definable (by the empty conjunction).

It remains to consider the case of a non-empty product $\frakA = \prod_{i \in I} \frakB^i$.
Given a finite set $C \subseteq A$, we choose finite sets $D^i \subseteq B^i$, for $i \in I$,
such that $C \subseteq \prod_i D^i$.
Let $p_k : \prod_i B^i \to B^k$ be the projections.
For $t \in \bbM\prod_i D^i$ and $a \in A$, we have
\begin{align*}
  \pi(t) \geq a
  \quad\iff\quad
  \pi(\bbM p_i(t)) = p_i(\pi(t)) \geq p_i(a) \quad\text{for all } i\,.
\end{align*}
Thus,
\begin{align*}
  \pi^{-1}(\Aboveseg a) \cap \bbM C
  = \bigcap_{i \in I}
    {(\bbM p_i)^{-1}\bigl[\pi^{-1}(\Aboveseg p_i(a)) \cap \bbM D^i\bigr] \cap \bbM C}\,.
\end{align*}
For every pair of distinct elements $c,d \in C$, we fix one index $i \in I$ with
$p_i(c) \neq p_i(d)$. Let $H \subseteq I$ be the (finite) set of these indices.
Then we have
\begin{align*}
  c \neq d \quad\iff\quad p_i(c) \neq p_i(d)\,, \text{ for some } i \in H\,.
\end{align*}
It follows that
\begin{align*}
  \pi^{-1}(\Aboveseg a) \cap \bbM C
  &= \bigcap_{i \in I}
     {(\bbM p_i)^{-1}\bigl[\pi^{-1}(\Aboveseg p_i(a)) \cap \bbM D^i\bigr] \cap \bbM C} \\
  &= \bigcap_{h \in H}
     {(\bbM p_h)^{-1}\bigl[\pi^{-1}(\Aboveseg p_h(a)) \cap \bbM D^h\bigr] \cap \bbM C} \\
  &= \bigcap_{h \in H}
     (\bbM j)^{-1}\bigl[(\bbM p_h)^{-1}\bigl[\pi^{-1}(\Aboveseg p_h(a)) \cap \bbM D^h\bigr]\bigr]\,,
\end{align*}
where $j : C \to \prod_i D^i$ is the inclusion map.
Note that we have seen in Lemma~\ref{Lem: closure under inverse relabellings}
that the $L$-definable languages are closed under inverse relabellings.
As the~$\frakB^i$ are $L$-definable and $L$-is closed under finite conjunctions,
the above set is therefore also $L$-definable.

(c)
Let $\frakA$~be a sort-accumulation point of the class of $L$-definable algebras.
To show that $\frakA$~is also locally $L$-definable, fix a finite set $C \subseteq A$ and
an element $a \in A$. Let $\Delta \subseteq \Xi$ be a finite set of sorts such that
$C \cup \{a\} \subseteq A|_\Delta$.
By assumption, we can find an $L$-definable algebra~$\frakB$ such that
$\frakA|_\Delta$~is a quotient of~$\frakB|_\Delta$.
Let $\mu : \frakB|_\Delta \to \frakA|_\Delta$ be a surjective morphism.

Since $\mu$~is surjective, there exists a function $f : \bbV A|_\Delta \to B|_\Delta$ such that
$\mu \circ f = \iota$.
Setting $D := f[C]$, it follows that
\begin{align*}
  (\bbM\iota)^{-1}[\pi^{-1}[\Aboveseg a] \cap \bbM C]
  &= (\pi \circ \bbM\iota)^{-1}[\Aboveseg a] \cap \bbM\bbV C \\
  &= (\pi \circ \bbM\mu \circ \bbM f)^{-1}[\Aboveseg a] \cap \bbM\bbV C \\
  &= (\mu \circ \pi \circ \bbM f)^{-1}[\Aboveseg a] \cap \bbM\bbV C \\
  &= (\bbM f)^{-1}\bigl[\pi^{-1}[\mu^{-1}[\Aboveseg a]]\bigr] \cap \bbM\bbV C \\
  &= \bigcup_{b \in \mu^{-1}[\Aboveseg a]} (\bbM f)^{-1}\bigl[\pi^{-1}[\Aboveseg b]\bigr] \cap \bbM\bbV C \\
  &= \bigcup_{b \in \mu^{-1}[\Aboveseg a]} (\bbM f)^{-1}\bigl[\pi^{-1}[\Aboveseg b] \cap \bbM D\bigr] \cap \bbM\bbV C\,.
\end{align*}
This set is $L$-definable since
each language of the form $\pi^{-1}[\Aboveseg b] \cap \bbM D$ is $L$-definable and
the class of $L$-definable languages is closed under finite unions and inverse relabellings.
Consequently, $\pi^{-1}[\Aboveseg a] \cap \bbM C$ is also $L$-definable.
\end{proof}
The next theorem provides a more concrete description of this pseudo-variety\?:
it is generated by the theory algebras $\Theta_\Delta\Sigma$.

\begin{thm}\label{Thm: pseudo-variety generated by theory algebras}
Let $L$~be a varietal $\bbM$-compositional logic.
An $\bbM$-algebra~$\frakA$ is $L$-definable if, and only if,
it belongs the the pseudo-variety~$\calV$ generated by all theory algebras of
the form~$\Theta_\Delta X$ where $X$~is some finite set and $\Delta \subseteq L$
a sort-wise finite subfamily such that $\sqsubseteq_\Delta$~is a congruence ordering.
\end{thm}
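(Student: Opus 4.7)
The plan is to prove both inclusions using the machinery already in place; both fall out rather cleanly from results established in Sections \ref{Sect: synt. alg.}, \ref{Sect: varieties} and \ref{Sect:definable}.

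For the inclusion of~$\calV$ into the class of $L$-definable algebras, I~would invoke Theorem~\ref{Thm: theory algebras are L-definable}, which, since $L$~is varietal, tells us that every generator $\Theta_\Delta X$ of~$\calV$ is itself $L$-definable. Combined with Proposition~\ref{Prop: L-definable algebras form pseudo-variety}, which says that the $L$-definable algebras form a pseudo-variety, this immediately gives the inclusion, since the pseudo-variety generated by the $\Theta_\Delta X$ must lie inside any pseudo-variety containing them.

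The converse is slightly more substantial. Suppose $\frakA$~is $L$-definable. I~would apply Lemma~\ref{Lem: algebras in a pseudo-variety}, which reduces matters to showing that $\Syn(K) \in \calV$ for every language~$K$ recognised by~$\frakA$. Fix such a language $K \subseteq \bbM_\xi\Sigma$. Since $\frakA$~is $L$-definable, Theorem~\ref{Thm: L-definable algebras recognise L-definable languages} ensures that $K$~is itself $L$-definable. Using $\bbM$-compositionality of~$L$, I~would then pick a sort-wise finite $\Delta \subseteq L$ such that $K$~is $\Delta$-definable and $\sqsubseteq_\Delta$~is a congruence ordering. Lemma~\ref{Lem: Theta recognises Delta-definable languages} tells us that $\theta_\Delta : \bbM\Sigma \to \Theta_\Delta\Sigma$ recognises~$K$; since $\Theta_\Delta\Sigma \in \calV$ by definition of~$\calV$, Lemma~\ref{Lem: calV contains Syn(K)} then yields $\Syn(K) \in \calV$, as required.

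The main point of care is ensuring that the side conditions for these lemmas are met: Lemmas \ref{Lem: algebras in a pseudo-variety} and~\ref{Lem: calV contains Syn(K)} both presuppose the existence of syntactic algebras for the languages concerned. This should follow from the standing hypothesis (cf.~Theorem~\ref{Thm: essentially finitary monads have syntactic algebras}) that $\bbM$~is essentially finitary over the relevant class, together with the fact that $\frakA$ and every $\Theta_\Delta\Sigma$ is finitary. Beyond this bookkeeping I~do not anticipate any serious obstacle; the proof essentially glues the existing results together.
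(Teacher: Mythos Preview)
Your $(\Leftarrow)$ direction is fine and matches the paper exactly.

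For $(\Rightarrow)$, your route through Lemma~\ref{Lem: algebras in a pseudo-variety} and Lemma~\ref{Lem: calV contains Syn(K)} is logically sound \emph{provided} every language recognised by~$\frakA$ has a syntactic algebra. The problem is that this is not guaranteed by the hypotheses of the theorem. You appeal to a ``standing hypothesis'' that $\bbM$~is essentially finitary, but there is no such hypothesis: the paper's Convention only assumes that $\bbM$~preserves injectivity, surjectivity, bijectivity, and preimages, and uses the standard ordering. Essential finitariness is a separate property, introduced later and assumed explicitly wherever it is needed (e.g.\ in Theorem~\ref{Thm: essentially finitary monads have syntactic algebras}). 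The statement of Theorem~\ref{Thm: pseudo-variety generated by theory algebras} assumes only that $L$~is varietal and $\bbM$-compositional, and the paper's proof works at that level of generality.

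The paper avoids syntactic algebras entirely by arguing directly that $\frakA$~is a sort-accumulation point of theory algebras. It fixes a finite generating set $C \subseteq A$ and, for each finite set $X \subseteq \Xi$ of sorts, chooses a sort-wise finite $\Delta$ such that every language $\pi^{-1}[\Aboveseg a] \cap \bbM C$ with $a \in A|_X$ is $\Delta$-definable and $\sqsubseteq_\Delta$~is a congruence ordering. One then checks that $\ker\theta_\Delta|_X \subseteq \ker\pi|_X$ (an easy computation using $\Delta$-definability of the sets $\pi^{-1}[\Aboveseg a]$), and the Factorisation Lemma gives a surjection $\Theta_\Delta C|_X \to \frakA|_X$. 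This shows $\frakA \in \sfC(\{\Theta_\Delta X\}) \subseteq \calV$ without ever forming a syntactic algebra.

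If one is willing to add the hypothesis that all relevant languages have syntactic algebras (as is the case in the concrete applications of Section~\ref{Sect:applications}), your argument becomes a correct and arguably cleaner alternative. But as a proof of the theorem as stated, it has a genuine gap.
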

\begin{proof}
$(\Leftarrow)$
We have seen in Proposition~\ref{Prop: L-definable algebras form pseudo-variety}
that the class of all $L$-definable algebras forms a pseudo-variety~$\calW$,
and in Theorem~\ref{Thm: theory algebras are L-definable}
that $\calW$~contains all theory algebras. Consequently, $\calV \subseteq \calW$.

$(\Rightarrow)$
Let $\frakA$~be $L$-definable and fix a finite set $C \subseteq A$ of generators.
It is sufficient to prove that $\frakA$~is a sort-accumulation point of theory algebras.
Hence, fix a finite set $X \subseteq \Xi$ of sorts.
W.l.o.g.\ we may assume that $X$~contains all the sorts in~$C$.
Let $\Delta \subseteq L$ be a sort-wise finite set such that every language
$\pi^{-1}[\Aboveseg a] \cap \bbM C$ with $a \in A|_X$ is $\Delta$-definable
and $\Theta_\Delta$~is defined.
For $s,t \in \bbM_\xi C$ with $\xi \in X$, we have
\begin{align*}
  \theta_\Delta(s) \leq \theta_\Delta(t)
  &\quad\Rightarrow\quad
  s \sqsubseteq_\Delta t \\
  &\quad\Rightarrow\quad
  [\pi(s) \geq a \Rightarrow \pi(t) \geq a]\,, \quad\text{for all } a \in A_\xi\,, \\
  &\quad\Rightarrow\quad
  \pi(s) \leq \pi(t)\,.
\end{align*}
Consequently, $\ker \theta_\Delta|_X \subseteq \ker \pi|_X$ and we can use the Factorisation Lemma
to find a morphism $\mu : \Theta_\Delta C|_X \to \frakA|_X$ such that
$\pi = \mu \circ \theta_\Delta$.
In particular, $\frakA|_X$~is a quotient of~$\Theta_\Delta C|_X$
and $\Theta_\Delta C \in \calV$.
\end{proof}
\begin{cor}\label{Cor: L-definable algebras satisfy theory of theory algebras}
Let $L$~be a varietal $\bbM$-compositional logic and
$\calT$~the class of all theory algebras~$\Theta_\Delta\Sigma$.
A~finitary\/ $\bbM$-algebra is $L$-definable if, and only if, it satisfies every
$\bbM$-inequality in $\Th(\calT)$.
\end{cor}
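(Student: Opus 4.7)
The plan is to derive this corollary by combining the preceding Theorem~\ref{Thm: pseudo-variety generated by theory algebras} with the Reiterman-style correspondence $\Mod_\calF(\Th(\calV)) = \calV$ from Corollary~\ref{Cor: Mod(Th(V)) = V}. Let $\calV$ denote the class of all $L$-definable $\bbM$-algebras. By Proposition~\ref{Prop: L-definable algebras form pseudo-variety}, $\calV$ is a pseudo-variety, and by Theorem~\ref{Thm: pseudo-variety generated by theory algebras} it is in fact the pseudo-variety generated by the subclass $\calT$. The claim to prove is thus that $\calV = \Mod_\calF(\Th(\calT))$, which by Corollary~\ref{Cor: Mod(Th(V)) = V} will follow once we show that $\Th(\calV) = \Th(\calT)$.

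First I would observe that $\Th(\calV) \subseteq \Th(\calT)$ holds trivially from $\calT \subseteq \calV$. For the reverse inclusion, the key point is that $\Mod_\calF(\Th(\calT))$ is itself a pseudo-variety by Proposition~\ref{Prop: definable classes are pseudo-varieties}, and it evidently contains $\calT$. Since $\calV$ is, by Theorem~\ref{Thm: pseudo-variety generated by theory algebras}, the smallest pseudo-variety containing~$\calT$, this forces $\calV \subseteq \Mod_\calF(\Th(\calT))$, which in turn gives $\Th(\calT) \subseteq \Th(\calV)$, as required. There is no real obstacle here: all the substantive work (that $L$-definable algebras are closed under the relevant operations, that the theory algebras are themselves $L$-definable and generate~$\calV$, and that pseudo-varieties coincide with their bi-duals through $\Th$ and $\Mod_\calF$) has been done in the preceding sections, and the corollary is a one-line combination of those results.
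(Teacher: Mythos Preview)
Your proposal is correct and follows essentially the same route as the paper's proof: both arguments let $\calV$ be the pseudo-variety of $L$-definable algebras, invoke Theorem~\ref{Thm: pseudo-variety generated by theory algebras} to see that $\calV$ is the smallest pseudo-variety containing~$\calT$, observe that $\Mod(\Th(\calT))$ is a pseudo-variety containing~$\calT$ (hence $\calV \subseteq \Mod(\Th(\calT))$), and use $\calT \subseteq \calV$ together with Corollary~\ref{Cor: Mod(Th(V)) = V} for the reverse inclusion. The only cosmetic difference is that you route the second inclusion through the intermediate equality $\Th(\calV) = \Th(\calT)$, whereas the paper argues the containment $\Mod(\Th(\calT)) \subseteq \Mod(\Th(\calV)) = \calV$ directly; the ingredients are identical.
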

\begin{proof}
Let $\calV$~be the pseudo-variety of all $L$-definable algebras.
By Theorem~\ref{Thm: pseudo-variety generated by theory algebras}, $\calV$~is the
smallest pseudo-variety containing~$\calT$.
The class $\calW := \Mod(\Th(\calT))$ is also a pseudo-variety containing~$\calT$.
Consequently, $\calV \subseteq \calW$.
Furthermore, $\calT \subseteq \calV$ implies $\Th(\calT) \supseteq \Th(\calV)$.
Hence, it follows by Corollary~\ref{Cor: Mod(Th(V)) = V} that
\begin{align*}
  \calW = \Mod(\Th(\calT)) \subseteq \Mod(\Th(\calV)) = \calV\,.
\end{align*}
\upqed
\end{proof}

The following theorem summarises our various characterisations of when a language
is definable in a given logic.
It can be considered the main result of this article.
Of these characterisations, (8)~and~(9) are the most useful.
(9)~mainly when trying to prove that a language is $L$-definable
and (8)~when devising a decision procedure for $L$-definability.
Of course, for the latter one has to first determine the set of inequalities in question.
Depending on the logic~$L$ this can be a highly non-trivial task.
\begin{thm}\label{Thm: definability theorem}
Let $L$~be an $\bbM$-compositional varietal family of logics and let
$K \subseteq \bbM_\xi\Sigma$ be a language with a syntactic algebra.
The following statements are equivalent.
\begin{enum1}
\item $K$~is $L$-definable.
\item $K$~is recognised by some $L$-definable algebra.
\item $\Syn(K)$ is $L$-definable.
\item $\Syn(K)$ is a quotient of~$\Theta_\Delta\Gamma$, for some $\Delta$~and~$\Gamma$.
\item $\syn_K = \varrho \circ \theta_\Delta$, for some $\Delta$ and a surjective morphism
  $\varrho : \Theta_\Delta\Sigma \to \Syn(K)$.
\item $K$~is recognised by $\Theta_\Delta\Gamma$, for some $\Delta$~and~$\Gamma$.
\item $\theta_\Delta : \bbM\Sigma \to \Theta_\Delta\Sigma$ recognises~$K$, for some~$\Delta$.
\item $\Syn(K)$ satisfies all $\bbM$-inequalities $s \leq t$ that hold in every theory algebra
  $\Theta_\Delta\Gamma$.
\item There is some~$\Delta$ such that
  \begin{align*}
    s \sqsubseteq_\Delta t \qtextq{implies} s \in K \Rightarrow t \in K\,,
    \quad\text{for all } s,t \in \bbM_\xi\Sigma\,.
  \end{align*}
\item ${\sqsubseteq_\Delta} \subseteq {\preceq_K}\,,$ for some $\Delta$.
\end{enum1}
(Here $\Delta$~ranges over sort-wise finite subsets of~$L$ and $\Gamma$~ranges over alphabets.)
\end{thm}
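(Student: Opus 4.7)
The plan is to establish the ten-way equivalence as a cluster of implications, reusing the machinery built up in Sections~\ref{Sect:logic}~and~\ref{Sect:definable}. Three near-immediate equivalences already settle much of the work: (1)$\iff$(2) is Corollary~\ref{Cor: L-definable iff recognised by L-definable algebra}; (1)$\Rightarrow$(3) is Lemma~\ref{Lem: image of syntactic morphism is definably embedded}; (3)$\Rightarrow$(2) holds because $\syn_K$ recognises $K$ by construction; and (3)$\iff$(8) is Corollary~\ref{Cor: L-definable algebras satisfy theory of theory algebras} applied to the finitary algebra $\Syn(K)$.

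For the remaining conditions I would route through~(7). Starting from (1), $\bbM$-compositionality together with Theorem~\ref{Thm: characterisation of compositional logics} furnishes a sort-wise finite $\Delta \subseteq L$ such that $K$~is $\Delta$-definable and $\sqsubseteq_\Delta$~is a congruence ordering, whence the theory algebra $\Theta_\Delta\Sigma$ exists; then Lemma~\ref{Lem: Theta recognises Delta-definable languages} yields (7). Setting $\Gamma := \Sigma$ gives~(6); and since $\Theta_\Delta\Gamma$ is $L$-definable by Theorem~\ref{Thm: theory algebras are L-definable}, (6)$\Rightarrow$(2). From (7), surjectivity of $\theta_\Delta$ lets me invoke Theorem~\ref{Thm: Syn(K) terminal} to obtain a unique surjective morphism $\varrho : \Theta_\Delta\Sigma \to \Syn(K)$ with $\syn_K = \varrho \circ \theta_\Delta$, which is (5). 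The chain (5)$\Rightarrow$(4)$\Rightarrow$(3) then follows because (4) is a weakening of (5) and because $L$-definable algebras form a pseudo-variety (Proposition~\ref{Prop: L-definable algebras form pseudo-variety}) and are thus closed under quotients.

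It remains to tie (9)~and~(10) to the rest. Since $\ker\theta_\Delta = {\sqsubseteq_\Delta}$ whenever $\sqsubseteq_\Delta$ is a congruence ordering, Lemma~\ref{Lem: morphisms recognising K} immediately yields (7)$\iff$(10), and (10)$\Rightarrow$(9) is trivial on taking the empty context~$\Box$. For the converse (9)$\Rightarrow$(10), given~(9) with some~$\Delta$, I would use $\bbM$-compositionality to pick a sort-wise finite $\Delta' \supseteq \Delta$ for which $\sqsubseteq_{\Delta'}$ is a congruence ordering; then Lemma~\ref{Lem: closure under adding contexts} gives, for any $s \sqsubseteq_{\Delta'} t$ and any context~$p$, the relation $p[s] \sqsubseteq_{\Delta'} p[t]$ and hence $p[s] \sqsubseteq_\Delta p[t]$ (since enlarging the set of formulae only strengthens $\sqsubseteq$). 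Clause~(9) then yields $p[s] \in K \Rightarrow p[t] \in K$, so $s \preceq_K t$, establishing ${\sqsubseteq_{\Delta'}} \subseteq {\preceq_K}$.

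The main obstacle I expect is purely administrative: several clauses quantify over ``some~$\Delta$'', and when chaining implications one must be careful to enlarge $\Delta$ to a sort-wise finite $\Delta'$ that simultaneously defines $K$, makes $\sqsubseteq_{\Delta'}$ a congruence ordering, and captures the particular contexts being compared. Since $\bbM$-compositionality permits arbitrary finite subfamilies to be enlarged in all of these ways at once, the bookkeeping is routine; no ideas beyond those already developed in the preceding two sections are required.
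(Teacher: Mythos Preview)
Your proposal is correct and follows essentially the same approach as the paper: both arguments assemble the ten-way equivalence from the same toolkit of previously established results (Corollary~\ref{Cor: L-definable iff recognised by L-definable algebra}, Lemma~\ref{Lem: Theta recognises Delta-definable languages}, Theorem~\ref{Thm: theory algebras are L-definable}, Theorem~\ref{Thm: Syn(K) terminal}, Proposition~\ref{Prop: L-definable algebras form pseudo-variety}, Lemma~\ref{Lem: closure under adding contexts}, Lemma~\ref{Lem: morphisms recognising K}, and Corollary~\ref{Cor: L-definable algebras satisfy theory of theory algebras}). The only differences are in the routing of implications---e.g., the paper links (4)$\Rightarrow$(6) via Lemma~\ref{Lem: languages recognised by quotients} and (1)$\Leftrightarrow$(9) via Lemma~\ref{Lem: L-definable iff closed under L-equivalence} directly, whereas you reach (6) from (7) by setting $\Gamma:=\Sigma$ and reach (9) through (10)---but these are cosmetic rearrangements, not substantive deviations.
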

\begin{proof}
(5)~$\Rightarrow$~(4) is trivial.

(4)~$\Rightarrow$~(6)
Since $\syn_K : \bbM\Sigma \to \Syn(K)$ recognises~$K$, the claim follows
by Lemma~\ref{Lem: languages recognised by quotients}.

(6)~$\Rightarrow$~(2) follows by Theorem~\ref{Thm: theory algebras are L-definable}.

(2)~$\Leftrightarrow$~(1)
was shown in Corollary~\ref{Cor: L-definable iff recognised by L-definable algebra}.

(1)~$\Leftrightarrow$~(9)
was proved in Lemma~\ref{Lem: L-definable iff closed under L-equivalence}.

(9)~$\Rightarrow$~(10)
Fix a finite set $\Phi \subseteq L$ such that
\begin{align*}
  s \sqsubseteq_\Phi t \qtextq{implies} s \in K \Rightarrow t \in K\,,
\end{align*}
and choose a finite set $\Phi \subseteq \Delta \subseteq L$ such that $\Theta_\Delta$~is defined.
We claim that ${\sqsubseteq_\Delta} \subseteq {\preceq_K}$.
Hence, suppose that $s \sqsubseteq_\Delta t$.
Note that we have shown in Lemma~\ref{Lem: closure under adding contexts} that
$s \sqsubseteq_\Delta t$ implies $p[s] \sqsubseteq_\Delta p[t]$, for every context~$p$.
By choice of~$\Delta$, it follows that
\begin{align*}
  p[s] \in K \qtextq{implies} p[t] \in K\,, \quad\text{for all contexts } p\,.
\end{align*}

(10)~$\Leftrightarrow$~(5)
Note that ${\sqsubseteq_\Delta} = \ker \theta_\Delta$ and ${\preceq_K} = \ker \syn_K$.
For a finite set $\Delta \subseteq L$, it therefore follows that
\begin{align*}
  {\sqsubseteq_\Delta} \subseteq {\preceq_K}
  \quad\iff\quad
  \ker \theta_\Delta \subseteq \ker \syn_K
  \quad\iff\quad
  \syn_K = \varrho \circ \theta_\Delta\,, \quad\text{for some } \varrho\,,
\end{align*}
where the last equivalence holds by the Factorisation Lemma
and Lemma~\ref{Lem: factorisation is morphism}.

(5)~$\Rightarrow$~(7)
Since $\syn_K$ recognises~$K$, there exists an upwards closed set $P \subseteq \Syn(K)$ such that
$K = \syn_K^{-1}[P]$. Setting $Q := \varrho^{-1}[P]$, it follows that
\begin{align*}
  K = \syn_K^{-1}[P] = \theta_\Delta^{-1}[\varrho^{-1}[P]] = \theta_\Delta^{-1}[Q]\,.
\end{align*}

(7)~$\Rightarrow$~(9)
Suppose that $K = \theta_\Delta^{-1}[P]$ for some upwards closed set~$P$.
If $s \sqsubseteq_\Delta t$ and $s \in K$, then
\begin{align*}
  \theta_\Delta(s) \leq \theta_\Delta(t) \qtextq{and} \theta_\Delta(s) \in P\,,
\end{align*}
which implies that $\theta_\Delta(t) \in P$, i.e., $t \in K$.

(4)~$\Rightarrow$~(3)
We have seen in Theorem~\ref{Thm: theory algebras are L-definable} that every theory algebra
is $L$-definable,
and in Proposition~\ref{Prop: L-definable algebras form pseudo-variety}
that the class of $L$-definable algebras forms a pseudo-variety.
In particular, it is closed under quotients.

(3)~$\Rightarrow$~(2) holds as $\syn_K : \bbM\Sigma \to \Syn(K)$ recognises~$K$.

(3)~$\Leftrightarrow$~(8) follows by
Corollary~\ref{Cor: L-definable algebras satisfy theory of theory algebras}.
\end{proof}

\section{Applications}   
\label{Sect:applications}

As an example, let us see how this abstract framework performs in the case
of languages of infinite trees. In this case, the functor~$\bbM$ maps a given set~$A$
to the set of all (finite and infinite) $A$-labelled trees.
There are several possible ways to chose the precise definition for~$\bbM$.
We will present two of them denoted $\bbT$~and~$\bbT^\times$.
The latter is the more general one, while the former is a subfunctor.
Both operate on the category $\Pos^\omega$ with sorts $\Xi := \omega$.
We interpret a sort $n < \omega$ as the \emph{arity} of an element.
Given a set $A \in \Pos^\omega$, the set $\bbT^\times A$ consists of all
(finite or infinite) trees~$t$ where each vertex~$v$ is labelled by an element~$a$ of~$A$
such that the arity of~$a$ matches the number of successors of~$v$.
Hence, the elements of~$A_0$ appear at the leaves of~$t$,
those of~$A_2$ at internal vertices with exactly two successors, and so on.
In addition to the elements of~$A$ we also allow as labels special variable symbols
$x_0,x_1,x_2,\dots$, which are treated as elements of arity~$0$ and which are supposed
to be distinct from all elements of~$A$.
Thus $\bbT^\times A$ can be interpreted as the set of all (possibly infinite) non-closed terms over
the signature~$A$.
The set $\bbT^\times_nA$ consists of all trees~$t$ that use only the variable symbols
$x_0,\dots,x_{n-1}$.
Furthermore, we assume that the root is labelled by an element of~$A$, not a variable.
(This is needed to define the flattening function below.)
Formally, we consider such a tree $t \in \bbT^\times_nA$ as a function
$t : \dom(t) \to A + \{x_0,\dots,x_{n-1}\}$ where $\dom(t)$~is the set of vertices of~$t$.

Note that each variable~$x_i$ can be used once, several times, or not at all.
The subset $\bbT A \subseteq \bbT^\times A$ consists of all those trees that use each variable
at most once. (Such trees are sometimes called \emph{linear} in the literature.)

$\bbT$~and~$\bbT^\times$ are clearly polynomial functors. We turn them into monads as follows.
The singleton map $\sing : A \to \bbT^\times A$ maps an element $a \in A_n$ to the tree
$a(x_0,\dots,x_{n-1})$ consisting of a root with label~$a$ to which are attached~$n$
leaves with labels  $x_0,\dots,x_{n-1}$, respectively.
The flattening map $\Flat : \bbT^\times\bbT^\times A \to \bbT^\times A$ works as follows.
Given a tree $T \in \bbT^\times\bbT^\times A$ where each vertex~$v$ is labelled by some tree
$T(v) \in \bbT^\times A$,
we build a large tree assembled from the trees~$T(v)$ by
\begin{itemize}
\item taking the disjoint union of all trees $T(v)$\?;
\item replacing each occurrence of a variable~$x_i$ in $T(v)$ by an edge to the root
  of $T(u)$, where $u$~is the $(i+1)$-th successor of~$v$\?;
\item unravelling the resulting directed acyclic graph into a tree.
\end{itemize}
For details, we refer the reader to~\cite{Blumensath20,BlumensathZZ}.
It is now straightforward but a bit tedious to check that $\bbT^\times$
together with these two operations forms a monad.
Hence, so is the restriction to~$\bbT$.

The logics we are looking at are \emph{first-order logic} $\FO$ and
\emph{monadic second-order logic} $\MSO$.
To define the satisfaction relation between formulae of these logics and
elements of $\bbT^\times\Sigma$, we encode a tree $t \in \bbT^\times_n\Sigma$ as the structure
\begin{align*}
  \frakT = \bigl\langle T,\,{\preceq},\,(S_i)_{i<\omega},\,(P_c)_{c \in \Sigma},\, (Q_i)_{i<n},\,R\bigr\rangle
\end{align*}
where $T = \dom(t)$, $\preceq$~is the tree ordering (with the root as least element),
$S_i$~is the $i$-th successor relation, $P_c = t^{-1}(c)$ the set of all vertices $v \in T$
with label $c \in \Sigma$, $Q_i = t^{-1}(x_i)$ the vertices labelled by the variable~$x_i$,
and $R$~is an unary relation that only contains the root.
By necessity the proofs below assume some familiarity with tree automata
(more precisely, non-deterministic parity automata)
and back-and-forth arguments. Readers who want to refresh their knowledge
we refer to \cite{Thomas97}~and~\cite{EbbinghausFlum95} for an introduction.

We start by proving that $\MSO$-definable languages have syntactic algebras.
\begin{thm}
$\bbT^\times$ is essentially finitary over the class of all $\MSO$-definable $\bbT^\times$-algebras.
\end{thm}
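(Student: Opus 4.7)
The plan is to take $\bbT^{\times,\circ}$ to be the submonad of $\bbT^\times$ consisting of \emph{regular trees}, i.e., trees with only finitely many distinct rooted subtrees up to isomorphism (equivalently, unravellings of finite pointed labelled graphs). The structural prerequisites are standard: regularity is preserved by relabellings, by the singleton map $\sing$, and by the flattening $\Flat$ (the distinct subtrees of $\Flat(T)$ are indexed by pairs of a subtree of~$T$ and a tail-subtree of one of its finitely many distinct labels, a finite product). Thus the inclusion $\varrho : \bbT^{\times,\circ} \hookrightarrow \bbT^\times$ is a morphism of monads, and $\bbT^{\times,\circ}$ is finitary since every regular tree uses only finitely many labels.

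For density, fix $\frakA$ in the closure of the $\MSO$-definable algebras under binary products, $C \subseteq A$, and $s \in \bbT^\times C$ with $a := \pi(s)$. Write $\frakA = \frakA_1 \times \frakA_2$ with each $\frakA_i$ $\MSO$-definable (the one-factor case being identical), let $p_i : A \to A_i$ be the projections, and set $a_i := p_i(a) \in A_i$. I would argue in two steps. First, a compactness step reduces to a finite sub-alphabet: since each $A_i$ is sort-wise finite, the set of subtree-products realised in the projected tree $\bbT^\times p_i(s)$ is sort-wise finite, and by a König-style selection (using associativity of~$\pi$ to close a finite system of fixpoint equations $\pi(c_b(b_1,\dots,b_k)) = b$ in the sort-wise finite set of subtree-types traversed by~$s$) one produces a finite $C' \subseteq C$ and a tree $s' \in \bbT^\times C'$ with $\pi(s') = a$. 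Second, for finite~$C'$ one appeals to Rabin: the language
\[
  K := \bigl\{\, t \in \bbT^\times C' : \pi(\bbT^\times p_i(t)) = a_i \text{ for } i = 1,2\,\bigr\}
\]
is $\MSO$-definable, being a finite Boolean combination (finite by sort-wise finiteness of~$A_i$) of preimages under the relabellings $\bbT^\times p_i$ of the $\MSO$-definable sets $\pi^{-1}(\Aboveseg b) \cap \bbT^\times(p_i[C'])$, with closure under inverse relabellings supplied by Lemma~\ref{Lem: closure under inverse relabellings}. Since $s' \in K$, the language is non-empty, and Rabin's basis theorem yields a regular $s^\circ \in K \subseteq \bbT^{\times,\circ} C' \subseteq \bbT^{\times,\circ} C$ with $\pi(s^\circ) = (a_1,a_2) = \pi(s)$, as required.

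The main obstacle is the first, compactness, step: the tree analogue of the Ramsey argument that reduces an arbitrary $\omega$-word to an ultimately periodic one with the same product in a finite $\omega$-semigroup. Care is needed because infinitely many sorts can in principle appear in~$A$, so one must verify that only finitely many are visited by the regular structure one extracts, exploiting associativity of the $\bbT^\times$-algebra together with the sort-wise finiteness provided by $\MSO$-definability. Rabin's basis theorem in the second step is standard once one fixes an encoding of the paper's variable-arity, non-linear, variable-leafed trees as structures over a suitable ranked signature amenable to non-deterministic parity tree automata. Granted these two classical inputs, the density condition---and hence the essentially finitary nature of $\bbT^\times$ over $\MSO$-definable algebras---follows.
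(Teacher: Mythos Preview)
Your high-level plan is right—regular trees form the required finitary submonad of~$\bbT^\times$, and once the alphabet is finite Rabin's basis theorem produces a regular witness. But your compactness step has a genuine gap, and it is not merely a matter of missing detail.

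The fixpoint-equation argument does not establish $\pi(s')=a$. Associativity gives you, for each subtree-type $b\in V$, an equation $b=c_b\cdot(b_1,\dots,b_{k_b})$; but exactly the same computation shows that the tuple $\bigl(\pi(t_b)\bigr)_{b\in V}$, where $t_b$ is the regular tree $c_b(t_{b_1},\dots,t_{b_{k_b}})$, \emph{also} solves this system. Nothing forces the two solutions to coincide. For a concrete failure, take an $\MSO$-definable algebra with a unary generator $c$ acting as the identity on $A_0=\{0,1\}$ and $\pi(c^\omega)=1$. Then $s=c^k(0)$ has $\pi(s)=0$ and $V=\{0\}$; choosing $v_0$ at the root yields the single equation $0=c\cdot 0$, whose regular solution is $t_0=c^\omega$ with $\pi(t_0)=1\neq 0$. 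No ``K\"onig-style'' selection rule you have stated avoids this. Worse, since a regular tree already uses only finitely many labels, any correct proof of your Step~1 is already a proof of the full density statement—so the two-step decomposition does not actually reduce the problem.

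The paper does not separate the steps. It encodes the search for $s^\circ$ as an Automaton--Pathfinder game whose positions are tuples in $\prod_{i<n}\PSet(K_i\times Q_i)$, hence finitely many regardless of the size of the label set~$B$. Automaton's moves pick a label $\bar b\in B$ together with partial runs of the parity automata~$\calA_i$ recognising $\pi^{-1}(a_i)\cap\bbT^\times C_i$ on the trees $\sigma_i(\bar b)$; the winning condition is an $\omega$-regular parity-style condition on the resulting traces. The original tree~$t$ witnesses a winning strategy for Automaton, and the B\"uchi--Landweber theorem then supplies a finite-memory one. A finite-memory strategy uses only finitely many distinct labels and therefore encodes a regular tree in $\bbT^\reg B$. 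The passage to a finite sub-alphabet is thus a by-product of finite memory, not a preliminary compactness lemma.
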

\begin{proof}
Let $\bbT^\reg A \subseteq \bbT^\times A$ the set of all \emph{regular} trees in~$\bbT^\times A$,
i.e., those that, up to isomorphism, have only finitely many distinct subtrees.
We claim that the inclusion morphism $\bbT^\reg \Rightarrow \bbT^\times$ is dense
over the class of all finite products of $\MSO$-definable $\bbT^\times$-algebras.

Let $\frakA_0,\dots,\frakA_{n-1}$ be $\MSO$-definable,
$B \subseteq A_0 \times\dots\times A_{n-1}$, and $t \in \bbT^\times B$ a tree with $\pi(t) = \bar a$.
We have to find a regular tree $t^\circ \in \bbT^\reg B$ with $\pi(t^\circ) = \bar a$.
Let $C_i \subseteq A_i$ be a finite set of generators of~$\frakA_i$ and let
$\calA_i$~be a parity automaton recognising $\pi^{-1}(a_i) \cap \bbT^\times C_i$.
Suppose that $Q_i$~is the set of states of~$\calA_i$, $K_i$~the set of priorities
used by it, and $\Omega_i : Q_i \to K_i$ the corresponding priority function.
For every $\bar b \in B$, we fix trees $\sigma_i(\bar b) \in \bbT^\times C_i$ with
$\pi(\sigma_i(\bar b)) = b_i$, for $i < n$.
This defines a function $\sigma_i : \bbV B \to \bbT^\times \bbV C_i$, which we can extend to a
morphism $\hat\sigma_i : \bbT^\times \bbV B \to \bbT^\times \bbV C_i$.

We construct the desired tree~$t^\circ$ by the following variant of the
usual Automaton--Pathfinder game (see, e.g.,~\cite{Thomas97}).
In this game Automaton tries to construct a tree $s \in \bbT^\times B$ such that,
for every $i < n$, $\hat\sigma_i(s)$~is accepted by~$\calA_i$,
while Pathfinder tries to prove that such a tree does not exist.
We will define the game in such a way that there is a correspondence between
winning strategies for Automaton and such trees~$s$.
Note that these are exactly the trees~$s$ with $\pi(s) = \bar a$,
since
\begin{align*}
  \pi(\hat\sigma_i(s))
  = \pi(\Flat(\bbT^\times\sigma_i(s)))
  = \pi(\bbT^\times\pi(\bbT^\times\sigma_i(s)))
  = \pi(\bbT^\times p_i(s))
  = p_i(\pi(s))\,,
\end{align*}
where $p_i : A_0\times\dots\times A_{n-1} \to A_i$ is the $i$-th projection.
As $\pi(t) = \bar a$, it follows that Automaton indeed has a winning strategy for the game.
Furthermore, the winning condition of our game is regular. Therefore, it follows by
the B\"uchi-Landweber Theorem~\cite{BuchiLandweber69} that Automaton even has a winning strategy
that uses only a finite amount of memory.
As the trees~$s$ corresponding to finite-memory strategies via the above correspondence
are regular, the claim follows.

To conclude the proof, it therefore remains to define a regular game with the above properties.
In each round, Automaton picks the label $\bar b \in B$ for the next vertex~$v$ of~$s$
and Pathfinder responds by choosing one of the successors of~$v$.
While doing so, we have to keep track of all the states of the various automata from which
we want to accept the remaining subtree.

The positions for Automaton are of the form
$\bar U \in \prod_{i<n} \PSet(K_i \times Q_i)$,
while those for Pathfinder are tuples $\langle\bar V_0,\dots,\bar V_{m-1}\rangle$
where each component~$\bar V_i$ is a position for Automaton.
The initial position belongs to Automaton and consists of the tuple
$\bigl\langle\bigl\{\langle 0,q^i_0\rangle\bigr\}\bigr\rangle_{i<n}$,
where $q^i_0$~is the initial state of~$\calA_i$.

In a position~$\bar U$, Automaton chooses an element $\bar b \in B$ and,
for every $i < n$ and every pair $\langle k,q\rangle \in U_i$,
a partial run~$\varrho_q$ of~$\calA_i$ on the tree~$\sigma_i(\bar b)$ that starts in the state~$q$.
(It will turn out that Automaton can choose this run independently of~$k$.
So we drop it to keep the notation light. We also assume that the sets~$Q_i$ are disjoint,
so we do not need to specify the index~$i$.)
Suppose that $\bar b \in B_m$ has arity~$m$. For $i < n$ and $j < m$,
let $H_{ij}$~be the set of all vertices of~$\sigma_i(\bar b)$ labelled by the variable~$x_j$.
We denote by~$W_{ij}(q)$ the set of all pairs
$\langle k',q'\rangle \in K_i \times Q_i$ such that there is some $v \in H_{ij}$ with
\begin{align*}
  \varrho_q(v) = q'
  \qtextq{and}
  k' := \min {\set{ \Omega_i(\varrho_q(w)) }{ w \preceq v }}\,.
\end{align*}
The new position is $\langle\bar V^0,\dots,\bar V^{m-1}\rangle$ where
\begin{align*}
  V^j_i := \bigcup_{\langle k,q\rangle \in U_i} W_{ij}(q)\,.
\end{align*}
Pathfinder responds by choosing some $j < m$ after which the game proceeds to position~$\bar V^j$.

Automaton wins a play of this game if either the play ends
in the position $\langle\rangle$ where Pathfinder cannot make a move,
or if the play is infinite and satisfies the following variant of the parity condition.
Suppose that the play is $\bar U^0,\bar V^0,\bar U^1,\bar V^2,\dots$ and let $W^l_{ij}(q)$
be the sets used in the $l$-th turn by Automaton to determine the next position
$\bar V^l = \langle \bar V^l_0,\dots,\bar V^l_{m-1}\rangle$.
We call a sequence $k_0,q_0,k_1,q_1,k_2,q_0,\dots$ an \emph{$i$-trace} of this play if
$\langle k_0,q_0\rangle \in U^0_i$ and, for all $l < \omega$,
\begin{align*}
  \langle k_{l+1},q_{l+1}\rangle \in W^l_{ij}(q_l)\,,
  \quad\text{for some } j \text{ with } \bar U^{l+1} = V^l_j\,.
\end{align*}
We say that the play satisfies the parity condition if,
for all $i < n$,
\begin{align*}
  \liminf_{l < \omega} k_l \text{ is even,}
  \quad\text{for all $i$-traces } k_0,q_0,k_1,q_1,k_2,q_0,\dots.
\end{align*}
Note that this is a regular winning condition.
Furthermore, it is straightforward to check that Automaton wins this game if, and only if,
there exists some tree $s \in \bbT^\times B$ such that,
for every $i <n $, the tree $\hat\sigma_i(s)$ is accepted by~$\calA_i$.
\end{proof}
Consequently, we can use Theorem~\ref{Thm: essentially finitary monads have syntactic algebras}
to prove the existence of syntactic algebras.
\begin{cor}
Every $\MSO$-definable language has a syntactic algebra.
\end{cor}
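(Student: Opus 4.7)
The plan is to combine three earlier ingredients: the preceding theorem asserting that $\bbT^\times$ is essentially finitary over the class of $\MSO$-definable algebras, Theorem~\ref{Thm: essentially finitary monads have syntactic algebras} (which turns essential finitariness into the ``congruence ordering'' half of having a syntactic algebra), and Corollary~\ref{Cor: L-definable iff recognised by L-definable algebra} (which provides a recognising algebra sitting in the right class). The missing piece beyond these citations is the ``finitary index'' half, which I~expect to fall out of the sort-wise finiteness of the recognising algebra.

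More concretely, let $K \subseteq \bbT^\times_\xi\Sigma$ be $\MSO$-definable. First I~would invoke Corollary~\ref{Cor: L-definable iff recognised by L-definable algebra} to obtain an $\MSO$-definable $\bbT^\times$-algebra~$\frakA$ and a morphism $\varphi : \bbT^\times\Sigma \to \frakA$ recognising~$K$. By definition, $\frakA$~is finitary, hence in particular sort-wise finite. Next I~would apply the preceding theorem and feed the result into Theorem~\ref{Thm: essentially finitary monads have syntactic algebras}, which immediately yields that $\preceq_K$~is a congruence ordering on $\bbT^\times\Sigma$.

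It then remains to check finitary index, i.e., that $\bbT^\times\Sigma/{\preceq_K}$ is sort-wise finite. By Lemma~\ref{Lem: morphisms recognising K} the kernel satisfies $\ker \varphi \subseteq {\preceq_K}$, so the Factorisation Lemma makes $\bbT^\times\Sigma/{\preceq_K}$ a quotient of $\bbT^\times\Sigma/\ker\varphi$; the latter embeds into~$\frakA$ via the map induced by~$\varphi$ and is therefore sort-wise finite. Hence so is $\bbT^\times\Sigma/{\preceq_K}$, which is exactly what is needed for $K$~to have a syntactic algebra.

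I~do not expect a genuine obstacle here, since all of the hard work sits in the preceding essential finitariness theorem (the Automaton--Pathfinder game and the B\"uchi--Landweber finite-memory argument). The only things to watch are bookkeeping: matching hypotheses (Theorem~\ref{Thm: essentially finitary monads have syntactic algebras} implicitly uses that $\calC$~is closed under binary products, which is why the preceding theorem was stated over finite products of $\MSO$-definable algebras rather than over individual $\MSO$-definable algebras), and verifying that a quotient of a sort-wise finite set is sort-wise finite.
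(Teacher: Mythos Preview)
Your argument is correct and matches the paper's one-line derivation (the preceding essential-finitariness theorem plus Theorem~\ref{Thm: essentially finitary monads have syntactic algebras}); you have merely made explicit the two steps the paper leaves implicit, namely producing a recognising algebra in the class~$\calC$ and checking finitary index via $\ker\varphi \subseteq {\preceq_K}$. The only caveat is that Corollary~\ref{Cor: L-definable iff recognised by L-definable algebra} needs $\MSO$ to be $\bbT^\times$-compositional and varietal --- both established only later in Section~\ref{Sect:applications} --- so you are making a forward reference; since those proofs are direct automata-theoretic arguments that nowhere use syntactic algebras, there is no circularity, only an ordering issue.
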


Our next goal is to show that $\MSO$ and $\FO$ are varietal and compositional.
We start with $\MSO$.
\begin{thm}\label{Thm: MSO bbT*-compositional}
The logic\/ $\MSO$ is\/ $\bbT^\times$-compositional and, therefore, also $\bbT$-com\-pos\-i\-tional.
\end{thm}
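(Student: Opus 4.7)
Given a finite subfamily $\Phi\subseteq\MSO$, I would set $m$ to be the maximum quantifier rank occurring in any formula of $\Phi$ (taking $m:=0$ if $\Phi$ is empty) and take $\Delta$ to be the subfamily of $\MSO$-formulae of quantifier rank at most~$m$ (over each alphabet~$\Sigma$ and each arity $n<\omega$, in the signature $\{\preceq\}\cup\{S_i\}_{i<\omega}\cup\{P_c\}_{c\in\Sigma}\cup\{Q_i\}_{i<n}\cup\{R\}$ induced by the tree encoding). Over each fixed alphabet and arity only finitely many such formulae exist up to logical equivalence, so $\Delta$~is sort-wise finite and contains~$\Phi$. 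Because $\MSO$ is closed under Boolean connectives, $\sqsubseteq_{\Delta[\Sigma]}$ coincides with the equivalence $\equiv_{\Delta[\Sigma]}$ that assigns to each tree its $\MSO_m$-theory.

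To verify that $\sqsubseteq_\Delta$ is a congruence ordering on $\bbT^\times\Sigma$, I would apply condition~(3) of Proposition~\ref{Prop: characterisations of congruences}: for every $u\in\bbT^\times({\sqsubseteq_\Delta})$ I must show $\Flat(\bbT^\times p_0(u)) \sqsubseteq_\Delta \Flat(\bbT^\times p_1(u))$. Writing $S_v,T_v\in\bbT^\times\Sigma$ for the two components of the pair labelling vertex~$v$ of~$u$, this reduces to the statement that replacing each labelling subtree $S_v$ by an $\MSO_m$-equivalent subtree $T_v$ produces an $\MSO_m$-equivalent flattened tree. This is the classical composition theorem for $\MSO$ on trees. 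I would prove it by an Ehrenfeucht--Fra\"iss\'e argument: fix, for each~$v$, a winning Duplicator strategy $\sigma_v$ in the $m$-round $\MSO$-game on $(S_v,T_v)$, and describe Duplicator's strategy on $(\Flat(\bbT^\times p_0(u)),\Flat(\bbT^\times p_1(u)))$ as routing each Spoiler move to the unique component subtree containing it and responding via~$\sigma_v$ (set-moves are split along the component partition and handled coordinate-wise).

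The main obstacle is the formal composition lemma itself, and within it the multiplicity with which a variable~$x_i$ may occur in a labelling tree $S_v$: several occurrences of $x_i$ in $S_v$ are identified by $\Flat$ with the root of the same successor subtree, so the coordinated play must respect this sharing. The key observation is that the variable positions are encoded by the predicates $Q_i$ of the local structure, so $\sigma_v$ necessarily maps $Q_i$-positions to $Q_i$-positions; the gluings performed by~$\Flat$ therefore correspond under the strategies and the coordinated play is well defined. Once this is in place, $\bbT$-compositionality follows immediately: the inclusion $\bbT\Rightarrow\bbT^\times$ is a morphism of monads, the set $\bbT\Sigma$ is a subalgebra of $\bbT^\times\Sigma$, and the restriction of the congruence ordering $\sqsubseteq_\Delta$ from $\bbT^\times\Sigma$ to $\bbT\Sigma$ is again a congruence ordering.
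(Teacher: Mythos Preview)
Your approach via the quantifier-rank stratification and a component-wise Ehrenfeucht--Fra\"iss\'e argument is the natural one, and it does work for the linear monad~$\bbT$ (the paper's Remark immediately after the proof says exactly this). For~$\bbT^\times$, however, there is a genuine gap at the point you yourself flag as the ``main obstacle'', and the paper takes a different route precisely to avoid it.

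You write that ``several occurrences of~$x_i$ in~$S_v$ are identified by~$\Flat$ with the root of the same successor subtree''. This misreads the flattening operation: $\Flat$~includes an \emph{unravelling} step, so each occurrence of~$x_i$ in~$S_v$ receives its own fresh copy of the entire subtree below. The number of copies of~$S_w$ inside $\Flat(\bbT^\times p_0(u))$ therefore depends on the variable multiplicities of all labels~$S_v$ along the path from the root to~$w$, and similarly for~$T_w$. Since $S_v \equiv_m T_v$ does not force~$S_v$ and~$T_v$ to have the same number of $x_i$-leaves, the two flattened trees can contain different numbers of copies of the corresponding components; there is then no canonical ``same copy of~$T_v$'' in which Duplicator should respond, and your routing strategy is ill-defined. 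Even granting some matching of copies, the set moves of the $\MSO$ game break the argument: a Spoiler set restricts to potentially \emph{different} subsets on the various copies of the same~$S_v$, and a single local strategy~$\sigma_v$ cannot answer all of them consistently. (Compare the proof of Theorem~\ref{Thm: FO bbT*-compositional}, where already the first-order case requires renaming variables along the path to the chosen element in order to disentangle copies; that trick does not extend to set moves.) The paper sidesteps both issues by stratifying $\MSO$ not by quantifier rank but by the number of states of an equivalent parity automaton, and proving the congruence via an Automaton--Pathfinder game whose run profiles record, for each variable, the \emph{set} of reachable state/priority pairs --- this aggregation is exactly what absorbs the unbounded multiplicities.
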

\begin{proof}
We start with a bit of terminology.
A \emph{partial run} of a tree automaton~$\calA$ on some tree $t \in \bbT^\times\Sigma$ is a function
$\varrho$~assigning states to the vertices of~$t$ in such a way that
\begin{itemize}
\item $\varrho$~satisfies the transition relation of~$\calA$ at every vertex with a label
  in~$\Sigma$,
\item there is no restriction on $\varrho(v)$ if $v$~is the root or a leaf labelled by a variable,
\item every infinite branch of~$\varrho$ satisfies the parity condition.
\end{itemize}
The \emph{profile} of a partial run~$\varrho$ on a tree~$t$ is the tuple
$\tau = \langle p,\bar U\rangle$ where $p$~is the state at the root of~$t$
and $U_i$~is the set of all pairs $\langle k,q\rangle$ such that there exists
some leaf~$v$ of~$t$ labelled~$x_i$ with state $q := \varrho(v)$ and such that
the least priority seen along the path from the root to~$v$ is equal to~$k$.

Because of the translations between formulae and automata, there exists,
for every automaton~$\calA$ and each profile~$\tau$ of~$\calA$,
an $\MSO$-formula~$\varphi_{\calA,\tau}$ stating that there is a partial run of~$\calA$ on the
given tree with profile~$\tau$.
Furthermore, every $\MSO$-formula is equivalent to some formula of this kind.

For $m < \omega$, let $\MSO_{(m)}$ denote the set of all $\MSO$-formulae equivalent to a formula
of the form~$\varphi_{\calA,\tau}$ where $\calA$~is an automaton with at most~$m$ states.
Since there are only finitely many such automata and each of them has only finitely many profiles
of partial runs, it follows that $\MSO_{(m)}$ is finite (up to logical equivalence).
Let $\equiv_{(m)}$~be the equivalence relation
which holds for two trees if they satisfy the same $\MSO_{(m)}$-formulae.
We claim that $\equiv_{(m)}$~is a congruence ordering.
This means that, if $S,T \in \bbT^\times\bbT^\times\Sigma$ are trees with the same `shape', i.e.,
$\dom(S) = \dom(T)$, then
\begin{align*}
  S(v) \equiv_{(m)} T(v)\,, \quad\text{for all } v\,,
  \qtextq{implies}
  \Flat(S) \equiv_{(m)} \Flat(T)\,.
\end{align*}

For the proof, fix a formula $\varphi_{\calA,\tau} \in \MSO_{(m)}$ with $\Flat(S) \models \varphi$.
We have to show that $\Flat(T)$ also satisfies~$\varphi_{\calA,\tau}$,
i.e., that there is a partial run of~$\calA$ on $\Flat(T)$ with profile~$\tau$.
To do so, we introduce the following variant of the Automaton--Pathfinder game.
For a given tree $T \in \bbT^\times\bbT^\times\Sigma$,
Player Automaton tries to prove that there is a partial run of~$\calA$ on $\Flat(T)$
with profile~$\tau$, while Pathfinder tries to disprove him.
The game starts in the position $\langle r,\tau\rangle$ where $r$~is the root of~$T$.
In a position $\langle v,\upsilon\rangle$ where $v \in \dom(T)$ and $\upsilon$~is a profile,
Automaton tries to show that there exists a partial run~$\varrho$ on the subtree rooted at~$v$
with profile~$\upsilon$.
He starts by choosing a partial run~$\varrho$ of~$\calA$ on the tree $T(v)$ starting in the same
state as~$\upsilon$. Then he has to choose profiles~$\bar\lambda$ for all the subtrees attached
to the copy of $T(v)$ in~$\Flat(T)$ such that the `composition' of the profile of~$\varrho$
and~$\bar\lambda$ is equal to~$\upsilon$.
This is done as follows.

Let $\mu = \langle p,\bar U\rangle$ be the profile of~$\varrho$.
For each component~$U_i$, Automaton chooses a set~$W_i$ of triples $\langle k,q,\lambda\rangle$
where $k$~is a priority, $q$~a state, and $\lambda$~a profile.
These sets must satisfy the following conditions.
\begin{itemize}
\item $U_i$~is the projection of~$W_i$ to the first two components.
\item For each $\langle k,q,\lambda\rangle \in W_i$, the state~$q$ is equal to the starting state
  of~$\lambda$.
\item $\upsilon = \langle p,\bar V\rangle$ is the composition of~$\mu$ and the profiles~$\lambda$.
  Formally,
  \begin{align*}
    V_i = \biglset \langle l,q'\rangle \bigmset {}
                 & \langle k,q,\lambda\rangle \in W_i\,,\
                   \lambda = \langle q,\bar L\rangle\,,\
                   \langle k',q'\rangle \in L\,, \\
                 & l = \min {\{k,k'\}} \bigrset\,.
  \end{align*}
\end{itemize}
Let $u_0,\dots,u_{n-1}$ be the successors of~$v$.
Given $\bar W$, Pathfinder responds by choosing a successor~$u_i$ of~$v$ and a triple
$\langle k,q,\lambda\rangle \in W_i$.
Then the game continues in the position $\langle u_i,\lambda\rangle$.

If the game reaches a leaf of~$T$, it ends with a win for one of the players.
If the leaf is labelled by a variable~$x_i$ and the current position is $\langle v,\upsilon\rangle$,
then Automaton wins if, and only if, $\upsilon$~is of the form $\langle q,\bar U\rangle$
with $U_i = \{q\}$ and $U_j = \emptyset$, for $j \neq i$. Otherwise, Pathfinder wins.
If the leaf is not labelled by a variable, then Automaton wins if he can choose
$\mu = \langle p,\bar U\rangle$ such that $U_i = \emptyset$, for all~$i$.

In the case where the game is infinite, Automaton wins
if the sequence of pairs $\langle k_0,q_0,\lambda_0\rangle,\langle k_1,q_1,\lambda_0\rangle,\dots$
chosen by Pathfinder satisfies the parity condition
\begin{align*}
  \liminf_{i < \omega} k_i \quad\text{is even}\,.
\end{align*}

It is straightforward to check that Automaton wins the game on a given tree~$T$
if, and only if, there exists a partial run of~$\calA$ on $\Flat(T)$ with profile~$\tau$.
(Every partial run of~$\calA$ on $\Flat(T)$ with this profile gives rise to a winning strategy
in the game and, conversely, every winning strategy can be used to construct a partial run
with the desired profile.)

To conclude the proof we have to show that,
if $T$~is a tree with $S(v) \equiv_{(m)} T(v)$, for all~$v$,
then Automaton has a winning strategy in the game on~$T$.
By construction, Automaton has a winning strategy~$\sigma$ in the game on~$S$.
We use it to define a winning strategy~$\sigma'$ in the game on~$T$ as follows.
If $\sigma$~tells Automaton to choose a partial run~$\varrho$ on $S(v)$,
$\sigma'$~returns some partial run~$\varrho'$ on~$T(v)$ with the same profile as~$\varrho$.
(This is possible since $S(v) \equiv_{(m)} T(v)$.)
As only the profile of the chosen run is used by the game and $\sigma$~is winning,
it follows that the resulting strategy~$\sigma'$ is also winning.
\end{proof}
\begin{Rem}
Note that in the above proof we have chosen a rather strange stratification of $\MSO$.
It might be nice if we could use the usual stratification in terms of the quantifier-rank instead,
but this does not seem to work for~$\bbT^\times$.
For the monad~$\bbT$ on the other hand, there is an alternative proof consisting
of a simple inductive back-and-forth argument based on the quantifier-rank.
\end{Rem}

To show that $\MSO$ is varietal it suffices, by Theorem~\ref{Thm: theory algebras are L-definable},
to prove that the theory algebras are $\MSO$-definable,
\begin{prop}\label{Lem: MSO closed under inverse morphisms}
Let $\Sigma$~be an alphabet and $\Delta_m := \MSO_{(m)}$ the fragment of\/ $\MSO$ used in
the proof of Theorem~\ref{Thm: MSO bbT*-compositional}.
The theory algebra $\Theta_{\Delta_m}\Sigma$ is\/ $\MSO$-definable.
\end{prop}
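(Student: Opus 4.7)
The plan is to verify the two defining conditions of $\MSO$-definability for $\Theta_{\Delta_m}\Sigma$: that it is finitary, and that every finite subset is $\MSO$-definably embedded. Finitariness is nearly immediate from the construction of $\Delta_m$. For each arity $n < \omega$, there are only finitely many parity automata with at most $m$ states over the signature for $n$-ary trees, and each has only finitely many profiles, so $\Delta_m$ is sort-wise finite up to logical equivalence. Hence $\sqsubseteq_{\Delta_m}$ has sort-wise finite index, $\Theta_{\Delta_m}\Sigma$ is sort-wise finite, and it is finitely generated by the finite set $\rng (\theta_{\Delta_m} \circ \sing)$.

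For the embedding condition I would fix a finite $C \subseteq \Theta_{\Delta_m}\Sigma$ and, for each $c \in C$, choose a representative $s_c \in \bbT^\times\Sigma$ with $\theta_{\Delta_m}(s_c) = c$. Treating $c \mapsto s_c$ as a function $f : \bbV C \to \bbT^\times\Sigma$ (monotone on the trivially ordered domain), I would let $\hat f : \bbT^\times\bbV C \to \bbT^\times\Sigma$ denote the unique morphism of $\bbT^\times$-algebras with $\hat f \circ \sing = f$. Since the two morphisms $\theta_{\Delta_m} \circ \hat f$ and $\pi \circ \bbT^\times\iota_C$ from $\bbT^\times\bbV C$ into $\Theta_{\Delta_m}\Sigma$ both send $\sing(c)$ to $c$, the universal property of the free algebra yields
\begin{align*}
  \theta_{\Delta_m} \circ \hat f = \pi \circ \bbT^\times\iota_C\,.
\end{align*}
Consequently, fixing any representative $s_a$ of an element $a \in \Theta_{\Delta_m}\Sigma$, the unordered version $(\bbT^\times\iota_C)^{-1}[\pi^{-1}(\Aboveseg a) \cap \bbT^\times C]$ coincides with
\begin{align*}
  \bigset{ t \in \bbT^\times\bbV C }{ \hat f(t) \models \varphi
    \text{ for every } \varphi \in \Th_{\Delta_m}(s_a) }\,,
\end{align*}
which is a finite intersection of preimages $\hat f^{-1}[\Mod(\varphi)]$ indexed by $\varphi \in \MSO_{(m)}$ of the appropriate sort.

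It therefore suffices to show that every such preimage $\hat f^{-1}[\Mod(\varphi)] \subseteq \bbT^\times\bbV C$ is $\MSO$-definable. I would fix a parity automaton $\calA$ and profile $\tau_0$ with $\varphi \equiv \varphi_{\calA,\tau_0}$, and for each $c \in C$ record the finite set $\Pi_c$ of profiles of partial runs of $\calA$ on the fixed tree $s_c$. A partial run of $\calA$ on $\hat f(t)$ with profile $\tau_0$ decomposes into partial runs on each substituted copy $s_{t(v)}$, matched at the variable leaves with the root states of the partial runs on the children. So $\hat f(t) \models \varphi_{\calA,\tau_0}$ iff there exists a labelling $\mu$ of the vertices of $t$ by profiles of $\calA$ such that $\mu$ assigns $\tau_0$ to the root, at each vertex $v$ the value $\mu(v)$ is obtained by composing some profile from $\Pi_{t(v)}$ with the profiles $\mu$ assigns to the children of~$v$, and the parity condition inherited from $\calA$ holds along every infinite profile trace. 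This existential condition is recognised by a nondeterministic parity tree automaton on $\bbV C$ whose states are profiles of $\calA$, and is therefore $\MSO$-definable.

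The hard part will be the bookkeeping for the parity condition in this decomposition. An infinite branch of $\hat f(t)$ either stays inside a single copy $s_{t(v)}$, in which case its parity is already encoded in the definition of $\Pi_{t(v)}$, or it crosses infinitely many substitution boundaries, in which case it corresponds to an infinite branch of $t$ along which the min-inf of priorities gathered by the profile trace must be even. Verifying that the composed local-and-parity condition on $\mu$ really characterises acceptance of $\calA$ on $\hat f(t)$ is essentially the Automaton--Pathfinder analysis of the proof of Theorem~\ref{Thm: MSO bbT*-compositional}, specialised here from flattening trees-of-trees to substituting a fixed finite family of trees for the labels of a tree. Once this correspondence is in place, the translation of the resulting regular tree property into $\MSO$ is routine.
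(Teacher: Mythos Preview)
Your proposal is correct and follows essentially the same strategy as the paper: both reduce the question to expressing, in $\MSO$, the existence of a consistent assignment of automaton profiles to the vertices of a $C$-labelled tree whose composition yields a given target profile, and both identify this as the same Automaton--Pathfinder analysis underlying Theorem~\ref{Thm: MSO bbT*-compositional}.

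The one organisational difference worth noting is that the paper works only with the specific generating set $C := \theta_{\Delta_m}[\Sigma]$, whose elements \emph{are} $\Delta_m$-theories; the profile sets at each vertex are then read off directly from the label $t(v)$ rather than from a chosen representative. You instead treat an arbitrary finite $C$ by fixing representatives $s_c$ and passing through the morphism $\hat f$, which makes the argument self-contained for the full definition of $\MSO$-definability (every finite subset definably embedded) without appealing to Lemma~\ref{Lem: generated subsets are definable embedded}\,(b). The paper's shorter route implicitly relies on that lemma, which in turn needs closure under inverse morphisms---precisely what the proposition is meant to establish via Theorem~\ref{Thm: theory algebras are L-definable}. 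Your version avoids this near-circularity at the cost of a little extra bookkeeping; the underlying automata argument is the same in both.
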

\begin{proof}
The set $C := \theta_{\Delta_m}[\Sigma]$ is a finite set of generators of~$\Theta_{\Delta_m}\Sigma$.
Given a $\Delta_m$-theory $\sigma \in \Theta_{\Delta_m}\Sigma$, we have to find an
$\MSO$-formula~$\varphi$ defining the set
\begin{align*}
  \pi^{-1}(\sigma) \cap \bbM C\,.
\end{align*}
Each formula $\chi \in \sigma$ is a statement of the form\?:
`there exists a partial run of the automaton~$\calA$ with profile~$\tau$'.
Let us write $\chi = \chi_{\calA,\tau}$ to mark the relevant parameters.
For $t \in \bbM C$, it follows that $\pi(t) = \sigma$ if, and only if,
for every tree $s \in \bbM\Sigma$ with $\bbM\theta_{\Delta_m}(s) = t$
and every $\chi \in \sigma$, there exists a partial run of the corresponding automaton on~$s$
with the corresponding profile.
Consequently, to define the above preimage it is sufficient to express, for a given automaton~$\calA$
and a profile~$\tau$, that every preimage of the given tree~$t$ under $\bbM\theta_{\Delta_m}$ has
a partial run of~$\calA$ with profile~$\tau$.
This can be done by saying that, for every vertex~$v$ there is some formula
$\chi_{\calA,\upsilon_v} \in t(v)$ such that the `composition' of the profiles~$\upsilon_v$
yields~$\tau$.
For this composition, we have to check that the states at the borders match and to compute
the minimal priorities on each branch. All of this can easily be expressed in $\MSO$.
\end{proof}

Let us turn to $\FO$ next. Again, we start with compositionality.
\begin{thm}\label{Thm: FO bbT*-compositional}
The logic\/ $\FO$ is\/ $\bbT^\times$-compositional and, therefore, also $\bbT$-com\-pos\-i\-tional.
\end{thm}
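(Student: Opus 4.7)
The plan is to mirror the structure of the proof for $\MSO$ (Theorem~\ref{Thm: MSO bbT*-compositional}), but to replace the automaton-based game by Ehrenfeucht--Fra{\"i}ss{\'e} (EF) games, which are the natural game-theoretic characterisation of $\FO$-equivalence. Specifically, I would take $\Delta_m := \FO_m$, the set of $\FO$-formulae of quantifier rank at most~$m$ (over the appropriate relational signature encoding trees in $\bbT^\times\Sigma$, including constants for the root and for the variable symbols $x_0,\dots,x_{n-1}$). Since the alphabet~$\Sigma$ and the number of variable symbols in each sort are finite, the set $\FO_m$ is sort-wise finite up to logical equivalence. The central compositionality claim to establish is: for $S,T \in \bbT^\times\bbT^\times\Sigma$ with $\dom(S) = \dom(T)$,
\begin{align*}
  S(v) \equiv_{\Delta_m} T(v) \text{ for all } v \in \dom(S)
  \quad\Rightarrow\quad
  \Flat(S) \equiv_{\Delta_m} \Flat(T).
\end{align*}
Since $\FO$ is closed under negation, $\sqsubseteq_{\Delta_m}$ coincides with $\equiv_{\Delta_m}$, so congruence as a \emph{preorder} reduces to congruence as an equivalence relation.

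To prove the claim I would argue via the EF-game characterisation: $S(v) \equiv_{\Delta_m} T(v)$ means Duplicator has a winning strategy $\sigma_v$ in the $m$-round EF game on $S(v)$ vs $T(v)$ (played on the structures enriched with the root constant and the variable-leaf constants, so that boundary positions are preserved by the induced partial isomorphism). I would then construct a winning strategy for Duplicator in the $m$-round EF game on $\Flat(S)$ vs $\Flat(T)$ by parallel play: every vertex of $\Flat(S)$ lies in a unique copy of some $S(v)$ (and similarly every vertex of $\Flat(T)$ lies in the corresponding copy of $T(v)$, using that $\dom(S) = \dom(T)$); when Spoiler picks a vertex in the copy of $S(v)$, Duplicator consults~$\sigma_v$ and responds in the matching copy of $T(v)$. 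Over $m$ rounds, at most $m$ subtrees become active, and the local games are played on these in parallel.

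To verify that the combined strategy is winning, one checks that after $m$ rounds the resulting pebbling is a partial isomorphism. Pairs of pebbles inside a single active subtree are handled by the local strategy~$\sigma_v$. Pairs of pebbles in different active subtrees must preserve the tree-level relations (ancestor/descendant, successor, root) of $\Flat(S)$ and $\Flat(T)$. These relations are determined by (i)~the shape of the outer tree (which is $\dom(S) = \dom(T)$, hence identical) and (ii)~how the variable leaves of one piece are glued to the root of the next piece in the flattening; the latter matching is canonical and independent of the labels, and the constants marking root and variable leaves ensure that the local strategies respect it.

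The main obstacle is the bookkeeping at the boundary. A vertex of $\Flat(S)$ that sits at a gluing point (the root of some subtree, identified with a variable leaf of its parent) belongs, strictly speaking, to two pieces in the decomposition, and the two local strategies must give coherent responses for it. I would handle this by fixing the convention that such a vertex is treated as belonging to its parent's piece (so the root constant in a child piece behaves like a variable-leaf constant of the parent), and by enriching each piece with constants for all boundary points, which is why $\Delta_m$ must include formulae over signatures with these additional constants. Because each element of $A_n$ has arity~$n$ and thus only finitely many associated boundary positions, the enrichment stays within a sort-wise finite fragment. Assuming one secures the claim for $\bbT^\times$, $\bbT$-compositionality follows immediately since $\bbT$ is a submonad of $\bbT^\times$.
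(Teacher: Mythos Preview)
Your overall strategy---stratify by quantifier rank and run an Ehrenfeucht--Fra\"iss\'e argument---is the same one the paper uses, but your proposal overlooks the one feature that distinguishes~$\bbT^\times$ from~$\bbT$: variable duplication. In~$\bbT^\times$ a label $S(v)\in\bbT^\times_k\Sigma$ may contain several occurrences of the same variable~$x_i$. After unravelling, a single vertex $v\in\dom(S)$ therefore gives rise to \emph{many} copies of~$S(v)$ inside $\Flat(S)$, and the number of such copies depends on the internal structure of the labels~$S(u)$ along the path from the root to~$v$, not just on $\dom(S)$. Since $S(u)\equiv_m T(u)$ does not force $S(u)$ and $T(u)$ to have the same multiset of variable occurrences, the copies of~$S(v)$ in $\Flat(S)$ and the copies of~$T(v)$ in $\Flat(T)$ are in general not in bijection. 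Your sentence ``Duplicator responds in the matching copy of~$T(v)$'' therefore has no meaning in~$\bbT^\times$; the same issue shows up earlier when you speak of ``constants for the variable symbols'', whereas the paper's encoding uses unary predicates~$Q_i$ precisely because a variable may label several leaves.

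The paper's proof confronts this directly. It strengthens the inductive statement to carry parameters and phrases the hypothesis over \emph{all copies}~$s$ of~$S(v)$ and~$t$ of~$T(v)$ separately. In the inductive step, when Spoiler plays a vertex~$c$ lying in some particular copy, the proof rewrites $S$ and~$T$ along the root-to-$v$ path (introducing fresh variable names and duplicating the attached subtrees) so that the relevant copy becomes unique, without changing $\Flat(S)$ or $\Flat(T)$. Only after this normalisation can one choose the responding vertex~$d$ via the local $\equiv_m$-equivalence. Your parallel-play argument is correct for~$\bbT$, but for~$\bbT^\times$ you need this extra bookkeeping (or an equivalent device) to make the ``matching copy'' well-defined.
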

\begin{proof}
Let $\FO_m$~denote the set of all first-order formulae of quantifier-rank at most~$m$
and denote by~$\equiv_m$ equivalence with respect to such formulae.
We claim that $\equiv_m$~is a congruence on $\bbT^\times\Sigma$.
Hence, consider two trees $S,T \in \bbT^\times\bbT^\times\Sigma$ with $\dom(S) = \dom(T)$
satisfying
\begin{align*}
  S(v) \equiv_m T(v)\,, \quad\text{for all vertices } v\,.
\end{align*}
We have to show that $\Flat(S) \equiv_m \Flat(T)$.

The proof is by induction on~$m$. To make the inductive step go through we have to
prove a slightly stronger statement involving parameters.
Given a tuple~$\bar a$ of vertices of $\Flat(S)$ and a copy~$s$ of~$S(v)$ in~$\Flat(S)$,
we denote by~$\bar a^s$ the tuple
\begin{align*}
  a^s_i := \begin{cases}
             a_i &\text{if } a_i \in \dom(s)\,, \\
             v   &\text{if } v \text{ is a leave of } s \text{ labelled by a variable and }
                  a_i \text{ is a descendant} \\
                 &\quad\text{of } v \text{ in } \Flat(S)\,, \\
             u   &\text{if } v \text{ is the root of } s \text{ and } a_i
                  \text{ is not a descendent of } v \text{ in } \Flat(S)\,.
           \end{cases}
\end{align*}
We use the same notation for parameters in~$\Flat(T)$.
For a tuple~$\bar a$ of vertices of some tree~$s$, we write $\langle s,\bar a\rangle$
for the expansion of~$s$ by constants for the vertices~$\bar a$.
The claim we prove is that, for trees $S,T \in \bbT^\times\bbT^\times\Sigma$
with the same `shape' and with parameters~$\bar a$
in~$\Flat(S)$ and $\bar b$~in $\Flat(T)$,
\begin{align*}
  \langle s,\bar a^s\rangle \equiv_m \langle t,\bar b^t\rangle\,,
  \quad\text{for all } v, \text{ copies } s \text{ of } S(v),
  \text{ and copies } t \text{ of } T(v),
\end{align*}
implies that
\begin{align*}
  \langle\Flat(S),\bar a\rangle \equiv_m \langle\Flat(T),\bar b\rangle\,.
\end{align*}

For $m = 0$, the proof is straightforward. For the inductive step, suppose that
\begin{align*}
  \langle s,\bar a^s\rangle \equiv_{m+1} \langle t,\bar b^t\rangle\,,
  \quad\text{for all } v, \text{ copies } s \text{ of } S(v),
  \text{ and copies } t \text{ of } T(v).
\end{align*}
We use a back-and-forth argument to show that
$\langle\Flat(S),\bar a\rangle \equiv_{m+1} \langle\Flat(T),\bar b\rangle$.
Let $c \in \dom(\Flat(S))$ be a new parameter.
Suppose that $c$~belongs to a copy~$s$ of the tree $S(v)$.
When we want to apply the inductive hypothesis, we now face the problem that,
if $\Flat(S)$ contains several copies of $S(v)$, only one of them contains the new parameter.
To solve this issue, we have to modify the trees $S$~and~$T$ to make sure this does not happen.

Let $v_0,\dots,v_n$ be the path from the root~$v_0$ of~$S$ to $v = v_n$
and let $s_i$~be the copy of $S(v_i)$ in $\Flat(S)$ such that $c$~is a descendent of the
root of~$s_i$.
We construct new trees $S_0,\dots,S_n$ and $T_0,\dots,T_n$ as follows.
We start with $S_0 := S$ and $T_0 := T$.
For the inductive step, suppose we have already defined $S_i$~and~$T_i$ for some $i < n$
and that there is a unique copy~$t_i$ of $T_i(v_i)$ in $\Flat(T_i)$.
We choose a vertex~$d_i$ of~$t_i$ such that
\begin{align*}
  \langle s_i,\bar a^{s_i}c^{s_i}\rangle \equiv_m \langle t_i,\bar b^{t_i}d_i\rangle\,.
\end{align*}
Note that the vertex~$c^{s_i}$ is a leaf labelled by some variable~$x_j$.
Hence, so is~$d_i$.
If there is no other occurrence of~$x_j$ in~$s_i$, we set $S_{i+1} := S_i$.
Otherwise, we choose some variable~$x_k$ that does not appear in~$s_i$
and we replace every occurrence of~$x_j$ in~$s_i$ by~$x_k$, except for the one at~$c^{s_i}$.
Let $S_{i+1}$~be the tree obtained from~$S_i$ by
\begin{itemize}
\item changing $S(v_i) = s_i$ in this way and
\item duplicating the subtree attached to~$v_i$ that corresponds to the variable~$x_j$
  in such a way that the new copy corresponds to the variable~$x_k$.
\end{itemize}
This ensures that $\Flat(S_{i+1}) = \Flat(S_i)$
and that $S_{i+1}$ contains a unique copy of $S(v_{i+1})$.
The tree $T_{i+1}$~is obtained from~$T_i$ in exactly the same way.

Having constructed $S_n$ and $T_n$, we choose some element $d_n \in \dom(T_n(v_n))$ such that
\begin{align*}
  \langle s_n,\bar a^{s_n}c^{s_n}\rangle \equiv_m \langle t_n,\bar b^{t_n}d_n\rangle\,.
\end{align*}
Setting $d := d_n$, it follows that
$d^{t_i} = d_i$, for all $i \leq n$, which implies that
\begin{align*}
  \langle s_i,\bar a^{s_i}c^{s_i}\rangle \equiv_m \langle t_i,\bar b^{t_i}d^{t_i}\rangle\,,
  \quad\text{for all } i \leq n\,.
\end{align*}
Note that,
if $u$~is a vertex different from $v_0,\dots,v_n$, $s$~a copy of $S_n(u)$ and $t$~a copy
of $T_n(u)$, then $c^s$~is the root of~$s$ and $d^s$~is the root of~$t$.
Consequently, we also have
\begin{align*}
  \langle s,\bar a^sc^s\rangle \equiv_m \langle t,\bar b^td^t\rangle\,.
\end{align*}
Hence, the trees $S_n$~and~$T_n$ together with the parameters $\bar a,c$ and $\bar b,d$
satisfy our inductive hypothesis and it follows that
\begin{align*}
  \langle\Flat(S_n),\bar a,c\rangle \equiv_m \langle\Flat(T_n),\bar b,d\rangle\,.
\end{align*}
Since $\Flat(S_n) = \Flat(S)$ and $\Flat(T_n) = \Flat(T)$, the claim follows.

In the same way we can show that, for every choice of~$d$ in $\Flat(T)$,
we find a matching vertex~$c$ in $\Flat(S)$.
\end{proof}

It remains to show that $\FO$ is varietal. It turns out that this is only the case
for the monad~$\bbT$, but not for $\bbT^\times$.
\begin{prop}
$\FO$~is closed under inverse morphisms of\/ $\bbT$-algebras.
\end{prop}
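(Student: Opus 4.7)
The plan is to reduce the statement to a compositionality-like property of first-order logic under tree substitution, then establish this property via an Ehrenfeucht--Fra\"iss\'e game argument that crucially exploits the linearity of~$\bbT$. Any morphism $\varphi : \bbT\Sigma \to \bbT\Gamma$ is uniquely determined by the finitely many trees $t_c := \varphi(\sing(c)) \in \bbT\Gamma$ for $c \in \Sigma$, since $\bbT\Sigma$~is free. For $s \in \bbT\Sigma$ we have $\varphi(s) = \Flat(\bbT\sigma(s))$ with $\sigma := \varphi \circ \sing$, so $\varphi(s)$~is obtained by substituting each $c$-labelled vertex~$v$ of~$s$ by a copy of~$t_c$, the $i$-th subtree below~$v$ being grafted at the (by linearity, \emph{unique}) occurrence of~$x_i$ in~$t_c$. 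This induces a canonical bijection between the non-variable vertices of~$\varphi(s)$ and pairs~$\langle v,w\rangle$ where~$v$ is a non-variable vertex of~$s$ and~$w$ a non-variable vertex of~$t_{s(v)}$.

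By the equivalence $(1)\Leftrightarrow(3)$ of Lemma~\ref{Lem: morphisms of logics preserve definability}, it suffices to show that, for every $m < \omega$, there exists $m' < \omega$ such that $s \equiv_{m'} t$ in~$\bbT\Sigma$ implies $\varphi(s) \equiv_m \varphi(t)$ in~$\bbT\Gamma$, where $\equiv_k$ denotes equivalence with respect to all $\FO$-formulae of quantifier rank at most~$k$. I would prove this by lifting Duplicator's winning strategy in the $m'$-round Ehrenfeucht--Fra\"iss\'e game on~$s$ and~$t$ to a winning strategy in the $m$-round game on~$\varphi(s)$ and~$\varphi(t)$. When Spoiler plays a non-variable vertex~$\langle v,w\rangle$ of~$\varphi(s)$, Duplicator plays~$v$ in an internally maintained virtual game on~$s$ and~$t$, obtains the matching~$v' \in \dom(t)$ from her stored virtual strategy (or reuses an earlier response if~$v$ has already been encountered), and answers with~$\langle v',w\rangle$ in~$\varphi(t)$; variable leaves of~$\varphi(s)$ correspond directly to variable leaves of~$s$, so those moves are handled trivially by the virtual strategy.

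To check correctness, note that the labels match because $s(v) = t(v')$ forces $t_{s(v)} = t_{t(v')}$, whence~$\langle v,w\rangle$ and~$\langle v',w\rangle$ carry the same label~$t_{s(v)}(w)$. The tree order and successor relations between two played positions~$\langle v_i,w_i\rangle$ and~$\langle v_j,w_j\rangle$ of~$\varphi(s)$ decompose into an ``internal'' part within one fixed tree~$t_{s(v_i)}$ (determined purely by the atomic data~$(s(v_i), w_i, w_j)$, and hence inherited from the virtual game via $s(v_i) = t(v'_i)$) and a ``cross-copy'' part, which amounts to knowing the relative position of~$v_i$ and~$v_j$ in~$s$ together with, when $v_i \prec v_j$, the unique index~$k$ such that~$v_j$ sits below the $k$-th child of~$v_i$. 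The latter datum is FO-definable from $v_i,v_j$ in~$s$ at some small quantifier rank~$c$ (via $\exists u\,(v_i\,S_k\,u \land u \preceq v_j)$), so taking $m' := m + c$ ensures Duplicator's virtual strategy, having $c$~unused rounds left after the $m$ simulated moves, preserves exactly the information needed. The argument hinges on the bijection between non-variable vertices of~$\varphi(s)$ and pairs~$\langle v,w\rangle$, which is specifically a feature of the linear tree monad~$\bbT$: for~$\bbT^\times$, a variable~$x_k$ may appear \emph{several} times in some~$t_c$, duplicating the subtree below the $k$-th child of a $c$-labelled vertex, so that Spoiler could isolate distinct copies that Duplicator cannot match from a single virtual move on~$s$. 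I expect the most delicate technical point in a full write-up to be the precise bookkeeping of the cross-copy interactions when the path from~$v_i$ to~$v_j$ passes through several consecutive substituted copies in the flattening.
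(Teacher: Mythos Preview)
Your approach is correct and follows the same high-level strategy as the paper --- reducing to an Ehrenfeucht--Fra\"iss\'e argument that exploits the decomposition of vertices of~$\varphi(s)$ into pairs~$\langle v,w\rangle$, which is available precisely because~$\bbT$ is linear --- but the technical execution differs. The paper proves the sharper bound $s \equiv_m t \Rightarrow \varphi(s) \equiv_m \varphi(t)$ (no increase in quantifier rank) by strengthening the inductive hypothesis: it carries along tuples of parameters~$\bar a,\bar b$ in $\varphi(s),\varphi(t)$ and demands, besides $\langle s, g_s(\bar a)\rangle \equiv_m \langle t, g_t(\bar b)\rangle$, an actual \emph{isomorphism} between the local pieces $\langle \varphi_0(s(u)),\bar a^u\rangle$ and $\langle \varphi_0(t(v)),\bar b^v\rangle$ whenever these pieces contain parameters. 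This loaded hypothesis lets the inductive step go through without spare rounds: the cross-copy child index you need is recovered by playing the projection~$g_s(c)$ of Spoiler's move directly in the base game, rather than reserving extra rounds to express $\exists u(v_i\,S_k\,u \land u \preceq v_j)$. Your virtual-game simulation with $m' = m + c$ is arguably more modular and easier to verify line by line, at the cost of the looser bound (which is immaterial for the proposition itself). Both arguments rest on exactly the same observation about linearity, and your diagnosis of why the approach fails for~$\bbT^\times$ matches the paper's.
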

\begin{proof}
Let $\varphi : \bbT\Sigma \to \bbT\Gamma$ be a morphism of $\bbT$-algebras
and let $\varphi_0 := \varphi \circ \sing : \Sigma \to \bbT\Gamma$ be its restriction to~$\Sigma$.
For $s,t \in \bbT\Sigma$, we will prove that
\begin{align*}
  s \equiv_m t \qtextq{implies} \varphi(s) \equiv_m \varphi(t)\,,
\end{align*}
where $\equiv_m$~denotes equivalence with respect to $\FO$-formulae of quantifier-rank at most~$m$.
For the induction we again need to prove a more general statement involving parameters.
We start with setting up a bit of notation.

Note that a tree of the form $\varphi(s) = \Flat(\bbT\varphi_0(s))$
is obtained from~$s$ by replacing each vertex~$u$ by a tree $\varphi_0(s(u))$.
For $s \in \bbT\Sigma$, we denote by $g_s : \dom(\varphi(s)) \to \dom(s)$
the function mapping a vertex~$u$ of $\varphi(s)$ to the vertex $v := g_s(u)$ such that
the copy of the tree $\varphi_0(s(v))$ replacing~$v$ in $\varphi(s)$ contains~$u$.
(Note that this copy of $\varphi_0(s(v))$ is unique, since we are dealing with
the monad~$\bbT$.)

For an $n$-tuple $\bar a$ of vertices of~$\varphi(s)$ and a vertex~$u$ of~$s$, we set
\begin{align*}
  I_u := \set{ i < n }{ g_s(a_i) = u }
  \qtextq{and}
  \bar a^u := (a_i)_{i \in I_u}\,,
\end{align*}
where we consider $\bar a^u$~as a tuple of vertices of $\varphi_0(s(u))$.
\begin{figure}
\centering
\includegraphics{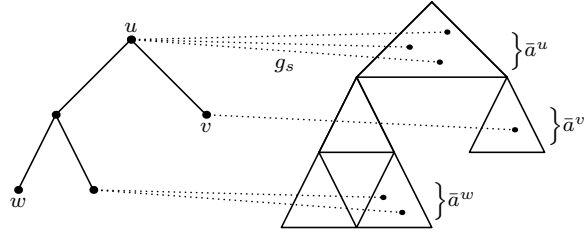}
\caption{Definition of $g_s$, $I_u$, and $\bar a^u$}
\end{figure}

The statement we will prove by induction on~$m$ is the following.
Let $s,t \in \bbT\Sigma$ be trees and $\bar a$~and~$\bar b$ $n$-tuples of parameters of,
respectively, $\varphi(s)$~and~$\varphi(t)$. Then
\begin{align*}
  &\bigl\langle s,g_s(\bar a)\bigr\rangle \equiv_m \bigl\langle t,g_t(\bar b)\bigr\rangle \\[1ex]
\prefixtext{and}
  &\bigl\langle \varphi_0(s(u)),\bar a^u\bigr\rangle \cong
   \bigl\langle \varphi_0(t(v)),\bar b^v\bigr\rangle\,,
  \quad\text{for all } u, v \text{ with } I_u = I_v \neq \emptyset\,,
\end{align*}
implies
\begin{align*}
  \langle\varphi(s),\bar a\rangle \equiv_m \langle\varphi(t),\bar b\rangle\,.
\end{align*}

For $m = 0$, this is immediate.
Hence, suppose that $m > 0$.
We have to check the back-and-forth properties.
Thus, let $c \in \dom(\varphi(s))$ and set $u := g_s(c)$.
Then there is some $v \in \dom(t)$ such that
\begin{align*}
  \langle s,g_s(\bar a),g_s(c)\rangle \equiv_{m-1} \langle t,g_t(\bar b),v\rangle\,.
\end{align*}
Note that
\begin{align*}
  I_u = \set{ i }{ g_s(a_i) = u } = \set{ i }{ g_t(b_i) = v } = I_v\,.
\end{align*}
We distinguish two cases.
If $I_u = I_v \neq \emptyset$, then there exists an isomorphism
\begin{align*}
  \sigma : \bigl\langle \varphi_0(s(u)),\bar a^u\bigr\rangle \to
           \bigl\langle \varphi_0(t(v)),\bar b^v\bigr\rangle
\end{align*}
and we can set $d := \sigma(c)$.

Otherwise, $I_u = I_v = \emptyset$ and $s(u) = t(v)$ implies that
$\varphi_0(s(u)) \cong \varphi_0(t(v))$.
Hence, can choose some element~$d$ of $\varphi_0(t(v))$ such that
\begin{align*}
  \bigl\langle \varphi_0(s(u)),c\bigr\rangle \cong
  \bigl\langle \varphi_0(t(v)),d\bigr\rangle\,.
\end{align*}

In both cases, it now follows that
\begin{align*}
  &\bigl\langle s,g_s(\bar a),g_s(c)\bigr\rangle \equiv_m
    \bigl\langle t,g_t(\bar b),g_t(d)\bigr\rangle \\[1ex]
\prefixtext{and}
  &\bigl\langle \varphi_0(s(u)),\bar a^uc^u\bigr\rangle \cong
   \bigl\langle \varphi_0(t(v)),\bar b^vd^v\bigr\rangle\,,
  \quad\text{for all } u, v \text{ with } I_u = I_v \neq \emptyset\,,
\end{align*}
which, by inductive hypothesis, implies that
\begin{align*}
  \langle \varphi(s),\bar ac\rangle \equiv_{m-1} \langle \varphi(t),\bar bd\rangle\,.
\end{align*}
The other direction follows by symmetry.
\end{proof}

As already noted by Boja\'nczyk and Michalewski~\cite{BojanczykMiXX},
$\FO$~is not closed under inverse morphisms of $\bbT^\times$-algebras.
Their counterexample rests on the following lemma.
Recall that a tree is \emph{complete binary} if every non-leaf has exactly two successors.
\begin{lem}[Potthoff~\cite{Potthoff94}]\label{Lem: even depth FO-definable}
There exists a first-order formula~$\varphi$ such that a finite complete binary tree\/
$\frakT = \langle T,S_0,S_1,{\preceq}\rangle$ satisfies~$\varphi$ if, and only if,
every leaf of\/~$\frakT$ has an even distance from the root.
\end{lem}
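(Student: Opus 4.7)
My plan is to construct the formula $\varphi$ explicitly, exploiting the rigid structure of finite complete binary trees. Although parity of path length is in general not first-order definable, the combinatorial constraints imposed by complete binarity together with the presence of the tree order $\preceq$ and the two distinguished successor relations $S_0, S_1$ provide sufficient rigidity to encode the depth-parity condition.

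First, I would introduce the auxiliary first-order predicates $\mathrm{root}(x) := \forall y\,(x \preceq y)$, expressing that the root is the $\preceq$-minimum, and $\mathrm{leaf}(x) := \neg\exists y\,(S_0(x,y) \lor S_1(x,y))$, expressing that leaves have no successors. I would then reformulate the desired property as follows: ``every leaf is at even distance from the root'' is equivalent to the structural condition that the tree decomposes into ``claws'' (depth-two subtrees, each consisting of a vertex together with its two children), obtained from a single vertex by iteratively replacing each leaf by a claw. Equivalently, the tree admits a unique $2$-colouring of vertices, alternating along successor edges, in which the root and all leaves receive the same colour.

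The crux of the proof is then to translate this global decomposition into a first-order formula of bounded quantifier rank. A priori, expressing the existence of such a $2$-colouring requires monadic second-order logic. However, in a complete binary tree the colouring, when it exists, is uniquely determined by the branching structure, and its existence can be captured by a first-order formula built from purely local consistency conditions: for each vertex $v$, one uses the relations $\preceq$, $S_0$, and $S_1$ to check that the bounded neighbourhood of $v$ is compatible with an alternating colouring whose ``even'' class contains the root and all leaves. The formula $\varphi$ is then the universal closure asserting that this local consistency holds at every vertex.

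The main obstacle is to show that these purely local checks actually suffice to enforce the global parity condition; this requires a non-trivial combinatorial analysis, specific to finite complete binary trees, in which the rigid bounded-arity structure together with finiteness guarantees that local consistency propagates to produce a globally coherent $2$-colouring with all leaves in the ``even'' class. Overcoming this obstacle is precisely what distinguishes the complete binary setting from the general case, where the analogous parity property is not first-order definable.
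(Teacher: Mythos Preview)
Your proposal has a genuine gap: you never actually explain how to define, in first-order logic, the predicate ``$x$ is at even depth''. You correctly observe that the desired property is equivalent to the existence of an alternating $2$-colouring in which the root and all leaves lie in the same class, and you say the formula will check ``local consistency'' of such a colouring at each vertex. But local consistency alone cannot work: every tree admits an alternating $2$-colouring (just colour by depth parity), and the only thing that can fail is the \emph{global} condition that all leaves land in the root's class. Checking a bounded neighbourhood of a vertex~$v$ tells you nothing about which class $v$~belongs to, since that depends on the entire path from the root to~$v$. Your final paragraph essentially concedes this, calling it a ``non-trivial combinatorial analysis'' still to be carried out; but that analysis \emph{is} the proof.

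The paper's proof supplies exactly the missing mechanism. The key observation (due to Potthoff) is that in a complete binary tree one can compute the depth parity of any vertex~$x$ via a \emph{zig-zag path}: follow successors from~$x$ down to a leaf, alternating between $S_0$ and~$S_1$ at each step. Such a path is first-order definable using~$\preceq$ (it is the unique path from~$x$ to a particular leaf along which consecutive edges always switch direction), and its length is even if and only if the first and last edges go in opposite directions --- a condition expressible with a fixed number of quantifiers. This yields an FO-definable predicate $\mathrm{even}(x)$. The formula~$\varphi$ then simply asserts that $\mathrm{even}$ alternates along every edge and that the root satisfies $\mathrm{even}$. Without this zig-zag trick (or something equivalent), there is no way to anchor the colouring in first-order logic, and your outline does not provide one.
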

\begin{proof}
The basic idea is as follows.
If every leaf is at an even distance from the root, we can determine whether a vertex~$x$
belongs to an even level of the tree by walking a zig-zag path from~$x$ downwards until
we hit a leaf. For such a path it is trivial to check that its length is even.
Hence, our formula only needs to express
that the level parities computed in this way are consistent
and that the root is on an even level.

To express all this in first-order logic, we first define a few auxiliary formulae.
\begin{align*}
  \mathrm{suc}(x,y) := {}& S_0(x,y) \lor S_1(x,y) \\
  \mathrm{zigzag}(x,y;u,v) := {}& [S_0(x,y) \land S_1(u,v)] \lor [S_1(x,y) \land S_0(u,v)] \\
  \mathrm{probe}(x,y) := {}& x \preceq y \land \neg\exists z[\mathrm{suc}(y,z)] \\
  {} \land {}&\begin{aligned}[t]
                \forall u\forall v\forall w[
                &x \preceq u \land \mathrm{suc}(u,v) \land \mathrm{suc}(v,w) \land w \preceq y \\
                &\quad{} \lso \mathrm{zigzag}(u,v;v,w)]\,.
              \end{aligned}
\end{align*}
The first one just states that $y$~is a successor of~$x$\?;
the second one says that $\langle x,y\rangle$ and $\langle u,v\rangle$ are two edges
that go into different directions, one to the left and one to the right\?;
and the last one states that $y$~is one of the two leaves below~$x$ that are reached by
a zig-zag path consisting of alternatingly taking left and right successors.

Using these formulae we can express that a vertex~$x$ has an even distance from some leaf by
\begin{align*}
  \mathrm{even}(x) := \exists y[
    &\mathrm{probe}(x,y) \land {} \\
    &\quad\exists u\exists v[
        \begin{aligned}[t]
        x = y \lor [\mathrm{suc}(x,u) &\land u \preceq v \land \mathrm{suc}(v,y) \\
                    &\land \mathrm{zigzag}(x,u;v,y)]]]\,.
        \end{aligned}
\end{align*}
Consequently, we can write the desired formula as
\begin{align*}
  \forall x\forall y[\mathrm{suc}(x,y) \lso [\mathrm{even}(x) \liff \neg\mathrm{even}(y)]]
  \land \exists x\forall y[x \preceq y \land \mathrm{even}(x)]\,.
\end{align*}
\upqed
\end{proof}
\begin{cor}[Boja\'nczyk, Michalewski \cite{BojanczykMiXX}]%
\label{Cor: FO not closed under inverse morphisms}
$\FO$~is not closed under inverse morphisms of\/ $\bbT^\times$-algebras.
\end{cor}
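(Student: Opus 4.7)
My plan is to construct an explicit counterexample exploiting the distinctive feature of $\bbT^\times$ that sets it apart from $\bbT$: a single variable may occur several times in a term, so a morphism can duplicate subtrees. Fix alphabets $\Sigma = \{a,b\}$ with $a$ of arity $1$ and $b$ of arity $0$, and $\Gamma = \{c,d\}$ with $c$ of arity $2$ and $d$ of arity $0$. I would then define $\varphi_0 : \Sigma \to \bbT^\times\Gamma$ by $\varphi_0(a) := c(x_0,x_0)$ and $\varphi_0(b) := d$, and take $\varphi := \Flat \circ \bbT^\times\varphi_0 : \bbT^\times\Sigma \to \bbT^\times\Gamma$ as the induced morphism of $\bbT^\times$-algebras. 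This definition is available only because $x_0$ is permitted to occur twice in $\varphi_0(a)$; nothing analogous exists inside $\bbT$.

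The first technical step is to verify, by induction on $n$, that applying $\varphi$ to the path $t_n := a^n(b) \in \bbT^\times_0\Sigma$ yields exactly the complete binary tree of depth $n$ with internal vertices labelled $c$ and leaves labelled $d$. The duplication of $x_0$ in $\varphi_0(a)$ is what causes the subtree attached below each $a$-vertex to be copied into both positions when $\Flat$ is applied, producing the expected binary tree. I expect this to be the main place where care is needed, since one must unwind the definition of $\Flat$ in $\bbT^\times$ in a concrete way; everything else is essentially bookkeeping.

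Once the shape of $\varphi(t_n)$ is pinned down, I would combine Lemma~\ref{Lem: even depth FO-definable} with the obviously FO-expressible conditions `every vertex either has label $c$ and exactly two successors, or has label $d$ and none' to obtain an FO-formula defining the language $K \subseteq \bbT^\times_0\Gamma$ consisting of complete binary trees with labels from $\Gamma$ in which every leaf lies at even distance from the root. The previous step then gives $t_n \in \varphi^{-1}[K]$ if and only if $n$ is even.

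Finally I would observe that the set $L := \{t_n : n < \omega\}$ of paths of the form $a^n(b)$ is itself FO-definable (each non-leaf has a unique successor and label $a$, and the unique leaf has label $b$). Were $\varphi^{-1}[K]$ FO-definable, then so would be the intersection $\varphi^{-1}[K] \cap L = \{t_{2k} : k < \omega\}$; but a classical Ehrenfeucht--Fra\"iss\'e argument on finite linear orders shows that parity of length is not FO-definable, contradiction. Hence $\varphi^{-1}[K]$ cannot be FO-definable, witnessing the failure of closure of $\FO$ under inverse morphisms of $\bbT^\times$-algebras.
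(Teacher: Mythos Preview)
Your proposal is correct and follows essentially the same approach as the paper: the paper uses the morphism sending $a(x_0)\mapsto b(x_0,x_0)$ and $c\mapsto c$ (your $a\mapsto c(x_0,x_0)$, $b\mapsto d$ up to renaming), invokes Potthoff's lemma to make the even-depth language $K$ first-order definable, and observes that its preimage is the set of paths of even length, which is not $\FO$-definable. Your extra step of intersecting with the $\FO$-definable set $L$ of finite paths is a harmless bit of additional care that lets you avoid checking what happens to the single infinite path~$a^\omega$; the paper simply asserts $\varphi^{-1}[K]=\{a^n(c):n\text{ even}\}$ directly.
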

\begin{proof}
Let $\Sigma := \{a,c\}$ and $\Gamma := \{b,c\}$ where $a$~is unary, $b$~binary,
and $c$~a constant, and let $\varphi := \bbT^\times\Sigma \to \bbT^\times\Gamma$ be the
morphism mapping~$a(x_0)$ to $b(x_0,x_0)$ and $c$~to~$c$.
Let $K \subseteq \bbT^\times\Gamma$ be the set of all trees where every leave is at an even depth.
By Lemma~\ref{Lem: even depth FO-definable}, $K$~is $\FO$-definable.
But $\varphi^{-1}[K]$~is the set of all paths $a^n(c)$ where $n$~is even,
which is not $\FO$-definable.
\end{proof}

\begin{thm}
\begin{enuma}
\item $\MSO$ is\/ varietal with respect to the functors\/ $\bbT$~and\/~$\bbT^\times$.
\item $\FO$ is\/ varietal with respect to the functor\/~$\bbT$, but not with respect to\/~$\bbT^\times$.
\end{enuma}
\end{thm}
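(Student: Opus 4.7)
By Theorem~\ref{Thm: theory algebras are L-definable}, for a lattice-closed $\bbM$-compositional family of logics, varietality is equivalent to closure of the $L$-definable languages under inverse morphisms of free $\bbM$-algebras. Both $\MSO$ and $\FO$ are obviously lattice-closed, and Theorems~\ref{Thm: MSO bbT*-compositional} and~\ref{Thm: FO bbT*-compositional} supply $\bbT^\times$-compositionality and hence also $\bbT$-compositionality. Only closure under inverse morphisms remains to be verified.

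For part~(a), fix a morphism $\varphi\colon \bbT^\times\Sigma\to\bbT^\times\Gamma$ and an $\MSO$-definable language $K\subseteq\bbT^\times_\xi\Gamma$. Since every $\MSO$-formula is equivalent to an automaton-based formula $\varphi_{\calA,\tau}$ (as recalled in the proof of Theorem~\ref{Thm: MSO bbT*-compositional}), there exists some $m<\omega$ such that $K$ is $\MSO_{(m)}[\Gamma]$-definable. By Proposition~\ref{Lem: MSO closed under inverse morphisms} the theory algebra $\Theta_{\MSO_{(m)}}\Gamma$ is $\MSO$-definable. One can now specialise the proof of Theorem~\ref{Thm: theory algebras are L-definable}\,(2)$\Rightarrow$(3) to the choice $\Delta:=\MSO_{(m)}$: an inspection shows that proof only uses the $\MSO$-definable embedding of one specific finite set of generators inside one specific theory algebra $\Theta_\Delta\Gamma$, which is exactly what Proposition~\ref{Lem: MSO closed under inverse morphisms} delivers. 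The resulting $\MSO$-formula defines $\varphi^{-1}[K]$. The same argument, restricted to linear trees, yields the analogous statement for~$\bbT$.

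For part~(b), the positive direction is immediate: the unlabelled proposition directly preceding Lemma~\ref{Lem: even depth FO-definable} establishes that $\FO$-definable languages are closed under inverse morphisms of free $\bbT$-algebras, so Theorem~\ref{Thm: theory algebras are L-definable}, combined with lattice-closure and $\bbT$-compositionality of $\FO$, gives varietality with respect to~$\bbT$. The negative direction follows directly from Corollary~\ref{Cor: FO not closed under inverse morphisms}, which exhibits an $\FO$-definable language $K\subseteq\bbT^\times\Gamma$ and a morphism $\varphi\colon \bbT^\times\Sigma\to\bbT^\times\Gamma$ of free $\bbT^\times$-algebras such that $\varphi^{-1}[K]$ is not $\FO$-definable. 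Since every variety of languages is by definition closed under inverse morphisms, $\FO$ cannot be varietal with respect to~$\bbT^\times$.

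The main technical point to be checked is the tacit claim in part~(a) that the proof of Theorem~\ref{Thm: theory algebras are L-definable}\,(2)$\Rightarrow$(3) requires only the $\MSO$-definability of the single theory algebra $\Theta_{\MSO_{(m)}}\Gamma$ and not of every $\Theta_\Delta\Sigma$. Inspection confirms this: the subfamily $\Delta$ introduced in that proof may be chosen freely, subject only to containing a formula defining the given~$K$ and to inducing a congruence ordering, and the choice $\Delta=\MSO_{(m)}$ meets both requirements, the latter by Theorem~\ref{Thm: MSO bbT*-compositional}.
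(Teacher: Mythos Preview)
Your proof is correct and follows essentially the same approach as the paper, whose proof is the one-liner ``Both claims follow by Theorem~\ref{Thm: theory algebras are L-definable}.'' You have simply unpacked the implicit references: compositionality from Theorems~\ref{Thm: MSO bbT*-compositional} and~\ref{Thm: FO bbT*-compositional}, condition~(2) for $\MSO$ from Proposition~\ref{Lem: MSO closed under inverse morphisms}, condition~(3) for $\FO$ over~$\bbT$ from the unlabelled proposition, and the failure of~(3) for $\FO$ over~$\bbT^\times$ from Corollary~\ref{Cor: FO not closed under inverse morphisms}. Your extra care in noting that the implication $(2)\Rightarrow(3)$ only requires $L$-definability of a \emph{single} theory algebra $\Theta_\Delta\Gamma$ (for a $\Delta$ chosen to contain a defining formula for~$K$ and to make $\sqsubseteq_\Delta$ a congruence) is a legitimate clarification of a point the paper leaves implicit, since Proposition~\ref{Lem: MSO closed under inverse morphisms} is only stated for the particular fragments $\Delta_m=\MSO_{(m)}$.
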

\begin{proof}
Both claims follow by Theorem~\ref{Thm: theory algebras are L-definable}.
\end{proof}

It follows that the framework we have set up applies to $\MSO$ and $\FO$\?:
(i)~$\MSO$-definable languages have syntactic algebras which, furthermore, are $\MSO$-definable\?;
(ii)~the class of all such languages forms a variety of languages\?;
(iii)~every subvariety can be axiomatised by a set of inequalities.
In particular, we can use Theorem~\ref{Thm: definability theorem} to study
the expressive power of these two logics.

When the functors $\bbT$~and~$\bbT^\times$ were introduced, it was not quite clear which variant
was the right one. The preceding proposition is an indication that $\bbT$~is to be
preferred over~$\bbT^\times$.
For instance, it follows from the general results above that
the syntactic $\bbT$-algebra of every first-order definable language
is first-order definable. Furthermore, we know that there must exist
a set of $\bbT$-inequalities axiomatising first-order definability,
even it is still unknown at the moment how it might look like.
The hope is that such a set of inequalities can be used to devise a decision procedure
for first-order definability of a given tree language.
For simpler logics, the algebraic methods developed in this article have already
sucessfully be used to obtain such decision procedures~\cite{BlumensathX3}.

\bibliographystyle{alpha}
\bibliography{Abstract}

\end{document}